\definecolor{Red}{cmyk}{0,1,1,0}
\newcommand{\spann}{span}
\newcommand*\circled[1]{\tikz[baseline=(char.base)]{
            \node[shape=circle,draw,inner sep=2pt] (char) {#1};}}
\newcommand{\Var}[0]{\text{Var}}
\newcommand\redsout{\bgroup\markoverwith{\textcolor{red}{\rule[0.5ex]{2pt}{0.4pt}}}\ULon}
\newtheorem{theorem}           {Theorem}
\theoremstyle{definition}
\newtheorem{remark}[theorem]{Remark}
\newtheorem*{theorem*}{Theorem}
\newtheorem*{conjecture*}   {Conjecture}
\newtheorem*{corollary*}   {Corollary}
  \newtheorem{lemma}[theorem]              {Lemma}
  \newtheorem*{lemma*}          {Lemma}
    \newtheorem*{claim*}          {Claim}
  \newtheorem{definition}[theorem]         {Definition}
  \newtheorem{corollary}[theorem]          {Corollary}
  \newtheorem{proposition}[theorem]      {Proposition}
  \newtheorem{notation}[theorem]          {Notation}
  \theoremstyle{definition}
  \newtheorem{example}[theorem]          {Example}
\DeclareMathOperator{\induced}{ind}
\begin{document}


\voffset=-1.5truecm\hsize=16.5truecm    \vsize=24.truecm
\baselineskip=14pt plus0.1pt minus0.1pt \parindent=12pt
\lineskip=4pt\lineskiplimit=0.1pt      \parskip=0.1pt plus1pt

\def\ds{\displaystyle}\def\st{\scriptstyle}\def\sst{\scriptscriptstyle}



\global\newcount\numsec\global\newcount\numfor
\gdef\profonditastruttura{\dp\strutbox}
\def\senondefinito#1{\expandafter\ifx\csname#1\endcsname\relax}
\def\SIA #1,#2,#3 {\senondefinito{#1#2}
\expandafter\xdef\csname #1#2\endcsname{#3} \else
\write16{???? il simbolo #2 e' gia' stato definito !!!!} \fi}
\def\etichetta(#1){(\veroparagrafo.\veraformula)
\SIA e,#1,(\veroparagrafo.\veraformula)
 \global\advance\numfor by 1
 \write16{ EQ \equ(#1) ha simbolo #1 }}
\def\etichettaa(#1){(A\veroparagrafo.\veraformula)
 \SIA e,#1,(A\veroparagrafo.\veraformula)
 \global\advance\numfor by 1\write16{ EQ \equ(#1) ha simbolo #1 }}
\def\BOZZA{\def\alato(##1){
 {\vtop to \profonditastruttura{\baselineskip
 \profonditastruttura\vss
 \rlap{\kern-\hsize\kern-1.2truecm{$\scriptstyle##1$}}}}}}
\def\alato(#1){}
\def\veroparagrafo{\number\numsec}\def\veraformula{\number\numfor}
\def\Eq(#1){\eqno{\etichetta(#1)\alato(#1)}}
\def\eq(#1){\etichetta(#1)\alato(#1)}
\def\Eqa(#1){\eqno{\etichettaa(#1)\alato(#1)}}
\def\eqa(#1){\etichettaa(#1)\alato(#1)}
\def\equ(#1){\senondefinito{e#1}$\clubsuit$#1\else\csname e#1\endcsname\fi}
\let\EQ=\Eq


\def\\{\noindent}
\let\io=\infty

\def\VU{{\mathbb{V}}}
\def\EE{{\mathbb{E}}}
\def\GI{{\mathbb{G}}}
\def\TT{{\mathbb{T}}}
\def\C{\mathbb{C}}
\def\LL{{\cal L}}
\def\RR{{\cal R}}
\def\SS{{\cal S}}
\def\NN{{\cal N}}
\def\HH{{\cal H}}
\def\GG{{\cal G}}
\def\PP{{\cal P}}
\def\AA{{\cal A}}
\def\BB{{\cal B}}
\def\FF{{\cal F}}
\def\vv{\vskip.2cm}
\def\gt{{\tilde\g}}
\def\E{{\mathcal E} }
\def\I{{\rm I}}

\def\cal{\mathcal}

\def\tende#1{\vtop{\ialign{##\crcr\rightarrowfill\crcr
              \noalign{\kern-1pt\nointerlineskip}
              \hskip3.pt${\scriptstyle #1}$\hskip3.pt\crcr}}}
\def\otto{{\kern-1.truept\leftarrow\kern-5.truept\to\kern-1.truept}}
\def\arm{{}}
\font\bigfnt=cmbx10 scaled\magstep1

\newcommand{\card}[1]{\left|#1\right|}
\newcommand{\und}[1]{\underline{#1}}
\newcommand{\Dom}[0]{\text{Dom}}
\def\1{\rlap{\mbox{\small\rm 1}}\kern.15em 1}
\def\ind#1{\1_{\{#1\}}}
\def\bydef{:=}
\def\defby{=:}
\def\buildd#1#2{\mathrel{\mathop{\kern 0pt#1}\limits_{#2}}}
\def\card#1{\left|#1\right|}
\def\proofof#1{\noindent{\bf Proof of #1. }}
\def\trp{\mathbb{T}}
\def\trt{\mathcal{T}}

\def\bfz{\boldsymbol z}
\def\bfa{\boldsymbol a}
\def\bfalpha{\boldsymbol\alpha}
\def\bfmu{\boldsymbol \mu}
\def\bfmust{\bfT^\infty(\bfmu)}
\def\bfmupr{\boldsymbol {\widetilde\mu}}
\def\bfrho{\boldsymbol \rho}
\def\bfrhost{\boldsymbol \rho^*}
\def\bfrhopr{\boldsymbol {\widetilde\rho}}
\def\bfT{{\boldsymbol T}_{\!\!\bfrho}}
\def\bfR{\boldsymbol R}
\def\bfvarphi{\boldsymbol \varphi}
\def\bfvarphist{\boldsymbol \varphi^*}
\def\bfPi{\boldsymbol \Pi}
\def\bfzero{\boldsymbol 0}
\def\bfW{\boldsymbol W}
\def\formal{\stackrel{\rm F}{=}{}}
\def\eee{{\rm e}}
\def\nnn{\mathcal N}
\def\nst{\nnn^*}
\def\Var{\text{Var}}

\thispagestyle{empty}

\begin{center}
{\LARGE Thermodynamic Formalism for Generalized Markov Shifts on Infinitely Many States}
\vskip.5cm
Rodrigo Bissacot$^{1,2}$, Ruy Exel$^{3}$, Rodrigo Frausino$^{1,4}$ and Thiago Raszeja$^{1,5}$
\vskip.3cm
\begin{footnotesize}
$^{1}$Institute of Mathematics and Statistics (IME-USP), University of S\~{a}o Paulo, Brazil\\
$^{2}$Faculty of Mathematics and Computer Science, Nicolaus Copernicus University, Poland\\
$^{3}$ Departamento de Matem\'{a}tica, Universidade Federal de Santa Catarina, Florian\'{o}polis SC, Brazil\\   $^{4}$ School of Mathematics and Applied Statistics, University of Wollongong, Australia

$^{5}$ Faculty of Applied Mathematics, AGH University of Science and Technology, Poland
\end{footnotesize}
\vskip.1cm
\begin{scriptsize}
emails: rodrigo.bissacot@gmail.com; ruyexel@gmail.com; rodrigofrausino6@gmail.com; tcraszeja@gmail.com
\end{scriptsize}

\end{center}

\def\be{\begin{equation}}
\def\ee{\end{equation}}

\vskip1.0cm
\begin{quote}
{\small

\textbf{Abstract.} \begin{footnotesize} 
Given a 0-1 infinite matrix $A$ and its countable Markov shift $\Sigma_A$, one of the authors and M. Laca have introduced a kind of {\it generalized countable Markov shift} $X_A=\Sigma_A \cup Y_A$, where $Y_A$ is a special set of finite admissible words. For some of the most studied countable Markov shifts $\Sigma_A$, $X_A$ is a compactification of $\Sigma_A$, and always it is at least locally compact. We developed the thermodynamic formalism on the space $X_A$, exploring the connections with standard results on $\Sigma_A$. New phenomena appear, such as new conformal measures and a {\it length-type phase transition}: the eigenmeasure lives on $\Sigma_A$ at high temperature and lives on $Y_A$ at low temperature. Using a pressure-point definition proposed by M. Denker and M. Yuri for iterated function systems, we proved that the Gurevich pressure is a natural definition for the pressure function in the generalized setting. For the gauge action, the Gurevich entropy is a critical temperature for the existence of new conformal measures (KMS states) living on $Y_A$. We exhibit examples with infinitely (even uncountable) many new extremal conformal measures, undetectable in the usual formalism. We prove that conformal measures always exist at low temperatures when the potential is coercive enough. We characterized a basis of the topology of $X_A$ to study the weak$^*$ convergence of measures on $X_A$, and we show some cases where the conformal measure living on $Y_A$ converges to a conformal one living on $\Sigma_A$. We prove the equivalence among several notions of conformality for locally compact Hausdorff second countable spaces, including quasi-invariant measures for generalized Renault-Deaconu groupoids.

\end{footnotesize}

}
\end{quote}
\tableofcontents
\numsec=2\numfor=1
\section*{Introduction}
\vv
\noindent

Markov shifts are important objects in dynamical systems, ergodic theory, and operator algebras. Their dynamical and ergodic importance lies, among other topics, on thermodynamic formalism, a rigorous and abstract from the point of view of mathematical physics into the foundations of equilibrium statistical mechanics, introduced by R. Bowen \cite{Bowen1975}, D. Ruelle \cite{Ruelle1967, Ruelle1978} and Y. Sinai \cite{Sinai1972}.  One of the main topics of the area is the study of phase transition phenomena \cite{BelBisEndo2020, Dobrushin1965, Dobrushin1968, EnterFernSokal1993, FriedliVelenik2018,Georgii2011, Iommi2007,KuchQuasWolf2020, Peierls1936,Pe, PirogovSinai1974, Sarig2000, Sarig2001}. Roughly speaking, a phase transition consists of an abrupt change in the system, and it can have many different meanings; some of them are equivalent, and some are not; a good reference for a discussion about phase transitions is \cite{EnterFernSokal1993}. Among the different notions of phase transitions, we have:
\begin{itemize}
    \item the change from uniqueness to non-uniqueness of the probability measures of interest in statistical mechanics, such as the Dobrushin-Lanford-Ruelle measures \cite{BisExelFrauRas2018, Dobrushin1968, Dobrushin1968b, Dobrushin1968c, LanRu1969}, also called DLR measures \cite{Dobrushin1968b, Ellis2006, Georgii2011, LanRu1969}. When the potential is regular enough and depending on the structure of the shift space, the invariant DLR measures coincide with the equilibrium states \cite{Keller1998, Kimura2015, Muir2011, Muir2011b, Ruelle1967};
    \item the existence of a point of non-analyticity for the pressure \cite{FriedliVelenik2018, KuchQuasWolf2020, Ott2019,Ott2020, Sarig2000, Sarig2001};
    \item recently, it was discovered a new type of phase transition on countable Markov shifts, when a set of DLR measures changes from finite DLR measures to the infinite ones \cite{BelBisEndo2020}, a {\it volume-type phase transition}. This phenomenon comes from the fact that, in general, in unidimensional systems, the set of DLR measures contains (sometimes coincides with) the set of eigenmeasures associated with the Ruelle operator \cite{BelBisEndo2020, CioLopesStad2020, Sarig2009} and, in countable Markov shifts, the eigenmeasures can be infinite \cite{Sarig2000, Sarig2001, Sarig2009}. 
\end{itemize}

From the point of view of operator algebras, Markov shifts are related to the Cuntz-Krieger algebras and their generalizations. The first connection between C$^*$-algebras and symbolic spaces was made in 1977 for the full shift case by J. Cuntz \cite{Cuntz1977}. It was generalized for Markov shifts in the case of a finite number of symbols three years later, in the seminal paper by J. Cuntz and W. Krieger \cite{CK1980}, where it was introduced the so-called Cuntz-Krieger algebras $\mathcal{O}_A$, with $A$ being the transition matrix. These algebras are universal C$^*$-algebras generated by a family of partial isometries indexed by the alphabet under some relations. They codify the Markov shift spaces in the sense that a product of these generators is non-zero if, and only if, the correspondent word is admissible. Moreover, $C(\Sigma_A)$ is a commutative C$^*$-subalgebra of $\mathcal{O}_A$. When the alphabet is infinite, results for the case when the matrix $A$ is row-finite were achieved by A. Kumjian, D. Pask, I. Raeburn and J. Renault \cite{KumjPaskRaeb1998, KumjPaskRaebRen1997} by using the groupoid C$^*$-algebra theory, see also \cite{Renault1980}. Later, in 1999, one of the authors and M. Laca \cite{EL1999} generalized the Cuntz-Krieger theory for the case in which the matrix $A$ is not row-finite. For this general case, $\Sigma_A$ may not be locally compact. The generalized Markov shift $X_A$ arises as a natural object to replace $\Sigma_A$, $X_A$ contains $\Sigma_A$, and the dynamics (the shift) can be extended almost to the whole space, except by what we call empty-stem configurations.

In parallel, started by the Russian school \cite{Gu1, Gu2, Gu3, Gu4}, we had an intense development of the thermodynamic formalism on countable Markov shifts (CMS) $\Sigma_A$ in the last 20 years \cite{MaulUr2001, MaulUr2003, Sarig1999, Sarig2001_Null, Sarig2015, Sarig2009}. Y. Pesin wrote an excellent survey about O. Sarig's contributions in \cite{Pe}. However, no study on thermodynamic formalism was done for $X_A$. This paper aims to start the study of the thermodynamics on $X_A$. It holds that $\Sigma_A$ is dense in $X_A$, and these sets coincide for standard locally compact countable Markov shifts. We point out that it is a very natural strategy that many authors already used to consider compactifications of $\Sigma_A$ and then try to obtain results and consequences for the thermodynamic formalism on $\Sigma_A$, and we have available a good amount of results on the literature \cite{Fiebig2013, Fiebig_Fiebig1995, Gu4, Iommi2020, Shwartz2019, Zargaryan1986}. However, the compactification constructed in \cite{EL1999} looks pretty natural since its construction depends only on the transition matrix $A$. As we will see, many of the important definitions and concepts are extended naturally to this generalized setting. We can recover the classical objects by restrictions of the potential and dynamics.

The compatibility of the Borel $\sigma$-algebras between both settings in the sense that Borel sets of $\Sigma_A$ are included in the Borel $\sigma$-algebra on $X_A$ plays a crucial role in guaranteeing that conformal measures from the standard theory can be seen as conformal ones in the generalized setting. About the topology, unlike in the $\Sigma_A$ case, the complement of cylinders is not simply a union of cylinders, but they have an extra part that lives in $Y_A = X_A\setminus \Sigma_A$. Also, $X_A$ is always locally compact and, in many cases, compact. On the other hand, the usual definition of Gurevich pressure, whose exponential is the Ruelle's operator eigenvalue for suitable potentials (recurrent), only considers the periodic points, which cannot live in $Y_A$. However, by M. Denker and M. Yuri's \cite{DenYu2015} results for Iterated Function Systems (IFS), it is possible to define the pressure at a point $P(\beta F,x)$  for a wide class of matrices and potentials that includes the main examples of the theory as the full shift, renewal shift and others. This pressure coincides with the Gurevich pressure on every point $x \in X_A$. We study the phase transition phenomena for conformal measures and eigenmeasures. We highlight a length-type phase transition phenomenon: we present an example for which we have the existence of a critical value $\beta_c$ s.t. in the region $\beta \leq \beta_c$ we have the existence of a probability eigenmeasure that gives mass zero for $Y_A$, and for $\beta > \beta_c$, the eigenmeasure still does exist, but it is concentrated on $Y_A$. The term \textit{length-type} refers to that we can see the elements of $Y_A$ as finite words; sometimes, they have some multiplicity. For the case of conformal measures, for regular enough potentials, and $X_A$ satisfying suitable notions of compactness, we have results for the existence and absence of extremal conformal measures, where the Gurevich entropy $h_{G}$ plays a fundamental role in determining the regions for $\beta$ where the new measures do exist. Also, each of these new extremal measures lives on different subsets of $Y_A$, called $Y_A$-families, and these families form a partition for $Y_A$. Besides, for infinite entropy cases, we can study a large class of coercive enough potentials by an algebraic condition that encodes them as potentials on the renewal shift.

For continuous potentials, the conformal measures are also connected with the thermodynamic formalism for the Cuntz-Krieger algebras and their generalizations since they correspond to the quasi-invariant measures on these algebras via the Renault-Deaconu groupoid approach \cite{Renault2009, Thomsen2017, Tho3}. Therefore, these measures are KMS states for the C$^*$-dynamics generated by a 1-cocycle. In particular, for potentials that are always positive (negative), every KMS state associated with the 1-cocycle dynamics comes from a conformal measure \cite{Nesh2013, Tho3}. We present a couple of examples where we have finite, countably infinite, and uncountably many extremal conformal measures (KMS states) for low enough temperatures. 

Our paper unifies what has been done in thermodynamic formalism on countable Markov shifts with the results for Cuntz-Krieger C$^*$-algebras with infinitely many states. Since $\Sigma_A \subset X_A$, we recover the standard objects by taking the restriction of the objects as potentials, measures, and shift dynamics. We prove that the Gurevich pressure is suitable for this new setting and our results show the existence of more conformal measures at low temperatures, a more natural feature compared with what happens in physical models in statistical mechanics.

The paper is organized as follows. 

In section \ref{sec:Preliminaries}, we introduce the standard theory of Markov shifts, and we state the results that we will use in the generalized setting, such as the RPF theorem and the discriminant theorem. We present the Example \ref{exa:renewal_potential_log}, which is the model used to describe the length-type phase transition for eigenmeasures of the Ruelle's operator. Also, in this section, we present the generalized Renault-Deaconu groupoid \cite{Renault2} and the 1-cocycle associated with a continuous potential. We remember the Cuntz-Krieger algebras $\mathcal{O}_A$ for infinite matrices \cite{EL1999} in a friendly fashion for the dynamical system community, and we avoid using crossed products. We present a faithful representation of these algebras in $\mathfrak{B}(\ell^2(\Sigma_A))$, which is helpful to our purposes.

Section \ref{sec:results_O_A} is focused on showing some properties of $\mathcal{O}_A$ and constructing the commutative C$^*$-subalgebra $\mathcal{D}_A$ of $\mathcal{O}_A$, corresponding to a C$^*$-algebra of diagonal operators when viewed as a subalgebra of $\mathfrak{B}(\ell^2(\Sigma_A))$. The spectrum of $\mathcal{D}_A$ is the generalized countable Markov shift space $X_A$, which we call GCMS. 

In section \ref{section:X_A}, we define $X_A$ and discuss some of its properties, such as the inclusion of $\Sigma_A$ into $X_A$, the surjectivity of such inclusion when $\Sigma_A$ is locally compact, and how the elements of $X_A$ can be seen as configurations on the Cayley tree generated by the alphabet. Moreover, we identify the set of finite words in $X_A$, denoted by $Y_A$, by the stem-root identification.  We also present an example of non-locally compact $\Sigma_A$ such that $X_A$ is not compact and where we have uncountably many elements in $Y_A$. Furthermore, we prove the compatibility between the Borel $\sigma$-algebras of $\Sigma_A$ and $X_A$. We introduce the notion of the $Y_A$-family. These objects characterize the extremal conformal measures living on $Y_A$. The results show that, differently from graph algebras \cite{Thomsen2017}, the set of infinite emitters is not in bijection with the extremal KMS states in our setting.

In section \ref{section_cylinder}, we present the topological description of $X_A$, where we define the generalized cylinder sets, and we identify the typical elements of a basis of the topology in terms of these generalized cylinders. This basis is described in total generality for irreducible matrices, and it is closed under intersections, which allows the study of weak$^*$ convergence of probability measures on $X_A$. Details of the basis description can be found in the appendix, in section \ref{ape:proof_giant_theorem_generalized_cylinders}.

Sections \ref{sec:TF_Generalized}, \ref{new_measures} and \ref{sec:eigenmeasures} comprehend the thermodynamic formalism on $X_A$ as follows.

In section \ref{sec:TF_Generalized} we present different notions of conformal measures: in both senses of Denker-Urba\'nski and Sarig, quasi-invariant measures for the Renault-Deaconu groupoids, and eigenmeasures of the Ruelle's transformation. We prove the equivalence among all these notions beyond the GCMS setting. After this, we present a natural definition for the pressure at a point on $X_A$. The definition was proposed by M. Denker and M. Yuri \cite{DenYu2015} for Iterated Function Systems (IFS) and makes sense for any point in $X_A$. For potentials with uniform bounded distortion and for a wide class of GCMS', we prove that the notions of pressure at a point and Gurevich pressure coincide.

In section \ref{new_measures} we focus on the new conformal measures that are not detected in the standard formalism by identifying a conformality condition on $Y_A$. Then, we show that the extremal conformal measures that live on $Y_A$ are precisely those that live on a unique $Y_A$-family. Further, we prove that the sign of the pressure determines the existence and the absence of these new measures. We have conformal measures living on $Y_A$ for low enough temperatures, and for sufficiently high temperatures, we have the absence of these measures. In particular, for the gauge potential, there exists a critical value of the inverse of temperature $\beta_c$ separating these regions; the value of the $\beta_c$ coincides with the Gurevich entropy $h_G$, a natural result after the theorem of K. Thomsen \cite{Thomsen2017} for graph C$^*$-algebras, where he proved the existence of KMS weights for $\beta \geq h_G$.  Also, for the renewal and pair renewal GCMS', we prove that for $\beta$ decreasing to the critical value, the set of conformal measures living on $Y_A$ collapses to the unique conformal measure living on the standard CMS at $\beta_c=h_G$. Besides, we study infinite entropy shifts encoding potentials on the full shift using potentials on the renewal shift. The full shift has the largest $Y_A$-family compared to other shift spaces. Then, we find suitable conditions for the potential such that, for a large class of GCMS and any $Y_A$-family, we prove the existence of an extremal conformal measure living on $Y_A$ for low enough temperatures. In particular, the Example \ref{exa:uncountable_Y_A} presents a GCMS that admits uncountably many extremal conformal measures for $\beta$ large enough when the potential is coercive enough, see section \ref{uncountable_extremal}.  The last example we study in this section is a GCMS whose transition matrix was defined by I. Raeburn and W. Szyma\'nski in \cite{RaeSzy2004}, where the corresponding C$^*$-algebra $\mathcal{O}_A$ is not isomorphic to any graph C$^*$-algebra. In addition, this example points out a difference between KMS theory for $\mathcal{O}_A$ and for graph algebras: K. Thomsen proved in \cite{Thomsen2017} that there exists a bijection between boundary KMS weights and the infinite emitters. For Raeburn and Szyma\'nski's example, we have two infinite emitters, but only one $Y_A$-family exists, and then we have a unique KMS state living on $Y_A$ for $\beta>h_G$.

Section \ref{sec:eigenmeasures} is dedicated to the study of eigenmeasures. There we present a general theorem of the existence of eigenmeasures, a result from M. Denker and M. Yuri's paper \cite{DenYu2015}, adapted to the GCMS case, where the eigenvalue is the exponential of the pressure as in the standard theory of CMS. Later, we present an example of a potential on the renewal shift, which presents a new type of phase transition, we called it \emph{length-type phase transition}: for every $\beta > 0$ we have a unique eigenmeasure living on $X_A$. Still, there exists a critical value $\beta_c$ such that the measure lives on $\Sigma_A$ for $\beta \leq \beta_c$, and it lives on $Y_A$ for $\beta > \beta_c$. In other words, the phase transition is the change of the space where the measure lives, from $\Sigma_A$ to $Y_A$, by increasing $\beta$. We stress that this example, at the same time, deals with both classical and quantum sides of the theory: the pressure is non-zero for $\beta < \beta_c$, and although we have a unique eigenmeasure for each $\beta$, these are not KMS states. However, for $\beta \geq \beta_c$, the pressure is zero, and the unique eigenmeasure for these values of $\beta$ is a KMS state.

This paper is part of a project started during the visit of R. Exel at IME-USP in March 2018; the visit was supported by FAPESP Grant 17/26645-9. A second paper was written \cite{BisExelFrauRas2018}, where a systematic study of DLR measures in the generalized setting is done. A substantial amount of results presented in this paper are contained in the Ph.D. thesis of T. Raszeja \cite{Raszeja2020}.
\newpage
\numsec=2\numfor=1
\section{Preliminaries} \label{sec:Preliminaries}
\subsection{Thermodynamic formalism on CMS}\\

In this subsection, we introduce the basic definitions and results about countable Markov shifts, which we will call CMS. Standard references for the topic are \cite{MaulUr2001, MaulUr2003, Sarig2009, Sarig2015}. 

Consider an infinite $\{0,1\}$-matrix with no zero rows and neither zero columns. Taking as alphabet the set of natural numbers $\mathbb{N}$, the \textit{countable Markov shift} $\Sigma_A(\mathbb{N}) \equiv \Sigma_A$ associated to the matrix $A$ is given by $\Sigma_A := \left\{x \in \mathbb{N}^{\mathbb{N}_0}: A(x_j,x_{j+1}) = 1, \text{ for all } j \in \mathbb{N}_0 := \mathbb{N}\cup \{0\} \right\}.$

We may write an element of $\Sigma_A$ as $x=x_0x_1\ldots$ rather than $x=(x_0,x_1,\ldots)$ for convenience. The dynamics is given by the \textit{shift map} $\sigma: \Sigma_A \to \Sigma_A$; $x = x_0 x_1 x_2 \ldots \mapsto \sigma(x) = x_1 x_2 x_3 \ldots$.

A \textit{word} is an element $x_0\dots x_{n-1}\in \mathbb{N}^n$
$(n \in \mathbb{N})$, its length is $n$, and it is called \textit{admissible} when $[x_0\dots x_{n-1}] \neq \emptyset$. We say that the Markov shift $\Sigma_{A}$ (or the matrix $A$) is {\it{transitive}} (or {\it irreducible})  when, for every $a,b\in \mathbb{N}$, there exists an admissible word $w$, such that $aw b$ is admissible. On the other hand, $\Sigma_{A}$ is said to be {\it{topologically mixing}} when, for every $a,b\in \mathbb{N}$ there exists $N_{ab}\in\mathbb{N}$ such that, for every $n\geq N_{ab}$, there exists an admissible word $w_n$ of length $n$ such that $aw_n b$ is admissible. The topology of $\Sigma_A$ is the one generated by the \textit{cylinder sets}, defined as
$$[a_0a_1\ldots a_{n-1}]:=\{x\in \Sigma_{A}:x_i=a_i,0\leq i\leq n-1\},$$
where $a_i$ and $n$ are natural numbers. Or, equivalently, generated by the metric $d(x,y) = 2^{-\inf\{p:x_p \neq y_p\}}$.

We recall that $\Sigma_A$ is a Polish space, and the cylinder sets are clopen. Moreover, when the matrix $A$ is irreducible, then the shift is locally compact if and only if the cylinder sets are compact, which is true if and only if $A$ is row-finite, i.e., for every row in the matrix $A$, we have a finite number of 1's.

\begin{example}[Renewal shift]
An important example of topologically mixing CMS it is the so called \textit{renewal shift}, see \cite{Iommi2007,Sarig2001}. For the renewal shift, the matrix $A$ is given by $A(1,n) = A(n+1,n) = 1$ for every $n \in \mathbb{N}$ and zero for other entries of $A$. Its symbolic graph is presented in figure \ref{fig:renewal_shift_symbolic_graph}.
\end{example}
A function $F:\Sigma_{A}\to\mathbb{R}$ is called a {\it {potential}}. For every $n\geq 1$, the {\it $n$-variation of $F$} is defined by
\begin{equation*}
\Var_nF:=\sup\left\{\lvert F(x)-F(y)\rvert :~x,y\in\Sigma_{A}, x_i=y_i,0\leq i\leq n-1\right\}.
\end{equation*}
A potential $F$ is called {\it{locally H\"older continuous}} if there exists $B>0$ and $\theta\in(0,1)$ such that, for every $n\geq 2$, we have $\Var_n F\leq B\theta^n$. We say that $F$ has \textit{summable variations} if $\sum_{n\geq2}\Var_n F<\infty$, and we say that $F$ \textit{satisfies the Walters' condition} when $\sup_{n \in \mathbb{N}} \Var_{n+k}F_n < \infty \text{ for each }k \in \mathbb{N}$ and $\lim_k \sup_{n \in \mathbb{N}} \Var_{n+k}F_n = 0$. Observe that we may have $\Var_1 F =\infty$. Moreover, we say that $F$ has \emph{uniform bounded distortion} when $\sup_{n \in \mathbb{N}} \Var_{n}F_n < \infty$. We say that a potential is {\it coercive} when $\lim_{x_0 \to \infty} F(x) = + \infty$, observe that in the language of the standard literature of countable Markov shifts,
this property corresponds to say that $-F$ is coercive.

\begin{definition} Given a potential $F:\Sigma_A \to \mathbb{R}$ and a symbol $a \in \mathbb{N}$, the \emph{n-th partition function} is defined by
\begin{equation*}
    Z_n(F,[a]) = \sum_{x\in \Sigma_A:\sigma^n(x)=x}e^{ F_n(x)}\mathbbm{1}_{[a]}(x).
\end{equation*}
where $F_n(x)=\sum_{i=0}^{n-1}F(\sigma^i(x))$ is the \textit{Birkhoff (ergodic) sum} of $F$. Another important quantity is the following:
\begin{equation*}
    Z_n^*(F,[a]) := \sum_{\sigma^n x = x} e^{ F_n(x)}\mathbbm{1}_{[\varphi_a = n]}(x),
\end{equation*}
where $\varphi_a(x):= \mathbbm{1}_{[a]}(x) \inf\{n\geq 1:\sigma^nx \in [a]\}$ ($\inf \emptyset = \infty$ and $0 \cdot \infty = 0$).
The \textit{Gurevich pressure} at the symbol $a$ is defined as
\begin{equation*}
    P_G(F,[a]) :=\limsup_{n\to \infty} \frac{1}{n} \log Z_n( F,[a]).
\end{equation*}
\end{definition}
When $\Sigma_A$ is transitive and $F$ satisfies the Walters' condition, then $P_G( F,[a])$ does not depend on $a$. Since these hypotheses are the minimal assumed over the CMS and the potential, we will write $P_G(F)$. A particular but important case is to consider the constant potential $F\equiv0$, then the pressure coincides with the \emph{Gurevich entropy}, see \cite{Gu1, Gu2}:
\begin{equation*}
    h_{G}:= P_G(0) :=\limsup_{n\to \infty} \frac{1}{n} \log \sum_{x\in \Sigma_A:\sigma^n(x)=x}\mathbbm{1}_{[a]}(x).
\end{equation*}
Moreover, if $\Sigma_A$ is topologically mixing, then the pressure is a limit. In this case, the Gurevich pressure and the Gurevich entropy are given by:
\begin{equation*}
    P_G(F) =\lim_{n\to \infty} \frac{1}{n} \log Z_n( F,[a])  \quad \quad \text{and} \quad \quad h_{G} = \lim_{n\to \infty} \frac{1}{n} \log\sum_{x\in \Sigma_A:\sigma^n(x)=x}\mathbbm{1}_{[a]}(x).
\end{equation*}

\begin{definition}\label{modes-recu} Let ${\Sigma_A}$ be a topologically mixing CMS, $F:{\Sigma_A}\to\mathbb{R}$ a potential satisfying the Walters' condition and $P_G(F)< \infty$. Fix some $a\in \mathbb{N}$. We say that the potential is:
	\begin{itemize}
		\item[i)] \emph{Recurrent} if $\sum_{n\geq1}e^{-nP_G(F)}Z_n(F,a)=\infty$.
		\item[ii)] \emph{Positive recurrent} if $F$ is recurrent and $\sum_{n\geq1}ne^{-nP_G(F)}Z_n^{*}(F,a)<\infty$.
		\item[iii)] \emph{Null recurrent} if $F$ is recurrent and $\sum_{n\geq1}ne^{-nP_G(F)}Z_n^{*}(F,a)=\infty$.
		\item[iv)] \emph{Transient} if $\sum_{n\geq1}e^{-nP_G(F)}Z_n(F,a)<\infty$.
	\end{itemize}
\end{definition}

Since the ${\Sigma_A}$ is topologically mixing the modes of recurrence do not depend on the symbol $a$.

The modes of recurrence are related to the existence of eigenmeasures as follows.

\begin{definition}
Let $\Sigma_{A}$ be a CMS, $\nu$ a Borel $\sigma$-finite measure, $F:{\Sigma_A}\to\mathbb{R}$ a measurable potential and $\lambda>0$. Then, we call $\nu$ a \textit{eigenmeasure with eigenvalue $\lambda$} if
\begin{equation}\label{eq:eigenmeasure_standard}
    \int L_{F}f(x)d\nu(x)=\lambda\int f(x)d\nu(x),\quad \mbox{ for each }f\in L^1(\nu),
\end{equation}
where $L_F$ is the \textit{Ruelle's operator} defined as $L_{F}f(x):=\sum_{\sigma(y)=x}e^{F(y)}f(y)$.
\end{definition}

In definition above, we denote that a measure $\nu$ is a eigenmeasure by writing $L_{F}^{*}\nu=\lambda\nu$. We warn the reader that, although this notation could suggest a duality in the sense of the Riesz-Markov-Kakutani representation theorem, this does not hold for non-locally compact standard Markov shift spaces. See \cite{Sarig2009} regarding when the Ruelle's operator is well-defined for countable Markov shifts. Later, we will see that one of the advantages of the generalized setting is precisely the fact that the generalized Markov shifts $X_A$ are always locally compact spaces. In this case, the duality for the Ruelle's operator is used in the sense of the representation theorem, see definition \ref{def:Ruelle_transformation}.

The next theorem connects the existence of the eigenmeasures and the recurrence modes, and it is called {\it Generalized Ruelle-Perron-Frobenius Theorem}. A Borel measure  $\nu$ defined on $\Sigma_A$ is a \textit{non-singular measure} when $\nu\circ \sigma^{-1}\sim \nu$, i.e., $\nu\circ \sigma^{-1}(E)=0$ iff $\nu(E)=0$.  A $\sigma$-finite non-singular measure $\nu$ is said to be {\it conservative} if for every Borel set $W$ such that $\{\sigma^{-n}W\}_{n\geq 0}$ is pairwise disjoint we have $\nu(W)=0$. 
\begin{theorem}[Generalized Ruelle-Perron-Frobenius Theorem, \cite{Sarig1999, Sarig2009}]\label{GRPF}\\

Let ${\Sigma_A}$ be a topologically mixing CMS, $F:{\Sigma_A}\to\mathbb{R}$ a potential that satisfies the Walters' condition and $P_G(F)<\infty$. Then:
	\begin{itemize}
		\item[i)] $F$ is positive recurrent if, and only if, there exist $\lambda>0$, a positive continuous function $h$, and a conservative measure $\nu$ which is finite on cylinders, such that $L_{F}h=\lambda h$, $L_{F}^{*}\nu=\lambda\nu$, and $\int h d\nu=1$. In this case $\lambda=e^{P_G(F)}$.\vv
		\item[ii)] $F$ is null recurrent if, and only if, there exist $\lambda>0$, a positive continuous function $h$, and a conservative measure $\nu$ which is finite on cylinders, such that $L_{F}h=\lambda h$, $L_{F}^{*}\nu=\lambda\nu$, and $\int h d\nu=\infty$. In this case $\lambda=e^{P_G(F)}$.\vv
		\item[iii)] $F$ is transient if, and only if, there is no conservative measure $\nu$ which is finite on cylinders such that $L_{F}^{*}\nu=\lambda\nu$ for some $\lambda>0$.
	\end{itemize}
\end{theorem}
\begin{definition}[Induced Markov shift] Given a countable Markov shift $\Sigma_A$. Fixed $a \in \mathbb{N}$, set $S_{ind} := \{[w]: w \in \mathfrak{W}, w_i = a \iff i = 0, [wa] \neq \emptyset \}$, where $\mathfrak{W}$ is the set of admissible non-empty words. The \textit{induced Markov shift space on $a$} is the set $\Sigma_A^{ind}(a) := S_{ind}^{\mathbb{N}_0}$, where the \textit{induced shift map} $\sigma_{ind}: \Sigma_A^{ind}(a) \to \Sigma_A^{ind}(a)$ is given by
\begin{equation*}
    \sigma_{ind}(([w^0],[w^1],[w^2],\dots)) := ([w^1],[w^2],[w^3],\dots),
\end{equation*}
for every $([w^0],[w^1],[w^2],\dots) \in \Sigma_A^{ind}(a)$. Let $\pi_{ind} :\Sigma_A^{ind}(a) \to [a]$ the map given by
\begin{equation*}
    \pi_{ind}(([w^0],[w^1],[w^2],\dots)) := (w^0w^1w^2\dots).
\end{equation*}
Given a potential $F:\Sigma_A \to \mathbb{R}$, we define the \textit{induced potential on $[a]$}, $F^{ind}: \Sigma_A^{ind}(a) \to \mathbb{R}$, by
\begin{equation*}
    F^{ind} := \left(\sum_{k=0}^{\varphi_a - 1} F\circ \sigma^k \right)\circ \pi_{ind}.
\end{equation*}
The pair $(\Sigma_A^{ind}(a),F^{ind})$ is called the {\it induced system on} $[a]$.
\end{definition}

\begin{definition} Let $\Sigma_A$ be topologically mixing and let $F:\Sigma_A \to \mathbb{R}$ with summable variations and finite Gurevich pressure. Fix $a \in \mathbb{N}$ and let $(\Sigma_A^{ind}(a), F^{ind})$ be the induced system. Set $$p^*_a[F] := \sup\{p : P_G((F + p)^{ind}) < \infty\}.$$
\noindent
The \textit{$a$-discriminant} of $F$ is defined by $\Delta_a[F] := \sup\{P_G((F + p)^{ind}) : p<p^*_a[F]\} \leq \infty.$
\end{definition}
Now we are ready to state the discriminant theorem, which will be useful in our further examples. One can find the proof in \cite{Sarig2001}.
\begin{theorem}[Discriminant Theorem]
Let $\Sigma_A$ be a topologically mixing CMS,
and let $F : \Sigma_A \to \mathbb{R}$ a potential with summable variations such that $P_G(F) < \infty$. For a fixed state $a\in \mathbb{N}$,
\begin{enumerate}
   \item the equation $P_G((F + p)^{ind}) = 0$ has a unique solution $p(F)$ if $\Delta_a[F] \geq 0$, and no solution if $\Delta_a[F] < 0$. The Gurevich pressure of $F$ is given by 
   $$P_G(F) =\begin{cases} -p(F)\quad \text{if}\; \Delta_a[F]\geq 0,\\
   -p_a^*(F)\quad \text{if}\; \Delta_a[F]<0;
   \end{cases}
$$

\item  $F$ is positive recurrent if $\Delta_a[F] > 0$, and transient if $\Delta_a[F] < 0$. In the case
$\Delta_a[F] = 0$, $F$ is either positive recurrent or null recurrent.
\end{enumerate}
\end{theorem}

We also need other results related to this theorem. For instance, Proposition 3 on section 6 of \cite{Sarig2001} (see page 559), it is proved, among other results, the following:
\begin{align}
    \left|\Delta_a[F] - \log\left(\sum_{k=1}^\infty R^k Z_k^*(F,[a])\right) \right| \leq \sum_{k \geq 2} \Var_k(F), \label{eq:absol_discriminant_power_series}\\
    p_a^*(F)=-\limsup_{n\to\infty}\frac{1}{n}\log Z_k^*(F,[a])\label{eq:p_a_star_is_limsup_partition_Z_star},
\end{align}
where $R$ is the convergence radius of the power series $\sum_{k=1}^\infty y^k Z_k^*(F,[a])$.

For the case of the renewal shift, we have uniqueness (when it exists) of the critical point $\beta_c$ where the pressure $P_G(\beta F)$ is not analytic  \cite{Sarig2001}, page 561, Theorem 5. This $\beta_c$ separates two regions of $\beta$ where we have the existence and absence of the eigenmeasure. In the generalized setting, we will present an example where $\beta_c$ indicates the change of the region where is concentrated the mass of the eigenmeasure.

\begin{theorem}\label{thm:phase_transition_renewal_shift_Sarig_Gurevich_pressure}
Let $\Sigma_A$ be the renewal shift and $F:\Sigma_A\to \mathbb{R}$ a locally H\"older continuous function such that $\sup F<\infty$. Then, we have that $F^{ind}$ is locally H\"older continuous and, there exists ${0<\beta_c\leq \infty}$ such that:
\begin{itemize}
    \item[1.]$\beta F$ is positive recurrent for $0<\beta<\beta_c$ and transient for $\beta>\beta_c$.
    \item[2.]$P_G(\beta F)$ is real analytic in $(0, \beta_c)$ and linear in $(\beta_c,\infty)$. It is continuous but not
analytic at $\beta_c$ (in case $\beta_c < \infty$).
\end{itemize}
\end{theorem}

The next example will be analyzed again in the generalized setting. For now, we will apply the previous theorems to study the thermodynamic formalism when we consider the standard space $\Sigma_A$.  

\begin{example}\label{exa:renewal_potential_log}

Let $\Sigma_A$ be the renewal shift and consider the potential
\begin{equation}\label{eq:conservativity_phase_transition_potential}
    F(x) = \log (x_0) -\log(x_0+1).
\end{equation} 
Since the potential depends only on the first coordinate, we have $\sum_{k\geq 2}\Var_k(\beta F) = 0$ for every  $\beta >0$. By the inequality \eqref{eq:absol_discriminant_power_series}, and taking $a = 1$, we get
\begin{equation}\label{eq:discriminant_1_equals_series}
    \Delta_1[\beta F] = \log\sum_{k=1}^\infty R^k Z_k^*(\beta F,[1]).
\end{equation}
 Remember that $R$ is the convergence radius of the power series $\sum_{k=1}^\infty y^k Z_k^*(F,[a])$. Now, we can calculate the $1$-discriminant. We have  $Z_n^*(\beta F,[1]) = e^{\beta F_n(\overline{1,n,n-1,\dots,2})}$, therefore:
\begin{align*}
    R = \lim_{n \to \infty} \frac{Z_n^*(\beta F,[1])}{Z_{n+1}^*(\beta F,[1])} = \lim_{n \to \infty} \frac{e^{\beta F_n(\overline{1,n,n-1,\dots,2})}}{e^{\beta F_{n+1}(\overline{1,n+1,n,\dots,2})}}.
\end{align*}
Since $F_n(\overline{1,n,n-1,\dots,2}) = - \log(n+1), \forall \  n \in \mathbb{N}$, it follows that $
    Z_n^*(\beta F,[1]) = (n+1)^{-\beta} \text{and} \ R = 1.$
    
Then,
\begin{equation}\label{eq:discriminant_identity}
    \Delta_1[\beta F] = \log\left(\sum_{k=1}^\infty R^k Z_k^*(\beta F,[1])\right) =  \log\left(\sum_{k=1}^\infty  \frac{1}{(k+1)^\beta}\right) = \log\left(\zeta(\beta)-1\right),
\end{equation}
where $\zeta$ is the Riemann zeta function. For $0 < \beta \leq 1$, the series in \eqref{eq:discriminant_identity} diverges and therefore $\Delta_1[F] = \infty > 0$. Let $\beta_c >0$ be the unique solution of $\zeta(\beta_c) = 2$ ($\beta_c \approx 1.72865$), since the $\zeta$ is strictly decreasing for $\beta >1$, we have that $\zeta(\beta) - 1 > 1$ for $1 < \beta < \beta_c$, and in this case, we obtain $0 < \Delta_1 [\beta F] < \infty$. Also, $\Delta_1 [\beta_c F] =  0.$

Furthermore, observe that $\zeta(\beta) > 1$ for every $\beta > 1$, and it has $1$ as its horizontal asymptote. Then, for $\beta > \beta_c$ we have that $0 < \zeta(\beta) - 1 < 1$, and therefore $\Delta_1[\beta F] < 0$ for $\beta > \beta_c$. By the discriminant theorem, $\beta F$ is positive recurrent for $\beta < \beta_c$ and it is transient for $\beta > \beta_c$. Consequently, by the generalized RPF theorem, there exists a conservative eigenmeasure for the Ruelle's operator $L_{\beta F}$ which is finite on cylinders for $\beta \in (0,\beta_c)$, and we have the absence of such measures for $\beta \in (\beta_c,\infty)$. Note that $F$ depends only on the first coordinate and $\sup F < \infty$. 
In addition, $\Var_k F^{\induced} = 0,$ for every $k \geq 2$, and so $F^{\induced}$ is locally H\"{o}lder. Since the potential $F$ satisfies the hypotheses of Theorem \ref{thm:phase_transition_renewal_shift_Sarig_Gurevich_pressure}, then $P_G(\beta F)$ is linear on the variable $\beta$ for $\beta \geq \beta_c$. More precisely, $P_G(\beta F) = \beta p_1^*[F]$, for $\beta \geq \beta_c$. By item 2 of discriminant theorem and equation \eqref{eq:p_a_star_is_limsup_partition_Z_star}, we have that
\begin{equation} \label{eq:coefficient_pressure_zero_for_beta_greater_than_critical}
    p_1^*[F] = \limsup_n \frac{1}{n} \log Z_n^*(F,[1]) = \limsup_n \frac{1}{n} \log \left(\frac{1}{n+1}\right) = 0.
\end{equation}
This fact implies that $P_G(\beta F) = 0$, for $\beta \geq \beta_c$, and since the eigenmeasures for this interval of temperatures will be fixed points of the Ruelle's operator, see Remark \ref{remark:KMS_quasi_invariant}, we can also study the KMS$_{\beta}$ states associated with the $1$-cocycle given by the potential.
\end{example}

\subsection{Generalized Renault-Deaconu groupoids}\\

We consider the \emph{generalized Renault-Deaconu groupoid} \cite{Renault2} which is defined as follows. Let $X$ be a locally compact, Hausdorff and second countable space. Let be $U$ an open subset of $X$ and consider a local homeomorphism $\sigma: U \to X$. The \textit{generalized Renault-Deaconu groupoid} is given by
\begin{equation}\label{eq:ADR_groupoid_generalized}
    \mathcal{G}(X,\sigma) = \left\{  (x,k,y) \in X \times \mathbb{Z} \times X \text{ }\text{ }:\begin{array}{l l}
                            & \exists n,m \in \mathbb{N}_0 \text{ s.t. } k=n-m, \\
                            &  x \in \Dom (\sigma^n), y \in \Dom (\sigma^m), \sigma^n(x) = \sigma^m(y)\\
                          \end{array}\right\},
\end{equation}
where $\sigma^0:X \to X$ is the identity map, with the groupoid structure given as follows. The product is defined on the set 
$\mathcal{G}^{(2)}:= \left\{\big((x,k,z),(w,l,y)\big) \in \mathcal{G}(X,\sigma) \times \mathcal{G}(X,\sigma): z=w \right\}$,
and it is given by the rule $\big((x,k,z),(z,l,y)\big) \mapsto (x,k+l,y) \in \mathcal{G}(X,\sigma).$ The inverse map is defined on $\mathcal{G}(X,\sigma)$ by $(x,k,y) \mapsto (y,-k,x) \in \mathcal{G}(X,\sigma).$
The unit space is the set $\mathcal{G}^{(0)}:= \{(x,0,x): x \in X\}$. The range and source maps, respectively $r:\mathcal{G}(X,\sigma) \to \mathcal{G}^{(0)}$ and $s:\mathcal{G}(X,\sigma) \to \mathcal{G}^{(0)}$, are given by
\begin{align*}
    r((x,k,y)) = (x,0,x) \quad \text{and} \quad s((x,k,y)) = (y,0,y).
\end{align*}
To introduce a topology, let $n,m \in \mathbb{N}_0$ and $V_1,V_2$ be open subsets of $\Dom(\sigma^n)$ and $\Dom(\sigma^m)$, respectively. We define the sets $W(n,m,V_1,V_2) = \left\{(x,n-m,y): x \in V_1, y \in  V_2, \sigma^n(x) = \sigma^m(y) \right\}.$ They form a basis for a topology of $\mathcal{G}(X,\sigma)$ which makes it a locally compact Hausdorff second countable. Moreover, in this case $\mathcal{G}(X,\sigma)$ is an \emph{\'etale groupoid}, i.e., $r$ and $s$ are local homeomorphisms. Besides, the open sets $W(n,m,V_1,V_2)$ such that $\sigma^n\vert_{V_1}$ and $\sigma^m\vert_{V_2}$ are injective form a basis for $\mathcal{G}(X,\sigma)$.

\begin{definition}
Given an \'etale groupoid $G$, an open subset $W$ of $G$ is called an \textit{open bisection} if the maps $r$ and $s$, when restricted to $W$, are homeomorphisms onto their images.
\end{definition}

\begin{remark}
   The open sets $W(n,m,V_1,V_2)$ such that $\sigma^n\vert_{V_1}$ and $\sigma^m\vert_{V_2}$ are injective are open bisections.
\end{remark}

We identify $\mathcal{G}^{(0)}$ with $X$, via the natural homeomorphism between them. For a continuous function $F:U\to \mathbb{R}$, we think of $\mathbb{R}$ as an additive group, we define a continuous homomorphism $c_F:\mathcal{G}(X,\sigma)\to \mathbb{R}$ as 
\begin{equation}\label{eq:cocycle}
    c_F(x,n-m,y) = \begin{cases}
                        \sum_{i=0}^{n-1}F(\sigma^i(x))-\sum_{i=0}^{m-1}F(\sigma^i(y)), \quad \text{if } x,y \in U,\\
                        \sum_{i=0}^{n-1}F(\sigma^i(x)), \quad \text{if } x \in U \text{ and } y \in U^c \text{ }(m=0),\\
                        -\sum_{i=0}^{m-1}F(\sigma^i(y)), \quad \text{if } x \in U^c \text{ and } y \in U\text{ }(n=0),\\
                        0, \quad \text{otherwise};
                   \end{cases}
\end{equation}
where $(x,n-m,y) \in W(n,m,V_1,V_2)$. We may refer to $c_F$ as the \textit{1-cocycle} associated to the continuous potential $F$. We use some facts about \'etale groupoids, we refer to \cite{Deaconu1995, putnam, Renault1980, SimsSzaboWilliams2020} for this topic.\\
%
\vv

\subsection{Cuntz-Krieger algebras for infinite matrices}\\

Let $A$ a matrix of zeros and ones, and  $\widetilde{\mathcal{O}}_A$ be the unital universal $C^*$-algebra generated by a family of partial isometries $\{S_j:j \in \mathbb{N}\}$, as constructed in \cite{EL1999}, satisfying the following:

\begin{itemize}
    \item[$(EL1)$] $S_i^*S_i$ and $S_j^*S_j$ commute for every $i,j \in \mathbb{N}$;
    \item[$(EL2)$] $S_i^*S_j = 0$ whenever $i \neq j$;
    \item[$(EL3)$] $(S_i^*S_i)S_j = A(i,j)S_j$ for all $i,j \in \mathbb{N}$;
    \item[$(EL4)$] for every pair $X,Y$ of finite subsets of $\mathbb{N}$ such that the quantity
        \begin{equation*}
            A(X,Y,j):= \prod_{x \in X} A(x,j) \prod_{y \in Y} (1-A(y,j)), j \in \mathbb{N}
        \end{equation*}
    is non-zero only for a finite number of $j$'s, we have
        \begin{equation*}
            \left(\prod_{x \in X} S_x^*S_x\right) \left(\prod_{y \in Y} (1-S_y^*S_y)\right) = \sum_{j \in \mathbb{N}} A(X,Y,j)S_j S_j^*.
        \end{equation*}
\end{itemize}

Take the $C^*$-subalgebra $\mathcal{O}_A \subseteq \widetilde{\mathcal{O}}_A$, generated by the same partial isometries $S_j$, $j \in \mathbb{N}$. Note that $\mathcal{O}_A$ may coincide with $\widetilde{O}_A$ under some circunstances as proved in Proposition 8.5 of \cite{EL1999}. When these algebras do not coincide, the algebra $\widetilde{\mathcal{O}}_A$ is the canonical unitization of $\mathcal{O}_A$. The algebra $\mathcal{O}_A$ consists in a generalization for infinite transition matrices of the Cuntz-Krieger algebra \cite{CK1980}. The irreducibility of $A$ is a sufficient condition to grant the uniqueness of these algebras.

Consider the canonical basis $\{\delta_x\}_{x \in \Sigma_A}$ defined as 
\begin{equation*}
    (\delta_x)_y = \begin{cases}
                        1 \text{ if }x=y,\\
                        0 \text{ otherwise}.
                    \end{cases}
\end{equation*}
By Proposition $9.1$ of \cite{EL1999} there exists a unique representation $\pi: \widetilde{\mathcal{O}}_A \to \mathfrak{B}(\ell^2(\Sigma_A))$ s.t. each partial isometry $T_j:=\pi(S_j)$ ($j \in \mathbb{N}$) act on the canonical basis $\{\delta_x\}_{x \in \Sigma_A}$ as
\begin{equation*}
    T_s(\delta_x) = \begin{cases}
                        \delta_{sx} \text{ if } A(s,x_0)=1,\\
                        0 \text{ otherwise};
                    \end{cases} \text{with} \quad
    T_s^*(\delta_x)=\begin{cases}
            \delta_{\sigma(x)} \text{ if } x \in [s],\\
            0 \text{ otherwise}.
        \end{cases}.
\end{equation*}

We also define the projections $P_s:=T_sT_s^*$ and $Q_s:=T_s^*T_s$, given by
\begin{equation*}
    P_s(\delta_\omega)=\begin{cases}
                            \delta_\omega \text{ if } \omega \in [s], \\
                            0 \text{ otherwise;} 
                        \end{cases}  \text{and} \quad
    Q_s(\delta_\omega)=\begin{cases}
                        \delta_\omega \text{ if } \omega \in \sigma([s]),\\
                            0 \text{ otherwise.}
                        \end{cases}
\end{equation*}
The representation $\pi$ is faithful if the graph of $A$ has no terminal circuits (see Proposition 12.2 in \cite{EL1999}) and a sufficient condition for it is that $A$ be irreducible. On the section \ref{section:X_A} we will define the space $X_A$, which is the spectrum of a suitable commutative $C^*$-subalgebra of $\mathcal{O}_A$ or $\widetilde{\mathcal{O}}_A$, the main object of this paper, see next section.

\section{Some results about $\mathcal{O}_A$} \label{sec:results_O_A}

\begin{proposition}\label{prop:O_A_closure_span} $\widetilde{\mathcal{O}}_A$ is isomorphic to the closure of the linear span of the terms $T_\alpha \left(\prod_{i \in F}Q_i\right)T_\gamma^*$, where $\alpha$ and $\gamma$ are admissible finite words or the empty word, and $F \subseteq S$ is finite.
\end{proposition}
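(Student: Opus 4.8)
The plan is to realize the closed linear span appearing in the statement as a unital $C^*$-algebra containing a generating set, and then use that the partial isometries $S_j$ generate $\widetilde{\mathcal{O}}_A$ as a unital $C^*$-algebra. Write $\mathcal{B}_0$ for the (not yet closed) linear span of the elements $T_\alpha(\prod_{i\in F}Q_i)T_\beta^{*}$ and $\mathcal{B}=\overline{\mathcal{B}_0}$. Since the empty word is permitted, with $T$ of the empty word read as $1$, and $\prod_{i\in\emptyset}Q_i$ read as $1$ as well, $\mathcal{B}_0$ already contains $1$ and every $T_j$. It is closed under adjoints because $\bigl(T_\alpha(\prod_{i\in F}Q_i)T_\beta^{*}\bigr)^{*}=T_\beta(\prod_{i\in F}Q_i)T_\alpha^{*}$ and $Q_i^{*}=Q_i$; granting that it is also closed under products, $\mathcal{B}$ is a unital $C^*$-algebra containing the generators, hence equals $\widetilde{\mathcal{O}}_A$. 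All relation-manipulations below use only $(EL1)$--$(EL3)$ and make equal sense for the abstract $S_j$, so the asserted isomorphism is implemented by the representation $\pi$; one may equally argue inside $\widetilde{\mathcal{O}}_A$ itself.

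The product computation rests on three elementary consequences of the Exel--Laca relations, which I would record first. (a) For an admissible word $\beta=\beta_0\cdots\beta_{n-1}$, telescoping with $(EL3)$ (each $A(\beta_k,\beta_{k+1})=1$) gives $T_\beta^{*}T_\beta=Q_{\beta_{n-1}}$. (b) From $(EL3)$, $Q_iT_j=A(i,j)T_j$, and dually $T_j^{*}Q_i=A(i,j)T_j^{*}$; hence a block $\prod_{i\in F}Q_i$ adjacent to a word-operator collapses onto it as the scalar $\prod_{i\in F}A(i,\cdot)\in\{0,1\}$. (c) For admissible words $\alpha,\gamma'$ one computes $(T_\alpha T_{\gamma'})^{*}(T_\alpha T_{\gamma'})=A(\alpha_{\mathrm{last}},\gamma'_0)\,Q_{\gamma'_{\mathrm{last}}}$, so $T_\alpha T_{\gamma'}$ is either $0$ or equals $T_{\alpha\gamma'}$ with $\alpha\gamma'$ admissible; dually, using $(EL2)$ together with (a)--(b), $T_\beta^{*}T_\gamma=0$ unless $\beta$ and $\gamma$ are comparable, in which case $T_\beta^{*}T_\gamma$ reduces to a single word-operator $T_{\gamma'}$ (when $\gamma=\beta\gamma'$) or to a single adjoint word-operator $T_{\beta'}^{*}$ (when $\beta=\gamma\beta'$).

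With these in hand, one expands $T_\alpha(\prod_{i\in F}Q_i)T_\beta^{*}\cdot T_\gamma(\prod_{j\in G}Q_j)T_\delta^{*}$ and splits on the position of the inner word $\beta$ relative to $\gamma$. If $\beta,\gamma$ are incomparable the product is $0$. If $\beta=\gamma$ (both possibly empty), the middle collapses to $\prod_{k\in F\cup G\cup\{\beta_{n-1}\}}Q_k$ (the last letter omitted if $\beta$ is empty) and the product equals $T_\alpha(\prod_k Q_k)T_\delta^{*}\in\mathcal{B}_0$. If $\gamma=\beta\gamma'$ with $\gamma'$ nonempty, push $\prod_{i\in F}Q_i$ to the right across $T_\beta^{*}T_\gamma=T_{\gamma'}$, picking up the scalar $\prod_{i\in F}A(i,\gamma'_0)$, so that by (c) the product is $0$ or a scalar multiple of $T_{\alpha\gamma'}(\prod_{j\in G}Q_j)T_\delta^{*}$; the mirror subcase $\beta=\gamma\beta'$ is handled by pushing $\prod_{j\in G}Q_j$ to the left across $T_{\beta'}^{*}$, producing a scalar multiple of $T_\alpha(\prod_{i\in F}Q_i)T_{\delta\beta'}^{*}$. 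In every case the result lies in $\mathcal{B}_0$, so $\mathcal{B}_0$ is a $*$-algebra and $\mathcal{B}$ the $C^*$-algebra claimed.

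The only genuine difficulty is bookkeeping: in each subcase the $Q$-blocks must be pushed in the direction that leaves a bona fide admissible word-operator on the far left, a bona fide adjoint word-operator on the far right, and precisely a finite product of $Q_i$'s in the middle; and the cases where some of $\alpha,\beta,\gamma,\delta,F,G$ are empty must be checked separately, as must the automatic admissibility of the concatenations $\alpha\gamma'$ and $\delta\beta'$ whenever the corresponding word-operator is nonzero. I would also remark that relation $(EL4)$ plays no role here—the spanning set contains no factors of the form $1-Q_i$—and enters only when one wants the finer description of $\widetilde{\mathcal{O}}_A$ in terms of the range projections $P_j$.
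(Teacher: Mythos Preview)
Your proof is correct and follows essentially the same route as the paper: both show that the linear span is a $*$-algebra by reducing the product of two generators via the trichotomy on $T_\beta^{*}T_\gamma$ (zero, a $Q$, or a residual $T_{\gamma'}$ resp.\ $T_{\beta'}^{*}$), using only $(EL1)$--$(EL3)$, and then observe that the closure contains $1$ and each $T_j$. The paper phrases the split as $|\beta|=|\gamma|$, $|\beta|>|\gamma|$, $|\beta|<|\gamma|$ rather than as comparability of words, but this is the same case analysis; your remark that $(EL4)$ plays no role is also implicit in the paper's computation.
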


\begin{proof} We recall that $\widetilde{\mathcal{O}}_A \simeq C^*(\{T_i: i \in S\}\cup\{1\})$. First, we will prove that 
\begin{equation}\label{eq:spanO_A}
    \spann\left\{T_\alpha \left(\prod_{i \in F}Q_i\right)T_\gamma^*:F \text{ finite}; \text{ }\alpha, \gamma \text{ admissible words, including empty words} \right\}
\end{equation}
is a $*$-algebra. Indeed, the vector space properties are trivially satisfied, as well as the closeness of the involution. For the algebra product, take two generators in $\widetilde{\mathcal{O}}_A$, $T_\alpha \left(\prod_{i \in F}Q_i\right)T_\gamma^*$ and $T_{\alpha'} \left(\prod_{j \in F'}Q_i\right)T_{\gamma'}^*$ like in \eqref{eq:spanO_A}, with $\gamma = \gamma_1 \cdots \gamma_n$ and $\alpha' = \alpha_1'\cdots \alpha_m'$; $n,m \in \mathbb{N}$. We wish that the product
\begin{equation}\label{eq:prod_span}
    T_\alpha \left(\prod_{i \in F}Q_i\right)T_\gamma^*T_{\alpha'} \left(\prod_{j \in F'}Q_i\right)T_{\gamma'}^*
\end{equation}
can be written as a linear combination of terms like the generators of \eqref{eq:spanO_A} and hence we need to study the term $T_\gamma^*T_{\alpha'}$. From the axiom $(EL3)$ for the Cuntz-Krieger algebra for infinite matrices
we have that
\begin{equation} \label{eq:Q_iT_j}
    Q_i T_j = A(i,j) T_j,
\end{equation}
and consequently
\begin{equation} \label{eq:Q_iT_j_star}
    T_j^* Q_i = A(i,j) T_j^*.
\end{equation}
We have three cases to analyze as follows.

\begin{itemize}
    \item[$(a)$] If $n=m$, then by the axiom $(EL2)$ and \eqref{eq:Q_iT_j} we get
        \begin{equation*}
            T_\gamma^*T_{\alpha'} = T_{\gamma_n}^* \cdots T_{\gamma_2}^* \delta_{\gamma_1,\alpha'_1}Q_{\gamma_1} T_{\alpha'_2} \cdots T_{\alpha'_n}
            = \delta_{\gamma_1,\alpha'_1} T_{\gamma_n}^* \cdots T_{\gamma_2}^*  T_{\alpha'_2} \cdots T_{\alpha'_n} = \cdots = \delta_{\gamma,\alpha'} Q_{\gamma_n},
        \end{equation*}
    where $\delta_{\gamma,\alpha'}$ is the Kronecker delta. So,
    \begin{equation*}
    T_\alpha \left(\prod_{i \in F}Q_i\right)T_\gamma^*T_{\alpha'} \left(\prod_{j \in F'}Q_j\right)T_{\gamma'}^*= 
                            \delta_{\gamma,\alpha'} T_\alpha \left(\prod_{i \in F}Q_i\right)Q \left(\prod_{j \in F'}Q_j\right)T_{\gamma'}^*,
    \end{equation*}
    where $Q = Q_{\gamma_n}$ if $n>0$ and $Q = 1$ otherwise. We conclude that the product above belongs to \eqref{eq:spanO_A} in this case;
    \item[$(b)$] if $n>m$, by similar calculations done in the earlier case using \eqref{eq:Q_iT_j_star} instead of \eqref{eq:Q_iT_j} and defining $\overline{\gamma}:=\gamma_1\cdots \gamma_m$ we obtain $T_\gamma^*T_{\alpha'} = \delta_{\overline{\gamma},\alpha'} T_{\gamma_n}^* \cdots T_{\gamma_{m+1}}^*$. By using \eqref{eq:Q_iT_j_star} several but finite times on the term $T_{\gamma_{m+1}}^*\left(\prod_{j \in F'}Q_j\right)$, we have that
    \begin{equation*}
    T_\alpha \left(\prod_{i \in F}Q_i\right)T_\gamma^*T_{\alpha'} \left(\prod_{j \in F'}Q_j\right)T_{\gamma'}^*= \delta_{\overline{\gamma},\alpha'}\left(\prod_{j \in F'}A(j,\gamma_{m+1})\right) T_\alpha \left(\prod_{i \in F}Q_i\right) T_{\gamma'\gamma_{m+1}\cdots \gamma_n}^*.
    \end{equation*}
    We conclude that the product above also belongs to \eqref{eq:spanO_A};
    \item[$(c)$] for $n<m$ the proof is similar to the previous item by using the \eqref{eq:Q_iT_j} instead of \eqref{eq:Q_iT_j_star}.
\end{itemize}
We conclude that \eqref{eq:spanO_A} is a $*$-subalgebra of the $C^*$-algebra $\widetilde{\mathcal{O}}_A$, and hence 
\begin{equation}\label{eq:spanO_A_closed}
    B=\overline{\spann}\left\{T_\alpha \left(\prod_{i \in F}Q_i\right)T_\gamma^*:F \text{ finite}; \text{ }\alpha, \gamma \text{ finite admissible words} \right\}
\end{equation}
is a $C^*$-subalgebra of $\widetilde{\mathcal{O}}_A$. On other hand, if we take $F= \emptyset$, $\alpha = s$, $s\in S$ and $\gamma$ the empty sequence, then we conclude that $T_s \in B$ for all $s \in S$. Also, if we take $F= \emptyset$ and $\alpha = \gamma$ empty sequence, it follows that $1$ belongs to \eqref{eq:spanO_A_closed}. Since $B$ is a $C^*$-subalgebra of $\widetilde{\mathcal{O}}_A$ which contains its generators, we have that $\widetilde{\mathcal{O}}_A = B$.
\end{proof}    

\begin{remark} If $\mathcal{O}_A$ is not unital, then by similar proof as which is done for Proposition \ref{prop:O_A_closure_span} it is easy to verify that
\begin{equation*}
    \mathcal{O}_A \simeq \overline{\spann}\left\{ \begin{array}{l l}
         & F \text{ finite}; \text{ }\alpha, \gamma \text{ finite admissible words};\\
         T_\alpha \left(\prod_{i \in F}Q_i\right)T_\gamma^*: &F \neq \emptyset \text{ or } \alpha \text{ is not the empty word} \\
         &\text{or }  \gamma \text{ is not an empty word}
    \end{array}\right\} .
\end{equation*}
\end{remark}

\begin{definition} Let $\widetilde{\mathcal{D}}_A$ be the commutative unital $C^*$-subalgebra of $\widetilde{\mathcal{O}}_A$ given by
\begin{equation*}
    \widetilde{\mathcal{D}}_A:= \overline{\spann}\left\{T_\alpha \prod_{i \in F}Q_i T_\alpha^*: F \text{ finite}; \alpha \text{ finite word} \right\},
\end{equation*}
and denote by $\mathcal{D}_A$ its non-unital version when $\mathcal{O}_A$ is not unital,
\begin{equation*}
    \mathcal{D}_A:= \overline{\spann}\left\{T_\alpha \prod_{i \in F}Q_i T_\alpha^*: F \text{ finite}; \alpha \text{ finite word}; F \neq \emptyset \text{ or } \alpha \text{ is not the empty word}\right\}.
\end{equation*}
\end{definition}

The proof of the previous proposition for $\alpha = \gamma$ and $\alpha' = \gamma'$ shows that $\widetilde{\mathcal{D}}_A$ is a unital $C^*$-subalgebra of $\widetilde{\mathcal{O}}_A$. Moreover, note that $\widetilde{\mathcal{D}}_A$ is commutative.

Now, we will obtain a more suitable set of generators for $\widetilde{\mathcal{D}}_A$ which will allow us to see its spectrum as a set of configurations on the Cayley tree. 

Consider the free group $\mathbb{F}$ generated by the alphabet $\mathbb{N}$ and let the map
\begin{align*}
    T:\mathbb{F} &\to \widetilde{\mathcal{O}}_A,\\
    s &\mapsto T_s, \\
    s^{-1} &\mapsto T_{s^{-1}}:= T_s^*.
\end{align*}
Also, for any word $g$ in $\mathbb{F}$, take its reduced form $g=x_1\dots x_n$ and define that $T$ realizes the mapping
\begin{equation*}
    g \mapsto T_g:= T_{x_1} \cdots T_{x_n},
\end{equation*}
and that $T_e=1$. In order to $T$ be well-defined, we have imposed that it only acts on the reduced words. In addition, we denote by $\mathbb{F}_+$ the positive cone of $\mathbb{F}$, i.e., the unital sub-semigroup of $\mathbb{F}$ generated by $\mathbb{N}$. The map $T$ is a partial group representation which is semi-saturated and orthogonal as proved in Proposition 3.2 of \cite{EL1999}, and in particular it satisfies the property that

\begin{equation}\label{eq:prod_partial_repr}
    T_g T_h T_{h^{-1}} = T_{gh}T_{h^{-1}}, \quad g,h \in \mathbb{F}.
\end{equation}

\begin{remark}\label{lema:g_non_zero} For any $g \in \mathbb{F}$ reduced which is not in the form $\alpha\gamma^{-1}$, with $\alpha, \gamma \in \mathbb{F}_+$, it follows that $T_g = 0$.
\end{remark}

Consider the elements of $\widetilde{\mathcal{D}}_A$ given by $e_g := T_g T_g^*$, where $g$ is in the reduced form. Such elements commute each other and they are projections (see \cite{EL1999,Exel_partial_action}), and therefore they generate a commutative $C^*$-subalgebra of $\widetilde{\mathcal{O}}_A$.

\begin{proposition}\label{prop:D_A_isomorphic_e_g} $\widetilde{\mathcal{D}}_A \simeq C^*(\{e_g:g \in \mathbb{F}\})$.
\end{proposition}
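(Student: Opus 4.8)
The goal is to show the two commutative $C^*$-algebras $\widetilde{\mathcal{D}}_A$ and $C^*(\{e_g : g \in \mathbb{F}\})$ coincide as subalgebras of $\widetilde{\mathcal{O}}_A$, where $e_g = T_g T_g^*$. Since both are closed $*$-subalgebras, it suffices to prove that each set of generators lies in the other algebra, i.e. that every generator $T_\alpha \prod_{i \in F} Q_i T_\alpha^*$ of $\widetilde{\mathcal{D}}_A$ belongs to $C^*(\{e_g\})$, and conversely that every $e_g$ belongs to $\widetilde{\mathcal{D}}_A$.

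The plan is as follows. First I would establish the key identity expressing $Q_i$ in terms of $e_g$'s. Observe that $Q_i = T_i^* T_i = T_{i^{-1}} T_i$; multiplying by $T_i T_i^* = e_i$ on the left and using the partial representation relation \eqref{eq:prod_partial_repr} together with the fact that $T$ is semi-saturated, one gets $Q_i = e_{i^{-1}} \cdot (\text{something})$... more directly, note $e_{i^{-1}} = T_i^* (T_i^*)^* = T_i^* T_i = Q_i$. Wait — indeed $e_{i^{-1}} = T_{i^{-1}} T_{i^{-1}}^* = T_i^* T_i = Q_i$, so $Q_i$ is literally one of the $e_g$'s. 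More generally, for an admissible word $\alpha = \alpha_1 \cdots \alpha_n$, one has $T_\alpha Q_i T_\alpha^* = T_\alpha T_i^* T_i T_\alpha^* = T_{\alpha i^{-1}} T_{\alpha i^{-1}}^* = e_{\alpha i^{-1}}$ using the reduced-word definition of $T_g$ and Remark~\ref{lema:g_non_zero}. So a single-$Q$ generator is already an $e_g$. The remaining work is handling the finite product $\prod_{i \in F} Q_i$: I would show $T_\alpha \prod_{i \in F} Q_i T_\alpha^* = \prod_{i \in F} (T_\alpha Q_i T_\alpha^*) = \prod_{i \in F} e_{\alpha i^{-1}}$, which requires checking that conjugation by $T_\alpha$ is multiplicative on the commuting projections $\{Q_i\}$ — this follows because $T_\alpha^* T_\alpha = Q_{\alpha_n}$ acts as a unit on each $Q_i$ appearing (using $(EL3)$: $Q_{\alpha_n} Q_i$, and the $Q_i$'s commute among themselves). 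Hence every $\widetilde{\mathcal{D}}_A$-generator lies in $C^*(\{e_g\})$, giving $\widetilde{\mathcal{D}}_A \subseteq C^*(\{e_g\})$.

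For the reverse inclusion $C^*(\{e_g\}) \subseteq \widetilde{\mathcal{D}}_A$, I would take an arbitrary reduced $g \in \mathbb{F}$. By Remark~\ref{lema:g_non_zero}, either $T_g = 0$ (so $e_g = 0 \in \widetilde{\mathcal{D}}_A$) or $g = \alpha \beta^{-1}$ with $\alpha, \beta \in \mathbb{F}_+$ admissible. In the latter case $e_g = T_{\alpha\beta^{-1}} T_{\alpha\beta^{-1}}^* = T_\alpha T_\beta^* T_\beta T_\alpha^* = T_\alpha Q_{\beta_{|\beta|}} T_\alpha^*$ (telescoping $T_\beta^* T_\beta$ down to the last letter via $(EL2)$ and $(EL3)$, exactly as in case $(a)$ of Proposition~\ref{prop:O_A_closure_span}), which is manifestly a generator of $\widetilde{\mathcal{D}}_A$ with $F = \{\beta_{|\beta|}\}$; if $\beta$ is empty, $e_g = T_\alpha T_\alpha^* = e_\alpha$ which is the $F = \emptyset$ generator. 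Thus $e_g \in \widetilde{\mathcal{D}}_A$ for all $g$, and taking the generated $C^*$-algebra gives the inclusion. Combining the two inclusions yields the isomorphism (in fact equality as subalgebras of $\widetilde{\mathcal{O}}_A$).

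The main obstacle, and the point deserving the most care, is the bookkeeping with reduced words: one must consistently apply Remark~\ref{lema:g_non_zero} and the relation \eqref{eq:prod_partial_repr} to be sure that products like $T_\alpha T_\beta^*$ really equal $T_{\alpha\beta^{-1}}$ when $\alpha\beta^{-1}$ is already reduced, and to correctly collapse non-reduced expressions using $(EL1)$–$(EL3)$. The identity $T_\alpha \left(\prod_{i\in F} Q_i\right) T_\alpha^* = \prod_{i \in F} e_{\alpha i^{-1}}$ is the crux; once it is pinned down, both inclusions are short. None of this requires anything beyond the Exel–Laca relations and the structural facts about the partial representation $T$ already recalled in the excerpt.
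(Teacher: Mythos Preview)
Your overall strategy---showing mutual containment of generators inside $\widetilde{\mathcal{O}}_A$---is exactly the paper's, and your treatment of the inclusion $C^*(\{e_g\}) \subseteq \widetilde{\mathcal{D}}_A$ (telescoping $T_\beta^* T_\beta$ down to $Q_{\beta_{|\beta|}}$) matches the paper's argument.

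There is, however, a genuine gap in the other direction. Your key identity $T_\alpha Q_i T_\alpha^* = e_{\alpha i^{-1}}$ is only valid when $\alpha i^{-1}$ is already reduced, i.e.\ when $i$ is \emph{not} the last letter $\alpha_t$ of $\alpha$. When $i = \alpha_t$, the word $\alpha i^{-1}$ reduces to $\alpha' = \alpha_0\cdots\alpha_{t-1}$, so $e_{\alpha i^{-1}} = e_{\alpha'}$; but a direct computation gives $T_\alpha Q_{\alpha_t} T_\alpha^* = T_\alpha T_\alpha^* = e_\alpha$, which is strictly smaller than $e_{\alpha'}$. Thus your product formula $\prod_{i\in F} e_{\alpha i^{-1}}$ is wrong precisely in the case $F = \{\alpha_t\}$ (and also when $F=\emptyset$, where the empty product gives $1$ rather than $e_\alpha$). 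The paper confronts exactly this point: it proves by induction on $|F|$ the corrected identity
\[
T_\alpha \Bigl(\prod_{j\in F} Q_j\Bigr) T_\alpha^* \;=\; e_\alpha \prod_{\substack{j\in F\\ j\neq \alpha_t}} e_{\alpha j^{-1}},
\]
treating the case $i=\alpha_t$ separately at the base step. Your multiplicativity-of-conjugation observation (that $T_\alpha^* T_\alpha = Q_{\alpha_t}$ is absorbed by $T_\alpha$) is in fact a cleaner route than the paper's induction and gets you most of the way; you just need to split off the $i=\alpha_t$ factor and observe it contributes $e_\alpha$ rather than $e_{\alpha'}$. Once that bookkeeping is fixed, the argument goes through.
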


\begin{proof} The main idea of the proof is to show that the faithful representation of the $C^*$-algebra $\widetilde{\mathcal{D}}_A$ in $\mathfrak{B}(\ell^2(\Sigma_A))$ coincides with the $C^*$-subalgebra $\mathfrak{U} = C^*(\{e_g:g \in \mathbb{F}\})$ contained in $\mathfrak{B}(\ell^2(\Sigma_A))$, which implies that they are isomorphic. We will show that the products $T_\alpha \left(\prod_{i \in F} Q_i\right) T_\alpha^*$ can be written as elements of $\mathfrak{U}$, and conversely $e_g$ can be written as elements of $\widetilde{\mathcal{D}}_A$. 

Let $g \in \mathbb{F}$. W.l.o.g. we may assume that $T_g \neq 0$. By Remark \ref{lema:g_non_zero} we have that $g = \alpha \gamma^{-1}$ such that $\alpha, \gamma \in \mathbb{F}_+$, with $\alpha = \alpha_0 \cdots \alpha_t$ and $\gamma = \gamma_0 \cdots \gamma_u$ for the respective cases when $\alpha$ and $\gamma$ are not $e$. Assume that $g$ is already its reduced form, i.e., $g = \alpha \gamma^{-1}, \alpha, \gamma^{-1}$ or $e$. By the axiom $(EL3)$ we have
\begin{align*}
    e_g &= T_{\alpha} T_{\gamma}^* T_{\gamma} T_{\alpha}^* = T_{\alpha} T_{\gamma_u}^*\cdots T_{\gamma_1}^* Q_{\gamma_0} T_{\gamma_1} \cdots T_{\gamma_{u}} T_{\alpha}^* 
    = T_{\alpha} T_{\gamma_{u}}^*\cdots T_{\gamma_1}^* T_{\gamma_1} T_{\gamma_{u}} T_{\alpha}^*
    = \cdots \\
    &= T_{\alpha} T_{\gamma_{u}}^* T_{\gamma_{u}} T_{\alpha}^* = T_{\alpha} Q_{\gamma_u} T_{\alpha}^* \in \widetilde{\mathcal{D}}_A,
\end{align*}
and we conclude that $\mathfrak{U} \subseteq \widetilde{\mathcal{D}}_A$. The result above is similar for $\alpha =e$ or $\gamma = e$. For the remaining inclusion, let $\alpha \in \mathbb{F}_+$ admissible or $\alpha = e$ in its reduced form, and $F \subseteq \mathbb{N}$ finite. If $\alpha = e$, we have that 
\begin{equation*}
    T_\alpha \left(\prod_{i \in F} Q_i \right) T_\alpha^* = \prod_{i \in F} Q_i = \prod_{i \in F} e_{i^{-1}} \in  \mathfrak{U}.
\end{equation*}
On other hand, if $F = \emptyset$ and $\alpha \neq e$ is an admissible word, we have:
\begin{equation*}
    T_\alpha \left(\prod_{i \in F} Q_i \right) T_\alpha^* = T_\alpha  T_\alpha^* =  e_{\alpha} \in \mathfrak{U}.
\end{equation*}
Now, suppose that $\alpha = \alpha_0 \cdots \alpha_{t} \neq e$ reduced and $F \neq \emptyset$. We will prove that 
\begin{equation}\label{eq:prod_e_alpha}
    T_\alpha \left(\prod_{j \in F} Q_j \right) T_\alpha^* = e_\alpha \prod_{\substack{j \in F \\ j \neq \alpha_{t}}} e_{\alpha j^{-1}}
\end{equation}
by induction in $|F|$. If $|F| = 1$ we have that
\begin{align*}
    T_\alpha \left(\prod_{j \in F} Q_j \right) T_\alpha^* = T_\alpha T_{i^{-1}} T_i T_{\alpha^{-1}},
\end{align*}
where $i \in F$. If $\alpha_t = i$, since $e_i$ is a projection we get
\begin{equation*}
    T_\alpha T_{i^{-1}} T_i T_{\alpha^{-1}} = T_{\alpha'} T_i T_{i^{-1}} T_i T_{i^{-1}} T_{(\alpha')^{-1}} = T_{\alpha'}  T_i T_{i^{-1}} T_{(\alpha')^{-1}} = e_{\alpha},
\end{equation*}
where $\alpha' = e$ if $t=0$ and $\alpha' = \alpha_0 \cdots \alpha_{t - 1}$ if $t>0$. On other hand, if $\alpha_t \neq i$, it follows that
\begin{equation*}
    T_\alpha T_{i^{-1}} T_i T_{\alpha^{-1}} = T_{\alpha i^{-1}}  T_{(\alpha i^{-1})^{-1}} = e_{\alpha i^{-1}},
\end{equation*}
and by using \eqref{eq:prod_partial_repr} one can verify that $e_\alpha e_{\alpha i^{-1}} = e_{\alpha i^{-1}}$. So, anyway we have that \eqref{eq:prod_e_alpha} is true for $|F|=1$. Now, suppose the validity of \eqref{eq:prod_e_alpha} for $|F| = n-1$, $n>1$. For $|F| = n$, fix $k \in F$. One can use \eqref{eq:prod_partial_repr} and the claim 1 of the proof of Proposition 3.2 of \cite{EL1999} in order to obtain
\begin{align*}
    T_\alpha \left(\prod_{i \in F} Q_i \right) T_\alpha^* &= T_\alpha T_\alpha^* T_\alpha \left(\prod_{i \in F} Q_i \right) T_\alpha^* =   T_\alpha Q_{\alpha_t} \left(\prod_{i \in F} Q_i \right) T_\alpha^* = T_\alpha Q_k Q_{\alpha_t} \left(\prod_{i \in F\setminus\{k\}} Q_i \right) T_\alpha^*\\ 
    &= T_\alpha Q_k T_\alpha^* T_\alpha \left(\prod_{i \in F\setminus\{k\}} Q_i \right) T_\alpha^* =  e_\alpha \left(\prod_{\substack{j \in \{k\}\\ j \neq \alpha_t}} e_{\alpha j^{-1}}\right) e_\alpha \left(\prod_{\substack{j \in F\setminus\{k\}\\ j \neq \alpha_t}} e_{\alpha j^{-1}}\right), 
\end{align*}
where in the last equality we used \eqref{eq:prod_e_alpha} for $|F|=1$ and the induction hypothesis. Since the $e_g$'s commute and they are projections, we conclude that
\begin{align*}
    T_\alpha \left(\prod_{i \in F} Q_i \right) T_\alpha^* &= e_\alpha \left(\prod_{\substack{j \in F\\ j \neq \alpha_{|\alpha|-1}}} e_{\alpha j^{-1}}\right),
\end{align*}
as we wished to prove. The direct consequence of the results above is that $T_\alpha \left(\prod_{i \in F} Q_i \right) T_\alpha^* \in \mathfrak{U}$, for all $\alpha$ admissible finite word and for every $F \subseteq S$ finite. Then,
\begin{equation*}
    \spann \left\{T_\alpha\left(\prod_{i \in F}Q_i\right)T_\alpha^*: \alpha\text{ admissible }, 0\leq|\alpha|< \infty ,0\leq |F|< \infty \right\} \subseteq \mathfrak{U},
\end{equation*}
and since $\mathfrak{U}$ is a $C^*$-algebra we conclude that the closure of the left hand side of the relation above is still contained in $\mathfrak{U}$, i.e., $\widetilde{\mathcal{D}}_A \subseteq \mathfrak{U}$. 
\end{proof}

\begin{remark} The elements of $\widetilde{\mathcal{D}}_A$ are diagonal operators.
\end{remark}

Now we have all the necessary background to introduce the space we call \emph{generalized countable Markov} (GCMS), which we denote by $X_A$. One of the authors and M. Laca originally proposed $X_A$ in \cite{EL1999}, this set contains the standard $\Sigma_A$, and both coincide when $\Sigma_A$ is locally compact, as we show in the next section.

\section{The set $X_A$}\label{section:X_A}

In the construction of the space $X_A$ we have that $\Sigma_A$ is dense in $X_A$, and the same is true for $Y_A=X_A \backslash \Sigma_A$ when $Y_A\neq \emptyset$. Some of the most studied CMS in the literature of the standard formalism, as the full shift and the renewal shift, are compact spaces when we consider their generalized versions $X_A$. In other words, in these cases, $X_A$ is a compactification of $\Sigma_A$. The compactness of the configuration space is a key property to obtaining many results in thermodynamic formalism. It will be one of the ingredients to prove that the Gurevich pressure $P_G(F)$ defined in $\Sigma_A$ can be used as the natural definition of pressure in the corresponding generalized space $X_A$.

\begin{definition}[Generalized countable Markov shift (GCMS)]\label{def:X_A} Given an irreducible transition matrix $A$ on the alphabet $\mathbb{N}$, \textit{the generalized countable Markov shifts} are the sets
\begin{equation*}
    X_A := \text{spec}\,\mathcal{D}_A \quad \text{and} \quad \widetilde{X}_A := \text{spec}\,\widetilde{\mathcal{D}}_A,
\end{equation*}
both endowed with the weak$^*$ topology.
\end{definition}

\begin{remark} The sets $X_A$ and $\widetilde{X}_A$ were denoted by $\Omega_A$ and $\widetilde{\Omega}_A$, respectively, in the original reference \cite{EL1999}.
\end{remark}

\begin{remark}\label{remark:unital_vs_full_row_of_ones} We remind the reader that when $\mathcal{O}_A$ is unital, then $\mathcal{D}_A$ is also unital. Consequently, $\widetilde{\mathcal{O}}_A = \mathcal{O}_A$, and then $\widetilde{\mathcal{D}}_A = \mathcal{D}_A$, which implies $\widetilde{X}_A = X_A$. Besides, $X_A$ is locally compact and $\widetilde{X}_A$ is always compact. In particular, for every matrix with a full row of $1$'s, $X_A$ is compact. In fact, if there exists a symbol $j \in \mathbb{N}$ such that $A(j,n) = 1$ for every $n \in \mathbb{N}$, then $Q_j = T_j^* T_j = 1 \in \mathcal{D}_A$ and therefore $X_A$ is compact. 
\end{remark}

\begin{remark}\label{remark:subalgebra_character} Given a commutative algebra $B$ and $J$ a closed self-adjoint two-sided ideal of $B$, then the set of characters of $J$ is given by $\widehat{J} = \{\varphi \in \widehat{B}: \varphi\vert_J \neq 0\}$. Therefore,
\begin{equation*}
    \widehat{\mathcal{D}}_A = \{\varphi\vert_{\mathcal{D}_A}:\varphi \in \widehat{\widetilde{D}}_A, \quad \varphi\vert_{\mathcal{D}_A} \neq 0\} = X_A.
\end{equation*}
It is straightforward to conclude that $X_A \subseteq \widetilde{X}_A$.
\end{remark}

\begin{proposition}\label{prop:X_A_tilde_X_A} $X_A = \widetilde{X}_A \setminus \{\varphi_0\}$, where $\varphi_0$ is the character in $\widetilde{X}_A$ given by
\begin{equation*}
    \varphi_0(e_g) := \begin{cases}
                        1, \quad \text{if } g=e;\\
                        0, \quad \text{otherwise.}
                      \end{cases}
\end{equation*}
\end{proposition}

\begin{proof} Suppose that $\varphi_0 \in X_A$. For $g \neq e$ we have $\varphi_0(T_gT_g^*) = \varphi_0(e_g) = 0$, and then $\varphi_0\vert_{\mathcal{D}_A} = 0$, which is not a character for the algebra $\mathcal{D}_A$. By Remark \ref{remark:subalgebra_character} we have that $X_A \subseteq \widetilde{X}_A\setminus \{\varphi_0\}$. Conversely, by Proposition \ref{prop:D_A_isomorphic_e_g}, for given $\varphi \in \widetilde{X}_A$ such that $\varphi \neq \varphi_0$, there exists $g \neq e$ such that $\varphi(e_g)=1$, and hence $\varphi\vert_{\mathcal{D}_A} \neq 0$. Then  $\varphi\vert_{\mathcal{D}_A}$ is a character of $\mathcal{D}_A$, and again by Remark \ref{remark:subalgebra_character} we have that $\widetilde{X}_A\setminus \{\varphi_0\} \subseteq X_A$. 
\end{proof}

\begin{remark} Observe that if $\mathcal{D}_A$ is not unital, then $\widetilde{X}_A$ is the Alexandrov compactification of $X_A$, where the compactification point of $X_A$ is the character $\varphi_0$. 
\end{remark}

The next result connects equivalent conditions to $\mathcal{O}_A$ be unital.

\begin{theorem}[Theorem 8.5 of \cite{EL1999}]\label{thm:O_A_unital} The following are equivalent:
\begin{itemize}
    \item[$(i)$] $\mathcal{O}_A=\mathcal{\widetilde{O}}_A$;
    \item[$(ii)$] $\mathcal{O_A}$ is unital;
    \item[$(iii)$] $\varphi_0 \notin \widetilde{X}_A$;
    \item[$(iv)$] On the space $\{0,1\}^S$ (column space of the matrix $A$, endowed with the product topology), the null vector is not a limit point of the columns of $A$;
    \item[$(v)$] There is $Y\subseteq S$ finite such that $A(\emptyset,Y,j)$ has finite support on $j$.
\end{itemize}
\end{theorem}

We will introduce two new notions related to compactness, which will be used to prove that the Denker-Yuri pressure coincides with the Gurevich pressure for a class of GCMS which includes many of the most studied cases of the literature, as the full shift and the renewal shift.  In the next definition, for each $i \in \mathbb{N}$, we set $r_A(i):=\{j \in \mathbb{N}: A(i,j) = 1\}$.

\begin{definition} We say that the irreducible matrix $A$ is \emph{$r$-compact} when, for every $i \in \mathbb{N}$, there exists $m = m(i) \in \mathbb{N}$ and a finite set $\{j_1,...,j_m\} \subseteq \mathbb{N}$, such that
\begin{equation*}
    r_A(i)^c \subseteq \bigcup_{k=1}^m r_A(j_k).
\end{equation*}
\end{definition}

\begin{lemma} \label{lemma:equivalence_compactness_X_A_and_graph} Let $A$ be an irreducible matrix $A$. Then, $X_A$ is compact if and only if $A$ is $r$-compact.
\end{lemma}

\begin{proof} Suppose that $X_A$ is compact, then by Theorem \ref{thm:O_A_unital} $(v)$, this is equivalent to state that there exists $Y \subseteq \mathbb{N}$, finite set, such that $A(\emptyset, Y, j) = 0$ except for $j \in J_Y \subseteq \mathbb{N}$, a finite set. Observe that
\begin{equation*}
    A(\emptyset,Y,j) = \prod_{y \in Y}(1-A(y,j)),
\end{equation*}
and then $A(\emptyset,Y,j) = 1$ iff $A(y,j) = 0, \text{ for every } y \in Y$.
By definition, we have
\begin{equation*}
    J_Y = \{j \in \mathbb{N}: A(y,j) = 0, \text{ for every } y \in Y\} = \bigcap_{y \in Y} r_A(y)^c.
\end{equation*}
For every $i \in \mathbb{N}$, we may write $r_A(i)^c = (r_A(i)^c \cap J_Y^c) \sqcup (r_A(i)^c \cap J_Y)$.
Now, if $j \in r_A(i)^c \cap J_Y$, since $A$ is irreducible, there exists $k_j \in \mathbb{N}$ such that $A(k_j,j) = 1$, so we fix a $k_j$ for each $j \in J_Y$. We have that
\begin{equation*}
    r_A(i)^c \subseteq J_Y^c \cup \left( \bigcup_{j \in J_Y}r_A(k_j)\right)  = \left(\bigcup_{y \in Y}r_A(y) \right) \cup \left( \bigcup_{j \in J_Y}r_A(k_j)\right).
\end{equation*}
Therefore $A$ is $r$-compact. Conversely, suppose now that $A$ is $r$-compact. Then, for every $i \in \mathbb{N}$ there exists $m = m(i) \in \mathbb{N}$ and a finite set $\{j_1,...,j_m\} \subset \mathbb{N}$, $j_k = j_k(i)$ such that
\begin{equation*}
    r_A(i)^c \subseteq \bigcup_{k=1}^m r_A(j_k).
\end{equation*}
For every $p \in \mathbb{N}$ we have either $p \in r_A(i)$ or $p \in r_A(i)^c$, and then $\mathbb{N} = r_A(i) \cup \bigcup_{k=1}^m r_A(j_k)$. So consider the set $Y = \{i\} \cup \{j_1,...,j_m\}$.

For every $j \in \mathbb{N}$, we have the following:
\begin{equation*}
    A(\emptyset,Y,j) = \prod_{y \in Y}(1 - A(y,j)) = (1-A(i,j)) \prod_{k =1}^m(1 - A(j_k,j)),
\end{equation*}
If $j \in r_A(i)$, then $A(i,j) = 1$. When $j \in r_A(i)^c$ we have that $j \in r_A(j_k)$ for some $k$, that is, $A(j_k, j) = 1$. Therefore $A(\emptyset,Y,j) = 0$ for every $j \in \mathbb{N}$, that is, $X_A$ is compact.
\end{proof}

\begin{definition}\label{def:i_1} Given $A$ an irreducible matrix and its corresponding CMS $\Sigma_A$, we define the map $i_1: \Sigma_A \xhookrightarrow{} X_A$, $\Sigma_A \ni \omega \mapsto \varphi_{\omega}\vert_{\mathcal{D}_A} \in X_A,$ where $\varphi_{\omega}$ is the evaluation map
\begin{equation}\label{eq:varphi_omega_definition}
    \varphi_\omega(R)= (R\delta_\omega,\delta_\omega), \quad R \in \mathfrak{B}(\ell^2(\Sigma_A)).
\end{equation}
We will be denoted by $i_1$ and the context will let evident if the codomain is $X_A$ or $\widetilde{X}_A$.
\end{definition}

The next lemma is straightforward consequence of the Gelfand's representation theorem.

\begin{lemma}\label{lemma:dense_character_general} Given a commutative $C^*$-algebra $B$, let $Y \subseteq \widehat{B}$ such that for any $a \in B$ it follows that
\begin{equation*}
    (\varphi(a) = 0 \quad \forall \ \varphi \in Y) \implies a=0,
\end{equation*}
i.e., $Y$ separates points in $B$. Then $Y$ is dense in $\widehat{B}$ (weak$^*$ topology).
\end{lemma}

\begin{lemma}\label{lemma:i_1_inj_cont} The map $i_1$ is a topological embedding.
\end{lemma}

\begin{proof} The injectivity is straightforward: let $\omega,\eta \in \Sigma_A$ such that $\varphi_\omega = \varphi_\eta$. If $\omega \neq \eta$, then there exists $n \in \mathbb{N}$ such that $\omega_0\dots\omega_{n-1} \neq \eta_0\dots\eta_{n-1}$. By taking $T = T_{\omega_0\dots\omega_{n-1}}T_{\omega_0\dots\omega_{n-1}}^* \in \mathcal{D}_A$ we have that
\begin{align*}
    \varphi_\omega(T_{\omega_0\dots\omega_{n-1}}T_{\omega_0\dots\omega_{n-1}}^*) &= (T_{\omega_0\dots\omega_{n-1}}T_{\omega_0\dots\omega_{n-1}}^* \delta_\omega, \delta_\omega) = 1 \text{ and}\\
    \varphi_\eta(T_{\omega_0\dots\omega_{n-1}}T_{\omega_0\dots\omega_{n-1}}^*) &= (T_{\omega_0\dots\omega_{n-1}}T_{\omega_0\dots\omega_{n-1}}^* \delta_\eta, \delta_\eta) = 0,
\end{align*}
and hence we have that $\varphi_\omega \neq \varphi_\eta$, a contradiction, and we conclude that $i_1$ is injective. Now we prove its continuity. It is sufficient to prove that $\varphi_{\omega^n}(T)\to \varphi_\omega(T)$ for every $T$ in the form $T=T_\alpha(\prod_{i\in F}Q_i)T_\alpha^*$, since these are generators of $\mathcal{D}_A$ and the elements of the spectrum are $*$-homomorphisms. Let $(\omega^n)_{n \in \mathbb{N}}$ be a sequence in $\Sigma_A$ converging to $\omega \in \Sigma_A$. 

Take $T=T_\alpha(\prod_{i\in F}Q_i)T_\alpha^*$ an arbitrary generator of $\mathcal{D}_A$. We have that
\begin{align*}
    \varphi_{\omega}(T)&=
    \begin{cases}
        1,\quad \text{if $\omega\in [\alpha]$ and $A(i,\omega_{|\alpha|})=1,\text{ }\forall i\in F$}, \\
        0,\quad \text{otherwise}.
    \end{cases}
\end{align*}
For large enough $m \in \mathbb{N}$, we have that $\omega_0^n\dots\omega^n_{|\alpha|}=\omega_0\dots\omega_{|\alpha|}$ for every $n > m$, and then we have $\varphi_{\omega^n}(T)=\varphi_{\omega}(T)$. Since $F$ and $\alpha$ are arbitrary, we have $\varphi_{\omega^n}(T)\to \varphi_\omega (T)$ for every generator $T \in \mathcal{D}_A$, that is,  $\varphi_{\omega^n}\stackrel{w^*}{\to} \varphi_\omega$. Therefore $i_1$ is continuous. The proof of the continuity of $i_1^{-1}$ is similar.
\end{proof}

The lemma above shows that $\Sigma_A$ is topologically copied in $X_A$. From now on, we omit the notations of restriction to $\widetilde{\mathcal{D}}_A$ and $\mathcal{D}_A$ on $\varphi_\omega$. In the next result, we prove that $\Sigma_A$, seen as a subset of $X_A$ (or $\widetilde{X}_A$), is weak$^*$-dense.

\begin{proposition}\label{prop:i_1_Sigma_A_dense} $i_1(\Sigma_A)$ is dense in $X_A$ (in $\widetilde{X}_A$).
\end{proposition}

\begin{proof} We claim that the elements of $i_1(\Sigma_A)$ separate points in $\mathcal{D}_A$. Suppose that there exists $T \in \mathcal{D}_A$ such that $\varphi_\omega (T) = 0$ for every $\omega \in \Sigma_A$. We recall that $T$ is a diagonal operator. Suppose that $\varphi_\omega(T) = 0$ for every $\omega \in \Sigma_A$, then
\begin{equation*}
    \varphi_\omega(T) = (T \delta_\omega, \delta_\omega) = 0,
\end{equation*}
then $\spann\{\delta_\omega:\omega \in \Sigma_A\} \subseteq \ker T$, and therefore $\overline{\spann}\{\delta_\omega:\omega \in \Sigma_A\} \subseteq \ker T$ since $\ker T$ is closed. We conclude that $T =0$ and the claim is proved. By Lemma \ref{lemma:dense_character_general} it follows that $i_1(\Sigma_A) = \{\varphi_\omega\}_{\omega \in \Sigma_A}$ is dense in $X_A$. Since $X_A$ is dense in $\widetilde{X}_A$, we have that $i_1(\Sigma_A)$ is also dense in $\widetilde{X}_A$. 
\end{proof}

Now, we prove that if $A$ is row-finite, then $\Sigma_A = X_A$, that is, $i_1$ is a surjection. 

\begin{proposition}\label{prop:Sigma_A_locally_compact_coincides_with_X_A} If $\Sigma_A$ is locally compact, then $i_1$ is surjective. 
\end{proposition}

\begin{proof} Let $\varphi \in X_A$. By Proposition \ref{prop:i_1_Sigma_A_dense}, there exists a sequence $(\varphi_{\omega^n})_{n\in \mathbb{N}}$ in $i_1(\Sigma_A)$, where $\omega^n \in \Sigma_A$ for every $n \in \mathbb{N}$, and such that $\varphi_{\omega^n}\stackrel{w^*}{\to} \varphi$. Since $\varphi$ is a character, we have necessarily that $\varphi \neq 0$, and by Proposition \ref{prop:D_A_isomorphic_e_g}, there exists $g \in \mathbb{F}$ such that $\varphi(e_g) \neq 0$. We assume $g = \alpha$ a positive admissible word, the other cases are similar. Since the elements $e_g$ are projections, we obtain $\varphi(e_\alpha) = \varphi(e_\alpha^2)$ and hence $\varphi(e_\alpha) = 1$. Then, there exists $N \in \mathbb{N}$ such that, for every $n > N$, we have
\begin{equation*}
    \varphi_{\omega^n}(e_\alpha) = (T_\alpha T_\alpha^* \delta_{\omega^n},\delta_{\omega^n} ) = 1,
\end{equation*}
and hence for every $n >N$ we have that $\omega^n \in [\alpha]$. On the other hand, since $\Sigma_A$ is locally compact, we have that $[\alpha]$ is compact, and so the sequence $(\omega^n)_{n\in \mathbb{N}}$ has a convergent subsequence $(\omega^{n_p})_{p\in \mathbb{N}}$ in $[\alpha]$, converging to some $\omega \in [\alpha] \subset \Sigma_A$. By continuity of $i_1$ proved in Lemma \ref{lemma:i_1_inj_cont} we have that $\lim_p \varphi_{\omega^{n_p}} = \varphi_\omega$, then
\begin{equation*}
    \varphi = \lim_n \varphi_{\omega^n} = \lim_p \varphi_{\omega^{n_p}} = \varphi_\omega,
\end{equation*}
and therefore $i_1$ is surjective. 
\end{proof}

\begin{remark} When $\Sigma_A$ is compact, $i_1$ is simply a homeomorphism between compact metric spaces. If $\Sigma_A$ is non-compact, but it is locally compact, we have $X_A=\Sigma_A$, differently from O. Shwartz's paper \cite{Shwartz2019}, where is presented a construction of the Martin boundary for locally compact CMS. The approaches are different since the Martin boundary adds extra points to the space $\Sigma_A$. In addition, if $\Sigma_A$ is not locally compact and $X_A$ compact, $i_1$ is a compactification of $\Sigma_A$, and indeed it is not the Stone-C\v{e}ch compactification since both $\Sigma_A$ and $X_A$ are metric spaces.
\end{remark}

Proposition \ref{prop:D_A_isomorphic_e_g} gives us a easier way to see $X_A$ (respect. $\widetilde{X}_A$). Given $\varphi \in X_A$ (or $\widetilde{X}_A$). We can determine its image completely simply by taking its values on the generators $(e_g)_{g \in \mathbb{F}}$. Since $e_g$ is idempotent for any $g$, it follows that $\varphi(e_g) \in \{0,1\}$. By endowing $\{0,1\}^{\mathbb{F}}$ with the product topology when we consider the discrete topology in $\{0,1\}$, we introduce now the mapping that realizes the characters of the generalized Markov shift spaces as configurations on the Cayley tree generated by $\mathbb{F}$.

\begin{definition}\label{projection} Let $A$ be a transition matrix and consider its respective generalized Markov shift space $X_A$. Define the map $i_2: X_A \to \{0,1\}^{\mathbb{F}}$ (repect. for $\widetilde{X}_A$) given by $i_2(\varphi) := \xi$, where
\begin{align*}
    \xi_g := \pi_g(\xi) = \varphi(e_g), \quad g \in \mathbb{F},
\end{align*}
where $\pi_g: \{0,1\}^\mathbb{F}\to \{0,1\}$ is the canonical projection. An element $\xi \in \{0,1\}^\mathbb{F}$ is called a \textit{configuration}. We say that a configuration $\xi$ \textit{is filled in $g \in \mathbb{F}$} when $\xi_g = 1$.  
\end{definition}
The following proposition shows that the topological properties of $X_A$ are preserved by $i_2$.

\begin{proposition}\label{prop:i_2_top_embedding} The inclusion $i_2$ is a topological embedding.
\end{proposition}

\begin{proof} First we show that $i_2$ is injective and continuous. The injectivity is straightforward: given $\varphi, \psi \in \widetilde{X}_A$ s.t. $i_2(\varphi) = i_2(\psi)$, it follows that $\varphi(e_g)=\psi(e_g)$ for all $g \in \mathbb{F}$. Since $\{e_g:g \in \mathbb{F}\}$ generates $\widetilde{\mathcal{D}}_A$, there exists an unique *-homomorphism which extends the function $e_g \mapsto \varphi(e_g), (g \in \mathbb{F})$ and such uniqueness implies that $\varphi = \psi$, i.e., $i_2$ is injective. For the continuity, let $(\varphi_n)_{n \in \mathbb{N}}$ be a sequence in $\widetilde{X}_A$ such that $\varphi_n \stackrel{w^*}{\to} \varphi \in \widetilde{X}_A$. The topology in $\widetilde{X}_A$ is the weak$^*$ topology and it is metrizable. We have the following equivalences:
\begin{equation*}
    \varphi_n \stackrel{w^*}{\to} \varphi \iff \varphi_n(e_g) \to \varphi(e_g), \quad \forall g \in \mathbb{F} \iff \{\varphi_n(e_g)\}_{g \in \mathbb{F}} \to \{\varphi(e_g)\}_{g \in \mathbb{F}}, 
\end{equation*}
where the last convergence above is the precisely the one in the product topology. Observe that every injective continuous map from a compact space to a Hausdorff space is a topological embedding. If $X_A$ is not compact, the proof follows by taking the restriction on $i_2$ for $\widetilde{X}_A$.  
\end{proof}

 Now we can see the characters in $X_A$ and $\widetilde{X}_A$ as configurations in the Cayley graph generated by $\mathbb{F}$, where the words $g$ are the vertices, and the oriented edges multiply by the right the word in the source vertex by a letter $a$, leading to the range vertex. The inverse way of the edge represents the multiplication by the inverse of the correspondent letter $a$.
 
\begin{figure}[H]
\begin{center}
		\begin{tikzpicture}[scale=1.5,decoration={markings, mark=at position 0.5 with {\arrow{>}}}]
		\node[circle, draw=black, fill=black, inner sep=1pt,minimum size=1pt] (0) at (0,0) {};
		\node[circle, draw=black, fill=black, inner sep=1pt,minimum size=1pt] (1) at (3,0) {};
    	\draw[postaction={decorate}, >=stealth] (0)  to (1);
   	    \node[above] at (0,0) {$g$};
   	    \node[above] at (1.5,0) {$a$};
        \node[above] at (3,0) {$ga$};
		\end{tikzpicture}
	\end{center}
\end{figure}
 
The next corollary is straightforward.

\begin{corollary}\label{cor:i_2_i_1_Sigma_A_dense} $i_2 \circ i_1(\Sigma_A)$ is dense in $i_2(X_A)$. Moreover, if $\mathcal{O}_A$ is not unital, then $i_2 \circ i_1(\Sigma_A)$ is dense in $i_2(\widetilde{X}_A)$.
\end{corollary}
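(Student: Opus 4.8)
\textbf{Proof proposal for Corollary \ref{cor:i_2_i_1_Sigma_A_dense}.}

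The plan is to reduce the statement entirely to Corollary \ref{prop:i_1_Sigma_A_dense} together with the fact, already recorded in the paragraph preceding the corollary, that the map $i_2$ is a topological embedding. First I would recall that a topological embedding is in particular a homeomorphism onto its image; consequently it maps dense subsets of its domain onto dense subsets of its image. Thus the only thing to verify is that $i_2$ sends the dense set $i_1(\Sigma_A) \subseteq X_A$ onto a dense subset of $i_2(X_A)$, and similarly with $\widetilde{X}_A$ in place of $X_A$ when $\mathcal{O}_A$ is non-unital.

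Concretely, I would argue as follows. By Corollary \ref{prop:i_1_Sigma_A_dense}, $i_1(\Sigma_A)$ is dense in $X_A$, i.e.\ $\overline{i_1(\Sigma_A)} = X_A$. Since $i_2 : X_A \to i_2(X_A)$ is a homeomorphism (the corestriction of the embedding $i_2$ to its image), it commutes with closure: $\overline{i_2(i_1(\Sigma_A))} = i_2(\overline{i_1(\Sigma_A)}) = i_2(X_A)$, where the closures on the left and right are taken in the subspace $i_2(X_A)$. This is exactly the assertion that $i_2 \circ i_1(\Sigma_A)$ is dense in $i_2(X_A)$. The non-unital case is identical word for word, using that $i_2$ is also a topological embedding of $\widetilde{X}_A$ and that, by the second part of Corollary \ref{prop:i_1_Sigma_A_dense}, $i_1(\Sigma_A)$ is dense in $\widetilde{X}_A$.

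There is essentially no obstacle here: the corollary is ``straightforward'' precisely because the content has already been extracted into Corollary \ref{prop:i_1_Sigma_A_dense} (the genuine density statement, which rests on Lemma \ref{lemma:dense_character_general} and hence on the faithfulness of the representation $\pi$) and into the embedding property of $i_2$ (which follows from Proposition \ref{prop:D_A_isomorphic_e_g} and the compactness/local-compactness of the spectra). The only point meriting a word of care is to be explicit that one should take closures \emph{relative to the subspace} $i_2(X_A)$ rather than in the ambient cube $\{0,1\}^{\mathbb{F}}$; the claim as stated concerns density in $i_2(X_A)$, so this is the natural reading, and with it the homeomorphism argument closes the proof immediately.
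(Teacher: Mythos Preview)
Your proposal is correct and matches the paper's intent: the paper gives no proof at all, labeling the corollary ``straightforward,'' and your argument---transporting the density from Corollary~\ref{prop:i_1_Sigma_A_dense} through the homeomorphism $i_2$ onto its image---is exactly the one-line justification implicit in that label.
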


From now we will describe $X_A$ (respect. $\widetilde{X}_A$) by its copy $i_2(X_A)$ (respect. $i_2(\widetilde{X}_A)$) contained in $\{0,1\}^\mathbb{F}$, except when the map $i_2$ is explicitly needed.
\begin{definition} \label{def:convex_configurations} A configuration $\xi$ is said to be \textit{connected} when for any two filled elements $a,b \in \mathbb{F}$, we have that $\xi$ is filled in the whole shortest path in the Cayley tree between $a$ and $b$. 
\end{definition}
\begin{remark} A configuration $\xi$ filled in $e$ is connected if and only if, for every $g \in \mathbb{F}$ filled in $\xi$, $\xi$ is also filled in the subwords of $g$.
\end{remark}
Now, we present some properties of the configurations in $X_A$ (respect $\widetilde{X}_A$), and for such goal, we define the following set as it is done in \cite{EL1999}.
\begin{definition}\label{def:set_Omega_A_tau} For a given transition matrix $A$, we define the set $\Omega_A^{\tau} \subset \{0,1\}^\mathbb{F}$, where a configuration $\xi$ belongs to $\Omega_A^{\tau}$ if and only if it satisfies the rules below:
\begin{itemize}
    \item[$(R1)$] $\xi_e = 1$;
    \item[$(R2)$] $\xi$ is connected;
    \item[$(R3)$] for every $g \in \mathbb{F}$, if $\xi_g=1$, then there exists at most one $i \in \mathbb{N}$ s.t. $\xi_{g i}=1$;
    \item[$(R4)$] for every $g \in \mathbb{F}$, if $\xi_g= \xi_{g j}=1$, where $j \in \mathbb{N}$, then for every $i \in \mathbb{N}$, it follows that
    \begin{equation*}
        \xi_{g i^{-1}}=1 \iff A(i,j)=1.
    \end{equation*}
\end{itemize}
\end{definition}

The figures \ref{fig:Rule_3} and \ref{fig:Rule_4} represents the rules $(R3)$ and $(R4)$, respectively. The black dots represents that the configuration $\xi$ is filled in the respective word, while the white dot represents that $\xi$ is not filled in the respective word.

\begin{figure}[H]
    \begin{center}
		\begin{tikzpicture}[scale=1.5,decoration={markings, mark=at position 0.5 with {\arrow{>}}}]
		\node[circle, draw=black, fill=black, inner sep=1pt,minimum size=5pt] (0) at (-3.5,0) {0};
		\node[circle, draw=black, fill=black, inner sep=1pt,minimum size=5pt] (1) at (-3.5,1) {1};
		\node[circle, draw=black, fill=black, inner sep=1pt,minimum size=5pt] (2) at (-2.5,0) {2};
    	\draw[postaction={decorate}, >=stealth] (0)  to (1);
    	\draw[postaction={decorate}, >=stealth] (0)  to (2);
   		\node[left] at (-3.5,0.5) {$j$};
   		\node[left] at (-3.6,0) {$g$};
   		\node[right] at (-2.4,0) {$g i$};
   	    \node[below] at (-3.0,0) {$i$};
   	    \node[above] at (-3.5,1.1) {$g j$};
        \node[below] at (-3.0,-0.5) {Forbidden filling};
        \node[circle, draw=black, fill=black, inner sep=1pt,minimum size=5pt] (3) at (-0.6,0) {3};
		\node[circle, draw=black, fill=white, inner sep=1pt,minimum size=10pt] (4) at (-0.6,1) {\textcolor{white}{4}};
		\node[circle, draw=black, fill=black, inner sep=1pt,minimum size=5pt] (5) at (0.4,0) {5};
    	\draw[postaction={decorate}, >=stealth] (3)  to (4);
    	\draw[postaction={decorate}, >=stealth] (3)  to (5);
   		\node[left] at (-0.6,0.5) {$j$};
   		\node[left] at (-0.7,0) {$g$};
   		\node[right] at (0.5,0) {$g i$};
   	    \node[below] at (0,0) {$i$};
   	    \node[above] at (-0.6,1.1) {$g j$};
        \node[circle, draw=black, fill=black, inner sep=1pt,minimum size=5pt] (6) at (1.5,0) {6};
		\node[circle, draw=black, fill=white, inner sep=1pt,minimum size=10pt] (7) at (1.5,1) {\textcolor{white}{7}};
		\node[circle, draw=black, fill=white, inner sep=1pt,minimum size=10pt] (8) at (2.5,0) {\textcolor{white}{7}};
    	\draw[postaction={decorate},>=stealth] (6)  to (7);
    	\draw[postaction={decorate}, >=stealth] (6)  to (8);
   		\node[left] at (1.5,0.5) {$j$};
   		\node[left] at (1.4,0) {$g$};
   		\node[right] at (2.6,0) {$g i$};
   	    \node[below] at (2,0) {$i$};
   	    \node[above] at (1.5,1.1) {$g j$};
   	    \node[below] at (1,-0.5) {Allowed fillings};
		\end{tikzpicture}
	\end{center}
	\caption{Representation of the rule $(R3)$.\label{fig:Rule_3}}
\end{figure}

\begin{figure}[H]
\begin{center}
		\begin{tikzpicture}[scale=1.5,decoration={markings, mark=at position 0.5 with {\arrow{>}}}]
		\node[circle, draw=black, fill=black, inner sep=1pt,minimum size=5pt] (0) at (0,0) {0};
		\node[circle, draw=black, fill=black, inner sep=1pt,minimum size=5pt] (1) at (0,1) {1};
		\node[circle, draw=black, fill=black, inner sep=1pt,minimum size=5pt] (2) at (1,1) {2};
    	\draw[postaction={decorate}, >=stealth] (0)  to (1);
    	\draw[postaction={decorate}, >=stealth] (1)  to (2);
   		\node[left] at (0,0.5) {$i$};
   	    \node[above] at (0.5,1) {$j$};
   	    \node[above] at (0,1.1) {$g$};
        \node[above] at (1,1.1) {$gj$};
        \node[above] at (0.5,1.5) {$A(i,j)=1$};
       \node[circle, draw=black, fill=white, inner sep=1pt,minimum size=10pt] (3) at (3,0) {};
		\node[circle, draw=black, fill=black, inner sep=1pt,minimum size=5pt] (4) at (3,1) {4};
		\node[circle, draw=black, fill=black, inner sep=1pt,minimum size=5pt] (5) at (4,1) {5};
    	\draw[postaction={decorate},>=stealth] (3)  to (4);
    	\draw[postaction={decorate}, >=stealth] (4)  to (5);
   		\node[left] at (3,0.5) {$i$};
   	    \node[above] at (3.5,1) {$j$};
   	    \node[above] at (3,1.1) {$g$};
        \node[above] at (4,1.1) {$gj$};
        \node[above] at (3.5,1.5) {$A(i,j)=0$};
		\end{tikzpicture}
	\end{center}
	\caption{ Representation of the rule $(R4)$. \label{fig:Rule_4}}
\end{figure}

\begin{remark} The set $\Omega_A^\tau$ is a compact subset of $\{0,1\}^{\mathbb{F}}$ that contains $\widetilde{X}_A$ (see \cite{EL1999,Raszeja2020}).
\end{remark}

Now, we introduce the definition of stem and root of a configuration as in \cite{EL1999}.

\begin{definition} By a \textit{positive word} in $\mathbb{F}$ we mean any finite or infinite sequence $\omega = \omega_0 \omega_1 \cdots$ satisfying $\omega_j \in \mathbb{F}_+$ for every $j$, including the empty word $e$. As well as it is defined for the classical Markov shift spaces, a positive word $\omega$ in $\mathbb{F}$ is said to be admissible when $A(\omega_j,\omega_{j+1}) = 1$ for every $j$. Given an either finite or infinite positive word $\omega = \omega_0 \omega_1 \cdots$, define the set
\begin{equation*}
    \llbracket \omega \rrbracket := \{e,\omega_0,\omega_0 \omega_1,\omega_0\omega_1\omega_2,\cdots\}
\end{equation*}
of the subwords of $\omega$. 
\end{definition}

\begin{remark} Observe that if $\omega$ is an infinite positive word, then $\omega \notin \llbracket \omega \rrbracket$.
\end{remark}

\begin{definition}[stems and roots] Let $\xi \in \Omega_A^\tau$. The \textit{stem} of $\xi$, denoted by $\kappa(\xi)$, is the positive admissible word such that
\begin{equation*}
    \{g \in \mathbb{F}: \xi_g = 1\} \cap \mathbb{F}_+ = \llbracket \omega \rrbracket.
\end{equation*}
We say that a configuration $\xi \in \Omega_A^\tau$ is a \textit{bounded element} if its stem has finite length. If $\xi$ is not bounded we call it \textit{unbounded}.

Given $g \in \mathbb{F}$ s.t. $\xi_g = 1$, the \textit{root} of $g$ relative to $\xi$, denoted by $R_\xi(g)$, is defined by
\begin{equation*}
    R_\xi(g) := \{x \in \mathbb{N}: \xi_{g x^{-1}} = 1\}.
\end{equation*}
\end{definition}

\begin{remark} By Proposition 5.4 of \cite{EL1999}, the stem of a configuration in $\Omega_A^\tau$ is unique. Also, observe that if $\kappa(\xi) \neq e$, then $\kappa_{|\kappa|-1} \in R_\xi(\kappa(\xi))$. 
\end{remark}

For a configuration $\xi \in \Omega_A^\tau$ and a given element of $g \in \mathbb{F}$ such that $\xi_g = 1$, we can define the future and the past of $g$ in $\xi$. The \textit{future of }$g$ in $\xi$ is the set $\{g\omega \in \mathbb{F}: \omega \in \mathbb{F}_+, \xi_{g\omega} = 1\}$, which corresponds to the unique filled path from $g$ in the same orientation of the edges of the Cayley tree. The \textit{past of }$g$ in $\xi$ is the set $\{g\omega^{-1} \in \mathbb{F}: \omega \in \mathbb{F}_+, \xi_{g\omega} = 1\}$, corresponding to every filled path that ends in $g$ by following the same orientation of the edges of the Cayley tree.

The stem of a configuration corresponds to the longest positive word such that every of its subwords are filled in the configuration. These subwords are precisely the future of $e$, the identity of $\mathbb{F}$, and $e$ itself. On the other hand, the root of an element of $g \in \mathbb{F}$ on a configuration corresponds to the edges of the immediate past of $g$, that is, the edges that arrives in $g$ such that the words in the source of these edges are filled. By section 5 of \cite{EL1999} the inclusion $i_2 \circ i_1 : \Sigma_A \to \{0,1\}^{\mathbb{F}}$ defines a bijection between the elements of $\Sigma_A$ and the unbounded configurations of $\Omega_A^\tau$, when restricted to its image. Furthermore, there is a bijection between the set of bounded configurations of $\Omega_A^\tau$ and the set of pairs $(\kappa,R)$, where $\kappa$ is a finite positive admissible word and $R \subseteq \mathbb{N}$ satisfying $\kappa_{|\kappa|-1} \in R$. The Corollary 7.7 of \cite{EL1999} characterizes the bounded elements of $X_A$, which we denote by $Y_A$, that is,
\begin{equation*}
    Y_A:= X_A \setminus \Sigma_A,
\end{equation*}
where we omitted the inclusions $i_1$ and $i_2$. We state this result next.

\begin{definition} Given a transition matrix $A$, we say that a natural number $i \in \mathbb{N}$ is an \emph{infinite emitter} when the set $r_A(i)=\{ j \in  \mathbb{N}; A(i,j)=1\}$ is infinite. 
\end{definition}

From now we shall use the following notation: we denote by $\mathfrak{C}(A)$ the set of columns of $A$, also the elements of $\mathfrak{C}(A)$ are denoted as follows: given $j \in \mathbb{N}$ we denote by $\mathfrak{c}(j)$ the $j$-th column of the matrix $A$, that is,
\begin{equation*}
    \mathfrak{c}(j)_i = \begin{cases}
                1, \quad \text{if } A(i,j) = 1;\\
                0, \quad \text{otherwise.}
             \end{cases}
\end{equation*}

\begin{proposition}\label{prop:characterization_of_elements_of_Y_A} Let $\xi \in \Omega_A^\tau$. Then $\xi \in Y_A$ if and only if $\xi \neq \varphi_0$ and it satisfies simultaneously the two following conditions:
\begin{itemize}
    \item[$(a)$] $\kappa(\xi) = e$ or $\kappa(\xi)$ ends in an infinite emitter; 
    \item[$(b)$] If $\kappa(\xi)\neq e$, $R_\xi(\kappa(\xi))$ is a limit point of the sequence $(\mathfrak{c}(i))_{A(\kappa(\xi)_{|\kappa(\xi)|-1},i)=1}$, where the root $R_\xi(\kappa(\xi))$ can be seen as an element of $\{0,1\}^\mathbb{N}$ as follows: $R_\xi(\kappa(\xi))=(x_i)_{i \in \mathbb{N}} \in \{0,1\}^\mathbb{N}$ where $x_i=1$ iff $x_i \in R_\xi(\kappa(\xi))$. If $\kappa(\xi)= e$, $R_\xi(\kappa(\xi))$ is a limit point of the sequence $(\mathfrak{c}(i))_{i \in \mathbb{N}}$.
\end{itemize}
\end{proposition}

\begin{proof} It follows from Corollary 7.5 of \cite{EL1999} and Proposition \ref{prop:i_1_Sigma_A_dense}.
\end{proof}
In order to provide more intuition to the reader, we discuss how the convergence of elements of $\Sigma_A$ to elements of $Y_A$ occurs. Let $\xi \in Y_A$. There must exist a sequence $(\xi^n)_{n \in \mathbb{N}}$ in $\Sigma_A$ that converges to $\xi$. Let $\omega$ be the stem of $\xi$. We must have necessarily that $\xi^n_{\omega j} \to 0$ for every $j \in \mathbb{N}$. In particular, we only need to consider those $j$ s.t. $\omega j$ is admissible. Then we must have for each one of these particular $j$'s that $\xi^n_{\omega j} = 1$ for a finite quantity of $n$'s. So w.l.o.g. we may suppose that $\xi^n_{\omega j_n}$ for each $n$ and that $j_n \neq j_m$ if $n \neq m$. So, we must have an infinite different possible $j_k$'s s.t. $\omega j_k$ is admissible, that is, $\omega = e$ or $\omega$ ends in an infinite emitter. The figure \ref{fig:sequence_of_Sigma_A_converging_to_Y_A} illustrates this discussion.
\begin{figure}[H]
\centering
\begin{tikzpicture}
\node[circle,fill=black,inner sep=0pt,minimum size=8pt,label=below:{$\omega$}]      (maintopic)                              {};
\node[circle,fill=black,minimum size=8pt, label=right:{$\omega j_1$}]        (j1)       [above right= 1.5cm and 1cm of maintopic] {};
\node[circle,fill=white,draw,minimum size=8pt,label=right:{$\omega j_2$}]        (j2)       [above right= 1cm and 1.1cm of maintopic] {};
\node[circle,fill=white,draw,minimum size=8pt,label=right:{$\omega j_3$}]        (j3)       [above right= 0.5cm and 1.2cm of maintopic] {};
\node[circle,fill=white,draw,minimum size=8pt,label=right:{$\omega j_4$}]        (j4)       [above right= 0cm and 1.3cm of maintopic] {};

\draw[->,shorten >=0.1cm](maintopic.north) edge[bend left=45] (j1.west);
\draw[->,shorten >=0.1cm](maintopic.north) edge[bend left=45] (j2.west);
\draw[->,shorten >=0.1cm](maintopic.north) edge[bend left=45] (j3.west);
\draw[->,shorten <=0.7cm,ultra thick](j3.east) -- ($(j3.west)+(0:2.2)$);
\draw[->,shorten >=0.1cm](maintopic.north) edge[bend left=45] (j4.west);

\node[circle,fill=black,inner sep=0pt,minimum size=8pt,label=below:{$\omega$}]      (maintopic2) [right= 3.5cm of maintopic]                              {};
\node[circle,fill=white,draw,minimum size=8pt, label=right:{$\omega j_1$}]        (j12)       [above right= 1.5cm and 1cm of maintopic2] {};
\node[circle,fill=black,draw,minimum size=8pt,label=right:{$\omega j_2$}]        (j22)       [above right= 1cm and 1.1cm of maintopic2] {};
\node[circle,fill=white,draw,minimum size=8pt,label=right:{$\omega j_3$}]        (j32)       [above right= 0.5cm and 1.2cm of maintopic2] {};
\node[circle,fill=white,draw,minimum size=8pt,label=right:{$\omega j_4$}]        (j42)       [above right= 0cm and 1.3cm of maintopic2] {};

\draw[->,shorten >=0.1cm](maintopic2.north) edge[bend left=45] (j12.west);
\draw[->,shorten >=0.1cm](maintopic2.north) edge[bend left=45] (j22.west);
\draw[->,shorten >=0.1cm](maintopic2.north) edge[bend left=45] (j32.west);
\draw[->,shorten <=0.7cm,ultra thick](j32.east) -- ($(j32.west)+(0:2.2)$);
\draw[->,shorten >=0.1cm](maintopic2.north) edge[bend left=45] (j42.west);

\node[circle,fill=black,inner sep=0pt,minimum size=8pt,label=below:{$\omega$}]      (maintopic3) [right= 3.5cm of maintopic2]                              {};
\node[circle,fill=white,draw,minimum size=8pt, label=right:{$\omega j_1$}]        (j13)       [above right= 1.5cm and 1cm of maintopic3] {};
\node[circle,fill=white,draw,minimum size=8pt,label=right:{$\omega j_2$}]        (j23)       [above right= 1cm and 1.1cm of maintopic3] {};
\node[circle,fill=black,draw,minimum size=8pt,label=right:{$\omega j_3$}]        (j33)       [above right= 0.5cm and 1.2cm of maintopic3] {};
\node[circle,fill=white,draw,minimum size=8pt,label=right:{$\omega j_4$}]        (j43)       [above right= 0cm and 1.3cm of maintopic3] {};

\draw[->,shorten >=0.1cm](maintopic3.north) edge[bend left=45] (j13.west);
\draw[->,shorten >=0.1cm](maintopic3.north) edge[bend left=45] (j23.west);
\draw[->,shorten >=0.1cm](maintopic3.north) edge[bend left=45] (j33.west);
\draw[->,shorten <=0.7cm,ultra thick](j33.east) -- ($(j33.west)+(0:2.2)$);

\draw[->,shorten >=0.1cm](maintopic3.north) edge[bend left=45] (j43.west);


\node[circle,fill=black,inner sep=0pt,minimum size=8pt,label=below:{$\omega$}]      (maintopic4) [right= 3.5cm of maintopic3]                              {};
\node[circle,fill=white,draw,minimum size=8pt, label=right:{$\omega j_1$}]        (j14)       [above right= 1.5cm and 1cm of maintopic4] {};
\node[circle,fill=white,draw,minimum size=8pt,label=right:{$\omega j_2$}]        (j24)       [above right= 1cm and 1.1cm of maintopic4] {};
\node[circle,fill=white,draw,minimum size=8pt,label=right:{$\omega j_3$}]        (j34)       [above right= 0.5cm and 1.2cm of maintopic4] {};
\node[circle,fill=black,draw,minimum size=8pt,label=right:{$\omega j_4$}]        (j44)       [above right= 0cm and 1.3cm of maintopic4] {};
\node     (dots)       [right= 2.0cm of j34] {$\bullet\bullet\bullet$};

\draw[->,shorten >=0.1cm](maintopic4.north) edge[bend left=45] (j14.west);
\draw[->,shorten >=0.1cm](maintopic4.north) edge[bend left=45] (j24.west);
\draw[->,shorten >=0.1cm](maintopic4.north) edge[bend left=45] (j34.west);
\draw[->,shorten <=0.7cm,ultra thick](j34.east) -- ($(j34.west)+(0:2.2)$);
\draw[->,shorten >=0.1cm](maintopic4.north) edge[bend left=45] (j44.west);
\end{tikzpicture}
\caption{A sequence of unbounded elements of $X_A$ converging to an element of $Y_A$ viewed close to the stem $\omega$ of the limit configuration. The black dots represents the filled vertices and the white dots represents the non-filled vertices. The oriented edge from the vertex $\omega$ to $\omega j_k$, $k \in \mathbb{N}$ represents the multiplication of $\omega$ by $j_k$. \label{fig:sequence_of_Sigma_A_converging_to_Y_A}}

\end{figure}

Also, we have the density of $Y_A$ on $X_A$, as proved next.

\begin{proposition}\label{prop:Y_A_dense} If $Y_A$ is non-empty, then it is dense in $X_A$.
\end{proposition}

\begin{proof} We recall that  there exists a symbol $i \in \mathbb{N}$ such that $|\{j \in \mathbb{N}: A(i,j) = 1\}| = \infty$. Let $\xi \in \Sigma_A$, that is, $\kappa(\xi) = x = x_0 x_1 x_2 \cdots \in \Sigma_A$. By transitivity of $\Sigma_A$, there exists an admissible word $w_k$ such that $x_k w_k i$ is also admissible. Since $A(i,j) = 1$ is satisfied for infinite values of $j$, by the density of elements of $\Sigma_A$, we may construct for each $n \in \mathbb{N}_0$ at least one $\xi^n \in Y_A$ satisfying $\kappa(\xi^n) = x_0 \cdots x_n w^n i$. For each $n$ choose one of the possible $\xi^n$ as previously. We claim that $\xi^n \to \xi$. Indeed, for every $p \in \mathbb{N}_0$ the sequences $((\xi^n)_{x_0\cdots x_p})_{n \in \mathbb{N}_0}$ are constant for $n>p$ and therefore convergent, and moreover $(\xi^n)_{x_0\cdots x_p} \to 1 = \xi_{x_0\cdots x_p}.$

Consequently for every $g \in \mathbb{F}$ in the form $g = \alpha \gamma^{-1}$ or $\gamma^{-1}$, where $\alpha$ and $\gamma$ are admissible words, we have that $(\xi^n)_g$ is constant for $n > |\alpha|$ (we consider $|\alpha| = 0$ for $g = \gamma^{-1}$). So  $(\xi^n)_{g} \to \xi_g,$ and therefore $\xi^n \to \xi$. 
\end{proof}
We warn the reader that, although $ Y_A $ is countable in many cases, this is not guaranteed. We can have $Y_A$ uncountable even under the transitivity hypothesis, as we show in the next example. Furthermore, the next example shows a transition matrix $A$ s.t. $\Sigma_A$ is topologically mixing, and it is not locally compact and, also, $X_A$ is not compact. 

\begin{figure}[H]
 \centering
 \includegraphics[scale = .3]{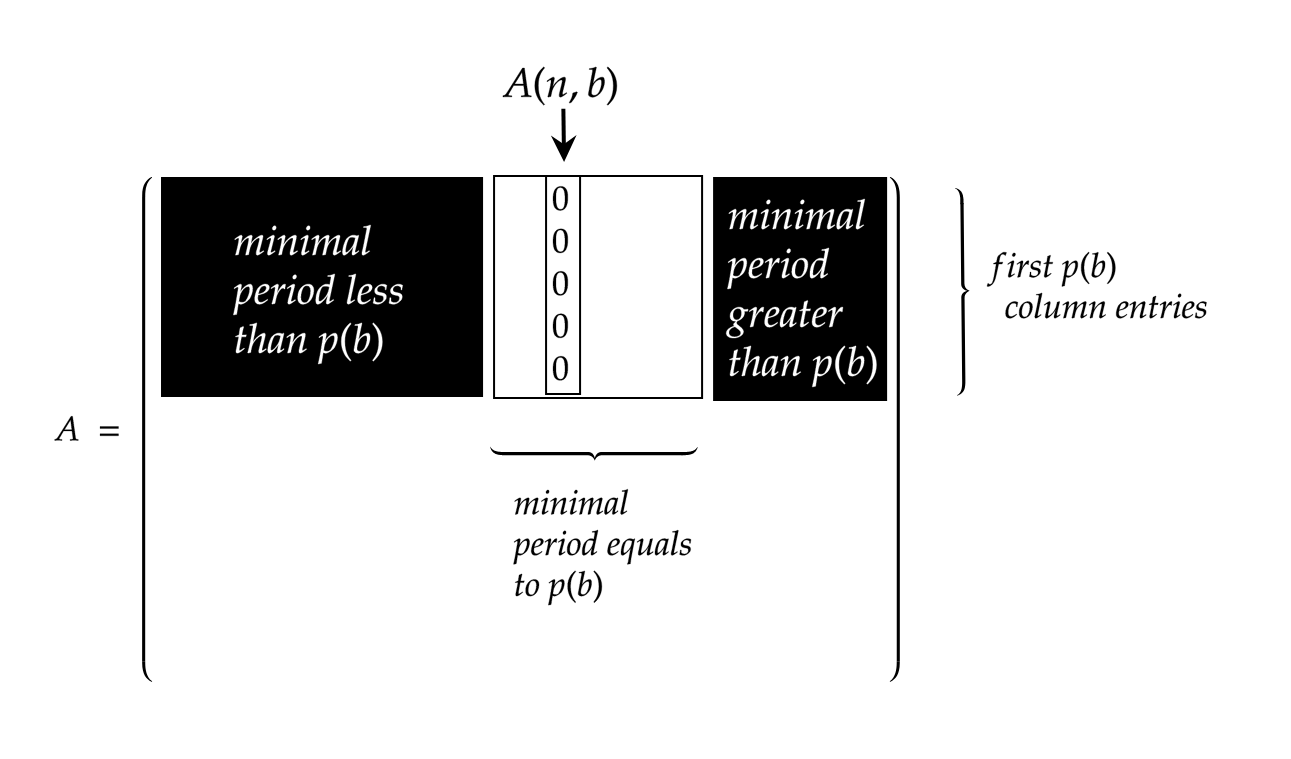}
 \caption{A contradiction on the matrix constructed in this example if, for some $b >2$, we would have $A(c,b) = 0$ for every $c \leq p(b)$. In this case, the first $p(b)$ terms of $A(n,b)$ must be zero, and by periodicity, we conclude that $A(n,b) = 0$, for every $n \in \mathbb{N}$.\label{fig:uncountable_Y_A}}
\end{figure}

\begin{example}[Uncountable $Y_A$] \label{exa:uncountable_Y_A} Consider the transition matrix $A$ as follows: each of its columns is a periodic sequences in $\{0,1\}^\mathbb{N}$, excluding the zero sequence. Also, we order the columns by increasing minimal period. Set
\begin{align*}
    A(n,1)= \begin{pmatrix}
             1\\
             1\\
             1\\
             1\\
             1\\
             1\\
             \vdots
            \end{pmatrix}, \quad
    A(n,2)= \begin{pmatrix}
             1\\
             0\\
             1\\
             0\\
             1\\
             0\\
             \vdots
            \end{pmatrix}
    \quad \text{ and } \quad
    A(n,3)= \begin{pmatrix}
             0\\
             1\\
             0\\
             1\\
             0\\
             1\\
             \vdots
            \end{pmatrix}.
\end{align*}
And for the remaining columns with same minimal period, take any ordering. Denote by $p(n)$, $n \in \mathbb{N}$, the minimal period of the sequence of the $n$-th column. It is straightforward that $p(n) < n$ for every $n > 2$. We prove the following claims.

\textit{Claim: for $b \in \mathbb{N}$, $b > 2$, there exists $c \in \mathbb{N}$ satisfying $A(c,b) = 1$ and $c \leq p(b)$.} In fact, suppose that for every $c \leq p(b)$ we have $A(c,b) = 0$. Then we have that the $b$-th column has only zeros in its first $p(b)$-entries as it is shown in Figure \ref{fig:uncountable_Y_A}, and since $p(b)$ is the period of such sequence, it follows that $A(n,b) = 0$ for every $n \in \mathbb{N}$, a contradiction because every column of $A$ is not the zero column.

\textit{Claim: $A$ is irreducible.} Indeed, let $a,b\in \mathbb{N}$. The case $b=1$ and $b=2$ are clear: since for $b=1$, we have $A(a,b)=1$ for every $a$ because $A(n,1)$ is the column where all of its entries are equal $1$. On the other hand if $b=2$, then $A(1,b)=1$ and $A(a,1)=1$, and so $a1b$ is admissible. 

Now, suppose $b>2$. By the previous claim, there exists $c_1 \in \mathbb{N}$ $c_1 \leq p(b) < b$ s.t. $A(c_1,b) = 1$. If $c_1\leq 2$, then $a1c_1b$ is admissible. Otherwise, then we restart the process, by taking $c_2<p(c_1)<c_1$ such that $A(c_2,c_1)=1$ and verifying if $c_2 \leq 2$ or not. Since the process must terminate, after $m \in \mathbb{N}$ iterations, we find $c_1,\dots,c_m \in \mathbb{N}$, with $c_m\leq 2$, and s.t. $a1c_nc_{n-1}\cdots c_1b$ is admissible.

\textit{Claim: $A$ is topologically mixing.} It is straightforward from the last claim and the fact that $A(1,1) = 1$. In fact, by the proof above, for every $a,b \in \mathbb{N}$, if $b = 1$, then, for every $n \in \mathbb{N}$ we have the admissible word $a1^n 1$, where
\begin{equation*}
    1^n = \underbrace{1\cdots 1}_{n \text{ times}}.
\end{equation*}
Now, if $b = 2$, then $a1b$ is admissible. Again, we have that $a1^nb$ is admissible for every $n \in \mathbb{N}$. If $b >2$, by the proof of last claim, there exists $m \in \mathbb{N}$ s.t. $a1^nc_m \cdots c_1 b$ is admissible for every $n \in \mathbb{N}$, and we conclude that $A$ is topologically mixing.

\textit{Claim: the limit points of the sequence $(\mathfrak{c}(n))_{\mathbb{N}}$, where $\mathfrak{c}(n)$ is the $n$-th column of $A$ is a uncountable set.} This is straightforward from the fact that, for every finite sequence $a_1 \cdots a_n$ of elements in $\{0,1\}$, there is an infinite number of columns of $A$ s.t. that start with the sequence $a_1 \cdots a_n$.

\textit{Claim: $\Sigma_A$ and $X_A$ are not compact.} For $\Sigma_A$ is straightforward, observe for instance, that $A(n,1) = 1$ for infinite quantity of $n$'s. By the last claim, every element in $\{0,1\}^\mathbb{N}$ is a limit point of $(\mathfrak{c}(n))_{\mathbb{N}}$, so in particular the null vector is it, and by Theorem \ref{thm:O_A_unital} $(iv)$, we have that $\mathcal{O}_A$ is not unital, then $\mathcal{D}_A$ is not unital, and therefore $X_A$ is not compact. 
\end{example}


Now, we prove the compatibility between the classical and GCMS spaces in terms of measure theory.

\begin{proposition}\label{prop:Sigma_A_Borel_subset_in_X_A}
$i_2\circ i_1(\Sigma_A)$ is a measurable set in the Borel $\sigma$-algebra of $i_2(X_A)$.
\end{proposition}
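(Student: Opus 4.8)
The plan is to exhibit $i_2\circ i_1(\Sigma_A)$ as a countable intersection, over levels $n\in\mathbb{N}$, of sets that are manifestly open (or closed) in $i_2(X_A)$, using the characterization of $i_2\circ i_1(\Sigma_A)$ as exactly the set of \emph{unbounded} elements of $\Omega_A^\tau$ lying in $i_2(X_A)$, equivalently the configurations whose stem is infinite. First I would recall that for $\xi\in i_2(X_A)$ the stem $\kappa(\xi)$ is the unique positive admissible word with $\{g\in\mathbb{F}_+:\xi_g=1\}=\llbracket\kappa(\xi)\rrbracket$, and that $\xi\in i_2\circ i_1(\Sigma_A)$ iff $\kappa(\xi)$ has infinite length. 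So the key observation is that ``$\xi$ has stem of length at least $n$'' should be expressible as ``there exists a positive admissible word $\omega$ of length $n$ with $\xi_\omega=1$'', and then $i_2\circ i_1(\Sigma_A)=\bigcap_{n\in\mathbb{N}} B_n$ where $B_n=\{\xi\in i_2(X_A):\exists\,\omega\in\mathbb{F}_+\text{ admissible},\ |\omega|=n,\ \xi_\omega=1\}$.

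Next I would check that each $B_n$ is open in $i_2(X_A)$. For a fixed admissible positive word $\omega$ of length $n$, the set $\{\xi:\xi_\omega=1\}$ is a basic cylinder in $\{0,1\}^\mathbb{F}$ (it fixes the single coordinate $g=\omega$), hence clopen; intersecting with $i_2(X_A)$ keeps it clopen in the subspace. Then $B_n$ is the union of these clopen sets over the (countably many) admissible positive words of length $n$, so $B_n$ is open. Therefore $\bigcap_n B_n$ is a $G_\delta$, in particular Borel, in $i_2(X_A)$. It remains to justify the set equality $i_2\circ i_1(\Sigma_A)=\bigcap_n B_n$: the inclusion ``$\subseteq$'' is immediate since an infinite positive admissible word has all its length-$n$ truncations admissible and filled; for ``$\supseteq$'', if $\xi\in\bigcap_n B_n$ then the stem $\kappa(\xi)$ contains, for every $n$, a filled admissible positive word of length $n$, and by uniqueness of the stem and convexity of $\xi$ these truncations are nested, forcing $\kappa(\xi)$ to be infinite; hence $\xi$ is unbounded, and being in $i_2(X_A)$ it corresponds to an infinite positive admissible word, i.e.\ $\xi\in i_2\circ i_1(\Sigma_A)$.

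The main obstacle I anticipate is the bookkeeping in the ``$\supseteq$'' direction: one must argue that the length-$n$ filled positive admissible words coming from different $B_n$ genuinely fit together into a single infinite word, rather than merely certifying the existence of arbitrarily long but possibly incompatible finite pieces. Here convexity of $\xi$ together with the structural conditions in $\Omega_A^\tau$ (each filled $\omega$ has at most one $y\in\mathbb{N}$ with $\xi_{\omega y}=1$) is exactly what pins down uniqueness: the filled positive words form a totally ordered chain under the prefix order, so their union is a single positive admissible word which must be $\kappa(\xi)$, and it has length $\geq n$ for all $n$. An alternative, perhaps cleaner, route is to appeal directly to the bijection between unbounded elements of $\Omega_A^\tau$ and infinite positive admissible words established in Section 5 of \cite{Exel}, and to the fact (proposition 7.3 of \cite{Exel}) that $i_2(X_A)$ is the closure of the unbounded elements: then one only needs that ``bounded'' is a countable union of closed conditions, i.e.\ $\xi$ is bounded iff its stem has some finite length $n$, iff $\xi$ lies in the complement of $B_{n+1}$ for some $n$, so the bounded elements of $i_2(X_A)$ form an $F_\sigma$ and its complement — which is $i_2\circ i_1(\Sigma_A)$ — is Borel. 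Either way the proof is short once the cylinder description of $B_n$ is in place.
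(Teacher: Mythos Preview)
Your proof is correct and arrives at the same conclusion as the paper's, namely that $i_2\circ i_1(\Sigma_A)$ is a $G_\delta$ in $i_2(X_A)$, via essentially the same mechanism: the cylinder topology on $\{0,1\}^{\mathbb{F}}$ together with the countability of finite admissible words. The only cosmetic difference is the indexing of the decomposition: the paper writes the complement as $\bigcup_{\alpha\in\mathcal L} H(\alpha)$ with $H(\alpha)=\{\xi:\kappa(\xi)=\alpha\}$ closed (an intersection of the clopen $\{\xi_\alpha=1\}$ with the closed $\bigcap_s\{\xi_{\alpha s}=0\}$), whereas you write $i_2\circ i_1(\Sigma_A)=\bigcap_n B_n$ with each $B_n$ open as a countable union of clopen cylinders --- and indeed your ``alternative, perhaps cleaner, route'' in the last paragraph is exactly the paper's argument.
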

\begin{proof} Using the fact that $i_2$ is a homeomorphism,
\begin{equation*}
  i_2(X_A\setminus i_1(\Sigma_A)) = i_2(X_A)\setminus i_2\circ i_1(\Sigma_A) =\{\xi\in i_2(X_A):|\kappa(\xi)|<\infty\}=\bigcup_{\alpha\in \mathfrak{W}_0}\{\xi\in i_2(X_A):\kappa(\xi)=\alpha\},  
\end{equation*}
where $\mathfrak{W}_0$ is the countable set of all admissible finite words, including the identity $e$. Note that
\begin{align*}
    H(\alpha):&=\{\xi\in i_2(X_A):\kappa(\xi) =\alpha\} = \{\xi\in i_2(X_A):\xi_\alpha=1 \text{ and } \xi_{\alpha s}=0,\text{ }\forall s\in \mathbb{N}\} \\
    &= \{\xi\in i_2(X_A):\xi_\alpha=1\} \cap \{\xi\in i_2(X_A): \xi_{\alpha s}=0,\text{ }\forall s\in \mathbb{N} \}.
\end{align*}
Since $H(\alpha)$ is an intersection of two closed sets, this means that $H(\alpha)$ is closed in $i_2(X_A)$.  As $i_2(X_A)\setminus i_2\circ i_1(\Sigma_A)$ is an countable union of those sets, we conclude $i_2(X_A)\setminus i_2\circ i_1(\Sigma_A)$ is a $F_\sigma$, hence $i_2\circ i_1(\Sigma_A)$ is a $G_\delta$, a Borel set. 
\end{proof}

\begin{proposition}\label{prop:Borel_sets_preserved_from_Sigma_A_to_X_A}For every Borel set $B \subseteq \Sigma_A$, $i_1(B)$ is a Borel set in $X_A$.
\end{proposition}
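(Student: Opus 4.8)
The plan is to reduce the statement to the already-proven fact that $i_1(\Sigma_A)$ is a Borel (indeed $G_\delta$) subset of $X_A$, together with the observation that $i_1$ is an injective continuous map (Remark~\ref{rem:i_1_inj_cont}) from the Polish space $\Sigma_A$ into the locally compact, Hausdorff, second countable (hence Polish) space $X_A$. The key general principle is that an injective Borel map between Polish spaces sends Borel sets to Borel sets (the Lusin--Souslin theorem). So the proof has essentially two ingredients: first, that $\Sigma_A$ is Polish (it is a complete separable metric space, as recalled in Section~1.2, being a closed subset of $\mathbb{N}^{\mathbb{N}_0}$ with the metric $d(x,y)=2^{-\inf\{p: x_p\neq y_p\}}$); second, that $X_A$ is Polish (it is second countable, Hausdorff, locally compact, and a locally compact Hausdorff second countable space is metrizable and, being $\sigma$-compact, separable and completely metrizable).

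First I would set up the argument: $i_1:\Sigma_A\to X_A$ is continuous and injective by Remark~\ref{rem:i_1_inj_cont}; since $i_1$ is in particular Borel measurable, and both $\Sigma_A$ and $X_A$ are Polish, the Lusin--Souslin theorem applies and $i_1$ maps Borel subsets of $\Sigma_A$ to Borel subsets of $X_A$. In particular, for every Borel $B\subseteq\Sigma_A$, the image $i_1(B)$ is Borel in $X_A$, which is exactly the claim.

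Alternatively, to keep the argument self-contained and avoid invoking Lusin--Souslin as a black box, I would argue directly through the map $i_2\circ i_1$: since $i_2$ is a topological embedding, it suffices to prove that $i_2\circ i_1(B)$ is Borel in $i_2(X_A)$. Now $i_2\circ i_1:\Sigma_A\to i_2(X_A)\subseteq\{0,1\}^{\mathbb F}$ is a continuous bijection onto its image $i_2\circ i_1(\Sigma_A)$, which by the preceding proposition is a $G_\delta$ subset of $i_2(X_A)$, hence a Polish space in the subspace topology. One then checks that $i_2\circ i_1$ is actually a homeomorphism onto $i_2\circ i_1(\Sigma_A)$: indeed its inverse is the map that reads off from a configuration $\xi$ with infinite stem the corresponding infinite admissible word $\kappa(\xi)$ coordinate by coordinate, and each coordinate of $\kappa(\xi)$ is locally constant on $i_2\circ i_1(\Sigma_A)$ in the product topology, so $\kappa$ is continuous there; in particular $i_2\circ i_1$ is an open map onto its image. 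Being a homeomorphism onto a Borel (indeed $G_\delta$) subspace, it carries Borel sets of $\Sigma_A$ to Borel sets of $i_2\circ i_1(\Sigma_A)$, which are Borel in $i_2(X_A)$ because $i_2\circ i_1(\Sigma_A)$ is itself Borel there. Pulling back through the homeomorphism $i_2$ finishes the proof.

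The main obstacle is the verification that $i_1$ (equivalently $i_2\circ i_1$) is a homeomorphism onto its image, not merely a continuous injection — this is what makes the image of a Borel set Borel, and it is the step that genuinely uses the structure of $\Sigma_A$ and $X_A$ rather than soft descriptive-set-theory. Concretely one must show the inverse is continuous, i.e. that reconstructing the symbolic sequence $\omega$ from the character $\varphi_\omega$ (or the configuration $i_2(\varphi_\omega)$) is continuous; this follows from the bijection between $i_2\circ i_1(\Sigma_A)$ and infinite positive admissible words recalled from Section~5 of \cite{Exel} and the fact that the value $\varphi_\omega(e_{\omega_0\cdots\omega_{n}})=1$ detects the initial segment $\omega_0\cdots\omega_n$, so that each cylinder of $\Sigma_A$ is the preimage of a basic open set. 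If one is content to cite the Lusin--Souslin theorem, this obstacle evaporates and the proof is two lines; the only care needed there is to record explicitly that $X_A$, being locally compact Hausdorff and second countable, is Polish.
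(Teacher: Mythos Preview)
Your proposal is correct, but the paper's argument is more direct and avoids both Lusin--Souslin and the verification that $i_1$ is a homeomorphism onto its image. The paper simply observes that, since $i_1$ is injective and $i_1(\Sigma_A)$ is already known to be Borel (the previous proposition), the family $\{B\subseteq\Sigma_A:\ i_1(B)\text{ is Borel in }X_A\}$ is a $\sigma$-algebra; hence it suffices to check the claim on the countable basis of cylinders. For a cylinder $[\alpha]$ the paper writes
\[
i_2\circ i_1([\alpha])=\Bigl(\bigcap_{\nu\in\llbracket\alpha\rrbracket}\pi_\nu^{-1}(\{1\})\Bigr)\cap i_2\circ i_1(\Sigma_A),
\]
a finite intersection of cylinders in $\{0,1\}^{\mathbb F}$ intersected with a Borel set, and is done.

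Your second approach is essentially a repackaging of this computation: the identity above shows that $i_2\circ i_1$ sends basic cylinders to relatively clopen sets of $i_2\circ i_1(\Sigma_A)$, which is exactly the openness of $i_2\circ i_1$ onto its image that you are after. So your homeomorphism route proves strictly more (namely that $i_1$ is a topological embedding), at the cost of having to argue continuity of the inverse; the paper sidesteps this by never needing the embedding statement. Your first approach via Lusin--Souslin is a clean two-line proof but imports a nontrivial descriptive-set-theoretic theorem where an elementary check suffices; its advantage is robustness (it would work for any injective continuous map between Polish spaces, without any concrete description of the image of cylinders).
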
 

\begin{proof} It is equivalent to prove that $i_2 \circ i_1(B)$ is a Borel set. Also, it is sufficient to prove the result for the cylinders in $\Sigma_A$ because they do form an countable basis of the topology. Given a cylinder set $[\alpha]\subseteq \Sigma_A$, we have that $i_2 \circ i_1([\alpha]) = \{\xi \in i_2(X_A): \kappa(\xi) \text{ is infinite and $\xi_\alpha=1$}\}$.

Denoting by $\pi_g$ the canonical projection, see Definition \ref{projection}, it follows that
\begin{equation*}
    i_2 \circ i_1([\alpha]) = \left(\bigcap_{\gamma \in \llbracket\alpha\rrbracket}\pi_{\gamma}^{-1}(\{1\})\right)\cap i_2\circ i_1(\Sigma_A)
\end{equation*}
is a Borel set in $i_2(X_A)$. 
\end{proof}

Since the Borel sets of $\Sigma_A$ are preserved in the sense that they are also Borel sets of $X_A$ we are able to see Borel measures of $\Sigma_A$ as measures of $X_A$ that gives mass only for $\Sigma_A$. Conversely, when we restrict Borel measures on $X_A$ to $\Sigma_A$ we obtain a Borel measure on $\Sigma_A$. This compatibility of Borel $\sigma$-algebras and measures is crucial for a stronger result: the compatibility of conformal measures on classical and generalized Markov shift spaces. This will be proved in subsection \ref{sec:TF_Generalized}.

We now introduce the generalized shift map for $X_A$, inherited from $\Sigma_A$.

\begin{definition}\label{generalized_shift_map}[Generalized shift map] Consider the generalized Markov shift space. The \textit{generalized shift map} is the function $\sigma :X_A \setminus \{\xi \in X_A:\kappa(\xi) = e\} \to X_A$ defined by
\begin{equation*}
    (\sigma(\xi))_g = \xi_{x_0^{-1} g}, \quad g \in \mathbb{F},
\end{equation*}
where $x_0 = \kappa(\xi)_0$, that is, $x_0$ is the first symbol of the stem of $\xi$. 
\end{definition}

Observe that $\sigma$ is simply a translation on the configurations of the Cayley tree of generated by $\mathbb{N}$. It is straightforward that such map is well-defined surjective continuous map. More than that, $\sigma$ is a local homeomorphism.

\subsection{$Y_A$-families} \label{subsec:Y_A_families}\\

Next, we define the $Y_A$-families for $X_A$, a notion that will be used later to characterize extremal conformal measures on $X_A$ that vanish on $\Sigma_A$, in the next section.

\begin{notation}
From now on we denote by $\mathscr{E}$ the set of empty-stem configurations of $X_A$.
\end{notation}

\begin{definition}[$Y_A$-family] Let $\{\xi^{0,\mathfrak{e}}\}_{\mathfrak{e} \in \mathscr{E}}$, the collection of all configurations on $X_A$ (or $\widetilde{X}_A$) that have empty stem. For each $\xi^{(0,\mathfrak{e})}$, we define the \textit{$Y_A$-family} of $\xi^{(0,\mathfrak{e})}$ as the set
\begin{equation}
    Y_A(\xi^{0,\mathfrak{e}}):= \bigsqcup_{n \in \mathbb{N}_0} \sigma^{-n}(\xi^{0,\mathfrak{e}}).
\end{equation}
\end{definition}


We prove some properties of the $Y_A$-families.

\begin{proposition}[Properties of the $Y_A$-families] \label{prop:properties_of_Y_A_families} Consider the family $\{\xi^{0,\mathfrak{e}}\}_{\mathfrak{e} \in \mathscr{E}}$ of all distinct configurations in $Y_A$ with empty stem. The following properties hold:
\begin{itemize}
    \item[$(i)$] $Y_A(\xi^{0,\mathfrak{e}}) \cap Y_A(\xi^{0,\mathfrak{e}'}) = \emptyset$ whenever $\mathfrak{e} \neq \mathfrak{e}'$;
    \item[$(ii)$] $Y_A = \bigsqcup_{\mathfrak{e} \in \mathscr{E}}Y_A(\xi^{0,\mathfrak{e}})$;
    \item[$(iii)$] $Y_A(\xi^{0,\mathfrak{e}})$ is $\sigma$-invariant for all $\mathfrak{e} \in \mathscr{E}$, in the sense that
    \begin{equation*}
        \sigma(Y_A(\xi^{0,\mathfrak{e}}) \cap \Dom\, \sigma) = Y_A(\xi^{0,\mathfrak{e}});
    \end{equation*}
    \item[$(iv)$] fixed $\mathfrak{e} \in \mathscr{E}$, $R_\xi(\kappa(\xi)) = R_{\xi^{0,\mathfrak{e}}}(e)$ for every $\xi \in Y_A(\xi^{0,\mathfrak{e}})$;
    \item[$(v)$] $Y_A(\xi^{0,\mathfrak{e}})$ is countable for every $\mathfrak{e} \in \mathscr{E}$;
    \item[$(vi)$] $Y_A(\xi^{0,\mathfrak{e}})$ is a Borel subset of $X_A$ for every $\mathfrak{e} \in \mathscr{E}$.
\end{itemize}
\end{proposition}

\begin{proof}\\
\begin{itemize}
    \item[$(i)$] Let $\xi \in Y_A(\xi^{0,\mathfrak{e}}) \cap Y_A(\xi^{0,\mathfrak{e}'})$, where $\mathfrak{e} \neq \mathfrak{e}'$. Then $\xi^{0,\mathfrak{e}} = \sigma^n(\xi) = \xi^{0,\mathfrak{e}'} $, where $n = |\kappa(\xi)|$, a contradiction because $\xi^{0,\mathfrak{e}} \neq \xi^{0,\mathfrak{e}'}$.
    \item[$(ii)$] The inclusion
    \begin{equation*}
        Y_A \supseteq \bigsqcup_{\mathfrak{e} \in \mathscr{E}}Y_A(\xi^{0,\mathfrak{e}})
    \end{equation*}
    is straightforward. For the another inclusion, take $\xi \in Y_A$. If $\kappa(\xi) = e$, there is nothing to be proven, so suppose $\kappa(\xi) = \kappa \neq e$. Then $\sigma^{\kappa}(\xi) = \xi^{0,\mathfrak{e}}$ for some $\mathfrak{e} \in \mathscr{E}$ and hence $\xi \in Y_A(\xi^{0,\mathfrak{e}})$.
    \item[$(iii)$] Given $\xi \in Y_A(\xi^{0,\mathfrak{e}}) \cap \Dom \sigma$ and since the translation $\sigma$ preserves the root of the stem, we have that $\sigma(\xi) \in Y_A(\xi^{0,\mathfrak{e}})$.
    \item[$(iv)$] it is straightforward from definition of $Y_A(\xi^{0,\mathfrak{e}})$.
    \item[$(v)$] $Y_A(\xi^{0,\mathfrak{e}})$ is at most a countable union of countable sets and therefore it is countable.
    \item[$(vi)$] Since $X_A$ is Hausdorff, every singleton is closed and then it is a Borel subset of $X_A$. By $(v)$, each $Y_A(\xi^{0,\mathfrak{e}})$ is a countable union of Borel sets, and therefore the $Y_A$-families are Borel subsets of $X_A$. 
\end{itemize}
\end{proof}

By items $(i)$ and $(ii)$ of Proposition \ref{prop:properties_of_Y_A_families}, the collection of $Y_A$-families forms a partition of $Y_A$. Item $(iv)$ states that $Y_A$-families are one-to-one with the set of all roots for elements of $Y_A$, and by consequence, each $Y_A$-family satisfies a \emph{stem uniqueness property}: a configuration $\xi$ belonging to a $Y_A$-family is the unique element in that family whose its stem is the word $\kappa(\xi)$. These facts essentially highlight the bijection between the set $Y_A$ and the set of pairs stem-root $(\kappa,R)$: each  $\xi \in Y_A$ corresponds to a unique pair $(\kappa(\xi),R_\xi(\kappa(\xi)))$, where $\kappa(\xi)$ is the stem and $R_\xi(\kappa(\xi))$ the root of the element $\xi$. Considering this fact, we define the following.


\begin{definition} Given a transition matrix s.t. its respective set of empty-stem configurations $\{\xi^{0,\mathfrak{e}}\}_{\mathfrak{e} \in \mathscr{E}}$ in $Y_A$. We define the sets
\begin{equation*}
    \mathfrak{W}_\mathfrak{e}:= \{\omega: \kappa(\xi) = \omega \text{ for some } \xi \in Y_A(\xi^{0,\mathfrak{e}})\},
\end{equation*}
for each $\mathfrak{e} \in \mathscr{E}$. If $\mathscr{E} = \{\mathfrak{e}\}$, that is, $\mathscr{E}$ is a singleton, we simply write $\mathfrak{W}_\mathfrak{e} = \mathfrak{W}$. 
\end{definition}

We now focus on some particularly special examples. They will be recalled further when we construct the thermodynamic formalism on generalized Markov shift spaces.

\subsection{Renewal shift}
\label{subsec:Generalized_Renewal_shift}

We first characterize the \emph{generalized renewal shift} space. By taking the alphabet $\mathbb{N}$, the renewal transition matrix is given by $A(1,n) = A(n+1,n) = 1$ for every $n \in \mathbb{N}$ and zero in the remaining entries. Its symbolic graph is the following:
\begin{figure}[H]
\[
\begin{tikzcd}
\circled{1}\arrow[loop left]\arrow[r,bend left]\arrow[rr,bend left]\arrow[rrr,bend left]\arrow[rrrr, bend left]\arrow[rrrrr, bend left]\arrow[rrrrrr, bend left]&\circled{2}\arrow[l]&\circled{3}\arrow[l]&\circled{4}\arrow[l]&\circled{5}\arrow[l]&\circled{6}\arrow[l]&\arrow[l]\cdots
\end{tikzcd}
\]
\caption{The renewal shift symbolic graph.\label{fig:renewal_shift_symbolic_graph}}
\end{figure}

\begin{figure}[H]
    \centering
\scalebox{0.6}{

\tikzset{every picture/.style={line width=0.75pt}} 



}
\caption{The empty-stem configuration of the renewal shift. Only the filled vertices are shown. The red vertices are some explicit examples of filled elements of $\mathbb{F}$.\label{figure.emptyconfigurationrenewal}}
\end{figure}

First we observe that the first row of $A$ is full of $1$'s, and then, by Remark \ref{remark:unital_vs_full_row_of_ones}, $\mathcal{O}_A$ is unital and therefore $X_A$ is compact. On the other hand, we observe that $1$ is the unique infinite emitter in the symbolic graph. Also,
\begin{equation*}
    \mathfrak{c}(j)_i = \begin{cases}
                1, \quad \text{if } i \in \{1,j+1\};\\
                0, \quad \text{otherwise.}
             \end{cases}
\end{equation*}
Then, the unique limit point of the set of matrix columns of $A$ is
\begin{equation*}
    c = \begin{pmatrix} 1 \\ 0 \\0 \\ 0 \\\vdots
    \end{pmatrix}.
\end{equation*}
We conclude that there exists a unique $Y_A$-family, where 
\begin{equation*}
    \mathfrak{W} = \{e\} \sqcup \{\omega \text{ positive admissible word}, |\omega| \geq 1, \omega_{|\omega|-1}= 1\}.
\end{equation*}
We denote by $\xi^0$ the unique configuration in $Y_A$ with empty stem, and it is represented in the figure \ref{figure.emptyconfigurationrenewal}.

The configuration of the figure \ref{fig:321_renewal} presents the unique configuration relative to the stem $321$. Its uniqueness is a straightforward consequence of Proposition \ref{prop:properties_of_Y_A_families}.

\begin{figure}
    \centering

\scalebox{0.7}{

\tikzset{every picture/.style={line width=0.75pt}} 



}
\caption{321 stem configuration for the renewal shift. The directed graph denoted by $\xi^0\setminus \{e_1\}$ is 
obtained removing the root $e$ and the edge $e_1$, labeled with $1$, from the directed graph associated to the 
empty configuration $\xi^0$ in Figure \ref{figure.emptyconfigurationrenewal}. \label{fig:321_renewal}}
\end{figure}

Since this shift space is a simple concrete and explicit example, we show explicitly how the shift map acts on it. Figure \ref{fig:321_renewal_shift_action} shows the shift action on the configuration with stem $321$ of Figure \ref{fig:321_renewal}, and the Figure \ref{fig:321_renewal_shift_action_detailed} presents the same shift action in same configuration with focus on the translation of some branches of that configuration. In both pictures, $\eta^0$, jointly with the edge `$1$' and the vertex $e$, is a copy of the empty-stem configuration.

\begin{figure}[H]
    \centering

\scalebox{0.6}{

\tikzset{every picture/.style={line width=0.75pt}} 



}
\caption{\label{fig:321_renewal_shift_action_detailed}}

\end{figure}

\newpage

\subsection{Pair renewal shift}
\label{subsec:Pair_Renewal}

This shift space corresponds to a slight modification of the renewal shift, by adding another infinite emitter to the symbolic graph, and we described it as follows. Consider the transition matrix satisfying
\begin{equation*}
    A(1,n)=A(2,2n)=A(n+1,n) = 1, \, n\in \mathbb{N},
\end{equation*}
and zero otherwise. This shift will be called from now by \emph{pair renewal shift} and its explicit associated transition matrix is
\begin{equation*}
    A = \begin{pmatrix}
    1 & 1 & 1 & 1 & 1 & 1 & 1 & 1 & 1 & \cdots\\
    1 & 1 & 0 & 1 & 0 & 1 & 0 & 1 & 0 & \cdots\\
    0 & 1 & 0 & 0 & 0 & 0 & 0 & 0 & 0 & \cdots\\
    0 & 0 & 1 & 0 & 0 & 0 & 0 & 0 & 0 & \cdots\\
    0 & 0 & 0 & 1 & 0 & 0 & 0 & 0 & 0 &\cdots\\
    0 & 0 & 0 & 0 & 1 & 0 & 0 & 0 & 0 &\cdots\\
    0 & 0 & 0 & 0 & 0 & 1 & 0 & 0 & 0 &\cdots\\
    0 & 0 & 0 & 0 & 0 & 0 & 1 & 0 & 0 &\cdots\\
    0 & 0 & 0 & 0 & 0 & 0 & 0 & 1 & 0 &\cdots\\
    \vdots & \vdots & \vdots & \vdots & \vdots & \vdots & \vdots & \vdots & \vdots & \ddots
    \end{pmatrix}.
\end{equation*}
Its symbolic graph is the following:
\[
\begin{tikzcd}
\circled{1}\arrow[loop left]\arrow[r,bend left]\arrow[rr,bend left]\arrow[rrr,bend left]\arrow[rrrr, bend left]\arrow[rrrrr, bend left]\arrow[rrrrrr, bend left]&\circled{2}\arrow[loop below]\arrow[l]\arrow[rr,bend right]\arrow[rrrr,bend right]\arrow[rrrrr,bend right]&\circled{3}\arrow[l]&\circled{4}\arrow[l]&\circled{5}\arrow[l]&\circled{6}\arrow[l]&\arrow[l]\cdots
\end{tikzcd}
\]
The current transition matrix $A$ has a full line of $1$`s in the first row and then $\mathcal{O}_A$ is unital as the renewal shift and therefore $X_A$ is compact. The only two possibilities for limit points on the sequence of columns of $A$ are
\begin{equation}\label{eq:limit_columns_pair_renewal_shift}
    c_1 = 


}}
\caption{The two empty-stem configurations of the pair renewal shift. The left configuration $\xi^{0,1}$ satisfies $R_{\xi^{0,1}}(e) = c_1$, while the right one $\xi^{0,2}$ corresponds to $R_{\xi^{0,2}}(e) = c_2$, where $c_1$ and $c_2$ are the column vectors of \eqref{eq:limit_columns_pair_renewal_shift}. \label{fig:emptyconfigurationpairrenewal}}
\end{figure}

For this GCMS, there are only two $Y_A$-families. Observe that the unique infinite emitters the pair renewal shift symbolic graph are the symbols $1$ and $2$. By definition of $Y_A$-family, the stems of $Y_A(\xi^{0,1})$ and $Y_A(\xi^{0,2})$ are the sets $\mathfrak{W}_1$ and $\mathfrak{W}_2$, respectively, and they given by
\begin{align*}
    \mathfrak{W}_1 &= \{e\} \sqcup \{\omega \text{ positive admissible word}, |\omega| \geq 1, \omega_{|\omega|-1} \in \{1,2\}\},\\
    \mathfrak{W}_2 &= \{e\} \sqcup \{\omega \text{ positive admissible word}, |\omega| \geq 1, \omega_{|\omega|-1} = 1\}.
\end{align*}

\subsection{Prime renewal shift}
\label{subsec:Prime_Renewal_shift}

This example is one step further on generalizing the renewal shift. Take matrix $A$ as follows: for each $p$ prime number and  $n \in \mathbb{N}$ we have
\begin{equation*}
A(n+1,n)=A(1,n) = A(p,p^n) = 1,
\end{equation*}
and zero for the other entries of $A$. From now on, we will call this shift \emph{prime renewal shift}.
The transition matrix is
\begin{equation*}
    A = \begin{pmatrix}
    1 & 1 & 1 & 1 & 1 & 1 & 1 & 1 & 1 & \cdots\\
    1 & 1 & 0 & 1 & 0 & 0 & 0 & 1 & 0 & \cdots\\
    0 & 1 & 1 & 0 & 0 & 0 & 0 & 0 & 1 & \cdots\\
    0 & 0 & 1 & 0 & 0 & 0 & 0 & 0 & 0 & \cdots\\
    0 & 0 & 0 & 1 & 1 & 0 & 0 & 0 & 0 &\cdots\\
    0 & 0 & 0 & 0 & 1 & 0 & 0 & 0 & 0 &\cdots\\
    0 & 0 & 0 & 0 & 0 & 1 & 1 & 0 & 0 &\cdots\\
    0 & 0 & 0 & 0 & 0 & 0 & 1 & 0 & 0 &\cdots\\
    0 & 0 & 0 & 0 & 0 & 0 & 0 & 1 & 0 &\cdots\\
    \vdots & \vdots & \vdots & \vdots & \vdots & \vdots & \vdots & \vdots & \vdots & \ddots
    \end{pmatrix}.
\end{equation*}
Now, consider the space of the columns of $A$ endowed with the product topology. All the possible limit points of sequences of elements of $\mathfrak{C}(A)$ are
\begin{align*}
    c(p) = \begin{pmatrix}
               1\\
               0\\
               0\\
               \vdots\\
               0\\
               1 \quad(p\text{-th coordinate})\\
               0\\
                  \vdots
           \end{pmatrix}
\quad \text{or} \quad
    c(1)  = \begin{pmatrix}
               1\\
               0\\
               0\\
               0\\
               \vdots
             \end{pmatrix}.
\end{align*}
where $p$ is a prime number. 
The prime renewal shift has a transition matrix with a full line of $1$'s in the first row and then $\mathcal{O}_A$ is unital, and hence its respective GCMS $X_A$ is compact. For each $p$ prime number or $1$, let $\xi^{0,p}$ be the empty-stem configuration such that $\xi^{0,p}_{1^{-1}}=\xi^{0,p}_{p^{-1}}=1$. Also, let $\mathfrak{W}_p$ be the set of stems of configurations in $Y_A(\xi^{0,p})$. We have that
\begin{align*}
    \mathfrak{W}_1 &= \{e\} \sqcup \{\omega \text{ positive admissible word}, |\omega| \geq 1, \omega_{|\omega|-1} = 1\}, \text{ for }p =1;\\
    \mathfrak{W}_p &= \{e\} \sqcup \{\omega \text{ positive admissible word}, |\omega| \geq 1, \omega_{|\omega|-1} \in \{1,p\}\}, \text{ for }  p \text{ is a prime number}.
\end{align*}

\section{Topological aspects of GCMS}\label{section_cylinder}

Further in this paper, we construct the thermodynamic formalism for the set $X_A$, and one of the important aspects of this new theory is its connections with the standard one for $\Sigma_A$. For instance, we present examples in sections \ref{new_measures} and \ref{sec:eigenmeasures} where, for a given potential, there exists a critical value of inverse temperature where there are conformal measures or eigenmeasures, probabilities in both cases, that give mass zero for $\Sigma_A$ when the temperature is above such critical point. At the critical value or below, these measures, when they exist, they live on $\Sigma_A$. And the important fact here is that, when you increase the temperature to the critical point, the measures living out of $\Sigma_A$ converges weakly$^*$ to the measure at the critical point, connecting the both standard and generalized thermodynamic formalisms. And these results are consequences of the topological features of the GCMS: we present a basis for the separable metric space $X_A$ that is closed under intersections. The topology of $X_A$ is realized by a generalization of the notion of cylinder set of $\Sigma_A$. The aforementioned properties of the basis for $X_A$ grant that a net of probability Borel measures $\{\mu_\beta\}_{\beta}$ converges another Borel probability $\mu$ when
$\lim_\beta \mu_\beta(\mathcal{B}) = \mu(\mathcal{B})$, for every $\mathcal{B}$ element of the basis (Theorem 8.2.17 of \cite{Bogachev2007}).

There are some differences and similarities between the generalized cylinders and the standard ones. The straightforward similarity, as pointed out above, is that the basis for $X_A$ plays a similar role as the convergence of probability measures on $\Sigma_A$, since a net of probabilities on $\Sigma_A$ converges to another probability on $\Sigma_A$ if, and only if, it converges on the cylinders sets. On the other hand, and unlike the standard cylinders, the generalized cylinders are indexed by elements of $\mathbb{F}$ instead of $\mathbb{F}_+$. However, as we prove in this section, the elements of the basis of $X_A$ always can be written in terms of cylinders on the positive words, jointly with a subset of $Y_A$ that is written in terms of positive words as well.

We recall that the topology on $X_A$ is the subspace topology of the product topology on $\{0,1\}^\mathbb{F}$, this topology is generated by the subbasis $\{\pi_g^{-1}(B)\cap X_A: g \in \mathbb{F}, B \subseteq \{0,1\}\}$, which makes $X_A$ a second countable topological space. This let us introduce the notion of generalized cylinder set as follows. 

\begin{definition}\label{def:generalized_cylinders} The \textit{generalized cylinder sets} are the sets
\begin{equation*}
    C_g := \pi^{-1}_g(\{1\})\cap X_A = \{\xi \in X_A: \xi_g =1\},
\end{equation*}
where $g \in \mathbb{F}$.
\end{definition}

Observe that, for $g \in \mathbb{F}$, we have
\begin{equation*}
    \pi_g^{-1}(B)\cap X_A = \begin{cases}
                                    \emptyset, \quad \text{if }B = \emptyset, \\
                                    C_g, \quad \text{if }B = \{1\}, \\
                                    C_g^c, \quad \text{if }B = \{0\}, \\
                                    X_A, \quad \text{if }B = \{0,1\}.
                                 \end{cases}
\end{equation*}
Then, a topological basis $\mathscr{B}$ for $X_A$ is the collection of finite intersections of elements in
\begin{equation}\label{eq:subbasis_X_A}
   \{C_g:g=\alpha \gamma^{-1}\}\cup\{C_g^c:g=\alpha \gamma^{-1}\}, 
\end{equation}
and it is straightforward that $\mathscr{B}$ is closed under intersections and it is countable. Hence, $\mathscr{B}$ fulfills the topological hypotheses of the Theorem 8.2.17 of \cite{Bogachev2007}. Note that in \eqref{eq:subbasis_X_A}, we removed trivial cylinders: for every $g \in \mathbb{F}$ which is not in the form $\alpha \gamma^{-1}$ for all $\alpha, \gamma$ finite positive admissible words, we have that $\xi_g = 0$ for all $\xi \in X_A$, then $\pi_g^{-1}(B) \in \{\emptyset,X_A\}$ and such terms may be excluded from the subbasis without changes in the topology. In this section, we prove results that simplify the basis even further. First, we show that generalized cylinders of positive words are compact sets.

\begin{theorem}\label{thm:cylinders_are_compact} Let $\alpha$ be a positive admissible word. Then the set $C_\alpha$ is compact on $X_A$.
\end{theorem}

\begin{proof} The case when $X_A$ is compact is straightforward, since $C_\alpha$ is closed, so suppose that $X_A$ is not compact. In this case, it is equivalent to prove that $C_\alpha \subseteq X_A$ is closed on $\widetilde{X}_A$. Since the topology on $\widetilde{X}_A$ is the product topology on $\{0,1\}^\mathbb{F}$ we have that any sequence $(\xi^n)_\mathbb{N}$ in $\widetilde{X}_A$ converges to an element $\xi \in \widetilde{X}_A$ if, and only if, it converges coordinate-wise. Let $(\xi^n)_\mathbb{N}$ be a sequence in $C_\alpha$ converging to $\xi \in \widetilde{X}_A$. Then $\xi_\alpha = \lim_{n \to \infty} (\xi^n)_\alpha = \lim_{n \to \infty} 1 = 1$. Therefore $\xi \in C_\alpha$, and then $C_\alpha$ is closed in $\widetilde{X}_A$, that is, $C_\alpha$ is a compact subset of $X_A$. 
\end{proof}

\begin{proposition}\label{prop:cylinder_reducing_inverse} For any $\alpha, \gamma$ positive admissible words with $\gamma= \gamma_0 \cdots \gamma_{|\gamma|-1}$, $|\gamma|>1$, we have that $C_{\alpha \gamma^{-1}}=C_{\alpha \gamma_{|\gamma|-1}^{-1}}$.
\end{proposition}

\begin{proof} The inclusion $C_{\alpha \gamma^{-1}}\subseteq C_{\alpha \gamma_{|\gamma|-1}^{-1}}$ is a direct consequence from the connectedness property and the fact that $\xi_e = 1$ for every configuration in $X_A$, because they imply that, if $g \in \mathbb{F}$ is filled in $\xi$, then every subword of $g$ is also filled. The opposite inclusion holds by rule $(R4)$ in Definition \ref{def:set_Omega_A_tau}, by taking $g = \alpha \gamma_{|\gamma|-1}^{-1}$, $j = \gamma_{|\gamma|-1}^{-1}$, and $i = \gamma_{|\gamma|-2}$. The rest of the proof follows by using $(R4)$ for the remaining vertices: take $g = \alpha (\gamma_{p} \cdots \gamma_{|\gamma|-1})^{-1}$, $j = \gamma_{p}$ and $i = \gamma_{p-1}$ for $p \in \{1,\dots, |\gamma|-2\}$ as in Definition \ref{def:set_Omega_A_tau}.
\end{proof}

In order to proceed with more simplifications of the basis, we define the following.

\begin{definition} Given positive admissible words $\alpha$ and $\gamma$ we define the following finite sets:
\begin{align*}
    F_\alpha &:= \{\xi \in X_A : \kappa(\xi) \in  \llbracket \alpha \rrbracket\setminus\{\alpha\}\},\\
    F_\alpha^* &:= \{\xi \in X_A : \kappa(\xi) \in  \llbracket \alpha \rrbracket\},\\
    F_\gamma^\alpha &:= \{\xi \in F_\gamma: \alpha \in \llbracket \kappa(\xi) \rrbracket\}.
\end{align*}
In addition, given $H,I \subseteq \mathbb{N}$, we define the sets
\begin{align*}
    G(\alpha,H) &:=\{\xi \in X_A : \kappa(\xi)=\alpha, \xi_{\alpha j^{-1}}=1, \forall j \in H\},\\
    K(\alpha,H) &:=\{\xi \in X_A : \kappa(\xi)=\alpha, \xi_{\alpha j^{-1}}=0, \forall j \in H\},\\
    GK(\alpha,H,I) &:= G(\alpha,H) \cap K(\alpha,I),
\end{align*}
with the particular notations $G(\alpha,j):=G(\alpha,\{j\})$, $K(\alpha,j):= K(\alpha,\{j\})$, and $GK(\alpha,j,k) := GK(\alpha,\{j\},\{k\})$, where $j,k \in \mathbb{N}$.
Finally, for $m \in \{0,1,..., |\alpha|-1\}$, define the words
\begin{equation*}
    \delta^m(\alpha) = \begin{cases}
                        e, \quad \text{if } m = 0;\\
                        \alpha_0 \cdots \alpha_{m-1}, \quad \text{otherwise}.
                       \end{cases}   
\end{equation*}
\end{definition}

\begin{remark}\label{rmk:F_is_empty} Note that $F_\gamma^\gamma = F_e = \emptyset$ and that $F_\gamma^\alpha = \emptyset$ if $\alpha \notin \llbracket\gamma\rrbracket$. Also, if $H \cap I \neq \emptyset$, then $GK(\alpha,H,I) = \emptyset$.
\end{remark}

\begin{proposition}\label{prop:general_C_alpha_inverse_j} For every $\alpha$ positive admissible word and for every $j \in \mathbb{N}$ s.t. $j \neq \alpha_{|\alpha|-1}$ we have that
\begin{equation*}
    C_{\alpha j^{-1}} = G(\alpha,j)\sqcup \bigsqcup_{k:A(j,k)=1}C_{\alpha k}.
\end{equation*}
\end{proposition}

\begin{proof} The inclusion
\begin{equation*}
    C_{\alpha j^{-1}} \supseteq G(\alpha,j)\sqcup \bigsqcup_{k:A(j,k)=1}C_{\alpha k}
\end{equation*}
is straightforward. For the opposite inclusion, let $\xi \in C_{\alpha j^{-1}}$. If $\kappa(\xi)= \alpha$, then $\xi \in G(\alpha,j)$. Suppose then $\kappa(\xi)\neq \alpha$. By connectedness and $\xi_e=1$ we get $\alpha \in \llbracket\kappa(\xi)\rrbracket\setminus\{\kappa(\xi)\}$, and hence $\alpha k \in \llbracket\kappa(\xi)\rrbracket$ for some $k \in \mathbb{N}$. Since $\xi_{\alpha j^{-1}}=1$, we have necessarily that $A(j,k)=1$. 
\end{proof}


\begin{proposition}\label{prop:general_C_alpha_complement} Let $\alpha$ be a non-empty positive admissible word. Then,
\begin{equation*}
    C_\alpha^c = F_\alpha\sqcup\bigsqcup_{m=0}^{|\alpha|-1}\bigsqcup_{k\neq \alpha_m} C_{\delta^m(\alpha)k}.
\end{equation*}
\end{proposition}
\begin{proof} The inclusion 
\begin{equation*}
    C_\alpha^c \supseteq F_\alpha\sqcup\bigsqcup_{m=0}^{|\alpha|-1}\bigsqcup_{k\neq \alpha_m} C_{\delta^m(\alpha)k}
\end{equation*}
is straightforward. Let $\xi \in C_\alpha^c$. If $\kappa(\xi) \in \llbracket \alpha \rrbracket$ then we necessarily have that $\xi \in F_\alpha$. Suppose that $\kappa(\xi) \notin \llbracket \alpha \rrbracket$. Then there exists $m \in \{0,\dots,|\alpha|-1\}$ and $k \in \mathbb{N}\setminus\{\alpha_m\}$ such that $\delta^m(\alpha)k \in \llbracket\kappa(\xi)\rrbracket$ and therefore $\xi \in C_{\delta^m(\alpha)k}$. 
\end{proof}

\begin{proposition}\label{prop:general_C_alpha_j_inverse_complement} Let $\alpha$ be a non-empty positive admissible word and $j \neq \alpha_{|\alpha|-1}$. Then,
\begin{equation*}
    C_{\alpha j^{-1}}^c = K(\alpha,j) \sqcup F_\alpha \sqcup \left( \bigsqcup_{m=0}^{|\alpha|-1}\bigsqcup_{p\neq \alpha_m}  C_{\delta^m(\alpha )p}\right)\sqcup \bigsqcup_{\substack{p: A(j,p)=0}}  C_{\alpha p}.
\end{equation*}
\end{proposition}
\begin{proof} By Propositions \ref{prop:general_C_alpha_inverse_j} and \ref{prop:general_C_alpha_complement} we obtain
\begin{align*}
    C_{\alpha j^{-1}}^c =  \bigcap_{k:A(j,k)=1} \left(\left(G(\alpha,j)^c\cap F_{\alpha k}\right)\sqcup\bigsqcup_{m=0}^{|\alpha|}\bigsqcup_{p\neq (\alpha k)_m} G(\alpha,j)^c\cap C_{\delta^m(\alpha k)p}\right).
\end{align*}
In order to characterize the sets $G(\alpha,j)^c\cap F_{\alpha k}$ we use the following identity,
\begin{align*}
    G(\alpha,j)^c &= \Sigma_A \sqcup G_1\sqcup G_2 \sqcup K(\alpha,j),\text{ where }G_1 = C_\alpha^c \cap Y_A \quad \text{and}\\
    G_2 &= \left\{\xi \in Y_A:\alpha \in \llbracket\kappa(\xi)\rrbracket\setminus\{\kappa(\xi)\}\right\}.
\end{align*}
It is straightforward to verify that
\begin{equation*}
    F_{\alpha k} \cap \Sigma_A = F_{\alpha k} \cap G_2 = \emptyset, \quad  F_{\alpha k} \cap G_1 = F_\alpha, \quad F_{\alpha k} \cap K(\alpha,j) = K(\alpha,j).
\end{equation*}
Now we use the decomposition
\begin{equation*}
    G(\alpha,j)^c = G_3\sqcup K(\alpha,j),\text{ where  }  G_3 =\left\{\xi \in X_A:\kappa(\xi) \neq \alpha \right\},
\end{equation*}
in order to obtain for $m \in \{0,\dots,|\alpha|\}$ and $p \neq (\alpha k)_m$ the identities
\begin{equation*}
    C_{\delta^m(\alpha k)p}\cap G_3 = C_{\delta^m(\alpha k)p} \quad \text{and} \quad C_{\delta^m(\alpha k)p}\cap K(\alpha,j) = \emptyset.
\end{equation*}
Thus we have \vspace{-3mm}
\begin{align*}
    C_{\alpha j^{-1}}^c &= K(\alpha,j) \sqcup F_\alpha \sqcup \bigcap_{k:A(j,k)=1}\left( \bigsqcup_{m=0}^{|\alpha|-1}\bigsqcup_{p\neq \alpha_m}  C_{\delta^m(\alpha )p}\right)\sqcup \bigcap_{k:A(j,k)=1}\bigsqcup_{p\neq k}  C_{\alpha p}\\
    &= K(\alpha,j) \sqcup F_\alpha \sqcup \left( \bigsqcup_{m=0}^{|\alpha|-1}\bigsqcup_{p\neq \alpha_m}  C_{\delta^m(\alpha )p}\right)\sqcup \bigsqcup_{\substack{p: A(j,p)=0}}  C_{\alpha p}.
\end{align*} 
\end{proof}

\begin{remark} Proposition \ref{prop:general_C_alpha_inverse_j} and Proposition \ref{prop:general_C_alpha_j_inverse_complement} are also valid for $\alpha = e$, by removing the hypothesis $j \neq \alpha_{|\alpha|-1}$. In particular, for Proposition \ref{prop:general_C_alpha_j_inverse_complement}, we have the following change:
\begin{align*}
    C_{j^{-1}}^c = K(e,j) \sqcup \bigsqcup_{\substack{p: A(j,p)=0}}  C_{\alpha p}.
\end{align*} 
The proof is essentially the same as presented above.
\end{remark}

We moved some auxiliary lemmas to the appendix in section \ref{ape:proof_giant_theorem_generalized_cylinders} to avoid technicalities. The lemmas are used to prove the next theorem and identities involving intersections of generalized cylinders, the most obvious is the following:
\begin{equation}\label{eq:C_cap_C}
    C_\alpha \cap C_\gamma = \begin{cases}
                                C_\alpha, \quad \text{if } \gamma \in \llbracket \alpha \rrbracket,\\
                                C_\gamma, \quad \text{if } \alpha \in \llbracket \gamma \rrbracket,\\
                                \emptyset, \quad \text{otherwise};
                            \end{cases} 
\end{equation}
which holds for every $\alpha$ and $\gamma$ positive admissible words.

Summarizing the results, Proposition \ref{prop:cylinder_reducing_inverse} grants that we only need to consider the basis of the finite intersections of generalized cylinders in the form $C_\alpha$ and $C_{\alpha j^{-1}}$, $\alpha$ positive admissible word and $j \in \mathbb{N}$, and their complements. However, Propositions \ref{prop:general_C_alpha_inverse_j}, \ref{prop:general_C_alpha_complement} and \ref{prop:general_C_alpha_j_inverse_complement} show that $C_{\alpha j^{-1}}$ as much as $C_\alpha$ and $C_{\alpha j^{-1}}$ can be written as a disjoint union of generalized cylinders together with a subset of $Y_A$. It is remarkable that for any finite intersection of these elements we obtain again a disjoint union between a subset of $Y_A$ and a disjoint union of cylinders. We emphasize that all these open sets are also closed. 

The next result presents all the possible intersections between two elements of the subbasis.

\begin{theorem}\label{thm:huge_generators_intersections} Given $\alpha,\gamma$ positive admissible words and $j,k \in \mathbb{N}$ we have that
\begin{align}
    C_\alpha \cap C_\gamma^c &= 
    \begin{cases}
        F_{\gamma}^{\alpha}\sqcup \bigsqcup_{n=|\alpha|}^{|\gamma|-1} \bigsqcup_{p \neq \gamma_n} C_{\delta^{n}(\gamma)p},\quad \alpha \in \llbracket \gamma \rrbracket\setminus\{\gamma\},\\
        C_\alpha,\quad \alpha \notin \llbracket \gamma \rrbracket \text{ and } \gamma \notin \llbracket \alpha \rrbracket, \\
        \emptyset, \quad \text{otherwise};
    \end{cases} \label{eq:C_cap_C_comp}\\
    C_\alpha^c \cap C_\gamma^c &= \begin{cases}
                                    C_\alpha^c, \quad \alpha \in \llbracket \gamma \rrbracket,\\
                                    C_\gamma^c, \quad \gamma \in \llbracket \alpha \rrbracket,\\
                                    F_{\alpha'}^* \sqcup F_\alpha^{\alpha'\alpha_{|\alpha'|}} \sqcup F_\gamma^{\alpha'\gamma_{|\alpha'|}} \sqcup  \left(\bigsqcup_{m = 0}^{|\alpha|-1} \bigsqcup_{q\neq \alpha_m} C_{\delta^m(\alpha)q}\right) \\
                                    \qquad \qquad \qquad \sqcup \left(\bigsqcup_{|\alpha'| < n < |\gamma|-1} \bigsqcup_{p\neq \gamma_n} C_{\delta^n(\gamma )p}\right), \quad \text{otherwise};
                                  \end{cases}\label{eq:C_comp_cap_C_comp}\\
    C_\alpha \cap C_{\gamma j^{-1}} &= \begin{cases}
                                    C_{\gamma j^{-1}},\quad \alpha \in \llbracket \gamma \rrbracket,\\
                                    C_\alpha,\quad \gamma \in \llbracket \alpha \rrbracket \setminus \{\alpha\} \text{ and } A(j,\alpha_{|\gamma|})=1, \\
                                    \emptyset, \quad \text{otherwise};
                                    \end{cases}\label{eq:C_cap_C_inverse}\\
    C_\alpha \cap C_{\gamma j^{-1}}^c &= \begin{cases}
                                    K(\gamma,j) \sqcup F_\gamma^\alpha \sqcup \left( \bigsqcup_{n=|\alpha|}^{|\gamma|-1}\bigsqcup_{p\neq \gamma_m} C_{\delta^n(\gamma)p}\right)\sqcup \bigsqcup_{\substack{p: A(j,p)=0}} C_{\gamma p},\quad \alpha \in \llbracket \gamma \rrbracket,\\
                                    C_\alpha,\quad \gamma \in \llbracket \alpha \rrbracket \setminus \{\alpha\} \text{ and } A(j,\alpha_{|\gamma|})=0, \text{ or } \alpha \notin \llbracket \gamma \rrbracket \text{ and }\gamma \notin \llbracket \alpha \rrbracket,\\
                                    \emptyset, \quad \text{otherwise};
                                    \end{cases}\label{eq:C_cap_C_inverse_comp}\\
    C_\alpha^c \cap C_{\gamma j^{-1}} &= \begin{cases}
                                        \emptyset,\quad \alpha \in \llbracket \gamma \rrbracket,\\
                                        G(\gamma,j) \sqcup F_\alpha^{\gamma \alpha_{|\gamma|}} \sqcup  \mathfrak{C}[\alpha,\gamma,j], \quad \gamma \in \llbracket \alpha \rrbracket \setminus \{\alpha\} \text{ and }A(j,\alpha_{|\gamma|}) = 1,\\
                                        G(\gamma,j) \sqcup  \mathfrak{C}[\alpha,\gamma,j], \quad \gamma \in \llbracket \alpha \rrbracket \setminus \{\alpha\} \text{ and }A(j,\alpha_{|\gamma|}) = 0,\\
                                        C_{\gamma j^{-1}}, \quad \text{otherwise};
                                        \end{cases}\label{eq:C_comp_cap_C_inverse}
\end{align}

\begin{align}
    C_\alpha^c \cap C_{\gamma j^{-1}}^c &= \begin{cases}
                                C_\alpha^c,\quad \alpha \in \llbracket \gamma \rrbracket,\\
                                K(\gamma,j) \sqcup F_\gamma \sqcup F_\alpha^{\gamma \alpha_{|\gamma|}} \sqcup \left(\bigsqcup_{n=0}^{|\gamma|-1}\bigsqcup_{p\neq \gamma_n}  C_{\delta^n(\gamma )p}\right) \\
                                \qquad \qquad \qquad \sqcup \mathfrak{D}[\alpha,\gamma,j], \quad \gamma \in \llbracket \alpha \rrbracket \setminus \{\alpha\} \text{ and }A(j,\alpha_{|\gamma|}) = 0,\\
                                K(\gamma,j) \sqcup F_\gamma \sqcup \left(\bigsqcup_{n=0}^{|\gamma|-1}\bigsqcup_{p\neq \gamma_n}  C_{\delta^n(\gamma )p} \right)\\
                                \qquad \qquad \qquad \sqcup \mathfrak{D}[\alpha,\gamma,j], \quad \gamma \in \llbracket \alpha \rrbracket \setminus \{\alpha\} \text{ and }A(j,\alpha_{|\gamma|}) = 1,\\
                                K(\gamma,j) \sqcup F_{\alpha'}^* \sqcup F_\alpha^{\alpha' \alpha_{|\gamma|}} \sqcup F_\gamma^{\alpha'\gamma_{|\alpha'|}} \sqcup \left(\bigsqcup_{m = 0}^{|\alpha|-1} \bigsqcup_{q\neq \alpha_m} C_{\delta^m(\alpha)q}\right) \\\qquad \qquad \qquad \sqcup \left(\bigsqcup_{|\alpha'| < n \leq |\gamma|-1} \bigsqcup_{p\neq \gamma_n} C_{\delta^n(\gamma )p}\right)\\
                                \qquad \qquad \qquad \sqcup \bigsqcup_{\substack{p: A(j,p)=0}} C_{\gamma p} , \quad \text{otherwise}; \end{cases}\label{eq:C_comp_cap_C_inverse_comp}
\end{align}

\begin{align}
    C_{\alpha j^{-1}} \cap C_{\gamma k^{-1}} &= \begin{cases}
                                G(\alpha,\{j,k\}) \sqcup \bigsqcup_{q:A(j,q)A(k,q)=1}C_{\alpha q},\quad \alpha = \gamma,\\
                                G(\gamma,k) \sqcup  \bigsqcup_{m:A(k,m)=1}  C_{\gamma m}, \quad \alpha \in \llbracket \gamma \rrbracket \setminus \{\gamma\} \text{ and }A(j,\gamma_{|\alpha|}) = 1,\\
                                 G(\alpha,j) \sqcup  \bigsqcup_{q:A(j,q)=1}  C_{\alpha q}, \quad \gamma \in \llbracket \alpha \rrbracket \setminus \{\alpha\} \text{ and }A(k,\alpha_{|\gamma|}) = 1,\\
                            \emptyset, \quad \text{otherwise}; \end{cases}\label{eq:C_inverse_cap_C_inverse}
\end{align}

\begin{align}
    C_{\alpha j^{-1}} \cap C_{\gamma k^{-1}}^c &= \begin{cases}
                                GK(\alpha,j,k) \sqcup \bigsqcup_{q:A(j,q) = 1,A(k,q)=0}C_{\alpha q},\quad \alpha = \gamma,\\
                                K(\gamma,k) \sqcup G(\alpha,j) \sqcup F_{\gamma}^{\alpha \gamma_{|\alpha|}} \sqcup \mathfrak{C}[\gamma,\alpha,j]\\
                                \qquad \qquad \qquad \sqcup  \bigsqcup_{m:A(k,m)=0}  C_{\gamma m}, \quad \alpha \in \llbracket \gamma \rrbracket \setminus \{\gamma\} \text{ and }A(j,\gamma_{|\alpha|}) = 1,\\
                                G(\alpha,j) \sqcup \mathfrak{C} [\gamma,\alpha,j], \quad \alpha \in \llbracket \gamma \rrbracket \setminus \{\gamma\} \text{ and }A(j,\gamma_{|\alpha|}) = 0,\\    
                                G(\alpha,j) \sqcup  \bigsqcup_{q:A(j,q)=1}  C_{\alpha q}, \quad \gamma \in \llbracket \alpha \rrbracket \setminus \{\alpha\} \text{ and }A(k,\alpha_{|\gamma|}) = 0, \\
                                \qquad \qquad \qquad \text{ or if } \alpha \notin \llbracket \gamma \rrbracket \text{ and } \gamma \notin \llbracket  \alpha \rrbracket\\
                                \emptyset, \quad \text{otherwise}; \end{cases}\label{eq:C_inverse_cap_C_inverse_comp}
\end{align}

\begin{align} 
    C_{\alpha j^{-1}}^c \cap C_{\gamma k^{-1}}^c &= \begin{cases}
                                K(\alpha,\{j,k\}) \sqcup F_\alpha \sqcup \left(\bigsqcup_{q: A(j,q) = A(k,q) = 0} C_{\alpha q}\right)\\
                                \qquad \qquad \qquad \sqcup \left(\bigsqcup_{m = 0}^{|\alpha| - 1} \bigsqcup_{q\neq \alpha_m} C_{\delta^m(\alpha)q}\right),\quad \alpha = \gamma,\\
                                K(\alpha,j) \sqcup K(\gamma,k) \sqcup F_\alpha \sqcup F_\gamma^{\alpha \gamma_{|\alpha|}} \sqcup \mathfrak{D}[\gamma,\alpha,k] \sqcup \left( \bigsqcup_{m = 0}^{|\alpha| - 1} \bigsqcup_{q\neq \alpha_m} C_{\delta^m(\alpha)q} \right) \\
                                \qquad \qquad \qquad \sqcup  \left(\bigsqcup_{p:A(k,p)=0}  C_{\gamma p}\right), \quad \alpha \in \llbracket \gamma \rrbracket \setminus \{\gamma\} \text{ and }A(j,\gamma_{|\alpha|}) = 0,\\
                                K(\alpha,j) \sqcup F_\alpha \sqcup \mathfrak{D}[\gamma,\alpha,k]\\
                                \qquad \qquad \qquad \sqcup \left( \bigsqcup_{m = 0}^{|\alpha| - 1} \bigsqcup_{q\neq \alpha_m} C_{\delta^m(\alpha)q} \right), \quad \alpha \in \llbracket \gamma \rrbracket \setminus \{\gamma\} \text{ and }A(j,\gamma_{|\alpha|}) = 1,\\
                                K(\alpha,j) \sqcup K(\gamma,k) \sqcup F_\gamma \sqcup F_\alpha^{\gamma \alpha_{|\gamma|}} \sqcup \mathfrak{D}[\alpha,\gamma,j] \sqcup \left( \bigsqcup_{n = 0}^{|\gamma| - 1} \bigsqcup_{p\neq \gamma_n} C_{\delta^n(\gamma)p} \right)\\
                                \qquad \qquad \qquad \sqcup  \left(\bigsqcup_{q:A(j,q)=0}  C_{\alpha q} \right), \quad \gamma \in \llbracket \alpha \rrbracket \setminus \{\alpha\} \text{ and }A(k,\alpha_{|\gamma|}) = 0,\\
                                K(\gamma,k) \sqcup F_\gamma \sqcup \mathfrak{D}[\alpha,\gamma,j]\\
                                \qquad \qquad \qquad \sqcup \left( \bigsqcup_{n = 0}^{|\gamma| - 1} \bigsqcup_{p\neq \gamma_n} C_{\delta^n(\gamma)p} \right), \quad \gamma \in \llbracket \alpha \rrbracket \setminus \{\alpha\} \text{ and }A(k,\alpha_{|\gamma|}) = 1,\\
                                K(\alpha,j) \sqcup K(\gamma,k) \sqcup F_{\alpha'} \sqcup F_\alpha^{\alpha'\alpha_{|\alpha'|}} \sqcup F_\gamma^{\alpha'\gamma_{|\alpha'|}} \sqcup  \left(\bigsqcup_{q:A(j,q)=0}  C_{\alpha q} \right) \\\qquad \qquad \qquad \sqcup \left( \bigsqcup_{p:A(k,p)=0}  C_{\gamma p}\right)
                                \sqcup \left(\bigsqcup_{m = 0}^{|\alpha|-1} \bigsqcup_{q\neq \alpha_m} C_{\delta^m(\alpha)q}\right)\\
                                \qquad \qquad \qquad \sqcup \left(\bigsqcup_{|\alpha'| < n \leq |\gamma|-1} \bigsqcup_{p\neq \gamma_n} C_{\delta^n(\gamma )p}\right), \quad \text{otherwise}; \end{cases}\label{eq:C_inverse_comp_cap_C_inverse_comp}
\end{align}
where
\begin{equation*}
    \mathfrak{C}[\alpha,\gamma,j] := \begin{cases}
    \bigsqcup_{\substack{p:A(j,p)=1,\\p \neq \alpha_{|\gamma|}}} C_{\gamma p}, \quad A(j,\alpha_{|\gamma|}) = 0,\\
    \left(\bigsqcup_{|\gamma| < m \leq |\alpha|-1} \bigsqcup_{q \neq \alpha_m} C_{\delta^m(\alpha)q}\right) \sqcup \bigsqcup_{\substack{p:A(j,p)=1,\\p \neq \alpha_{|\gamma|}}} C_{\gamma p}, \quad A(j,\alpha_{|\gamma|}) = 1,
    \end{cases}
\end{equation*}
and
\begin{equation*}
    \mathfrak{D}[\alpha,\gamma,j] := \begin{cases}
    \bigsqcup_{\substack{p:A(j,p)=0,\\p \neq \alpha_{|\gamma|}}} C_{\gamma p}, \quad A(j,\alpha_{|\gamma|}) = 1,\\
    \left(\bigsqcup_{|\gamma| < m \leq |\alpha|-1} \bigsqcup_{q \neq \alpha_m} C_{\delta^m(\alpha)q}\right) \sqcup \bigsqcup_{\substack{p:A(j,p)=0,\\p \neq \alpha_{|\gamma|}}} C_{\gamma p}, \quad A(j,\alpha_{|\gamma|}) = 0.
    \end{cases}
\end{equation*}
Also, $\alpha'$ is the longest word in $\llbracket \alpha \rrbracket \cap \llbracket \gamma \rrbracket$.
\end{theorem}

\begin{proof} See appendix in section \ref{ape:proof_giant_theorem_generalized_cylinders}. 
\end{proof}

\begin{proposition}\label{prop:X_A_decomposition} Consider the set $X_A$. Then,
\begin{equation*}
    X_A = \left(\bigcup_{j \in \mathbb{N}}G(e,j)\right) \sqcup \left(\bigsqcup_{j \in \mathbb{N}}C_j\right).
\end{equation*}
\end{proposition}

\begin{proof} The inclusion
\begin{equation*}
    X_A \supseteq \left(\bigcup_{j \in \mathbb{N}}G(e,j)\right) \sqcup \left(\bigsqcup_{j \in \mathbb{N}}C_j\right)
\end{equation*}
is straightforward. For the opposite inclusion, let $\xi \in X_A$. We have two possibilities, namely $\kappa(\xi) \neq e$ or $\kappa(\xi) = e$. In the first case, we have necessarily that $\xi \in C_j$ for some $j \in \mathbb{N}$. Now if $\kappa(\xi) = e$, we have necessarily that $\xi \in G(e,j)$ for some $j \in \mathbb{N}$, otherwise $\xi = \varphi_0$, the configuration filled only in $e$. This is not possible because $\varphi_0 \notin X_A$. 
\end{proof}

\begin{corollary} It is true that
\begin{equation*}
    X_A = \bigcup_{j \in \mathbb{N}}\sigma(C_j).
\end{equation*}
\end{corollary}

\subsection{Cylinder topology: renewal shift}
\label{subsec:cylinder_topology_Renewal}

In the particular case of the renewal shift, we have a simple characterization and we present it now. Observe that, for every generalized cylinder $C_\alpha$ on a positive word $\alpha$, if it does not end with $1$, we have $C_{\alpha} = C_{\gamma}$, where $\gamma$ is the unique shortest word that it starts with $\alpha$ and ends in $1$. Here we represent the elements of $Y_A$ by their stems in $\mathfrak{W}$.

\begin{lemma}\label{lemma:G_K_renewal}
For the renewal shift, we have
\begin{equation*}
    G(\alpha,F) = \begin{cases}
        \{\alpha \}, \quad \text{if } \alpha_{|\alpha| - 1} = 1 \text{ and } F = \{1\},\\
        \emptyset, \quad \text{otherwise};
    \end{cases}
\end{equation*}
and
\begin{equation*}
    K(\alpha,F) = \begin{cases}
        \{\alpha \}, \quad \text{if } \alpha_{|\alpha| - 1} = 1 \text{ and } F \neq \{1\},\\
        \emptyset, \quad \text{otherwise};
    \end{cases}
\end{equation*}
for $\alpha$ non-empty finite admissible word.
\end{lemma}

\begin{proof} From the renewal shift we have that there exists $\xi \in Y_A$ such that $\kappa(\xi) = \alpha$ if and only if $\alpha_{|\alpha|-1} = 1$. We complete the proof by recalling that $R_{\xi}(\alpha) = \{1\}$ for $\alpha$ ending with $1$. 
\end{proof}

\begin{remark} The lemma above is also valid for $\alpha = e$, just by removing the condition $\alpha_{|\alpha| - 1} = 1$.
\end{remark}

\begin{corollary}\label{cor:C_alpha_j_inv_renewal} Consider the set $X_A$ of the renewal Markov shift. For every $\alpha$ positive admissible word, $|\alpha|\neq 0$, and every $j \in \mathbb{N}$, we have that $C_{\alpha j^{-1}}$ is the empty set or $C_{\alpha j^{-1}} = C_\gamma$, where $\gamma$ is a positive admissible word.
\end{corollary}

\begin{proof} If $j = \alpha_{|\alpha|-1}$ then the statement is obvious. So consider $j \neq \alpha_{|\alpha|-1}$. By Lemma \ref{lemma:G_K_renewal} we have $G(\alpha,F) = \emptyset$, so Proposition \ref{prop:general_C_alpha_inverse_j} gives
\begin{equation*}
    C_{\alpha j^{-1}} = \bigsqcup_{k:A(j,k)=1} C_{\alpha k} = \begin{cases}
        C_{\alpha (\alpha_{|\alpha|-1} -1)}, \quad \text{if } \alpha_{|\alpha|-1} > 1 \text{ and } j = 1,\\
        C_{\alpha (j-1)}, \quad \text{if } \alpha_{|\alpha|-1} = 1,\\
        \emptyset, \quad \text{otherwise}.
    \end{cases}
\end{equation*}
\end{proof}
\begin{remark} If $\alpha = e$, then
\begin{equation*}
    C_{j^{-1}} = \begin{cases}
        X_A, \quad \text{if } j = 1,\\
        \emptyset, \quad \text{otherwise};
    \end{cases}
\end{equation*}
which is straightforward from the renewal shift.
\end{remark}

\begin{remark} For the renewal shift, for given positive admissible words $\alpha$ and $\gamma$, we have that
\begin{align*}
    F_\alpha &= \{\omega \in \mathfrak{W}: \omega_{|\omega|-1} = 1, \omega \in \llbracket \alpha \rrbracket \setminus \{\alpha\}\}, \quad F_\alpha^* = \{\omega \in \mathfrak{W}: \omega_{|\omega|-1} = 1, \omega \in \llbracket \alpha \rrbracket\},
\end{align*}
are finite sets, and $F_\gamma^\alpha$ is finite as well.
\end{remark}

\begin{corollary}\label{cor:C_alpha_j_inv_comp_renewal} Let $\alpha$ be a non-empty positive admissible word and $j \neq \alpha_{|\alpha|-1}$. Then,
\begin{equation*}
    C_{\alpha j^{-1}}^c = \begin{cases}
        C_{\alpha (\alpha_{|\alpha|-1} -1)}^c, \quad \text{if } \alpha_{|\alpha|-1} > 1 \text{ and } j = 1,\\
        C_{\alpha (j-1)}^c, \quad \text{if } \alpha_{|\alpha|-1} = 1,\\
        X_A, \quad \text{otherwise}
    \end{cases}
\end{equation*}
\end{corollary}

\begin{proof} It is straightforward from the proof of Corollary \ref{cor:C_alpha_j_inv_renewal}. 
\end{proof}

\begin{remark} Explicitly, by Proposition \ref{prop:general_C_alpha_complement}, we have
\begin{equation}\label{eq:c_alpha_inverse_comp_renewal}
    C_{\alpha j^{-1}}^c = \begin{cases}
        F_{\alpha (\alpha_{|\alpha|-1} -1)}\sqcup\bigsqcup_{m=0}^{|\alpha|}\bigsqcup_{k\neq ({\alpha (\alpha_{|\alpha|-1} -1)})_m} C_{\delta^m({\alpha (\alpha_{|\alpha|-1} -1)})k}, \quad \text{if } \alpha_{|\alpha|-1} > 1 \text{ and } j = 1,\\
        F_{\alpha(j-1)}\sqcup\bigsqcup_{m=0}^{|\alpha|} \bigsqcup_{k\neq (\alpha (j-1))_m} C_{\delta^m(\alpha (j-1))k}, \quad \text{if } \alpha_{|\alpha|-1} = 1,\\
        X_A, \quad \text{otherwise}.
    \end{cases}
\end{equation}
\end{remark}

The previous results in this subsection allow us to reduce the subbasis of the renewal shift to generalized cylinders on positive words and their complements. Moreover, for these cylinders, we only need to consider the words ending in `$1$'. We present the pairwise intersections of the elements of the subbasis for the generalized renewal shift in the next corollary.

\begin{corollary} For any positive admissible words $\alpha$ and $\gamma$ we have that
\begin{align*}
    C_\alpha \cap C_\gamma &= \begin{cases}
                                C_\alpha, \quad \text{if } \gamma \in \llbracket \alpha \rrbracket,\\
                                C_\gamma, \quad \text{if } \alpha \in \llbracket \gamma \rrbracket,\\
                                \emptyset, \quad \text{otherwise};
                            \end{cases} \\
    C_\alpha \cap C_\gamma^c &= \begin{cases}
                                F_{\gamma}^{\alpha}\sqcup \bigsqcup_{n=|\alpha|}^{|\gamma|-1} \bigsqcup_{j\neq \gamma_n} C_{\delta^{n}(\gamma)j},\quad \text{if } \alpha \in \llbracket \gamma \rrbracket,\\
                                C_\alpha,\quad \text{if } \alpha \notin \llbracket \gamma \rrbracket \text{ and } \gamma \notin \llbracket \alpha \rrbracket, \\
                                \emptyset, \quad \text{otherwise};
                               \end{cases}\\
    C_\alpha^c \cap C_\gamma^c &= \begin{cases}
                                    C_\alpha^c, \quad \text{if } \alpha \in \llbracket \gamma \rrbracket,\\
                                    C_\gamma^c, \quad \text{if } \gamma \in \llbracket \alpha \rrbracket,\\
                                    F_{\alpha'}^* \sqcup F_\alpha^{\alpha'\alpha_{|\alpha'|}} \sqcup F_\gamma^{\alpha'\gamma_{|\alpha'|}} \sqcup  \left(\bigsqcup_{m = 0}^{|\alpha|-1} \bigsqcup_{k\neq \alpha_m} C_{\delta^m(\alpha)k}\right)\\
                                    \qquad \qquad \qquad \sqcup \left(\bigsqcup_{|\alpha'| < n < |\gamma|-1} \bigsqcup_{p\neq \gamma_n} C_{\delta^n(\gamma )p}\right), \quad \text{otherwise}. \end{cases}
\end{align*}
\end{corollary}

\begin{proof} The first identity is straightforward, while the rest of them are precisely the equations \eqref{eq:C_cap_C_comp} and \eqref{eq:C_comp_cap_C_comp} from Theorem \ref{thm:huge_generators_intersections}. 
\end{proof}

From the corollary above, it is straightforward that every finite intersection among the elements of the subbasis is in the form
\begin{equation}\label{eq:basis_renewal_is_finite_set_union_countable_cylinders}
    F \sqcup \bigsqcup_{n \in \mathbb{N}} C_{w(n)},
\end{equation}
where $F \subseteq Y_A$ is finite and $w(n)$ is a positive admissible word, for each $n \in \mathbb{N}$.

\begin{remark}
    The fact that elements of a subbasis for the topology of $X_A$ have the form \ref{eq:basis_renewal_is_finite_set_union_countable_cylinders} allows us to control the weak$^*$ convergence of probability measures on GCMS. Even for the weak$^*$ convergence of probability measures on the standard CMS, we have recent results \cite{IommiVelozo2019}.
\end{remark}

\subsection{Cylinder topology: the pair renewal shift}
\label{subsec:cylinder_topology_Pair}

As well as it was done in subsection \ref{subsec:cylinder_topology_Renewal}, we also characterize the cylinders and intersections for the pair renewal shift. Particularly for this section, we use the following notation: for $\alpha$ positive admissible word, we represent configuration $\xi  \in Y_A(\xi^{0,k})$ with stem $\alpha$ by $\alpha^k \in \mathfrak{W}_k$, $k \in \{1,2\}$.

First, we notice that, for every generalized cylinder $C_\alpha$ on a positive word $\alpha$, if it does not end with $1$ or $2$, then $C_{\alpha} = C_{\gamma}$, where $\gamma$ is the unique shortest word that it starts with $\alpha$ and ends in $2$.

\begin{lemma}\label{lemma:G_K_Pair_renewal}
For the pair renewal shift we have
\begin{equation*}
    G(\alpha,F) = \begin{cases}
        \{\alpha^1,\alpha^2\}, \quad \text{if } \alpha_{|\alpha| - 1} = 1 \text{ and } F = \{1\};\\
        \{\alpha^1\}, \quad \text{if } \alpha_{|\alpha| - 1} = 1 \text{ and } F \in \{\{1,2\}, \{2\}\};\\
        \{\alpha^1\}, \quad \text{if } \alpha_{|\alpha| - 1} = 2 \text{ and } F \in \{\{1\},\{1,2\},\{2\}\} ;\\
        \emptyset, \quad \text{otherwise},
    \end{cases}
\end{equation*}
and
\begin{equation*}
    K(\alpha,F) = \begin{cases}
        \{\alpha^1,\alpha^2\}, \quad \text{if } \alpha_{|\alpha| - 1} = 1 \text{ and } \{1,2\} \not\subseteq F;\\
        \{\alpha^2\}, \quad \text{if } \alpha_{|\alpha| - 1} = 1,\text{ }  1 \notin F \text{ and }  2 \in F;\\
        \emptyset, \quad \text{otherwise},
    \end{cases}
\end{equation*}
for $\alpha$ non-empty finite admissible word.
\end{lemma}

\begin{proof} From the pair renewal shift we have that there exists $\xi \in Y_A$ such that $\kappa(\xi) = \alpha$, if and only if $\alpha \in \mathfrak{W}_1 \cup \mathfrak{W}_2$. We complete the proof by using that
\begin{itemize}
    \item $\xi = \alpha^1$ if and only if $\alpha_{|\alpha|-1} \in \{1,2\}$ and $R_\xi(\alpha) = \{1,2\}$;
    \item $\xi = \alpha^2$ if and only if $\alpha_{|\alpha|-1} = 1$ and $R_\xi(\alpha) = \{1\}$. 
\end{itemize}
\end{proof}

\begin{remark} The lemma above is also valid for $\alpha = e$, just by removing the conditions for $\alpha_{|\alpha| - 1}$.
\end{remark}

\begin{corollary}\label{cor:C_alpha_j_inv_Pair_renewal} Consider the set $X_A$ of the pair renewal Markov shift. For any $\alpha$ positive admissible word, $|\alpha|\neq 0$, and any $j \in \mathbb{N}$, we have for $j \neq \alpha_{|\alpha|-1}$ that
\begin{equation*}
    C_{\alpha j^{-1}} = \begin{cases}
                            C_{\alpha (\alpha_{|\alpha|-1} - 1)}, \quad \text{if } \alpha_{|\alpha|-1}>2 \text{ and } j=1;\\
                            C_{\alpha (\alpha_{|\alpha|-1} - 1)}, \quad \text{if } \alpha_{|\alpha|-1}>2 \text{ and }, \text{ } j=2 \text{ and } \alpha_{|\alpha|-1} \text{ is even};\\
                            \{\alpha^1\}\sqcup C_{\alpha 1} \sqcup \bigsqcup_{k \in \mathbb{N}} C_{\alpha (2k)}, \quad  \text{if } \alpha_{|\alpha|-1}=2 \text{ and } j=1;\\
                            \{\alpha^1\}\sqcup C_{\alpha 1} \sqcup \bigsqcup_{k \in \mathbb{N}} C_{\alpha (2k)}, \quad  \text{if } \alpha_{|\alpha|-1}=1 \text{ and } j=2;\\
                            \emptyset, \quad \text{otherwise};
                        \end{cases}
\end{equation*}
\end{corollary}

\begin{proof} By Lemma \ref{lemma:G_K_Pair_renewal} we have $G(\alpha,j) = \emptyset$ if $\alpha_{|\alpha|-1} > 2$, and $G(\alpha,j) = \{\alpha^1\}$ either when $\alpha_{|\alpha|-1} = 2$ and $j=1$, or when $\alpha_{|\alpha|-1} = 1$ and $j=2$. The remaining cases are empty. By Proposition \ref{prop:general_C_alpha_inverse_j}, it remains to analyze the unions in the form
\begin{equation*}
    \bigsqcup_{k:A(j,k)=1} C_{\alpha k}.
\end{equation*}
It is straightforward from the pair renewal transition matrix that
\begin{equation*}
    \bigsqcup_{k:A(j,k)=1} C_{\alpha k} = \begin{cases}
                                            C_{\alpha (\alpha_{|\alpha|-1} - 1)}, \quad \text{if } \alpha_{|\alpha|-1}>2 \text{ and } j=1;\\
                                            C_{\alpha (\alpha_{|\alpha|-1} - 1)}, \quad \text{if } \alpha_{|\alpha|-1}>2 \text{ and }, \text{ } j=2 \text{ and } \alpha_{|\alpha|-1} \text{ is even};\\
                                            C_{\alpha 1} \sqcup \bigsqcup_{k \in \mathbb{N}} C_{\alpha (2k)}, \quad  \text{if } \alpha_{|\alpha|-1}=2 \text{ and } j=1;\\
                                            C_{\alpha 1} \sqcup \bigsqcup_{k \in \mathbb{N}} C_{\alpha (2k)}, \quad  \text{if } \alpha_{|\alpha|-1}=1 \text{ and } j=2;\\
                                            \emptyset, \quad \text{otherwise};
                                          \end{cases}.  
\end{equation*}
\end{proof}

\begin{remark} When $j = \alpha_{|\alpha|-1}$, then $C_{\alpha j^{-1}}$ is a cylinder or $X_A$. If $\alpha = e$, then 
\begin{equation*}
    C_{j^{-1}} = \begin{cases}
                    X_A, \quad j = 1,\\
                    \{e^1\}\sqcup C_1 \sqcup \bigsqcup_{k \in \mathbb{N}} C_{(2k)},\quad j = 2,\\        
                    \emptyset, \quad \text{otherwise};
                 \end{cases}
\end{equation*}
which is straightforward from the pair renewal shift.
\end{remark}

\begin{remark} In the pair renewal shift, for given positive admissible words $\alpha$ and $\gamma$, we have that
\begin{align*}
    F_\alpha &= \{\omega \in \mathfrak{W}_1 \cup \mathfrak{W}_2: \omega \in \llbracket \alpha \rrbracket \setminus \{\alpha\}\}, \quad F_\alpha^* = \{\omega \in \mathfrak{W}_1 \cup \mathfrak{W}_2: \omega \in \llbracket \alpha \rrbracket\},
\end{align*}
are finite sets, and $F_\gamma^\alpha$ is finite as well.
\end{remark}

\begin{corollary}\label{cor:C_alpha_j_inv_comp_Pair_renewal} Let $\alpha$ be a non-empty positive admissible word and $j \neq \alpha_{|\alpha|-1}$. Then,
\begin{equation*}
    C_{\alpha j^{-1}}^c = \begin{cases}
                            C_{\alpha (\alpha_{|\alpha|-1} - 1)}^c, \quad \text{if } \alpha_{|\alpha|-1}>2 \text{ and } j=1;\\
                            C_{\alpha (\alpha_{|\alpha|-1} - 1)}^c, \quad \text{if } \alpha_{|\alpha|-1}>2 \text{ and }, \text{ } j=2 \text{ and } \alpha_{|\alpha|-1} \text{ is even};\\
                            F_\alpha \sqcup \left( \bigsqcup_{m=0}^{|\alpha|-1}\bigsqcup_{p\neq \alpha_m}  C_{\delta^m(\alpha )p}\right), \quad  \text{if } \alpha_{|\alpha|-1}=2 \text{ and } j=1;\\
                            \{\alpha^2\} \sqcup F_\alpha \sqcup \left( \bigsqcup_{m=0}^{|\alpha|-1}\bigsqcup_{p\neq \alpha_m}  C_{\delta^m(\alpha )p}\right)\sqcup \bigsqcup_{\substack{p\in \mathbb{N}}}  C_{\alpha (2p+1)}, \quad  \text{if } \alpha_{|\alpha|-1}=1 \text{ and } j=2;\\
                            X_A, \quad \text{otherwise};
                        \end{cases}
\end{equation*}
\end{corollary}

\begin{proof} It is straightforward from Proposition \ref{prop:general_C_alpha_j_inverse_complement}, Lemma \ref{lemma:G_K_Pair_renewal} and Corollary \ref{cor:C_alpha_j_inv_Pair_renewal}. 
\end{proof}

Unlike the renewal shift case in subsection \ref{subsec:cylinder_topology_Renewal}, in general, a cylinder in the form $C_{\alpha j^{-1}}$ cannot be written as a positive one. However, they are pairwise disjoint unions of positive generalized cylinders, jointly with finite subsets of $Y_A$. And the same occurs with their complements. On the other hand, we only need to consider when $\alpha$ ends in a symbol of $\{1,2\}$. Similarly to the renewal shift case, from Theorem \ref{thm:huge_generators_intersections}, it is straightforward that every finite intersection among the elements of the subbasis is in the form
\begin{equation}\label{eq:basis_pair_renewal_is_finite_set_union_countable_cylinders}
    F \sqcup \bigsqcup_{n \in \mathbb{N}} C_{w(n)},
\end{equation}
where $F \subseteq Y_A$ is finite and $w(n)$ is a positive admissible word for each $n \in \mathbb{N}$.

\section{Thermodynamic formalism}\label{sec:TF_Generalized}

One of the main contributions of this paper is to develop the thermodynamic formalism for GCMS, showing its connections with
the standard theory and pointing out the differences. Our results recover the usual thermodynamic formalism as a particular case. However, some of the theorems can be stated even for more general spaces,
namely for a locally compact Hausdorff second countable space $X$, which is not necessarily a symbolic space (generalized or not). For a reference where the authors develop the thermodynamic formalism for Cuntz-Krieger C$^*$-algebras on finite alphabet case, see \cite{KerrPinz2002}.

In this section, we prove that all the following notions coincide: conformal measure defined by M. Denker and M. Urba\'nski, conformal measure defined by O. Sarig, the notion of fixed point associated to the Ruelle's transformation, and the definition of quasi-invariant measure on Renault-Deaconu groupoids. These equivalences are known but not on this generality. After this theorem, we dedicate the rest of the paper to study the thermodynamic formalism on GCMS spaces.

As we highlighted before, in general, we will assume two fundamental hypotheses, which essentially say that both dynamics and potential (the necessary objects to define a thermodynamic formalism over a space) are \emph{partially defined}. In most of the results, we assume the following: \newline

There exists $U \subseteq X$, an open set of $X$ such that:\\

\textbf{A.} The \emph{dynamics} is given by a local homeomorphism $\sigma: U \to X$.\\

\textbf{B.} The \emph{potential} $F:U\to \mathbb{R}$ is a continuous function.\newline

With these two assumptions and adding some regularity to the potential, we will recover the results of the standard theory of CMS on $\Sigma_A$. For many important examples, the GCMS $X_A$ is a compactification of $\Sigma_A$. Then, suppose we define the potential in the entire space $X_A$. In that case, this assumption will impose that any continuous potential has to be bounded. If we want finite Gurevich pressure and consider only bounded potentials, many shifts are excluded. On the other hand, by construction, defining $U$ as the set of elements of $X_A$ which the stem is not empty, we have $\Sigma_A \subseteq U$. So, by defining $U$ as the domain of the potential, we will recover the standard formalism by taking the restriction $F\vert_{\Sigma_A}$ even for unbounded potentials, and we can cover the same class of regularity used in the formalism in the usual symbolic spaces.

\subsection{Conformal measures on locally compact Hausdorff second countable spaces}\\

Fix a locally compact, Hausdorff and second countable topological space $X$ endowed with a local homeomorphism  $\sigma: U \to X$, where $U$ is an open subset of $X$. In addition, consider its respective Renault-Deaconu groupoid $\mathcal{G}(X,\sigma)$, and a continuous potential $F:U \to \mathbb{R}$. Since we are interested in phase transition phenomena, we also consider in the statements the inverse of the temperature $\beta >0$, which will be a factor multiplying $F$. The Ruelle's transformation, which is the generalized version of the Ruelle's operator, is defined as follows:

\begin{definition}\label{def:Ruelle_transformation} The \textit{Ruelle's transformation} is the linear transformation
\begin{align}
    L_{\beta F}:C_c(U) &\to C_c(X) \nonumber\\
    f  & \mapsto L_{\beta F}(f)(x):= \sum_{\sigma(y)=x}e^{\beta F(y)}f(y).\label{eq:general_Ruelle}
\end{align}
\end{definition}


\begin{definition}[Eigenmeasure associated to the Ruelle Transformation] Consider the Borel $\sigma$-algebra $\mathcal{B}_X$. A measure $\mu$ on $\mathcal{B}_X$ is said to be an \textit{eigenmeasure} with eigenvalue $\lambda$ for the Ruelle transformation $L_{\beta F}$ when
\begin{equation}\label{eq:conformal_eigenmeasure_functions_X_A}
    \int_{X} L_{\beta F}(f)(x)d\mu(x) = \lambda \int_{U} f(x)d\mu(x),
\end{equation}
for all $f \in C_c(U)$.
\end{definition}
In other words, the equation \eqref{eq:conformal_eigenmeasure_functions_X_A} can be rewritten by using \eqref{eq:general_Ruelle} as
\begin{equation}\label{eq:conformal_eigenmeasure_functions_X_A_2}
    \int_{X} \sum_{\sigma(y) = x} e^{\beta F(y)}f(y) d\mu(x) = \lambda \int_{U} f(x)d\mu(x),
\end{equation}
for all $f \in C_c(U)$. As in the standard theory of countable Markov shifts, when a measure $\mu$ satisfies the equation \eqref{eq:conformal_eigenmeasure_functions_X_A_2} we write $L_{\beta F}^*\mu = \lambda\mu$.

Now we introduce the notions of conformal measure in the senses of Denker-Urba\'nski and Sarig in the generalized setting. We do not need continuity assumption on the potential for these two notions, and the following definition of conformal measure makes sense for measurable spaces, a more general setting.

\begin{definition} Let $(X,\mathcal{F})$ be a measurable space and $\sigma: U \to X$ a measurable endomorphism. A set $B\subseteq U$ is called \textit{special} when $B \in \mathcal{F}$, $\sigma(B)\in \mathcal{F}$ and $\sigma_B:=\sigma\vert_B: B\to \sigma(B)$ is injective.
\end{definition}

\begin{definition}[Conformal measure - Denker-Urba\'nski] Let $(X,\mathcal{F})$ be a measurable space and $D:U \to [0,\infty)$ also measurable. A measure $\mu$ in $X$ is said to be \emph{$D$-conformal} in the sense of Denker-Urba\'nski if
\begin{equation}\label{eq:conformal_Ur_sets_potential}
    \mu(\sigma(B)) = \int_B D d\mu,
\end{equation}
for every special set $B \subseteq U$.
\end{definition}

Although the previous definition is very general, we will always consider $\mathcal{F} = \mathcal{B}_X$.

\begin{definition}
Given a Borel measure $\mu$ on $X$ we define the measure $\mu\odot\sigma$ on $U$ by
$$\mu\odot\sigma(E):=\sum_{i\in \mathbb{N}}\mu(\sigma(E_i)).$$
For every measurable set $E\subseteq U$, where the $\{E_i\}_{i \in \mathbb{N}}$ is a family of pairwise disjoint special sets such that $E =\sqcup_i E_i$. 
\end{definition}

\begin{remark}
We show that $\mu\odot\sigma$ is well-defined. First we prove the existence of at least one countable family $\{E_i\}$, as above. Indeed, if $E\subseteq U$, since $\sigma$ is a local homeomorphism, for each $x\in$ E there is an open subset $H_x\ni x$ such that $\sigma$ is injective, and we have $E\subseteq \cup_{x\in E} H_x$. For each of those $H_x$, there exists an open basic set $U_x$ such that $x\in U_x$, and we can enumerate $\{U_x\}=\{U_1,U_2,\dots\}$ because the topology basis is countable. Observe that $\sigma$ is injective on each $U_i$. Take $E_1:=E\cap U_1$, $E_n:=E\cap U_n\setminus\bigsqcup_{i=1}^{n-1}E_i$ and we have what we claimed.

Now we shall see that the definition does not depend on the decomposition of $E$. Let $E=\bigsqcup E_i=\bigsqcup F_j$, then $E=\bigsqcup_{i,j}E_i\cap F_j$. Therefore,
$$\sum_i\mu(\sigma(E_i))=\sum_i\mu(\sigma(\sqcup_j E_i\cap F_j))=\sum_i\mu(\sqcup_j\sigma( E_i\cap F_j))=\sum_{i,j}\mu(\sigma(E_i\cap F_j))$$
Doing analogously for $\{F_j\}$ instead of $\{E_i\}$ we conclude the proof since $\sum_i\mu(\sigma(E_i))=\sum_j\mu(\sigma(F_j))$.
\end{remark}

\begin{definition}[Conformal measure - Sarig]
A Borel measure $\mu$ in $X$ is called \textit{$(\beta F, \lambda)$-conformal} in the sense of Sarig if there exists $\lambda >0$ such that
\begin{equation*}
\dfrac{d\mu\odot\sigma}{d\mu}(x)= \lambda e^{-\beta F(x)}\quad \mu - a.e. \; x\in U.
\end{equation*}
\end{definition}


The next theorem states the equivalence among the generalized versions of conformal measures, eigenmeasures and quasi-invariant measures.  

\begin{theorem}\label{thm:equivalences_conformal_measures_generalized_Markov_shift} Let $X$ be locally compact, Hausdorff and second countable space, $U \subseteq X$ open and $\sigma: U \to X$ a local homeomorphism. Let $\mu$ be a Borel measure that is finite on compacts. For a given continuous potential $F:U\to \mathbb{R}$, the following are equivalent.

\begin{itemize}
    \item[$(i)$] $\mu$ is $e^{\beta F}$-conformal measure in the sense of Denker-Urba\'nski;
    \item[$(ii)$] $\mu$ is a eigenmeasure associated with eigenvalue 1 for the Ruelle Transformation $L_{-\beta F}$, that is
    \begin{equation*}
        \int_{X} \sum_{\sigma(y)=x} f(y) e^{-\beta F(y)} d\mu(x) = \int_{U} f(x) d\mu(x),
    \end{equation*}
    for all $f \in C_c(U)$;
    \item[$(iii)$] $\mu$ is  $e^{-\beta c_F}$-quasi-invariant on $\mathcal{G}(X,\sigma)$, i.e
    \begin{equation}\label{eq:quasi-invariant-measure}
\int_{X} \sum_{r(\gamma)=x} e^{\beta c_F(\gamma)} f(\gamma)  d\mu(x) = \int_{X}\sum_{s(\gamma)=x}f(\gamma)  d\mu(x).
    \end{equation}
    for all $f \in C_c(\mathcal{G}(X,\sigma))$;
    \item[$(iv)$] $\mu$ is $(-\beta F, 1)$-conformal in the sense of Sarig.
\end{itemize}
\end{theorem}

\begin{proof} 

$(iii)\implies (ii)$ is analogous to Proposition 4.2 in \cite{Renault2003}, but we repeat the proof. For any $f\in C_c(U)$, consider:
$$\int_X \sum_{\sigma(y)=x}f(y)e^{-\beta F(y)}d\mu(x)=\int_X\sum_{s(\gamma)=x}(fe^{-\beta F}\circ r)(\gamma)\mathbbm{1}_{S}(\gamma)d\mu(x). $$
where $S$ is the set $\{(x,1,\sigma(x)): x\in U\}$. Now, we use that $\mu$ is quasi-invariant to conclude that
$$\int_X\sum_{s(\gamma)=x}(fe^{-\beta F}\circ r)(\gamma)\mathbbm{1}_{S}(\gamma)d\mu(x)=\int_X\sum_{r(\gamma)=x}(fe^{-\beta F})\circ r(\gamma)\mathbbm{1}_{S}(\gamma)e^{\beta c_F(\gamma)}d\mu(x)=\int_U f d\mu. $$
Proving the implication we were interested.
 
For $(ii) \implies (i)$ let $V$ be an open subset of $U$ such that $\sigma\vert_V$ is injective, and let $W = \sigma(V)$. Also denote by $\tau : W \to V$ the inverse of the restriction of $\sigma$ to $V$. We then have two measures of interest on $V$, namely
\begin{equation*}
\tau^*(\mu\vert_W)\quad \text{and} \quad e^{\beta F}\mu\vert_V.    
\end{equation*}
We claim that the above measures on $V$ are equal. By the uniqueness part of the Riesz-Markov theorem, it is enough to prove that
\begin{equation}\label{eq:urbanski_integrated}
    \int_V gd\tau^*(\mu\vert_W)=\int_V ge^{\beta F}d\mu\vert_V,
\end{equation}
for every $g$ in $C_c(V)$. Given such a $g$, we consider its extension to the whole of $U$ by setting it to be zero on $U \setminus V$ . The extended function is then in $C_c(U)$. Defining $f = ge^{\beta F}$ , we then have that

\begin{align*}
    \int_V g e^{\beta F} d\mu\vert_V&=\int_U f d\mu \stackrel{(\ref{thm:equivalences_conformal_measures_generalized_Markov_shift}.ii)}{=}\int_X \sum_{\sigma(y)=x}f(y)e^{-\beta F(y)}d\mu(y)\\
    &=\int_W f(\tau(x))e^{-\beta F(\tau(x))}d\mu(x)=\int_W g(\tau(x))d\mu(x)=\int_V g d\tau^*(\mu).
\end{align*}
This proves \eqref{eq:urbanski_integrated}, and hence also that $\tau^*(\mu\vert_W)=e^{\beta F}\mu\vert_V$. It follows that, for every measurable set $E\subseteq V$,
\begin{equation*}
    \mu(\sigma(E))=\mu(\tau^{-1}(E))=\tau^*(\mu\vert_W)(E)=\int_E e^{\beta F}d\mu.
\end{equation*}
Now, suppose $E\subseteq U$ is a special set. Since $\sigma$ is a local homeomorphism and $X$ is second countable, there exists a countable collection of open sets $\{V_i\}_{i\in \mathbb{N}}$ such that $\sigma\vert_{V_i}$ is injective and $E\subseteq \bigcup_{i\in \mathbb{N}}V_i$. Then we have a countable collection of measurable sets $\{E_i\}_{i\in\mathbb{N}}$, pairwise disjoint, such that $E_i\subseteq V_i$ and $E=\sqcup_{i\in \mathbb{N}}E_i$. We conclude, using that $E$ is special, that
\begin{equation*}
    \mu(\sigma(E))=\sum_{i\in \mathbb{N}}\mu(\sigma(E_i))=\sum_{i\in \mathbb{N}}\int_{E_i}e^{\beta F} d\mu=\int_E e^{\beta F}.
\end{equation*}

$(i)\implies (iii)$. We consider the open bisections defined in the preliminaries $W(n,m,C,B)$. W.l.o.g we can consider $\sigma^n(C)=\sigma^m(B)$, since if not we could take open sets $C'\subseteq C$ and $B'\subseteq B$ such that $\sigma^n(C')=\sigma^n(C)\cap \sigma^m(B)=\sigma^m(B')$. Also, we can suppose that $\sigma^n$ is injective when restricted $C$, similarly for $\sigma^m$ and $B$. In this setting, we can define the map $\sigma^{n-m}_{CB}:=\sigma^{-m}_B\circ\sigma^n_C$ and similarly $\sigma^{m-n}_{BC}:=\sigma^{-n}_C\circ \sigma^{m}_B$.
\[
\begin{tikzcd}
 & \sigma^n(C)=\sigma^m(B)  \\
C \arrow{ur}{\sigma^n_C} \arrow[rr,dashed,"\sigma^{n-m}_{CB}"] && B \arrow[ul,"\sigma^m_B"'] \arrow[ll,dashed,"\sigma^{m-n}_{BC}", bend left]
\end{tikzcd}
\]
Let $f \in C_c(\mathcal{G}(X,\sigma)$ s.t. $supp(f)\subseteq W(n,m,C,B)$. Let us see how the equation \eqref{eq:quasi-invariant-measure} on item $(iii)$ simplifies in this case. Observe first the left hand side. If $x \notin C$, clearly there is no $\gamma\in W(n,m,C,B)$ such that $r(\gamma)=x$, so the integration can be done in $C$. Now, for $x\in C$, consider $\gamma_1,\gamma_2\in W(n,m,C,B)$ such that $r(\gamma_1)=r(\gamma_2)=x$. Since the range map is injective in such set we have $\gamma_1=\gamma_2$ and we conclude the summation on the left hand side of equation \eqref{eq:quasi-invariant-measure} have at most one non-zero term for each $x\in C$. Denoting this term by $\gamma_x$, we see this term is written as $\gamma_x=(x,n-m,\sigma_{CB}^{n-m}(x))$. So, the left hand side of equation \eqref{eq:quasi-invariant-measure} is
\begin{equation}\label{eq:theorem3,LHS}
\int_C e^{\beta c_F(\gamma_x)} f(\gamma_x)  d\mu(x)=\int_C e^{\beta c_F((x,n-m,\sigma_{CB}^{n-m}(x)))} f(x,n-m,\sigma_{CB}^{n-m}(x))  d\mu(x).    
\end{equation}
Calculation on the right hand side of equation \eqref{eq:quasi-invariant-measure} is done in a similar fashion, we have:
\begin{equation}\label{eq:theorem3,RHS}
    \int_B f(\sigma_{BC}^{m-n}(y),n-m,y)d\mu(y).
\end{equation}
Now let $g:C\to \mathbb{C}$ defined by $g(x)=f(x,n-m,\sigma_{CB}^{n-m}(x)).$ Observe that $g(\sigma^{m-n}_{BC}(y))=f(\sigma^{m-n}_{BC}(y),n-m,y)$, which is the function in the equation \eqref{eq:theorem3,RHS}. We rewrite the quasi-invariant condition with the considerations from above
\begin{equation}\label{eq:QIC}
    \int_C e^{\beta c_F(x,n-m,\sigma^{n-m}(x))}g(x)d\mu(x)=\int_B g(\sigma^{m-n}_{BC}(y))d\mu(y),
\end{equation}
for $g\in C_c(C)$. We just need to prove that item $(i)$ implies equation \eqref{eq:QIC}. First, observe that item $(i)$ implies for all $C$ open subset of $U$, $\sigma\vert_C$ injective that
\begin{equation}\label{eq:urbanski_before_change_of_wariables}
\int_C g(x) e^{\beta F(x)} d\mu(x) = \int_{\sigma(C)} g(\sigma^{-1}(x)) d\mu(x),    
\end{equation}
for every $g\in C_c(C)$. Note that equation \eqref{eq:urbanski_before_change_of_wariables} is equation \eqref{eq:QIC} when $m=0$ and $n=1$. To prove \eqref{eq:QIC} we proceed by induction on $n+m$. If $n+m=0$ we have $C=B$ , $\sigma_{BC}^{n-m}=Id$ and $c_F(x,0,x)=0$, so equation (\ref{eq:QIC}) is satisfied. Take $n\neq 0$.
\begin{figure}[H]
    \centering
    \[
    \begin{tikzcd}[row sep=5em]
    & \mathbb{C} &  \\[-2em]
    C\arrow[ur,"g"]\arrow[rr,"\sigma"]\arrow[d,"\sigma^{n-m}_{CB}"']& & \sigma(C)=C'\arrow[ul,"g'"']\arrow[d,"\sigma^{n-1}"]\\
    B\arrow[urr,"\sigma^{m-(n-1)}_{BC'}",dashed]\arrow[rr,"\sigma^m_B"']& & \sigma^n(C)=\sigma^m(B)
    \end{tikzcd}
    \]
    \caption{\label{fig:diagram_maps_proof_equivalences}}
\end{figure}
 \noindent
Let $g':C'=\sigma(C)\to  \mathbb{C}$ defined by $g'(x)=g(\sigma^{-1}(x))$, $g'\in C_c(C')$. By induction hypothesis,
$$\int_B g'(\sigma_{BC'}^{m-(n-1)}(y))d\mu(y)=\int_{C'}e^{\beta c_F(x,n-1-m,\sigma^{(n-1)-m}(x))}g'(x)d\mu(x).$$
On the other hand, using as reference the figure \ref{fig:diagram_maps_proof_equivalences},
$$\int_B g'(\sigma_{BC'}^{m-(n-1)}(y))d\mu(y)=\int_B g(\sigma^{-1}\sigma_{BC'}^{m-(n-1)}(y))d\mu(y)=\int_B g(\sigma_{BC}^{m-n}(y))d\mu(y),$$
which is the right hand side of equation \eqref{eq:QIC}. Then,
\begin{equation}\label{eq:g_2}
\int_{C'}e^{\beta c_F(x,n-1-m,\sigma^{(n-1)-m}(x))}g'(x)d\mu=\int_{\sigma(C)}\underbrace{e^{\beta c_F(x,n-1-m,\sigma^{(n-1)-m}(x))}g(\sigma^{-1}(x))}_{g_2(x)}d\mu.    
\end{equation}
The equation \eqref{eq:urbanski_before_change_of_wariables}, by a change of variables, can be written as
$$\int_C g_2(\sigma(x))e^{\beta F(x)} d\mu(x)=\int_{\sigma(C)}g_2(x)d\mu\quad \forall g_2\in C_c(\sigma(C)). $$
Applying the identity above to \eqref{eq:g_2}, we obtain
\begin{align*}
\int_{C'}e^{\beta c_F(x,n-1-m,\sigma^{(n-1)-m}(x))}g'(x)d\mu(x)&=\int_C g_2(\sigma(x))e^{\beta F(x)} d\mu(x)\\
&= \int_C e^{\beta c_F(\sigma(x),n-1-m,\sigma^{(n-1)-m}(\sigma(x)))}g(x)e^{\beta F(x)}d\mu(x).
\end{align*}
By left hand side of equation \eqref{eq:QIC}, it is enough to verify that
$$c_F(\sigma(x),n-1-m,\sigma^{n-1-m}(\sigma(x)))+F(x)=c_F(x,n-m,\sigma^{n-m}(x)).$$
It is true by the cocycle property of $c_F$ and the fact that $F(x)=c_F(x,1,\sigma(x))$ along with the observation that $$(x,1,\sigma(x))(\sigma(x),n-1-m,\sigma^{n-m}(x))=(x,n-m,\sigma(x)).$$
The implication $(i)\implies (iii)$ is proved for $f$ supported on the open bisection $W(n,m,C,B)$, and therefore it is also proved for every $f\in C_c(\mathcal{G}(X,\sigma)).$
 \noindent

Now, we prove $(i) \iff (iv)$. Suppose that $\dfrac{d\mu\odot\sigma}{d\mu}=e^{\beta F}\quad \mu - a.e.\; x\in U$. Take $E\subseteq U$ such that $\sigma|_E$ is injective. Then,
$$\mu(\sigma(E))=\mu\odot\sigma(E)=\int_U \mathbbm{1}_{E}\,d\mu\odot\sigma=\int_U \mathbbm{1}_E e^{\beta F(x)}d\mu(x)=\int_E e^{\beta F(x)} d\mu(x)$$
and we have proved item $(i)$.
 \noindent
Now the converse. Let $E\subseteq$ U and $\{E_i\}_{i\in \mathbb{N}}$ be its decomposition. Hence,
\begin{align*}
\mu\odot\sigma(E)= \sum_{i\in \mathbb{N}}\mu(\sigma(E_i))=\sum_{i\in \mathbb{N}}  \int_X \mathbbm{1}_{E_i} e^{\beta F(x)}\mu(x)=\int_E e^{\beta F(x)} d\mu(x). \nonumber
\end{align*}
Since this is true for every measurable set $E\subseteq U$, we have
$$\dfrac{d\mu\odot\sigma}{d\mu}(x)=e^{\beta F(x)}\quad \mu - a.e. \; x\in U.$$
\end{proof}

\begin{remark}\label{remark:KMS_quasi_invariant} The above theorem is of particular interest, since it is known that if a measure $\mu$ is $e^{-\beta c_F}$-quasi-invariant on $\mathcal{G}(X,\sigma)$, then the state defined by 
\begin{equation}
\varphi_\mu(f)=\int_X f(x,0,x)d\mu(x),\quad f\in C_c(\mathcal{G}(X,\sigma))
\end{equation}
is a KMS$_\beta$ state of the full groupoid $C^*$-algebra $C^*(\mathcal{G}(X,\sigma))$ for the one parameter group of automorphisms $(\eta_t)$ defined as $\eta_t(f)(\gamma)=e^{itc_F(\gamma)}f(\gamma),$ for $f\in C_c(\mathcal{G}(X,\sigma)), \gamma\in \mathcal{G}(X,\sigma)$.

When the \textit{isotropy} subgroupoid  $\text{Iso}(\mathcal{G}(X,\sigma)):=\{\gamma\in \mathcal{G}(X,\sigma): r(\gamma)=s(\gamma) \}$ coincides with $\mathcal{G}^{(0)}$, we say that a groupoid is \textit{principal}.  Observe that the units of $c_F^{-1}(0)$ are $X$.

If the subgroupoid $c_F^{-1}(0)$ is principal, then every KMS$_\beta$ state of $C^*(\mathcal{G}(X,\sigma))$ for $\eta$ have the form $\varphi_\mu$, for a $\mu$ $e^{-\beta c_F}$-quasi-invariant measure, see Theorem 3.3.12 of \cite{Renault2009} for these assertions. For the same result on the finite alphabet case, but considering eigenmeasures instead quasi-invariant measures, see \cite{KesStadStrat2007}.

As an example, suppose $F>0$. Then for $\gamma=(x,n-m,x)\in \text{Iso}(c_F^{-1}(0))$, we have
$$\sum_{i=0}^{n-1}F(\sigma^i(x))=\sum_{i=0}^{m-1}F(\sigma^i(x)),$$
which can only happens if $n=m$, thus $\gamma=(x,0,x)\in X$, so $c_F^{-1}(0)$ is principal.
\end{remark}

From now on, we will develop the thermodynamic formalism for GCMS, then we fix:
\newline

\textbf{(1)} The space $X$ is the GCMS $X_A$ constructed on section \ref{section:X_A}, for an irreducible matrix $A$.\\

\textbf{(2)} The open set $U \subseteq X$ is set of \emph{shiftable} points of $X_A$, see Definition \ref{generalized_shift_map}, so $U$ coincides with the union of generalized cylinders (Definition \ref{def:generalized_cylinders}), that is, $U = \displaystyle\cup_{n\in \mathbb{N}} C_n$. Note that $\Sigma_A \subseteq U$.\\

\textbf{(3)} The local homeomorphism is the shift map $\sigma:U\to X_A$, see definition \ref{generalized_shift_map}.
\newline

A natural question is if the conformal measures in $X_A$ when restricted to $\mathcal{B}_{\Sigma_A}$ are also conformal measures for the standard CMS $\Sigma_A$, the next results answer this type of question. We omit the proofs since they only use standard arguments of the thermodynamic formalism on countable Markov shifts. For the details, see \cite{Raszeja2020}.

\begin{proposition}\label{prop:eigen_measure_nonsingular_Generalized_Markov_shift} Every eigenmeasure on $X_A$ is non-singular in $U$.
\end{proposition}

\begin{proposition}\label{thm:restriction_eigenmeasures} For $X_A$, let $F:U \to \mathbb{R}$ be a mensurable potential and $\lambda > 0$ such that there exists a $(F,\lambda)$-conformal measure $\mu$ on $X_A$. Let $\mu_{\Sigma_A}$ the restriction of $\mu$ to $\mathcal{B}_{\Sigma_A}$, defined by $\mu_{\Sigma_A}(E) := \mu(E\cap \Sigma_A), E \in \mathcal{B}_{X_A}$.
Then, $\mu_{\Sigma_A}$ is also a $(F,\lambda)$-conformal measure. 
Moreover, $\mu_{\Sigma_A}\vert_{\mathcal{B}_{\Sigma_A}}$ is a $(F\vert_{\Sigma_A},\lambda)$-conformal measure. The analogous result for $\mu_{Y_A}$, the restriction of $\mu$ to $\mathcal{B}_{Y_A}$, is also true.
\end{proposition}

\begin{corollary} \label{cor:decomposition_eigenmeasures} Consider the space $X_A$, a mensurable potential $F:U \to \mathbb{R}$, and $\lambda > 0$. Then every $(F,\lambda)$-conformal measure can be decomposed into the linear combination of the two $(F,\lambda)$-conformal measures $\mu = \mu_{\Sigma_A} + \mu_{Y_A}$ as defined in Proposition \ref{thm:restriction_eigenmeasures}.
\end{corollary}

\begin{proposition} \label{prop:extension_eigenmeasures}
Let $F:\Sigma_A \to \mathbb{R}$ be a mensurable potential and $\lambda > 0$ such that there exists a $(F,\lambda)$-conformal measure $\mu$ on $\Sigma_A$. Let $\widetilde{F}$ an extension of $F$ to $U$. Consider its natural extension $\mu_{\text{ext},\Sigma_A}$ on $X_A$, defined by $\mu_{\text{ext},\Sigma_A}(E) := \mu(E\cap \Sigma_A), E \in \mathcal{B}_{X_A}.$
    Then, $\mu_{\text{ext},\Sigma_A}$ is a $(\widetilde{F},\lambda)$-conformal measure on $X_A$. The analogous result for a potential $F:Y_A \to \mathbb{R}$ is also true.
\end{proposition}

\subsection{Pressures on GCMS}\label{sec:pressures}\\

The pressure function is one of the central objects in thermodynamic formalism. Since the Gurevich pressure is the most common definition in the literature of countable Markov shifts, given a potential $F: U \to \mathbb{R}$, it is desirable that the pressure for $F$ be an extension of the Gurevich pressure for $F\vert_{\Sigma_A}:\Sigma_A \to \mathbb{R}$. The main result of this section is to present a proof of this fact, which is valid for the same class of regularity of potentials as we have well-established results for thermodynamic formalism on CMS, and for a class of irreducible matrices which includes the main of the examples as the full shift, renewal shift and others.

We can not use the definition of Gurevich pressure for $F: U \to \mathbb{R}$ directly since the definition uses periodic orbits, and we can have finite words on $Y_A$, points on which we can apply the shift map only a finite number of times. To make the bridge between the standard and generalized settings, we use the concept of pressure at a point $x \in X_A$, denoted by $P(F,x)$, defined by M. Denker and M. Yuri in the context of \textit{Iterated Function System} (IFS), see \cite{DenYu2015}. We compare $P(F,x)$ with the Gurevich pressure. The pressure $P(F,x)$ is introduced for IFS defined in a Polish space $X$ and a family $\mathcal{V}$ of homeomorphisms $v:D(v)\to v(D(v))\subset X$, where each $D(v)\subset X$ is a closed nonempty subset. For us, the Polish space $X$ is the space $X_A$ and the family of homeomorphisms are chosen to be $\{(\sigma\vert_{C_i})^{-1}: i\in \mathbb{N}\}$, the inverses of the shift map when restricted to the generalized  cylinder sets $C_i := \{\xi \in X_A: \xi_i =1\}$.


\begin{definition} Given $\beta > 0$ and a potential $F:U \to \mathbb{R}$ and $x \in X_A$, the \emph{n-th patition function at the point }$x$ is defined by
\begin{equation*}
    Z_n(\beta F,x) = \sum_{\sigma^n(y)=x}e^{\beta F_n(y)},
\end{equation*}
where $F_n$ is the Birkhoff sum of $F$. The \emph{pressure at the point }$x$ is
\begin{equation*}
    P(\beta F,x) :=\limsup_{n\to \infty} \frac{1}{n} \log Z_n(\beta F,x).
\end{equation*}
\end{definition}

Now, we present the notion of $s$-compactness on $X_A$, which is a sufficient condition for the pressure at each point to coincide with the Gurevich pressure for a wide class of potentials when $X_A$ is compact, including the Walters potentials with the finite first variation.

\begin{definition} Suppose $X_A \neq \Sigma_A$. We say that $X_A$ is \textit{$s$-compact} when, for every set $\mathfrak{R} \subseteq \mathbb{N}$ s.t. $\mathfrak{R} = R_{\xi^0}(e)$, for some empty-stem configuration $\xi^0$, there exists a finite set $I_{\mathfrak{R}}$ s.t. $\mathfrak{R} \subseteq \bigcup_{i \in I_{\mathfrak{R}}} s_A(i)$.
\end{definition}

\begin{figure}[H]
    \centering
\scalebox{0.6}{
\tikzset{every picture/.style={line width=0.75pt}} 

\begin{tikzpicture}[x=0.75pt,y=0.75pt,yscale=-1,xscale=1]

\draw  [color={rgb, 255:red, 0; green, 0; blue, 0 }  ,draw opacity=1 ][line width=1.5]  (114.5,147) .. controls (132.5,140) and (238.5,120) .. (291.5,179) .. controls (344.5,238) and (488.5,281) .. (424.5,347) .. controls (360.5,413) and (197.5,425) .. (132.5,385) .. controls (67.5,345) and (40.5,213) .. (52.5,183) .. controls (64.5,153) and (96.5,154) .. (114.5,147) -- cycle ;
\draw  [color={rgb, 255:red, 0; green, 0; blue, 0 }  ,draw opacity=1 ][line width=1.5]  (327.5,30) .. controls (429.5,53) and (593.5,55) .. (605.5,143) .. controls (617.5,231) and (538.5,324) .. (450.5,372) .. controls (362.5,420) and (259.5,503) .. (120.5,407) .. controls (-18.5,311) and (-25.5,99) .. (80.5,47) .. controls (186.5,-5) and (225.5,7) .. (327.5,30) -- cycle ;
\draw  [color={rgb, 255:red, 0; green, 0; blue, 0 }  ,draw opacity=1 ][fill={rgb, 255:red, 0; green, 0; blue, 0 }  ,fill opacity=1 ] (293,267.5) .. controls (293,269.99) and (295.01,272) .. (297.5,272) .. controls (299.99,272) and (302,269.99) .. (302,267.5) .. controls (302,265.01) and (299.99,263) .. (297.5,263) .. controls (295.01,263) and (293,265.01) .. (293,267.5) -- cycle ;
\draw [color={rgb, 255:red, 0; green, 0; blue, 0 }  ,draw opacity=1 ][line width=1.5]    (416.81,129.24) .. controls (370.66,117.11) and (293.62,152.42) .. (297.5,263) ;
\draw [shift={(421,130.5)}, rotate = 198.81] [fill={rgb, 255:red, 0; green, 0; blue, 0 }  ,fill opacity=1 ][line width=0.08]  [draw opacity=0] (13.4,-6.43) -- (0,0) -- (13.4,6.44) -- (8.9,0) -- cycle    ;
\draw  [color={rgb, 255:red, 0; green, 0; blue, 0 }  ,draw opacity=1 ][line width=1.5]  (151.5,172) .. controls (169.5,165) and (265.5,156) .. (283.5,193) .. controls (301.5,230) and (282.5,282) .. (243.5,275) .. controls (204.5,268) and (150.5,253) .. (115.5,230) .. controls (80.5,207) and (133.5,179) .. (151.5,172) -- cycle ;
\draw  [color={rgb, 255:red, 0; green, 0; blue, 0 }  ,draw opacity=1 ][fill={rgb, 255:red, 0; green, 0; blue, 0 }  ,fill opacity=1 ] (380,332.5) .. controls (380,334.99) and (382.01,337) .. (384.5,337) .. controls (386.99,337) and (389,334.99) .. (389,332.5) .. controls (389,330.01) and (386.99,328) .. (384.5,328) .. controls (382.01,328) and (380,330.01) .. (380,332.5) -- cycle ;
\draw  [color={rgb, 255:red, 0; green, 0; blue, 0 }  ,draw opacity=1 ][fill={rgb, 255:red, 0; green, 0; blue, 0 }  ,fill opacity=1 ] (201.37,196.01) .. controls (200.3,198.25) and (201.24,200.94) .. (203.48,202.02) .. controls (205.72,203.09) and (208.41,202.15) .. (209.49,199.91) .. controls (210.56,197.67) and (209.62,194.98) .. (207.38,193.9) .. controls (205.14,192.83) and (202.45,193.77) .. (201.37,196.01) -- cycle ;
\draw [color={rgb, 255:red, 0; green, 0; blue, 0 }  ,draw opacity=1 ][line width=1.5]    (409.78,99.83) .. controls (359.94,72.66) and (252.31,86.77) .. (207.38,193.9) ;
\draw [shift={(413.5,102)}, rotate = 212.01] [fill={rgb, 255:red, 0; green, 0; blue, 0 }  ,fill opacity=1 ][line width=0.08]  [draw opacity=0] (13.4,-6.43) -- (0,0) -- (13.4,6.44) -- (8.9,0) -- cycle    ;
\draw  [color={rgb, 255:red, 0; green, 0; blue, 0 }  ,draw opacity=1 ][fill={rgb, 255:red, 0; green, 0; blue, 0 }  ,fill opacity=1 ] (229,236.5) .. controls (229,238.99) and (231.01,241) .. (233.5,241) .. controls (235.99,241) and (238,238.99) .. (238,236.5) .. controls (238,234.01) and (235.99,232) .. (233.5,232) .. controls (231.01,232) and (229,234.01) .. (229,236.5) -- cycle ;
\draw [color={rgb, 255:red, 0; green, 0; blue, 0 }  ,draw opacity=1 ][line width=1.5]    (382.31,112.31) .. controls (336.62,116.15) and (274.24,135.97) .. (233.5,232) ;
\draw [shift={(386.5,112)}, rotate = 176.27] [fill={rgb, 255:red, 0; green, 0; blue, 0 }  ,fill opacity=1 ][line width=0.08]  [draw opacity=0] (13.4,-6.43) -- (0,0) -- (13.4,6.44) -- (8.9,0) -- cycle    ;
\draw  [color={rgb, 255:red, 0; green, 0; blue, 0 }  ,draw opacity=1 ][fill={rgb, 255:red, 0; green, 0; blue, 0 }  ,fill opacity=1 ] (322,318.5) .. controls (322,320.99) and (324.01,323) .. (326.5,323) .. controls (328.99,323) and (331,320.99) .. (331,318.5) .. controls (331,316.01) and (328.99,314) .. (326.5,314) .. controls (324.01,314) and (322,316.01) .. (322,318.5) -- cycle ;
\draw [color={rgb, 255:red, 0; green, 0; blue, 0 }  ,draw opacity=1 ][line width=1.5]    (384.5,328) .. controls (360.98,240.78) and (376.84,183.33) .. (429.25,149.07) ;
\draw [shift={(432.5,147)}, rotate = 508.28] [fill={rgb, 255:red, 0; green, 0; blue, 0 }  ,fill opacity=1 ][line width=0.08]  [draw opacity=0] (13.4,-6.43) -- (0,0) -- (13.4,6.44) -- (8.9,0) -- cycle    ;
\draw  [color={rgb, 255:red, 0; green, 0; blue, 0 }  ,draw opacity=1 ][line width=1.5]  (387.5,86) .. controls (405.5,79) and (501.5,70) .. (519.5,107) .. controls (537.5,144) and (518.5,196) .. (479.5,189) .. controls (440.5,182) and (386.5,167) .. (351.5,144) .. controls (316.5,121) and (369.5,93) .. (387.5,86) -- cycle ;
\draw [color={rgb, 255:red, 0; green, 0; blue, 0 }  ,draw opacity=1 ][line width=1.5]    (326.5,314) .. controls (302.98,226.78) and (336.12,179.9) .. (389.22,146.06) ;
\draw [shift={(392.5,144)}, rotate = 508.28] [fill={rgb, 255:red, 0; green, 0; blue, 0 }  ,fill opacity=1 ][line width=0.08]  [draw opacity=0] (13.4,-6.43) -- (0,0) -- (13.4,6.44) -- (8.9,0) -- cycle    ;

\draw (142,186.4) node [anchor=north west][inner sep=0.75pt]  [font=\Huge,color={rgb, 255:red, 0; green, 0; blue, 0 }  ,opacity=1 ]  {$\mathfrak{R}$};
\draw (200,28.4) node [anchor=north west][inner sep=0.75pt]  [font=\Huge,color={rgb, 255:red, 0; green, 0; blue, 0 }  ,opacity=1 ]  {$\mathbb{N}$};
\draw (448,94.4) node [anchor=north west][inner sep=0.75pt]  [font=\Huge,color={rgb, 255:red, 0; green, 0; blue, 0 }  ,opacity=1 ]  {$I_{\mathfrak{R}}$};
\draw (438,150) node [anchor=north west][inner sep=0.75pt]  [font=\large,color={rgb, 255:red, 255; green, 255; blue, 255 }  ,opacity=1 ] [align=left] {finite set};
\draw (139,301.4) node [anchor=north west][inner sep=0.75pt]  [font=\Huge,color={rgb, 255:red, 0; green, 0; blue, 0 }  ,opacity=1 ]  {$\bigcup _{i\in I_{\mathfrak{R}}} s_{A}( i)$};

\end{tikzpicture}
}
\caption{The $s$-compactness property. In the picture $\mathfrak{R} = R_{\xi^0}(e)$, for some empty-stem configuration $\xi^0$.\label{fig:s_compactness}}
\end{figure}

\begin{remark}
As in the definition of compactness ($r$-compactness), we have an easy and sufficient condition for the $s$-compactness property: if the matrix $A$ has a column of 1's, then the space $X_A$ is $s$-compact.
\end{remark}

\begin{remark}
In the rest of the paper, given a potential $F:U \to \mathbb{R}$ we observe that all the notions respect to the regularity of $F$ can be extend from $\Sigma_A$ to $X_A$. In fact, we use the definitions of the $n$-variations and regularity of the potential using the generalized cylinders $C_{\alpha}$ in $X_A$ instead of standard cylinders $[\alpha]$ in $\Sigma_A$, where $\alpha$ is a positive word. Then, summable variations, Walters' etc, also make sense for for $X_A$, and we recover the usual definition with the restriction of $F$ to $\Sigma_A$.
\end{remark}

\begin{theorem}\label{thm:equality_pressures_general} Let $A$ be irreducible, such that $X_A$ is compact and $s$-compact. Consider $F: U \to \mathbb{R}$ a potential with uniform bounded distortion. Then, for every $x \in X_A$, we have
\begin{equation*}
    P(\beta F,x) = P_G(\beta F\vert_{\Sigma_A}).
\end{equation*}
\end{theorem}

\begin{proof} Since $F$ has uniform bounded distortion, take $\sup_{n \in \mathbb{N}} \Var_{n}F_n = K< \infty$.

First, we prove that $P(\beta F,x) \leq P_G(\beta F\vert_{\Sigma_A})$. We divide the proof in 2 cases.
\begin{itemize}
    \item[\textbf{Case 1:}] Let $x \in U$, we have $
        Z_n(\beta F,x) = \displaystyle\sum_{\sigma^n y = x} e^{\beta F_n(y)} = \sum_{\substack{|\omega| =n,\\ A(\omega_{n-1},x_0) = 1}} e^{\beta F_n(\omega x)}.$\\
        
    Since $X_A$ is compact, by Lemma \ref{lemma:equivalence_compactness_X_A_and_graph} we have $r$-compactness, then there exists a finite collection $\{i_1,...,i_m\}\subset \mathbb{N}$ s.t. $\mathbb{N} = \bigcup_{k = 1}^m r_A(i_k)$, so $\omega_0 \in r_A(i_k)$ for some $k$. On the other hand, by transitivity, there exists a word $\eta^{k,x_0}$ s.t. $x_0 \eta^{k,x_0} i_k$ is also admissible. By the uniform bounded distortion of $F$, we have that
    \begin{equation*}
        F_n(\omega x) \leq F_n(\overline{\omega x_0 \eta^{k,x_0} i_k}) + K,
    \end{equation*}
    where $\overline{\omega x_0 \eta^{k,x_0} i_k}$ is the periodic configuration that repeats the word $\omega x_0 \eta^{k,x_0} i_k$. Hence,
    \begin{align}\label{upper_partition}
        Z_n(\beta F,x) &\leq \sum_{k=1}^m \sum_{\substack{|\omega| =n,\\ A(\omega_{n-1},x_0) = 1, \\ A(i_k, \omega_0) = 1}} e^{\beta F_n(\overline{\omega x_0 \eta^{k,x_0} i_k}) + \beta K}.
    \end{align}
By the identity $F_n(\overline{\omega x_0 \eta^{k,x_0} i_k}) = F_{n+2+|\eta^{k,x_0}|}(\overline{\omega x_0 \eta^{k,x_0} i_k}) - F_{2+|\eta^{k,x_0}|}(\overline{x_0 \eta^{k,x_0} i_k \omega})$, and taking $y^{(k,x_0)} \in C_{x_0 \eta^{k,x_0} i_k}$ we have
    \begin{equation}\label{lower_periodic}
         F_{2+|\eta^{k,x_0}|}(y^{(k,x_0)}) -  K \leq F_{2+|\eta^{k,x_0}|}( \overline{x_0 \eta^{k,x_0} i_k \omega}).
    \end{equation}
  Therefore, using (\ref{upper_partition}) and (\ref{lower_periodic}), we obtain
    \begin{align*}
        Z_n(\beta F,x) &\leq \sum_{k=1}^m \sum_{\substack{|\omega| =n,\\ A(\omega_{n-1},x_0) = 1, \\ A(i_k, \omega_0) = 1}} \exp \left(\beta F_{n+2+|\eta^{k,x_0}|}(\overline{\omega x_0 \eta^{k,x_0} i_k}) + 2\beta K - \beta F_{2+|\eta^{k,x_0}|}(y^{(k,x_0)})\right)\\
        &\leq \sum_{k=1}^m e^{ 2\beta K - \beta F_{2+|\eta^{k,x_0}|}(y^{(k,x_0)})} Z_{n+2+|\eta^{k,x_0}|}(\beta F\vert_{\Sigma_A},[i_k]).
    \end{align*}
    Then, we have
    \begin{align*}
        \frac{1}{n} \log Z_n(\beta F,x) &\leq \frac{1}{n} \log \left(\sum_{k=1}^m e^{ 2\beta K - \beta F_{2+|\eta^{k,x_0}|}(y^{(k,x_0)})} Z_{n+2+|\eta^{k,x_0}|}(\beta F,[i_k]) \right)\\
        &\leq \frac{2\beta K}{n} + \frac{1}{n} \log \left[m \max_{1 \leq k \leq m}\left( e^{ - \beta F_{2+|\eta^{k,x_0}|}(y^{(k,x_0)})}\right) \max_{1 \leq k \leq m}\left(Z_{n+2+|\eta^{k,x_0}|}(\beta F,[i_k])\right) \right]\\
    \end{align*}
    By taking $\limsup$ in $n$ we conclude that $P(\beta F,x) \leq P_G(\beta F\vert_{\Sigma_A})$.
    \item[\textbf{Case 2:}] $x \in X_A\backslash U$. In this case $x$ is an empty-stem configuration $x = \xi^0$, so let $\mathfrak{R} = R_{\xi^0}(e)$. Then, by $s$-compactness of $X_A$, we have
    \begin{align*}
        Z_n(\beta F,x) &= \sum_{\sigma^ny = x} e^{\beta F_n(y)} = \sum_{\substack{|\omega| =n,\\ \omega_{n-1} \in \mathfrak{R}}} e^{\beta F_n(\omega)}\\ &\leq \sum_{j \in I_{\mathfrak{R}}} \sum_{\substack{|\omega| =n,\\ \omega_{n-1} \in \mathfrak{R} \\ A(\omega_{n-1},j) = 1}} e^{\beta F_n(\omega)}\\ &\leq
        \sum_{j \in I_{\mathfrak{R}}}  \sum_{\ell=1}^{m}  \sum_{\substack{|\omega| =n,\\ A(i_{\ell},\omega_0) = 1 \\ A(\omega_{n-1},j) = 1}} e^{\beta F_n(\omega)}.
    \end{align*}

  Since $X_A$ is compact, in last inequality we used the lemma \ref{lemma:equivalence_compactness_X_A_and_graph} and $r$-compactness of the space. From this point, we use the same arguments as in the previous case to bound the sum above.
\end{itemize}
Now we prove that $P(\beta F,x) \geq P_G(\beta F)$. Again we divide the proof into two cases:
\begin{itemize}
    \item[\textbf{Case 1:}] $x \in U$. We have
    \begin{align*}
        Z_n(\beta F,x) &= \sum_{\sigma^n y = x} e^{\beta F_n(y)} = \sum_{\substack{|\omega| =n,\\ A(\omega_{n-1},x_0) = 1}} e^{\beta F_n(\omega x)} \geq \sum_{\substack{|\omega| =n,\\ A(\omega_{n-1},x_0) = 1 \\ A(x_0,\omega_0) =1 }} e^{\beta F_n(\omega x)}.
    \end{align*}
    By regularity of $F$, $F_n(\omega x) \geq F_n(\overline{\omega x_0}) - K$, and then
    \begin{align*}
        Z_n(\beta F,x) &\geq \sum_{\substack{|\omega| =n,\\ A(\omega_{n-1},x_0) = 1 \\ A(x_0,\omega_0) =1 }} e^{\beta F_n(\overline{\omega x_0}) - \beta K} = \sum_{\substack{|\omega| =n,\\ A(\omega_{n-1},x_0) = 1 \\ A(x_0,\omega_0) =1 }} e^{\beta F_{n+1}(\overline{\omega x_0}) - \beta F(\overline{x_0\omega})- \beta K}.
    \end{align*}
    Fixed $y^{x_0}\in C_{x_0}$ we have $F(x_0\overline{\omega x_0}) \leq F(y^{x_0}) + K$, and hence
    \begin{align*}
        Z_n(\beta F,x) &\geq e^{- \beta F(y^{x_0})- 2\beta K}\sum_{\substack{|\omega| =n,\\ A(\omega_{n-1},x_0) = 1 \\ A(x_0,\omega_0) =1 }} e^{\beta F_{n+1}(\overline{\omega x_0}) } \\
        &= e^{- \beta F(y^{x_0})- 2\beta K} Z_{n+1}(\beta F,[x_0]).
    \end{align*}
    
    By taking the $\frac{1}{n}\log$ and the $\limsup$ in $n$ we obtain $P(\beta F,x) \geq P_G(\beta F\vert_{\Sigma_A})$.
    \vspace{.2cm}
    
    \item[\textbf{Case 2:}] Let $x \in X_A\backslash U$. Again, $x = \xi^0$ is an empty-stem configuration, and since $X_A$ is compact, we have that $\mathfrak{R} = R_{\xi^0}(e) \neq \emptyset$. Fix $i \in \mathfrak{R}$, we have
    \begin{align*}
        Z_n(\beta F,x) &= \sum_{\sigma^n y = x} e^{\beta F_n(y)} =  \sum_{\substack{|\omega| =n,\\ \omega_{n-1}\in \mathfrak{R}}} e^{\beta F_n(\omega )} \geq  \sum_{\substack{|\omega| =n-1,\\ A(\omega_{n-2},i)=1}} e^{\beta F_n(\omega i)},
    \end{align*}
     The rest of the proof follows similarly as in the previous case. 
\end{itemize}
\end{proof}

\section{Conformal measures on GCMS}\label{new_measures}

In this section we will study conformal and eigenmeasures in GCMS $X_A = \Sigma_A \cup Y_A$. The main goal is to study the conformal probability measures which give mass to the new part of the space $Y_A$, and highlight the connection of these measures with the standard conformal measures. In order to do this, we define the notion of a measure \emph{to live} on a set: 

\begin{definition} Given $(X,\mathcal{F},\mu)$ a measure space, where $\mu$ is a positive measure, we say that $\mu$ \textit{lives} on $B \in \mathcal{F}$ if $\mu(B) > 0$ and $\mu(B^c) = 0$.
\end{definition}

Now we characterize measures that live on $Y_A$ by using $Y_A$-families.  Every non-zero positive Borel measure $\mu$ on $X_A$ that lives on $Y_A(\xi^{0,\mathfrak{e}})$ for some $\mathfrak{e}$ is necessarily an atomic measure. In other words, for each measurable set $E \subseteq X_A$ we can write  $\mu(E) = \sum_{\omega \in \mathfrak{W}_\mathfrak{e}} \mathbbm{1}_{E}(\omega) c_\omega$, where
\begin{equation}\label{eq:coefficients_Y_A_confomal}
    c_\omega := \mu(\{\omega\}), \quad \omega \in \mathfrak{W}_\mathfrak{e},
\end{equation}
and again we identify each configuration in $Y_A(\xi^{0,\mathfrak{e}})$ with its stem. This identification is possible since there exists a bijection between $Y_A(\xi^{0,\mathfrak{e}})$ and $\mathfrak{W}_\mathfrak{e}$. From now on, the idea is to consider the family of variables $\{c_\omega\}_{\omega \in \mathfrak{W}_\mathfrak{e}}$. 

Given a potential $F: U \to \mathbb{R}$, the Denker-Urba\'nski conformality condition for measures living on $Y_A$ will be written using the following function $D:U\cap Y_A \to \mathbb{R}$
\begin{equation*}
    D(\omega) := e^{F(\omega)}, \quad \omega \neq e,
\end{equation*}
and we get a general formulation for the conformal measures living on $Y_A$-families as follows.

\begin{theorem}\label{theorem.coeficientes} A measure $\mu$ which lives on a $Y_A$-family $Y_A(\xi^{0,\mathfrak{e}})$, where $\xi^{0,\mathfrak{e}}$ is its respective empty-stem configuration, satisfies the Denker-Urba\'nski conformality condition \eqref{eq:conformal_Ur_sets_potential} if and only if the coefficients $c_\omega$ in \eqref{eq:coefficients_Y_A_confomal} satisfy
\begin{equation*}
    c_\omega D(\omega) = c_{\sigma(\omega)}, \quad \omega \in \mathfrak{W}_\mathfrak{e}\setminus \{e\}.
\end{equation*}
\end{theorem}

\begin{proof}  It is straightforward from the Denker-Urba\'nski conformality condition for characteristic functions on the special set $\{\omega\}$, hence the condition above is necessary. The converse is clear because, for every special set $E$, we have that $e \notin E$ and $\sum_{\omega\in E}D(\omega)c_\omega=\sum_{\omega\in E}c_{\sigma(\omega)}$ implies the Denker-Urba\'nski corformality condition.  
\end{proof}

From the theorem above, every non-zero conformal measure living on an $Y_A$-family necessarily gives mass to every point of this family. Observe that the identity $c_{\omega} D(\omega) = c_{\sigma(\omega)}$, $\omega \neq e$, implies
\begin{equation*}
c_{\omega} \prod_{i=0}^{|\omega|-1}D(\sigma^i(\omega)) = c_{e}, \quad \omega \in \mathfrak{W}_\mathfrak{e}\setminus\{e\},
\end{equation*}
where $c_e := \mu(\{\xi^{0,\mathfrak{e}}\})$. The equation above can be rewritten as
\begin{equation}\label{eq:c_omega_new}
c_{\omega} e^{ F_{|\omega|}(\omega)} = c_{e}, \quad \omega \in \mathfrak{W}_\mathfrak{e} \setminus\{e\},
\end{equation}
where $F_n$ is the Birkhoff's sum. 

For any potential $F$, if $\mu$ is a $e^F$-conformal probability measure in a $Y_A$-family we must have $c_e > 0$, otherwise all other $c_\omega$'s are zero by \eqref{eq:c_omega_new}. That is equivalent to impose $c_\omega > 0$ for all $\omega \in \mathfrak{W}_\mathfrak{e}$, since it is a necessary condition to obtain $c_e > 0$. At the same time, our focus at this point is probability measures living on a $Y_A$-family. Then, we also should have
\begin{equation}\label{eq:c_omega_probability}
    \sum_{\omega \in \mathfrak{W}_\mathfrak{e}}c_\omega  = c_e\left( 1+\sum_{n \in \mathbb{N}}\sum_{\substack{\omega \in \mathfrak{W}_\mathfrak{e}\\ |\omega| = n}}e^{-\beta F_n(\omega)} \right)= 1.
\end{equation}
The following result is a consequence of this characterization.

\begin{theorem}\label{thm:uniqueness_conformal_probabilities_Y_A_families} Fixed a potential $F:U \to \mathbb{R}$ and a $Y_A$-family $Y(\xi^{0,\mathfrak{e}})$, there exists at most one $e^F$-conformal probability living on such family. 
\end{theorem}

\begin{proof} Suppose there exist two collections of strictly positive numbers $\{c_{\omega}\}_{\omega \in \mathfrak{W}_\mathfrak{e}}$ and $\{d_{\omega}\}_{\omega \in \mathfrak{W}_\mathfrak{e}}$ satisfying
\begin{align*}
    \begin{cases}
        c_{\omega} = e^{-F_{|\omega|}(\omega)} c_{e}, \quad \omega \in \mathfrak{W}_\mathfrak{e},\\
        \sum_{\omega \in \mathfrak{W}_\mathfrak{e}}c_\omega = 1;
    \end{cases}
    \quad \text{and} \quad 
    \begin{cases}
        d_{\omega} = e^{-F_{|\omega|}(\omega)} d_{e}, \quad \omega \in \mathfrak{W}_\mathfrak{e},\\
        \sum_{\omega \in \mathfrak{W}_\mathfrak{e}}d_\omega = 1.
    \end{cases}
\end{align*}
Since $c_e$ and $d_e$ are positive numbers, there exists $\lambda \in \mathbb{R}_{+}^*$ s.t. $d_e = \lambda c_e$. Then, for every $\omega \in \mathfrak{W}_\mathfrak{e}$,
\begin{equation*}
    d_{\omega} = e^{-F_{|\omega|}(\omega)} d_{e} = e^{-F_{|\omega|}(\omega)} \lambda c_e = \lambda d_{\omega}.
\end{equation*}
On the other hand,
\begin{equation*}
    1 = \sum_{\omega \in \mathfrak{W}_\mathfrak{e}}d_\omega = \sum_{\omega \in \mathfrak{W}_\mathfrak{e}} \lambda c_\omega = \lambda \sum_{\omega \in \mathfrak{W}_\mathfrak{e}} c_\omega = \lambda,
\end{equation*}
and hence $\lambda = 1$. Therefore, $d_\omega = c_\omega$ for every $\omega \in \mathfrak{W}_\mathfrak{e}$. 
\end{proof}

\begin{corollary}\label{cor:eigenmeasures_dimension_1} Given a potential $F:U \to \mathbb{R}$, the space of eigenmeasures for an associated eigenvalue $\lambda$ living on an $Y_A$ family has dimension at most $1$.
\end{corollary}

\begin{proof} It is straightforward from Theorems \ref{thm:equivalences_conformal_measures_generalized_Markov_shift} and \ref{thm:uniqueness_conformal_probabilities_Y_A_families}. 
\end{proof}

From now on we will study and caractherize the extremal conformal measures living on $Y_A$.

\begin{lemma}\label{lemma:conformal_Y_A_minus_Y_A_family} Given a potential $F:U \to \mathbb{R}$ and $Y_A^1$ a $Y_A$-family, suppose that there exists a $e^F$-conformal measure $\mu$ given mass to $Y_A^1$. Then the restriction of $\mu$ to $(Y_A^1)^c$, given by 
\begin{equation*}
    \mu_{|(Y_A^1)^c}(B) := \mu(B \cap (Y_A^1)^c), \quad B \in \mathcal{B}_{X_A}
\end{equation*}
is a $e^F$-conformal measure as well. 
\end{lemma}

\begin{proof} Similar to Proposition \ref{thm:restriction_eigenmeasures}. 
\end{proof}

\begin{lemma}\label{lemma:extremal_conforma_Y_A_characterization} Let $F: U \to \mathbb{R}$ be a potential. If $\mu$ is a $e^{F}$-conformal probability measure that gives mass to a $Y_A$-family and its complement, then $\mu$ is not extremal. 
\end{lemma}

\begin{proof} Let $\mu$ be an $e^{F}$-conformal probability on $Y_A$ and suppose that $\mu$ gives positive mass to $Y_A^1$ (a $Y_A$-family) and to its complement, in other words, $\mu(Y_A^1)>0$ and $\mu((Y_A^1)^c) >0$. By Lemma \ref{lemma:conformal_Y_A_minus_Y_A_family}, the measures given by
\begin{align*}
    \mu_1 (B):= \frac{\mu(B \cap Y_A^1)}{\mu(Y_A^1)} \quad \text{and} \quad \mu_2 (B):= \frac{\mu(B \cap (Y_A^1)^c)}{\mu((Y_A^1)^c)},
\end{align*}
defined for every $B \in \mathcal{B}_{X_A}$ are $e^F$-conformal probabilities on $X_A$. Then $\mu = \mu(Y_A^1) \mu_1 + \mu((Y_A^1)^c) \mu_2$, and we conclude that $\mu$ is not extremal. 


\end{proof}

\begin{remark}\label{rmk:extremal_measures} For generalized Markov shifts with a countable number of empty-stem elements, the lemma above and Theorem \ref{thm:uniqueness_conformal_probabilities_Y_A_families} shows that a $e^F$-conformal probability is extremal if, and only if, it lives on a unique $Y_A$-family. For the uncountable empty-stem case it is also true, as we prove in Theorem \ref{thm:extremal_conformal_measures_general} below.
\end{remark}

\begin{lemma}\label{lemma:properties_empty_stem} Consider the space $Y_A \subset X_A$, $Y_A \neq \emptyset$ for an irreducible matrix $A$. Let $\mathscr{E}$ be the set of empty-stem configurations of $Y_A$. Then,

\begin{enumerate}
    \item $\mathscr{E}$ is a closed set in both $X_A$ and $Y_A$ (subspace topology);
    \item given a potential $F:U\to \mathbb{R}$ and a $e^F$-conformal measure $\mu$ on $X_A$, $\mu(\mathscr{E})=0$ iff $\mu(Y_A) = 0$.
\end{enumerate}
\end{lemma}

\begin{proof} Item $(1)$ is straightforward because $\mathscr{E} = U^c$, where $U$ is the domain of $\sigma$, an open set. Consequently, $\mathscr{E}$ is a Borel set for both spaces and therefore $\mathcal{B}_\mathscr{E} \subset \mathcal{B}_{Y_A} \subset \mathcal{B}_{X_A}$. It is clear that $\mu(Y_A) =0$ implies $\mu(\mathscr{E}) = 0$. We prove the converse. 

We claim that, if $\mu(\mathscr{E}) = 0$, then $\mu(\sigma^{-1}(\mathscr{E}))=0$. In fact, suppose that $\mu(\mathscr{E}) = 0$. Since $\sigma(\sigma^{-1}(\mathscr{E})\cap C_i) \subset \mathscr{E}$, we have
\begin{align*}
    \mu \odot \sigma (\sigma^{-1}(\mathscr{E})) = \sum_{i \in \mathbb{N}} \mu(\sigma(\sigma^{-1}(\mathscr{E})\cap C_i)) = 0.
\end{align*}
By Theorem \ref{thm:equivalences_conformal_measures_generalized_Markov_shift}, a $e^F$-conformal measure is a $(-F,1)$-conformal measure, note that the equivalence between (i) and (iv) on Theorem \ref{thm:equivalences_conformal_measures_generalized_Markov_shift} does not require continuity of the potential. Then,
\begin{equation*}
    \frac{d \mu \odot \sigma}{d \mu} = e^{F} \quad \mu-\text{a.e. on }U,
\end{equation*}
and so
\begin{equation*}
    \mu(\sigma^{-1}(\mathscr{E})) = \int_{\sigma^{-1}(\mathscr{E})} e^{-F} d\mu\odot\sigma = 0.
\end{equation*}
The claim is proved. By Proposition \ref{prop:eigen_measure_nonsingular_Generalized_Markov_shift}, $\mu$ is non-singular on $U$, then $\mu(\sigma^{-k}(\mathscr{E})) =0$ for every $k \in \mathbb{N}$. Since $Y_A = \bigsqcup_{k \in \mathbb{N}_0}\sigma^{-k}(\mathscr{E})$, we conclude that $\mu(Y_A) = 0$.
\end{proof}

\begin{theorem}\label{thm:extremal_conformal_measures_general} Let $F:U \to \mathbb{R}$ be a potential. The extremal $e^{F}$-conformal probabilities living on $Y_A$ are precisely the ones living on a unique $Y_A$-family. 
\end{theorem}

\begin{proof} Let $\mu$ be a $e^F$-conformal probability living on $Y_A$. If $\mu$ gives positive mass to some $Y_A$-family, then the result follows from Theorem \ref{thm:uniqueness_conformal_probabilities_Y_A_families} and Lemma \ref{lemma:extremal_conforma_Y_A_characterization}. So, suppose that $\mu$ gives mass zero to every $Y_A$-family, or equivalently, $\mu(\{\xi^0\}) = 0$ for every $\xi^0 \in \mathscr{E}$. By Lemma \ref{lemma:properties_empty_stem} $(2)$, $\mu(Y_A) >0$ if and only if $\mu(\mathscr{E}) >0$. Then the restriction $\nu:=\mu\vert_{\mathcal{B}_{\mathscr{E}}}$ to the Borel $\sigma$-algebra $\mathcal{B}_{\mathscr{E}} \subset \mathcal{B}_{Y_A}$ is a positive measure. Since $X_A$ is Hausdorff and second countable, so is $\mathscr{E}$. Also, $\nu$ gives mass zero to every singleton of $\mathscr{E}$, hence $\nu$ is an atomless measure (see Lemma 12.18 of \cite{AliBorder2007}). Since $\mu(\mathscr{E}) >0$ and atomless, we have that $Y_A$ is uncountable. Furthermore, there exists a Borel subset $\mathscr{E}^1 \subset \mathscr{E}$ satisfying $\nu(\mathscr{E}^1) = \frac{\nu(\mathscr{E})}{2} > 0$ (Corollary 1.12.10 of \cite{Bogachev2007_1}). Take $\mathscr{E}^2:= \mathscr{E}\setminus \mathscr{E}^1$, so $\nu(\mathscr{E}^2) = \nu(\mathscr{E}^1)$ and, $k \in \{1,2\}$, we define
\begin{equation*}
    Y_k := \bigsqcup_{\xi^0 \in \mathscr{E}^k} Y_A\left(\xi^0\right) = \bigsqcup_{n\in \mathbb{N}_0} \sigma^{-n}(\mathscr{E}^k), \quad k \in \{1,2\}.
\end{equation*}
So $\{Y_1,Y_2\}$ is a Borel partition of $Y_A$ s.t. $\mu(Y_1)>0$ and $\mu(Y_2)>0$. Now, we claim that the probabilities $\mu_1$ and $\mu_2$, defined by
\begin{equation*}
    \mu_k(B) = \frac{\mu(B\cap Y_k)}{\mu(Y_k)}, \quad k \in \{1,2\},
\end{equation*}
are distinct $e^F$-conformal probabilities. Let $B \subset U$ be a special set. Then, for each $k \in \{1,2\}$
\begin{align*}
    \mu(Y_k) \mu_k(\sigma(B)) &=  \mu(\sigma(B)\cap Y_k) = \mu\left(\sigma(B)\cap \bigsqcup_{\xi^0 \in \mathscr{E}^k} Y_A\left(\xi^0\right)\right)\\
    &\stackrel{(\dagger)}{=} \mu\left(\sigma(B)\cap \bigsqcup_{\xi^0 \in  \mathscr{E}^k}\sigma\left(\sigma^{-1}\left(Y_A\left(\xi^0\right)\right)\right)\right) \\
    &= \mu\left(\bigsqcup_{\xi^0 \in \mathscr{E}^k}\sigma(B)\cap\sigma\left(\sigma^{-1}\left(Y_A\left(\xi^0\right)\right)\right)\right) \\
    &\stackrel{(\ddagger)}{=} \mu\left(\bigsqcup_{\xi^0 \in  \mathscr{E}^k}\sigma\left(B\cap\sigma^{-1}\left(Y_A\left(\xi^0\right)\right)\right)\right)\\
    &= \mu\left(\sigma\left(B\cap\bigsqcup_{\xi^0 \in  \mathscr{E}^k}\sigma^{-1}\left(Y_A\left(\xi^0\right)\right)\right)\right)\\
    &= \mu\left(\sigma\left(B\cap\sigma^{-1}( Y_k)\right)\right) \stackrel{(\bullet)}{=} \int_{B\cap\sigma^{-1}( Y_k)} e^F d\mu
\end{align*}
where in $(\dagger)$ we used the dynamic invariance of the $Y_A$-families, and in $(\bullet)$ we used the fact that $B\cap\sigma^{-1}( Y_k) \subset B$ and hence it is a special set. The identity $(\ddagger)$ is true since we have that $B$ is special and $\sigma(B)\cap\sigma(\sigma^{-1}\left(Y_A\left(\xi^0\right)\right) = \sigma\left(B\cap\sigma^{-1}\left(Y_A\left(\xi^0\right)\right)\right)$. Now, for $k \in \{1,2\}$, we have
\begin{align*}
    \mu_k(\sigma(B)) &= \frac{1}{\mu(Y_k)} \int_{B\cap\sigma^{-1}( Y_k)} e^F d\mu = \int_{B} e^F d\mu_k,
\end{align*}
And hence each $\mu_k$ is a conformal measure. It is straightforward see that for $\lambda = \mu(Y_1) \in (0,1)$ we have
$\mu = \mu(Y_1) \mu_1 + \mu(Y_2) \mu_2 = \lambda \mu_1 + (1-\lambda) \mu_2$ and, therefore $\mu$ is not extremal.
\end{proof}

\begin{theorem}\label{thm:existence_conformal_pressures} Let $\Sigma_A$ be transitive and s.t. $X_A$ is compact and $s$-compact. Also, let $F:U \to \mathbb{R}$ be a potential which has uniform bounded distortion. We have the following:
\begin{itemize}
    \item[$(a)$] If $P_G(-\beta F) < 0$, then there exists an extremal $e^{\beta F}$-conformal probability living on each $Y_A$-family.
    \item[$(b)$] If $P_G(-\beta F) > 0$, then there are no $e^{\beta F}$-conformal probabilities living on $Y_A$.
\end{itemize}
\end{theorem}

\begin{proof} Given any $Y_A$-family generated by an empty-stem configuration $\xi^0$, where $\mathfrak{R} = R_{\xi^0}(e)$, there exists a conformal probability living on $Y_A(\xi^0)$ if, and only if, the series
\begin{align*}
    \sum_{n \in \mathbb{N}}\sum_{\substack{|\omega| = n\\ \omega_{n-1} \in \mathfrak{R}}}e^{-\beta F_n(\omega)} = \sum_{n \in \mathbb{N}} Z_n(-\beta F, \xi^0)
\end{align*}
converges. Since $\limsup_n \sqrt[n]{Z_n(-\beta F, \xi^0)} = e^{P(-\beta F,\xi^0)} = e^{P_G(-\beta F)}$, by the root test we have that when $P_G(-\beta F) < 0$ there exists a $e^{\beta F}$-conformal probability living on each $Y_A(\xi^0)$.
\end{proof}

\begin{proposition}\label{prop:phase_transition_entropy} If $\Sigma_A$ is transitive, $X_A$ is compact and $s$-compact and let $F:U \to \mathbb{R}$ be a potential with uniform bounded variations. We have the following:
\begin{itemize}
    \item[$(i)$] If $0 < \inf F $, for $\frac{h_G}{\inf F}< \beta$, there exists a unique extremal $e^{\beta F}$-conformal probability measure living on each $Y_A$-family.
    \item[$(ii)$] If $0 \leq \sup F < \infty$, for $\beta < \frac{h_G}{\sup F}$, there are not $e^{\beta F}$-conformal probability measures living on $Y_A$.
\end{itemize}
\end{proposition}

\begin{proof} Fix an empty configuration $\xi^0$. Let $Y_A(\xi^0)$ the $Y_A$-family associated with $\xi^0$ and its set of stems $\mathfrak{W}$. By the characterization of conformal measures living on $Y_A$, see equation \eqref{eq:c_omega_probability}, there exists a $e^{\beta F}$-conformal measure on $Y_A(\xi^0)$ if and only if 
\begin{equation*}
    \sum_{n \in \mathbb{N}} \sum_{\substack{\omega \in \mathfrak{W}\\|\omega| = n}} e^{-\beta F_n(\omega)} < \infty.
\end{equation*}
 For $0<\inf F$, we have the following bound:
\begin{align}\label{eq:convergence_series_conformal_inf}
    \sum_{n \in \mathbb{N}} \sum_{\substack{\omega \in \mathfrak{W}\\|\omega| = n}} e^{-\beta F_n(\omega)} &\leq \sum_{n \in \mathbb{N}} \sum_{\substack{\omega \in \mathfrak{W}\\|\omega| = n}} e^{-\beta n \inf F}.
\end{align}
Also, for $0 \leq \sup F < \infty$, we get:
\begin{align}\label{eq:divergence_series_conformal_sup}
    \sum_{n \in \mathbb{N}} \sum_{\substack{\omega \in \mathfrak{W}\\|\omega| = n}} e^{-\beta n \sup F} &\leq \sum_{n \in \mathbb{N}} \sum_{\substack{\omega \in \mathfrak{W}\\|\omega| = n}} e^{-\beta F_n(\omega)}.
\end{align}
By the RHS of \eqref{eq:convergence_series_conformal_inf} and LHS \eqref{eq:divergence_series_conformal_sup}, we consider the constant potentials $\inf F$ and $\sup F$, for each respective series. Observe that, for every constant $c$, we have $P_G(c)=c + h_G$. 

Then, $P_G(-\beta \inf F) < 0$ iff $\beta > \frac{h_G}{\inf F}$, and $P_G(-\beta \sup F) > 0$ iff $\beta < \frac{h_G}{\sup F}$. 

By Theorem \ref{thm:existence_conformal_pressures}, the RHS of \eqref{eq:convergence_series_conformal_inf} converges for $\beta > \frac{h_G}{\inf F}$ and the LHS of \eqref{eq:divergence_series_conformal_sup} diverges for $\beta < \frac{h_G}{\sup F}$. 
\end{proof}

We summarize the results from the previous proposition in the next figure:

\begin{figure}[H]
    \centering
\scalebox{0.8}{
\tikzset{every picture/.style={line width=0.75pt}} 

\begin{tikzpicture}[x=0.75pt,y=0.75pt,yscale=-1,xscale=1]

\draw [color={rgb, 255:red, 0; green, 0; blue, 0 }  ,draw opacity=1 ][fill={rgb, 255:red, 0; green, 0; blue, 0 }  ,fill opacity=1 ][line width=2.25]  [dash pattern={on 6.75pt off 4.5pt}]  (17,178.02) -- (177.94,176.38) ;
\draw [shift={(177.94,176.38)}, rotate = 539.4200000000001] [color={rgb, 255:red, 0; green, 0; blue, 0 }  ,draw opacity=1 ][line width=2.25]      (7.83,-7.83) .. controls (3.51,-7.83) and (0,-4.32) .. (0,0) .. controls (0,4.32) and (3.51,7.83) .. (7.83,7.83) ;
\draw [color={rgb, 255:red, 0; green, 0; blue, 0 }  ,draw opacity=1 ][fill={rgb, 255:red, 0; green, 0; blue, 0 }  ,fill opacity=1 ][line width=2.25]    (174.08,105.87) -- (174.5,143.4) ;
\draw [shift={(174.55,148.4)}, rotate = 269.36] [fill={rgb, 255:red, 0; green, 0; blue, 0 }  ,fill opacity=1 ][line width=0.08]  [draw opacity=0] (16.07,-7.72) -- (0,0) -- (16.07,7.72) -- (10.67,0) -- cycle    ;
\draw [color={rgb, 255:red, 0; green, 0; blue, 0 }  ,draw opacity=1 ][fill={rgb, 255:red, 0; green, 0; blue, 0 }  ,fill opacity=1 ][line width=2.25]    (177.94,176.38) .. controls (179.59,174.7) and (181.26,174.69) .. (182.94,176.34) .. controls (184.61,177.99) and (186.28,177.98) .. (187.94,176.31) .. controls (189.59,174.63) and (191.26,174.62) .. (192.94,176.27) .. controls (194.62,177.92) and (196.29,177.91) .. (197.94,176.23) .. controls (199.59,174.55) and (201.26,174.54) .. (202.94,176.19) .. controls (204.61,177.84) and (206.28,177.83) .. (207.94,176.16) .. controls (209.59,174.48) and (211.26,174.47) .. (212.94,176.12) .. controls (214.62,177.77) and (216.29,177.76) .. (217.94,176.08) .. controls (219.59,174.4) and (221.26,174.39) .. (222.94,176.04) .. controls (224.61,177.69) and (226.28,177.68) .. (227.94,176.01) .. controls (229.59,174.33) and (231.26,174.32) .. (232.94,175.97) .. controls (234.62,177.62) and (236.29,177.61) .. (237.94,175.93) .. controls (239.6,174.26) and (241.27,174.25) .. (242.94,175.9) .. controls (244.62,177.55) and (246.29,177.54) .. (247.94,175.86) .. controls (249.59,174.18) and (251.26,174.17) .. (252.94,175.82) .. controls (254.62,177.47) and (256.29,177.46) .. (257.94,175.78) .. controls (259.6,174.11) and (261.27,174.1) .. (262.94,175.75) .. controls (264.62,177.4) and (266.29,177.39) .. (267.94,175.71) .. controls (269.59,174.03) and (271.26,174.02) .. (272.94,175.67) .. controls (274.62,177.32) and (276.29,177.31) .. (277.94,175.63) .. controls (279.6,173.96) and (281.27,173.95) .. (282.94,175.6) .. controls (284.62,177.25) and (286.29,177.24) .. (287.94,175.56) .. controls (289.59,173.88) and (291.26,173.87) .. (292.94,175.52) .. controls (294.62,177.17) and (296.29,177.16) .. (297.94,175.48) .. controls (299.6,173.81) and (301.27,173.8) .. (302.94,175.45) .. controls (304.62,177.1) and (306.29,177.09) .. (307.94,175.41) .. controls (309.59,173.73) and (311.26,173.72) .. (312.94,175.37) .. controls (314.62,177.02) and (316.29,177.01) .. (317.94,175.33) .. controls (319.6,173.66) and (321.27,173.65) .. (322.94,175.3) .. controls (324.62,176.95) and (326.29,176.94) .. (327.94,175.26) .. controls (329.59,173.58) and (331.26,173.57) .. (332.94,175.22) .. controls (334.62,176.87) and (336.29,176.86) .. (337.94,175.18) .. controls (339.6,173.51) and (341.27,173.5) .. (342.94,175.15) .. controls (344.62,176.8) and (346.29,176.79) .. (347.94,175.11) .. controls (349.59,173.43) and (351.26,173.42) .. (352.94,175.07) .. controls (354.62,176.72) and (356.29,176.71) .. (357.94,175.03) .. controls (359.6,173.36) and (361.27,173.35) .. (362.94,175) .. controls (364.62,176.65) and (366.29,176.64) .. (367.94,174.96) .. controls (369.59,173.28) and (371.26,173.27) .. (372.94,174.92) .. controls (374.62,176.57) and (376.29,176.56) .. (377.94,174.88) .. controls (379.6,173.21) and (381.27,173.2) .. (382.94,174.85) .. controls (384.62,176.5) and (386.29,176.49) .. (387.94,174.81) .. controls (389.59,173.13) and (391.26,173.12) .. (392.94,174.77) -- (396.83,174.74) -- (396.83,174.74) ;
\draw [color={rgb, 255:red, 0; green, 0; blue, 0 }  ,draw opacity=1 ][fill={rgb, 255:red, 0; green, 0; blue, 0 }  ,fill opacity=1 ][line width=2.25]    (396.83,174.74) -- (612,174.74) ;
\draw [shift={(617,174.74)}, rotate = 180] [fill={rgb, 255:red, 0; green, 0; blue, 0 }  ,fill opacity=1 ][line width=0.08]  [draw opacity=0] (16.07,-7.72) -- (0,0) -- (16.07,7.72) -- (10.67,0) -- cycle    ;
\draw [shift={(396.83,174.74)}, rotate = 0] [color={rgb, 255:red, 0; green, 0; blue, 0 }  ,draw opacity=1 ][line width=2.25]      (7.83,-7.83) .. controls (3.51,-7.83) and (0,-4.32) .. (0,0) .. controls (0,4.32) and (3.51,7.83) .. (7.83,7.83) ;
\draw [color={rgb, 255:red, 0; green, 0; blue, 0 }  ,draw opacity=1 ][fill={rgb, 255:red, 0; green, 0; blue, 0 }  ,fill opacity=1 ][line width=2.25]    (399.4,104.23) -- (399.82,145.04) ;
\draw [shift={(399.88,150.04)}, rotate = 269.40999999999997] [fill={rgb, 255:red, 0; green, 0; blue, 0 }  ,fill opacity=1 ][line width=0.08]  [draw opacity=0] (16.07,-7.72) -- (0,0) -- (16.07,7.72) -- (10.67,0) -- cycle    ;

\draw (604.76,184.58) node [anchor=north west][inner sep=0.75pt]  [font=\LARGE,color={rgb, 255:red, 0; green, 0; blue, 0 }  ,opacity=1 ]  {$\beta $};
\draw (48.62,192.94) node [anchor=north west][inner sep=0.75pt]  [font=\LARGE,color={rgb, 255:red, 0; green, 0; blue, 0 }  ,opacity=1 ] [align=left] {absence};
\draw (442.64,188.02) node [anchor=north west][inner sep=0.75pt]  [font=\LARGE,color={rgb, 255:red, 0; green, 0; blue, 0 }  ,opacity=1 ] [align=left] {existence};
\draw (378.33,51.08) node [anchor=north west][inner sep=0.75pt]  [font=\huge,color={rgb, 255:red, 0; green, 0; blue, 0 }  ,opacity=1 ]  {$\frac{h_{G}}{\inf F}$};
\draw (150.59,51.08) node [anchor=north west][inner sep=0.75pt]  [font=\huge,color={rgb, 255:red, 0; green, 0; blue, 0 }  ,opacity=1 ]  {$\frac{h_{G}}{\sup F}$};
\draw (269,191.4) node [anchor=north west][inner sep=0.75pt]  [font=\LARGE]  {$?????$};

\end{tikzpicture}
}
\caption{\label{fig:conformal_general_entropy}}
\end{figure}

The next result uses the fact that $e^\beta$-conformal probability measures are the KMS$_{\beta}$ states for gauge action when we consider the constant potential $F\equiv 1$. Similar results for KMS$_{\beta}$ weights on graph $C^{*}$-algebras were obtained by K. Thomsen in \cite{Tho3}, where the Gurevich entropy $h_G$ is also a critical value for the existence of conformal measures.

\begin{corollary}(Gauge potential)\label{cor:phase_transition_entropy_gauge} Let $A$ be an irreducible matrix, with $X_A$ compact and $s$-compact, and let $F:U \to \mathbb{R}$ be the constant potential $F\equiv 1$. We have the following:
\begin{itemize}
    \item[(i)] For $\beta>h_G$ we there exists a unique $e^\beta$-conformal probability measure living on each $Y_A$-family;
    \item[(ii)] If $h_G < \infty$, for $\beta_c = h_G$ there exists a $e^{\beta_c}$-conformal probability measure living on $X_A$.
      \item[(iii)] For $0< \beta < h_G$ there are not $e^\beta$-conformal probability measures living on $Y_A$.
\end{itemize}
\end{corollary}

For $\beta_c = h_G$, since for the potential $F\equiv 1$ the grupoid is principal, see Remark \ref{remark:KMS_quasi_invariant}, the $e^\beta$-conformal probability measures are the KMS$_{\beta}$ states, see \cite{Renault2009}. Then, the existence comes from the fact that by the item (i), we know that we have the existence of KMS$_{\beta}$ for every $\beta>h_G$ and, the Theorem 5.3.25 of \cite{Bratteli1997} guarantees that the set of $\beta$'s for which we have the existence of a KMS$_{\beta}$ states is closed. Then it must do exist a $e^\beta$-conformal probability measure at $\beta_c = h_G$. The question is to know where this measure lives, and we will study this problem in the next section in a couple of examples.

Now we will exhibit two concrete examples on which there exists a unique $e^{\beta_c}$-conformal probabi-lity measure at $\beta_c = h_G$, and this measure lives on $\Sigma_A$. Besides, we can prove that when $\beta > \beta_c$ the $e^\beta$-conformal $\nu_{\beta}$ (which lives on $Y_A$) converge weakly$^*$ to $\nu_{\beta_c}$ when $\beta$ converges to $\beta_c$, a new feature with respect to the standard formalism on countable Markov shifts.  Every GCMS presented in this section is compact and $s$-compact, and since the gauge potential has uniform bounded variation, we will use Corollary \ref{cor:phase_transition_entropy_gauge} extensively for all examples except the full shift, where we encode it to the renewal shift.

\subsection{Renewal shift}\\

In this section we study the conformal measures on $X_A$ in the renewal shift. We connect these results to the standard theory on $\Sigma_A$. In particular, we present the first example of a \emph{length-type phase transition}, which consists in a change of space where the conformal measure lives, from $Y_A$ to $\Sigma_A$, when we take the limit $\beta \to \beta_c$. 

We recall that the generalized renewal shift has only one $Y_A$-family, which is $Y_A$ itself.

\begin{proposition}\label{prop:phase_transition_conformal_potential_1}
Consider the renewal shift and $F\equiv1$. Then, $h_G=\log 2$, and we have the following:
\begin{itemize}
    \item[(i)] For $\beta>\log 2$ we have a unique $e^\beta$-conformal probability measure living on $Y_A$.
    \item[(ii)] For $\beta_c = \log 2$ there is a unique $e^{\beta_c}$-conformal probability measure living on $\Sigma_A$.
    \item[(iii)] For $\beta < \log 2$ there are not $e^\beta$-conformal probability measures.
\end{itemize}
\end{proposition}
\begin{proof}
The existence of a unique $e^\beta$-conformal probability measure living $Y_A$ for $\beta > \log 2$ is a consequence of the Corollary \ref{cor:phase_transition_entropy_gauge} combining with the uniqueness of the $Y_A$-family for the renewal GCMS. At the critical point $\beta_c = \log 2$, the following straightforward calculation shows that the series \eqref{eq:c_omega_probability} diverges. In fact, for each $n \geq 1$, there are exactly $2^{n-1}$ elements in $Y_A$ whose stem has length $n$, and then
\begin{equation*}
    \sum_{n \in \mathbb{N}} \sum_{\substack{\omega \in \mathfrak{W}\\|\omega| = n}} e^{-\beta_c F_n(\omega)} = \frac{1}{2}\sum_{n \in \mathbb{N}} 1 = \infty.
\end{equation*}
Then, there is no $e^{\beta_c}$-conformal probability measure living on $Y_A$, since we know that there exist conformal measures at $\beta_c$, these measures must live on $\Sigma_A$. In fact, we will prove the uniqueness of the conformal measure at $\beta_c$. Also, due to Corollary \ref{cor:phase_transition_entropy_gauge} we have the absence of $e^\beta$-conformal probability measures living on $Y_A$ for $\beta < \beta_c$.

It remains to prove that there is no $e^\beta$-conformal probability measures living on $\Sigma_A$ for $\beta < \beta_c$ and the uniqueness at $\beta_c$. In this case, we may restrict the measure to the standard theory of the thermodynamic formalism on CMS $\Sigma_A$, and remembering that, by Theorem \ref{thm:equivalences_conformal_measures_generalized_Markov_shift}, we have to change the sign of the potential to work with eigenmeasures for the Ruelle operator on CMS. Then, from now we will work with the potential $- F\vert_{\Sigma_A}\equiv -1$. It is easy to verify that this potential $- \beta F$ is positive recurrent for every $\beta > 0$, and also $P_{G}(-\beta) =  \log 2 - \beta$. 

For each $\beta >0$, to find an eigenmeasure living on $\Sigma_A$ associated to the potential $-\beta F\vert_{\Sigma_A}\equiv - \beta$ is equivalent to prove that there exist $\lambda_{\beta} >0$ and a probability measure $\nu:\mathcal{B}_{\Sigma_A} \to [0,1]$ that solves the equation
\begin{equation*}
     \int f d\nu = \lambda_{\beta} \int L_{-\beta} f d\nu, \forall \\ \  f \in L^1(\nu).
\end{equation*}

By the generalized RPF theorem, since $-\beta F\vert_{\Sigma_A}$ is positive recurrent for every $\beta>0$ we have the existence of such probability measure and, in addition, $\lambda_{\beta}$ has to be equal to $e^{P_G(-\beta)} = 2 e^{-\beta}$. Here we are looking for conformal probability measures, so we have to impose $\lambda_{\beta}\equiv 1$, and then the unique value of $\beta >0$ for which we have the existence of a $e^\beta$-conformal probability measure living on $\Sigma_A$ is $\beta_c = \log 2$. The uniqueness comes from the fact that the space of eigenmeasures associated with each eigenvalue has dimension 1 for positive recurrent potentials, see \cite{Daon2013}. Eigenmeasures for general eigenvalues are useful for DLR measures in the standard and generalized setting; see our second paper \cite{BisExelFrauRas2018}.

The determine the measure $\nu$, due the structure of the renewal matrix, it is enough to evaluate $\nu$ on cylinders $[\alpha]$, for every admissible (positive) word $\alpha$ that ends with the symbol 1. With standard arguments one can show that:
\begin{equation}\label{eq:Sarig_measure_potential_1}
    \nu([\alpha]) = 2^{-|\alpha|}.
\end{equation}

For detailed calculations of this example, see Corollary 5.29 of \cite{Raszeja2020}.
\end{proof}

The figure \ref{fig:renewal_F_equals_1} compares the standard formalism with the generalized one for the renewal shift and potential $F\equiv1$, as in Proposition \ref{prop:phase_transition_conformal_potential_1}.

\begin{figure}[h!]
  \hspace{.1cm}
 \includegraphics[scale=.3]{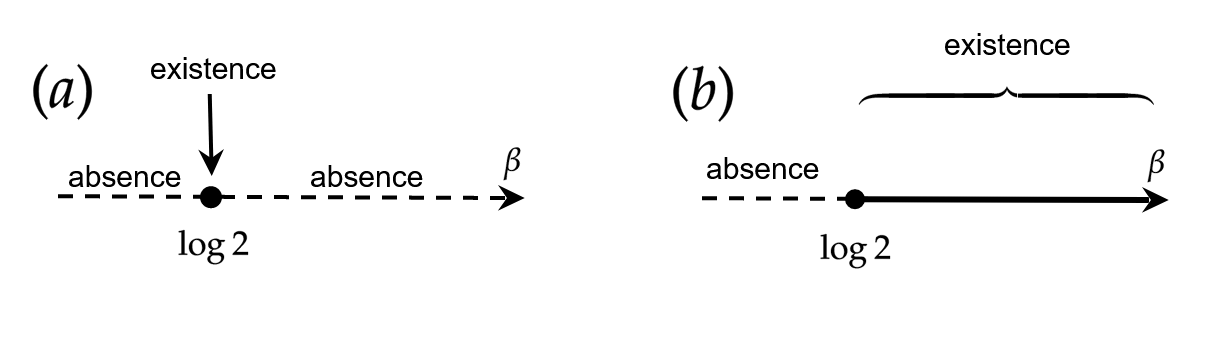}
 
 \caption{The phase transitions on different thermodynamic formalisms for conformal probabilities in the renewal shift space with potential $F\equiv1$. The picture $(a)$ represents the standard formalism on $\Sigma_A$, where we have a unique $e^\beta$-conformal probability for a unique possible inverse of temperature, namely $\beta = \log 2$ (black dot). For the finite alphabet we also have the situation where we have a unique $\beta$ for which does exist KMS$_{\beta}$ states, see \cite{BratJorgOstro2004, Exel2004, KerrPinz2002}. The picture $(b)$ represents the generalized formalism on $X_A$ and, unlike in $(a)$, we are able to detect not only the previous standard $e^\beta$-conformal probability, but also, for each $\beta > \log 2$, we have a unique $e^\beta$-conformal probability (continuous line), in this case the measures live on $Y_A$.  \label{fig:renewal_F_equals_1}}
\end{figure}

Let $\mu_\beta$, $\beta > \log 2$, be the conformal measure of Proposition \ref{prop:phase_transition_conformal_potential_1} for the inverse of temperature $\beta$. We show next that the net $\{\mu_\beta\}_{\beta > \log 2}$ converges to the $2$-conformal measure when $\beta \to \beta_c$. We need the following lemma.

\begin{lemma}\label{lemma:subwords_of_a_word_renewal_counting} Let $\omega \in \mathfrak{W}$ s.t. $\alpha \in \llbracket \omega \rrbracket$. Write $n=|\omega| = |\alpha| + p$, $p \in \mathbb{N}_0$. If $p=0$, then there exists a unique word in $Y_A$ s.t. $\alpha \in \llbracket \omega \rrbracket$. If $p \in \mathbb{N}$, then there are $2^{p-1}$ words in $Y_A$ which satisfies $\alpha \in \llbracket \omega \rrbracket$.
\end{lemma}

\begin{proof} If $p=0$, the result is straightforward, then suppose that $p \in \mathbb{N}$, that is, $|\omega|>|\alpha|$. Then, $\omega = \alpha \gamma$, where both $\alpha$ and $\gamma$ end with `$1$' and $|\gamma|=p$ and therefore the number of possibilities for $\gamma$, and consequently $\omega$, is $2^{p-1}$. 
\end{proof}

\begin{theorem}\label{thm:colapsing_renewal} The net $\{\mu_\beta\}_{\beta> \log 2}$ converges to $\nu$ in the weak$^*$ topology when we take the limit $\beta \to \log 2$.
\end{theorem}
\begin{proof}
The measures $\mu_\beta$, $\beta > \log 2$ are explicitly obtained by explicit calculation of the series in \eqref{eq:c_omega_probability} combined with \eqref{eq:c_omega_new}. By denoting $\mu_\beta(\{\omega \}):= c_{\omega}^\beta$, $\omega \in \mathfrak{W}$, we obtain
\begin{equation}\label{eq:conformal_measure_Y_A_explicit}
     c_\omega^\beta := \frac{e^{-\beta |\omega|}}{1+\frac{1}{2}\sum_{n \in \mathbb{N}}e^{n(\log 2 - \beta)}} = \frac{e^{-|\omega|\beta}(e^\beta-2)}{e^\beta-1}.
\end{equation}
By subsection \ref{subsec:cylinder_topology_Renewal}, every generalized cylinder on a positive admissible word is written by choosing this word ending in `$1$'. Hence,
\begin{equation*}
    \mu_\beta(C_\alpha) = \sum_{\omega \in \mathfrak{W}: \alpha \in \llbracket\omega\rrbracket} c_\omega^\beta,
\end{equation*}
and then,
\begin{equation*}
    \mu_\beta(C_\alpha) = \sum_{n \in \mathbb{N}}\sum_{\substack{\omega \in \mathfrak{W}: \alpha \in \llbracket\omega\rrbracket \\ |\omega| = n}} c_n^\beta, \quad \text{where} \ \ c_n^\beta := \frac{e^\beta-2}{e^{n \beta }(e^\beta-1)}.
\end{equation*}
However, if $\alpha \in \llbracket \omega \rrbracket$, then $|\omega|\geq |\alpha|$ and hence $\mu_\beta(C_\alpha) = \sum_{n = |\alpha|}^\infty\sum_{\substack{\omega \in Y_A: \alpha \in \llbracket\omega\rrbracket \\ |\omega| = n}} c_n^\beta.$

By Lemma \ref{lemma:subwords_of_a_word_renewal_counting}, we have that
\begin{equation*}
    \mu_\beta(C_\alpha) = c_{|\alpha|}^\beta + \sum_{n=|\alpha|+1}^\infty 2^{n-|\alpha|-1}c_n^\beta = c_{|\alpha|}^\beta + \sum_{k=1}^\infty 2^{k-1}c_{k + |\alpha|}^\beta,
\end{equation*}
where in the last equality we used the change of index $k = n-|\alpha|$ in the sum. It follows that
\begin{align*}
    \mu_\beta(C_\alpha) &= \frac{e^\beta-2}{e^\beta-1}\left[\frac{1}{e^{|\alpha|\beta}} + \sum_{k=1}^\infty \frac{2^{k-1}}{e^{(k + |\alpha|)\beta}}\right] =  e^{-|\alpha|\beta}.
\end{align*}
By taking the limit, we get $\lim_{\beta \to \log 2}\mu_\beta(C_\alpha) = 2^{-|\alpha|}.$ On the other hand, for every $\alpha \in \mathfrak{W}$, we have
\begin{align*}
   \lim_{\beta \to \log 2}\mu_\beta(\{\alpha\})=\lim_{\beta \to \log 2} \frac{e^\beta-2}{e^{|\alpha|\beta}(e^\beta-1)}=0.
\end{align*}
Then, for every finite set of words $F \subset \mathfrak{W}$, we have $\lim_{\beta \to \log 2}\mu_\beta(F)=0$. 

Since every basic set of $X_A$ as in subsection \ref{subsec:cylinder_topology_Renewal} has the form
\begin{equation*}
    F\sqcup \bigsqcup_{p \in \mathbb{N}} C_{w(p)}, 
\end{equation*}
where $F \subseteq Y_A$ is finite, and $w(p)$ is some positive admissible word ending in `$1$', we obtain
\begin{align}\label{eq:limit_mu_beta_complement}
    \lim_{\beta \to \log 2}\mu_\beta\left(F\sqcup \bigsqcup_{p \in \mathbb{N}} C_{w(p)}\right) &= \lim_{\beta \to \log 2}\mu_\beta \left(\bigsqcup_{p \in \mathbb{N}} C_{w(p)}\right)\nonumber \\
    &= \lim_{\beta \to \log 2}\sum_{p\in \mathbb{N}}\mu_\beta \left(C_{w(p)}\right) = \sum_{p\in \mathbb{N}}\lim_{\beta \to \log 2}\mu_\beta \left(C_{w(p)}\right).
\end{align}
Now, we will compare this result with the measure $\nu$ of \eqref{eq:Sarig_measure_potential_1} from Proposition \ref{prop:phase_transition_conformal_potential_1}. Given $\alpha$ positive admissible word ending in `$1$', we have
\begin{align*}
    2^{|\alpha|}\nu(C_\alpha) &= 2^{|\alpha|}\nu([\alpha]) =  \sum_{a \in \mathbb{N}}\int_{[a]} 1 d\nu(x) = \nu(\Sigma_A) = 1.
\end{align*}
Therefore, $\nu(C_\alpha) = 2^{-|\alpha|}$, and finally $\nu(C_\alpha) = \lim_{\beta \to \log 2}\mu_\beta(C_\alpha)$.

By the last equality, the $\sigma$-additivity of measures and the identity \eqref{eq:limit_mu_beta_complement}, we obtain for every basic open set $B$, which is in the form $F\sqcup \bigsqcup_{p \in \mathbb{N}} C_{w(p)}$ (see subsection \ref{subsec:cylinder_topology_Renewal}) that
\begin{equation*}
    \nu\left(F\sqcup \bigsqcup_{p \in \mathbb{N}} C_{w(p)}\right) = \lim_{\beta \to \log 2}\mu_\beta\left(F\sqcup \bigsqcup_{p \in \mathbb{N}} C_{w(p)}\right),
\end{equation*}
that is,
\begin{equation*}
    \nu(B) = \lim_{\beta \to \log 2}\mu_\beta(B).
\end{equation*}
Our basis is closed under finite intersections, and the net $\{\mu_\beta\}_{\beta > \log 2}$ converges to $\nu$ on the basic sets as $\beta \to \log 2$. Besides, $\mu_\beta$, for $\beta > \beta_c$, and $\nu$ are defined on the Borel $\sigma$-algebra of a metric space. Therefore, by Theorem 8.2.17 of \cite{Bogachev2007}, we have that $\mu_\beta$ converges on the weak$^*$ topology as $\beta$ goes to $\beta_c$.
\end{proof}


\subsection{Pair renewal shift}\\

We recall that the pair renewal shift has two $Y_A$-families, and they were described in subsection \ref{subsec:Pair_Renewal}. The next lemma gives, for each $k\in\{1,2\}$, the quantity of configurations of the $Y_A$-family $Y_A(\xi^{0,k})$ with stem of same length. 

\begin{lemma}\label{thm:counting_configurations_pair_renewal_shift}
 
Consider the pair renewal shift and $n \in \mathbb{N}_0$. Then,
\begin{align*}
    |\sigma^{-n}(\xi^{0,1})| = \frac{1}{4}\left[(1-\sqrt{2})^{n} + (1-\sqrt{2})^{n+1} + (1+\sqrt{2})^{n} + (1+\sqrt{2})^{n+1}\right]
\end{align*}
and
\begin{align*}
|\sigma^{-n}(\xi^{0,2})| = \begin{cases}
                                 1, \quad \text{if $n=0$};\\
                                 \frac{1}{4}\left[(1-\sqrt{2})^{n-1} + (1-\sqrt{2})^{n} + (1+\sqrt{2})^{n-1} + (1+\sqrt{2})^{n}\right], \quad \text{otherwise}.
                              \end{cases}
\end{align*}
\end{lemma}

\begin{proof} See Theorem 4.74 of \cite{Raszeja2020}.
\end{proof}

For the pair renewal shift and potential $F \equiv 1$ we have the following result.

\begin{proposition}\label{thm:existence_conformal_measures_Pair_renewal} For the pair renewal shift and constant potential $F \equiv 1$, there exists a critical value $\beta_c = h_G = \log(1+\sqrt{2})$ s.t.
\begin{itemize}
    \item[$(i)$] for $\beta > \beta_c$ there exist two extremal $e^\beta$-conformal probabilities living on $Y_A$, each one living on a distinct $Y_A$-family;
    \item[$(ii)$] for $\beta = \beta_c$ there exists a unique $e^\beta$-conformal probability living on $\Sigma_A$;
    \item[$(iii)$] for $\beta < \beta_c$, there are no $e^\beta$-conformal probabilities.
\end{itemize}
\end{proposition}

\begin{proof} By Corollary \ref{cor:phase_transition_entropy_gauge}, there exist two extremal $e^\beta$-conformal probabilities living on $Y_A$ for $\beta > \log(1+\sqrt{2})$, and for $\beta < \log(1+\sqrt{2})$ we have the absence of these measures. At the critical point $\beta_c = \log(1+\sqrt{2})$, the series \eqref{eq:c_omega_probability} diverges for each $Y_A$-family, and it can be verified in a similar fashion as in the renewal shift case, by using Lemma \ref{thm:counting_configurations_pair_renewal_shift}. Now, we analyze the existence of these probabilities, but living on $\Sigma_A$. For every $n \in \mathbb{N}$, any $e^\beta$-conformal probability measure that lives on $\Sigma_A$ necessarily satisfies
\begin{equation}\label{eq:conformality_Pair_renewal}
 \mu_\beta(\sigma([n]))=\int_{[n]} e^\beta d\mu_\beta = e^\beta \mu_\beta([n]), \quad n \in \mathbb{N}.
\end{equation}
We also have that,
\begin{equation*}
    \sigma([1])=\Sigma_A, \quad \sigma([2])=[1]\sqcup \bigsqcup_{n \in \mathbb{N}} [2n] \quad \text{and} \quad \sigma([i])=[i-1], \quad i \neq 1.
\end{equation*}
By \eqref{eq:conformality_Pair_renewal}, we have $\mu_\beta([1])=e^{-\beta}$. By induction we obtain $\mu_\beta([n])=e^{-\beta(n-2)}\mu_\beta([2])$, $n > 2.$

Now, for the cylinder $[2]$, we have that
\begin{equation}\label{eq:eigenmeasure_Sigma_A_Pair_Renewal_cylinder_2}
    e^\beta \mu_\beta([2]) = e^{-\beta}+\mu_\beta([2])+\sum_{n=2}^\infty e^{-2\beta (n-1)}\mu_\beta([2]). 
\end{equation}
Then,
\begin{equation}\label{eq:eigenmeasure_Sigma_A_Renewal_cylinder_2_semifinal}
    \left(e^\beta-\sum_{n=0}^\infty e^{-2\beta n}\right)\mu_\beta([2]) = \frac{2 \sinh(\beta) - 1}{1-e^{-2\beta}} \mu_\beta([2])=  e^{-\beta}
\end{equation}
The number multiplying $\mu_\beta([2])$ in the identity above is zero if, and only if, $\sinh(\beta) = \frac{1}{2}$, and for such $\beta$ it is straightforward that there is not a probability conformal measure, and that $\beta \neq \log(1+\sqrt{2})$. Consider from now on the remaining case that $\sinh(\beta) \neq \frac{1}{2}$. 
By imposing that $\mu_\beta$ is a probability, that is,
\begin{equation*}
    e^{-\beta}+\mu_\beta([2])+ \mu_\beta([2]) \sum_{n=1}^\infty e^{-n\beta} = \frac{1-e^{-\beta}}{e^{\beta}+e^{-2\beta}-2} = 1,  
\end{equation*}
one can solve the RHS equation above, and conclude that there exists a unique positive value $\beta$, namely $\beta = \log(1+\sqrt{2}) = \beta_c$, such that $\mu_\beta$ is a conformal probability measure. In this case, the measure $\mu_{\beta_c}$ satisfies
\begin{align*}
   \mu_{\beta_c}([1]) &= e^{-\beta_c},\\
   \mu_{\beta_c}([2]) &=  e^{-\beta_c}\frac{1-e^{-2\beta_c}}{2 \sinh(\beta_c) - 1},\\
   \mu_{\beta_c}([n]) &= e^{-\beta_c(n-1)}\frac{1-e^{-2\beta_c}}{2 \sinh(\beta_c) - 1}, \quad n>2.
\end{align*}
Its uniqueness is granted because, for every cylinder $[\alpha]$, $|\alpha|>1$, we may apply the conformality equation \eqref{eq:conformality_Pair_renewal} finite times, and therefore the measure for every cylinder $[\alpha]$ depends only on the cylinders of length one. 

Summarizing the results, there exist two conformal probability measures living on $Y_A$ if and only if $\beta > \beta_c$, and there exists a conformal probability measure living on $\Sigma_A$ if and only if $\beta = \beta_c$, and this measure is unique. By Corollary \ref{cor:decomposition_eigenmeasures}, every probability conformal measure on $X_A$ can be written as a sum of its normalized restrictions on $\Sigma_A$ and $Y_A$, so there are no conformal probability measures on $X_A$ for $\beta < \beta_c$. 
\end{proof}

\begin{remark}\label{remark:probability_eigenmeasures_Y_A_pair_renewal_on_cylinders} We observe that similar to what happens for conformal probability measures living on $\Sigma_A$, it is possible to determine the conformal probability measure on each $Y_A$-family by its values on the cylinders of length one, with the difference that we must consider the measure on the singletons on the respective empty stems. We have
\begin{equation*}
    \sigma(C_n \cap Y_A(\xi^{0,k})) = \sigma(C_n) \cap Y_A(\xi^{0,k}), \quad n \in \mathbb{N}, \quad k \in \{1,2\},
\end{equation*}
and then,
\begin{equation*}
    \sigma(C_n)\cap Y_A(\xi^{0,k}) = \begin{cases}
                                        Y_A(\xi^{0,k}), \quad \text{if } n = 1;\\
                                        Y_A(\xi^{0,k}) \cap \left(C_1 \sqcup \bigsqcup_{n \in \mathbb{N}} C_{2n}\right), \quad \text{if } n = 2;\\
                                        Y_A(\xi^{0,k}) \cap C_{n-1}, \quad \text{otherwise},
                                     \end{cases}
\end{equation*}
where $k \in \{1,2\}$. The conformality equation \eqref{eq:conformality_Pair_renewal} is analogous for generalized cylinders. We denote by $\mu_{\beta,k}$ the extremal conformal probability living on  $Y_A(\xi^{0,k})$, $k \in \{1,2\}$. We have
\begin{equation}\label{limit_C_2}
\begin{cases}
    \mu_{\beta,k}(C_1) &= e^{-\beta},\\
    \mu_{\beta,k}(C_n) &= e^{-\beta(n-2)}\mu_{\beta,k}(C_2), \quad n>2.
\end{cases}
\end{equation}
Since these measures are probabilities we must have
\begin{align*}
    e^\beta \mu_{\beta,k}(C_2)=e^{-\beta}+\mu_{\beta,k}(C_2)+\mu_{\beta,k}(\{\xi^{0,1}\})+\sum_{n=2}^\infty e^{-2\beta (n-1)}\mu_{\beta,k}(C_2).
\end{align*}
We obtain from Lemma \ref{thm:counting_configurations_pair_renewal_shift} and identity \eqref{eq:c_omega_probability} that
\begin{equation*}
    \mu_{\beta,k}(C_2) = \begin{cases}
                           \left(1 + \frac{4 e^\beta}{\mathfrak{L}(\beta)}\right) e^{-\beta}\frac{1-e^{-2\beta}}{2 \sinh(\beta) - 1}, \quad \text{if } k=1;\\
                           \left(1 + \frac{4 e^\beta}{4 + e^{-\beta} \mathfrak{L}(\beta)}\right) e^{-\beta}\frac{1-e^{-2\beta}}{2 \sinh(\beta) - 1}, \quad \text{if } k=2,
                         \end{cases}
\end{equation*}
where
\begin{equation*}
    \mathfrak{L}(\beta) := \sum_{n \in \mathbb{N}_0}\left[(1-\sqrt{2})^{n} + (1-\sqrt{2})^{n+1} + (1+\sqrt{2})^{n} + (1+\sqrt{2})^{n+1}\right] e^{-\beta n}.
\end{equation*}
\end{remark}

The figure \ref{fig:phase_transition_pair_renewal_potential_1} compares the standard formalism of $\Sigma_A$ to the generalized one of $X_A$ for the pair renewal shift and potential $F \equiv 1$.

\begin{figure}[h!]
  \hspace{-.5cm}
 \includegraphics[scale=.3]{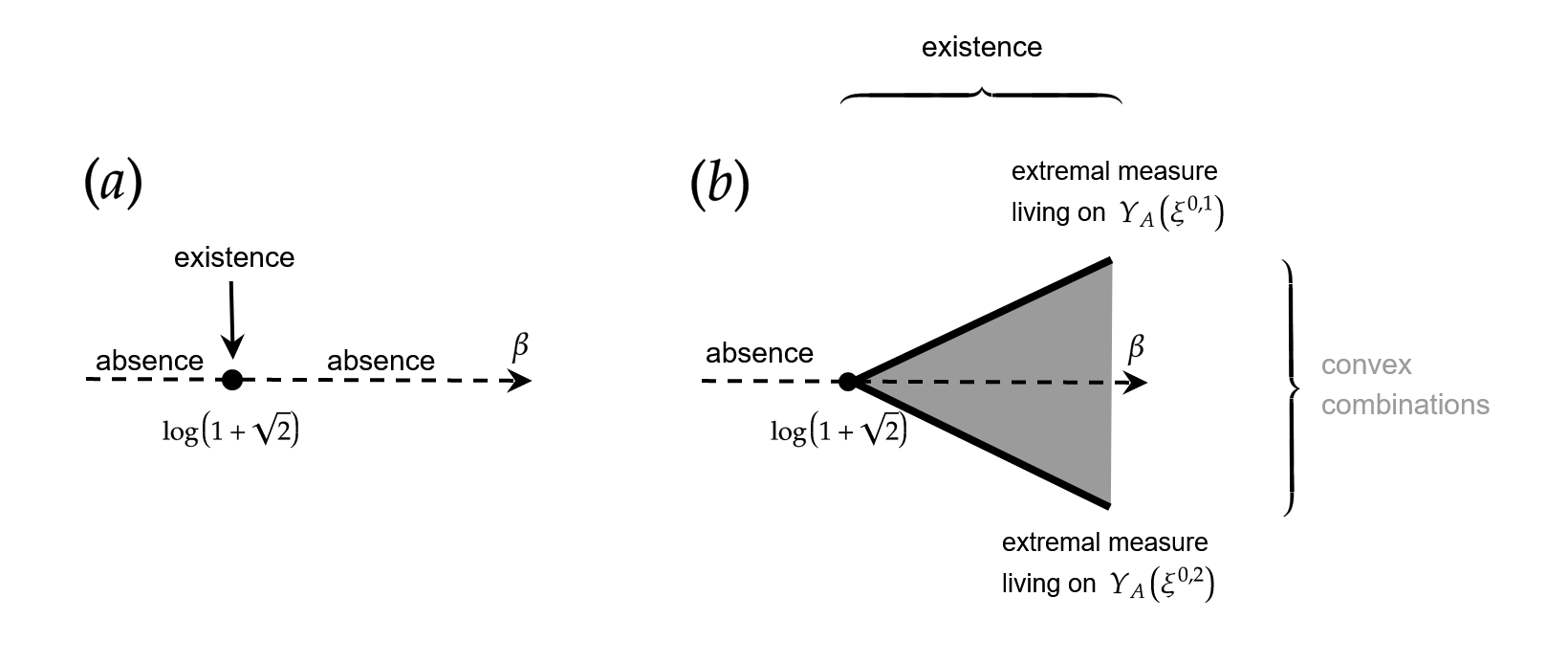}
 
 \caption{The phase transitions on different thermodynamic formalisms for conformal probabilities in the pair renewal shift space with potential $F\equiv 1$. The picture $(a)$ represents the standard formalism on $\Sigma_A$, where we have a unique conformal probability for a unique possible inverse of temperature, namely $\beta = \log(1+ \sqrt{2})$ (black dot). The picture $(b)$ represents the generalized formalism on $X_A$ and, unlike in $(a)$, we are able to detect not only the previous standard conformal probability, but also, for $\beta > \log(1+ \sqrt{2})$, two extremal conformal probabilities (black lines), each one living on a different $Y_A$-family. In this case, by convex combinations (gray triangle) we have infinitely many conformal measures for each $\beta > \log(1+ \sqrt{2})$.  \label{fig:phase_transition_pair_renewal_potential_1}}
\end{figure}

Still for the case where $F\equiv 1$, we shall connect the $e^\beta$-conformal measures living on $Y_A$ ($\beta > \log(1+\sqrt{2}$) to the unique one living on $\Sigma_A$ ($\beta_c = \log(1+\sqrt{2})$). 

Consider the set $\Phi_\beta$ of $e^\beta$-conformal probabilities living on $Y_A$ for the pair renewal shift and potential $F \equiv 1$ as above, which is the set of convex combination of the extremal probabilities $\mu_{\beta,1}$ and $\mu_{\beta,2}$. Also we consider the Hausdorff distance between $\Phi_\beta$ and $\mu_{\beta_c}$ on the set of probability measures on $X_A$, which is given by
\begin{equation*}
    d_H(\mu_{\beta_c},\Phi_\beta)=\max \left\{\,\inf _{y\in \Phi_\beta}d(\mu_{\beta_c},y),\,\sup _{y\in \Phi_\beta}d(\mu_{\beta_c},y)\,\right\}= \sup _{y\in \Phi_\beta}d(\mu_{\beta_c},y),
\end{equation*}
where $d$ is a metric compatible with the weak$^*$-topology.

\begin{theorem}\label{thm:pair_renewal_colapsing} $d_H(\mu_{\beta_c},\Phi_\beta) \to 0$ for $\beta \to \beta_c$ by above.
\end{theorem}

\begin{proof} Since for $k \in \{0,1\}$ we have
\begin{equation*}
    \mu_{\beta,k}(\{\xi^{0,k}\}) = \mu_\beta(\{\xi^{0,1}\})=\frac{4}{\mathcal{L}(\beta)}\to 0
\end{equation*}
when $\beta \to \beta_c$ by above, then for every finite set $F \subseteq Y_A$ it holds that $\mu_{\beta,k}(F) \to 0$.

Also, it follows from remark \ref{remark:probability_eigenmeasures_Y_A_pair_renewal_on_cylinders} that
\begin{align*}
    \lim_{\beta \to \beta^c} \mu_{\beta,1}(C_2) &=  \lim_{\beta \to \beta^c} \left(1 + \frac{4 e^\beta}{\mathfrak{L}(\beta)}\right) e^{-\beta}\frac{1-e^{-2\beta}}{2 \sinh(\beta) - 1} = e^{-\beta_c}\frac{1-e^{-2\beta_c}}{2 \sinh(\beta_c) - 1} = \mu_{\beta_c}([2]),\\
    \lim_{\beta \to \beta^c} \mu_{\beta,2}(C_2) &=  \lim_{\beta \to \beta^c} \left(1 + \frac{4 e^\beta}{4 + e^{-\beta} \mathfrak{L}(\beta)}\right)e^{-\beta}\frac{1-e^{-2\beta}}{2 \sinh(\beta) - 1} = \mu_{\beta_c}([2]).
\end{align*}
That is, $\lim_{\beta \to \beta^c} \mu_{\beta,k}(C_2) = \mu_{\beta_c}([2])$, $k \in\{1,2\}$. Consequently, by equation \eqref{limit_C_2}, we have
\begin{equation*}
    \lim_{\beta \to \beta^c}\mu_{\beta,k}(C_n) = \mu_{\beta_c}([n]), \quad n \geq 2, 
\end{equation*}
and for $C_1$ the limit above is straightforward.

By \eqref{eq:conformality_Pair_renewal} the limit above also holds for every generalized cylinder set on positive words i.e.,
\begin{equation*}
    \lim_{\beta \to \beta^c}\mu_{\beta,k}(C_{\alpha}) = \mu_{\beta_c}([\alpha]), \quad \alpha \text{ positive admissible word}. 
\end{equation*}
We recall that every basic set generated by the subbasis of cylinders and their complements of the generalized pair renewal shift $X_A$ is a countable union of disjoint positive generalized cylinders jointly with a finite subset of $Y_A$ as in \eqref{eq:basis_pair_renewal_is_finite_set_union_countable_cylinders}. Also, the basis of finite intersections of these cylinders and complements of cylinders is, by definition, closed under finite intersections. So, by Theorem 8.2.17 of \cite{Bogachev2007}, both extremal measures $\mu_{\beta,1}$ and $\mu_{\beta,2}$ converges to $\mu_{\beta_c}$ on the weak$^*$ topology. 

We claim that $\Phi_\beta$ is closed. In fact, let $\{\eta_n\}_{\mathbb{N}}$ be a sequence in $\Phi_\beta$ converging weakly$^*$ to a Borel probability $\nu$. Since $U$ is open, $C_c(U)$ can be seen as a subspace of $C_c(X_A)$, so by Theorem \ref{thm:equivalences_conformal_measures_generalized_Markov_shift} we have
\begin{equation*}
    \int_U fd\eta=\lim \int_U f d\eta_n=\lim \int_{X_A} L_{-\beta F}(f)d\eta_n=\int_{X_A}L_{-\beta F}(f) d\eta,
\end{equation*}
and hence $\eta \in \Phi_\beta$, and the claim is proved. Since $X_A$ is a compact metric space, we have that the space of probability Borel measures is compact on the weak$^*$ topology and therefore $\Phi_\beta$ is compact, since it is closed. On the other hand, the function $\nu \mapsto d(\mu_{\beta_c},\nu)$ is continuous, and by compactness of $\Phi_\beta$, there exists $\nu_\beta\in  \Phi_\beta $ satisfying
\begin{equation*}
    \sup _{\nu \in \Phi_\beta}d(\mu_{\beta_c},\nu)=d(\mu_{\beta_c},\nu_\beta ).
\end{equation*}
Since $\nu_\beta \in \Phi_\beta$, then $\nu_\beta=\lambda \mu_{\beta,1}+ (1-\lambda) \mu_{\beta,2} $, for some $\lambda \in [0,1]$. 

We prove now that $\lim_{\beta\to \beta_c}d(\mu_{\beta_c},\eta_\beta)=0$. Indeed, given $\epsilon>0$ there exists $\delta$, s.t. $\beta-\beta_c<\delta$ implies
\begin{align*}
 \left| \int f d\mu_{\beta,1} -\int f d\mu_{\beta_c}\right|<\epsilon \quad \text{and} \quad  \left| \int f d\mu_{\beta,2} -\int f d\mu_{\beta_c}\right|<\epsilon 
\end{align*}
for every $f\in C(X_A)$. Then
\begin{equation*}
    \left|\int f d \nu_\beta -\int f d\mu_{\beta_c}\right|<\epsilon,
\end{equation*}
and so $\lim_{\beta\to \beta_c}d(\mu_{\beta_c},\nu_\beta)=0$, which implies $\lim_{\beta\to \beta_c}d_H(\mu_{\beta_c},\Phi_\beta)=0$. 
\end{proof}

\subsection{Prime renewal shift: infinitely many extremal conformal measures}\\

Now we study the conformal measures on the prime renewal shift. We will see that the existence of infinitely many empty-stem configurations can lead to the existence of infinitely many extremal conformal measures living on $Y_A$. First, we characterize the extremal conformal measures living on $Y_A$ as in the standard renewal shift. Take $p \in \{1\}\cup\{q \in \mathbb{N}: q \text{ is a prime number}\}$, and define
\begin{equation}\label{eq:coefficients_Y_A_confomal_prime_renewal_shift}
    c_{\omega,p}:= \mu(\{\omega^p\}), \quad \omega^p \in \mathfrak{W}_p
\end{equation}
where $\xi^{0,p}$ is the empty-stem configuration such that and $R_{\xi^{0,p}}(e) = \{1,p\}$, as it is in subsection \ref{subsec:Prime_Renewal_shift}. Consider a potential $F:U \to \mathbb{R}$ and define $D(\omega^p) = e^{F(\omega^p)}$.

By Theorem \ref{theorem.coeficientes}, identity \eqref{eq:c_omega_probability} and Theorem \ref{thm:extremal_conformal_measures_general}, each extremal conformal probability living on $Y_A$ lives in some $Y_A$-family $Y_A(\xi^{0,p})$, and it satisfies
\begin{align}\label{eq:system_conformal_prime_renewal}
    \begin{cases}
        c_{\omega,p} e^{ F_{|\omega|}(\omega^p)} = c_{e^p}, \quad \omega^p \in \mathfrak{W}_p \setminus\{e\};\\
        \sum_{\omega \in \mathfrak{W}_p}c_{\omega,p} = 1.    
    \end{cases}
\end{align}

\begin{lemma}\label{prop:control_number_configurations} For the prime renewal shift we have that
\begin{equation*}
    2^{n-1}\leq |\{\xi \in Y_A(\xi^{0,p}): |\kappa(\xi)|=n\}|\leq 3^n, \quad \forall n \in \mathbb{N},
\end{equation*}
for every $p \in \{1\}\cup\{q \in \mathbb{N}: q \text{ is a prime number}\}$.
\end{lemma}

\begin{proof} See Proposition 4.76 of \cite{Raszeja2020}.
\end{proof}

\begin{corollary} \label{prop:entropy_prime_renewal_bound} For the prime renewal shift we have $h_G \in [\log 2, \log 3]$.
\end{corollary}

\begin{corollary}
Let $F\equiv 1$. Then, we have the following results.
\begin{itemize}
    \item[$(i)$] For $\beta> h_G$, there exists infinitely countable many extremal $e^{\beta}$-conformal probability measures that vanishes on $\Sigma_A$, each one of them living on a unique $Y_A$-family.
    \item[$(ii)$] For $\beta < h_G$ there are not $e^\beta$- conformal probability measure living on $Y_A$.
\end{itemize}
\end{corollary}
\begin{proof}
It is a straightforward consequence of Corollary \ref{cor:phase_transition_entropy_gauge}. 
\end{proof}

\subsection{Infinite entropy and uncountably many extremal conformal measures}\label{uncountable_extremal}\\

Many of the previous examples in this section have finite entropy. Still, the generalized formalism can be used on countable Markov shifts with infinite Gurevich entropy, including the full shift and many others. In this subsection, we use an idea from \cite{JaeKessLamei2014}, where the authors introduced loop spaces. We will see that for any irreducible matrix $A$ such that $Y_A \neq \emptyset$, if $\varphi(x)\geq \log (x_0+1)$, then $e^{\beta \varphi}$-conformal probability measures living on $Y_A$ always do exist for $\beta > \beta_c$, with $\beta_c$ being the positive solution for $\zeta(\beta) = 2$, where $\zeta$ is the Riemann zeta function. In particular, since the full shift has the largest set of stems comparing with $Y_A$-families associated with any irreducible matrix $A$, we can consider the matrix of the Example \ref{exa:uncountable_Y_A}, where we have uncountably many $Y_A$-families. This example will admit uncountably many extremal $e^{\beta \varphi}$-conformal measures when $\beta$ is large enough.

We denote by $\Sigma_{\text{full}}$ and $X_{\text{full}}$ the respective CMS and GCMS of the full shift. In $X_{\text{full}}$, there exists a unique $Y_A$-family, denoted by $Y_{\text{full}}$, associated to the empty-stem configuration $\xi^0$, where $R_{\xi^0}(e) = \mathbb{N}$. We consider the renewal shift $\Sigma_{\text{ren}}$ and its GCMS $X_{\text{ren}}$. Moreover, on this section, $\Sigma_{\text{ren},1}^f$ will denote the non-empty finite admissible words of the renewal shift that ends in $1$ and the symbol $1$ does not appear in any other position of the word:
$$\Sigma_{\text{ren},1}^f = \{1, 21, 321, 4321, 54321, \dots\}.$$
We define the map $\psi:\Sigma_{\text{full}} \to \Sigma_{\text{ren}}$, given by $\psi(x) = \psi(x_0x_1x_3\cdots) = \Theta(x_0) \Theta(x_1) \Theta(x_2) \cdots$, where $\Theta: \mathbb{N} \to \Sigma_{\text{ren},1}^f$ is given by
\begin{equation*}
    \Theta(x_p) = \begin{cases}
                    1, \quad \text{if } x_p = 1,\\
                    x_p (x_p -1) (x_p -2) \cdots 1, \quad \text{otherwise}.
                  \end{cases}
\end{equation*}
The function $\Theta$ is a bijection, whose inverse map is given by $\Theta^{-1}(n,n-1,\dots, 1) = n$.

Observe that every element $x \in \Sigma_{\text{ren}}$ can be written in a unique way as $x = B_0 B_1 B_2 \cdots$, where $B_k \in \Sigma_{\text{ren},1}^f$ for every $k$. Moreover, every infinite sequence of elements of $\Sigma_{\text{ren},1}^f$ is an element of $\Sigma_{\text{ren}}$. Consequently, $\psi$ is a bijection, and $\psi^{-1}(\omega) = \Theta^{-1}(B_0) \Theta^{-1}(B_1) \Theta^{-1}(B_2) \cdots,$
when $\omega = B_0 B_1 B_2 \cdots$, $B_k \in \Sigma_{\text{ren},1}^f$. The functions $\psi$ and $\psi^{-1}$ can be extended by the same rules for $X_{\text{full}}$ and $X_{\text{ren}}$ respectively, and $\psi$ is still a bijection, whose inverse is $\psi^{-1}$, including the empty words.


\begin{lemma}\label{lemma:algebrically_conjugated_potentials_renewal_and_full} Consider a potential on the GCMS of the fullshift $F: U_{\text{full}} \to \mathbb{R}$ satisfying
\begin{equation*}
    F(x) = G_{x_0}(\psi(x)) :=\sum_{k=0}^{x_0-1}G\circ\sigma^{k}(\psi(x)),
\end{equation*}
where $G:U_{\text{ren}} \to \mathbb{R}$ is a potential on the GCMS of the renewal shift, where without ambiguity we associate their configurations of finite stem to their respective stems. Then,
\begin{equation*}
    F_n(x) = G_{\sum_{i=0}^{n-1} x_i}(\psi(x)) := \sum_{k=0}^{(\sum_{i=0}^{n-1} x_i) - 1}G\circ\sigma^{k}(\psi(x)),
\end{equation*}
for every $n \leq |x|$ if $x$ is finite, and for every $n \in \mathbb{N}$ if $x$ is infinite.
\end{lemma}

\begin{proof} We prove the lemma for $x \in \Sigma_{\text{full}}$ and the proof for the finite stem configurations is similar. We write $\psi(x) = B_0 B_1 B_2 \cdots$, $B_j \in \Sigma_{\text{ren},1}^f$ for every $j$. Observe that $|B_j| = x_j$, and then we write $B_j = b_{j,0} b_{j,1} \cdots b_{j,x_j - 1}$, where $b_{j,p} \in \mathbb{N}$. Then,
\begin{align*}
    F_n(x) &= \sum_{k=0}^{n-1} F(\sigma^k(x)) = \sum_{k=0}^{n-1} G_{(\sigma^k(x))_0}(\psi(\sigma^k(x))) = \sum_{k=0}^{n-1} G_{x_k}(B_k B_{k+1} \cdots).
\end{align*}
Now, note that
\begin{equation*}
    G_{x_k}(B_k B_{k+1} \cdots) = \sum_{j = 0}^{x_k-1} G\circ \sigma^j(B_k B_{k+1} \cdots) = \sum_{j = 0}^{x_k-1} G(b_{k,j}b_{k,j+1} \cdots b_{k,x_k-1} B_{k+1} \cdots),
\end{equation*}
and therefore
\begin{align*}
    F_n(x) &= \sum_{k=0}^{n-1} G_{x_k}(B_k B_{k+1} \cdots) = \sum_{k=0}^{n-1}\sum_{j = 0}^{x_k-1} G(b_{k,j}b_{k,j+1} \cdots b_{k,x_k-1} B_{k+1} \cdots)= G_{\sum_{i=0}^{n-1} x_i}(\psi(x)). 
\end{align*} 
\end{proof}


\begin{theorem}\label{thm:conformal_measures_full_shift} Consider a potential on the GCMS of the fullshift $F: U_{\text{full}} \to \mathbb{R}$ satisfying
\begin{equation*}
    F(x) = \sum_{k=0}^{x_0-1}G\circ\sigma^{k}(\psi(x)),
\end{equation*}
where $G:U_{\text{ren}} \to \mathbb{R}$ is a potential on the GCMS of the renewal shift. Then,
\begin{equation*}
    \sum_{x \in Y_{\text{full}}\setminus\{e\}} e^{-\beta F_{|x|}(x)} = \sum_{\omega \in Y_{\text{ren}}\setminus\{e\}} e^{-\beta G_{|\omega|}(\omega)}.
\end{equation*}
In particular, there exists a $e^{\beta F}$-conformal probability measure living on $Y_{\text{full}}$ if and only if there exists a $e^{\beta G}$-conformal probability measure living on $Y_\text{ren}$.
\end{theorem}

\begin{proof} By Lemma \ref{lemma:algebrically_conjugated_potentials_renewal_and_full} we have that
\begin{align*}
    \sum_{x \in Y_{\text{full}}\setminus\{e\}} e^{-\beta F_{|x|}(x)} &= \sum_{x \in Y_{\text{full}}\setminus\{e\}} \exp\left(-\beta G_{\sum_{i=0}^{|x|-1} x_i}(\psi(x))\right) = \sum_{x \in Y_{\text{full}}\setminus\{e\}} \exp\left(-\beta G_{|\psi(x)|}(\psi(x))\right) \\
    &= \sum_{\omega \in Y_\text{ren}\setminus\{e\}} \exp\left(-\beta G_{|\omega|}(\omega)\right),
\end{align*}
where we used that $\sum_{i=0}^{|x|-1} x_i = |\psi(x)|$, and that $\psi$ is a bijection. The rest of the statement of the theorem is straightforward from the characterization of the conformal probability measures living on $Y_A$-families. 
\end{proof}

Next result presents a class of potentials on the full shift that are encoded by potentials on the renewal shift as in Theorem \ref{thm:conformal_measures_full_shift}.

\begin{proposition} \label{prop:class_potentials_on_full_shift_encoded_on_the_renewal_shift} Consider a potential $F:U_{\text{full}} \to \mathbb{R}$ in the form $F(x) = f(x_0 + 1) - f(1)$, where $x_0$ is the first coordinate of $x$, and $f:\mathbb{N} \to \mathbb{R}$ a function. Then,
\begin{equation}\label{eq:encoding_full_renewal}
    F(x) = \sum_{k=0}^{x_0-1}G\circ\sigma^{k}(\psi(x)),
\end{equation}
for the potential $G: U_{\text{ren}} \to \mathbb{R}$ given by $G(x) = f(x_0+1) - f(x_0)$.
\end{proposition}

\begin{proof} For every $x \in U_{\text{full}}$ and $0 \leq k \leq x_0$, we have that $\left(\sigma^k(\psi(x))\right)_0 = x_0 - k$. Then,
\begin{align*}
    \sum_{k = 0}^{x_0 - 1} G \circ \sigma^k(\psi(x)) &= \sum_{k = 0}^{x_0 - 1} \left[f(x_0 - k +1) - f(x_0 -k)\right] = f(x_0 + 1) - f(1) = F(x).
\end{align*}
\end{proof}

\begin{example}\label{exa:phase_transition_full_shift} By choosing $G \equiv 1$ on the renewal shift, we obtain the potential $F(x) = x_0$ on the full shift. Since the Gurevich entropy for the renewal shift is $\log 2$, by Corollary \ref{cor:phase_transition_entropy_gauge} and Theorem \ref{thm:conformal_measures_full_shift}, there exists a (unique) $e^{\beta F}$-conformal probability measure living on $Y_{\text{full}}$ for $\beta > \log 2$, and for $\beta \leq \log 2$ there are no $e^{\beta F}$-conformal probability measures that vanishes on $\Sigma_{\text{full}}$. At the value $\beta = \log 2$ the existence is due to the fact that these conformal measures are KMS$_{\beta}$ states.
\end{example}

\begin{figure}[H]
    \centering
\scalebox{1}{
\tikzset{every picture/.style={line width=0.75pt}} 

\begin{tikzpicture}[x=0.75pt,y=0.75pt,yscale=-1,xscale=1]

\draw [line width=1.5]  [dash pattern={on 5.63pt off 4.5pt}]  (14,99) -- (243.37,99.92) ;
\draw [shift={(247.37,99.93)}, rotate = 180.23] [fill={rgb, 255:red, 0; green, 0; blue, 0 }  ][line width=0.08]  [draw opacity=0] (13.4,-6.43) -- (0,0) -- (13.4,6.44) -- (8.9,0) -- cycle    ;
\draw  [color={rgb, 255:red, 0; green, 0; blue, 0 }  ,draw opacity=1 ][fill={rgb, 255:red, 0; green, 0; blue, 0 }  ,fill opacity=1 ] (86,99.5) .. controls (86,97.01) and (88.01,95) .. (90.5,95) .. controls (92.99,95) and (95,97.01) .. (95,99.5) .. controls (95,101.99) and (92.99,104) .. (90.5,104) .. controls (88.01,104) and (86,101.99) .. (86,99.5) -- cycle ;
\draw [color={rgb, 255:red, 0; green, 0; blue, 0 }  ,draw opacity=1 ][fill={rgb, 255:red, 0; green, 0; blue, 0 }  ,fill opacity=1 ][line width=2.25]    (90.5,99.5) -- (237.37,99.93) ;

\draw (72,114.4) node [anchor=north west][inner sep=0.75pt]    {$\log 2$};
\draw (236,71.4) node [anchor=north west][inner sep=0.75pt]    {$\beta $};
\draw (91,44.4) node [anchor=north west][inner sep=0.75pt]    {$\overbrace{\ \ \ \ \ \ \ \ \ \ \ \ \ \ \ \ \ \ \ \ \ \ \ \ \ \ \ \ \ \ \ \ \ \ \ }$};
\draw (134,14) node [anchor=north west][inner sep=0.75pt]   [align=left] {existence};

\end{tikzpicture}
}
\caption{\label{fig:phase_transition_full_shift}}
\end{figure}

\begin{theorem}
Let $A$ be an irreducible matrix such that $Y_A\neq \emptyset$ and $\varphi:U \to \mathbb{R}$ such that $\varphi(x) \geq \log (x_0+1)$, for every $x \in U$, where $x_0$ is the first coordinate of $x$. Then, there exist $e^{\beta \varphi}$-conformal probability measures living on $Y_A$ when $\beta$ is large enough. In particular, one $e^{\beta \varphi}$-conformal probability measure for each $Y_A$-family.
\end{theorem}
\begin{proof} By taking $f(x) = \log x_0$ in Proposition \ref{prop:class_potentials_on_full_shift_encoded_on_the_renewal_shift}, we generate the potential $F(x) = \log(x_0 + 1)$ on $U_{\text{full}}$, and $G(x) = \log(x_0 + 1) - \log(x_0)$ on $U_{\text{ren}}$ as in the Theorem \ref{thm:conformal_measures_full_shift}. For each $\beta>0$, there exists a $e^{\beta F}$-conformal measure living in $Y_{\text{full}}$ if, and only if,  there exists a $e^{\beta G}$-conformal measure living on  $Y_\text{ren}$. By Theorem \ref{thm:equivalences_conformal_measures_generalized_Markov_shift}, $e^{\beta G}$-conformal measures on the renewal shift are precisely the fixed point measures of $L_{-\beta G}$. By Theorem \ref{thm:complete_characterization_eigenmeasures_renewal}, for $\beta_c$ being the positive solution for $\zeta(\beta) = 2$, where $\zeta$ is the Riemann zeta function, we have that the eigenvalue is one for $\beta \geq \beta_c$. Hence, by the same theorem, there exists $e^{\beta G}$-conformal measures living on $Y_\text{ren}$ for $\beta > \beta_c$, and we have the absence of these measures for $\beta = \beta_c$. Since
\begin{equation*}
    \infty = \sum_{x \in Y_\text{ren}\setminus\{e\}} e^{-\beta_c G_{|x|}(x)} \leq  \sum_{x \in Y_\text{ren}\setminus\{e\}} e^{-\beta G_{|x|}(x)}, \text{ $\beta < \beta_c$},
\end{equation*}
we also have the absence of $e^{\beta G}$-conformal measures for $\beta < \beta_c$. For the potential $F(x)=\log(x_0+1)$, the interval on $\beta$ where there are $e^{\beta F}$-conformal measures is $(\beta_c,\infty)$. Now, the unique $Y_A$-family in $X_{\text{full}}$ is the largest $Y_A$-family possible among all transition matrices since its stems are the set of all finite words, including the empty word. Consequently, for every transitive GCMS with $Y_A\neq \emptyset$, and for any potential $\varphi(x) \geq \log(x_0+1)$, we have
\begin{equation*}
    \sum_{x \in Y_{A}(\xi^0)\setminus\{e\}} e^{-\beta \varphi_{|x|}(x)} \leq \sum_{x \in Y_{\text{full}}\setminus\{e\}} e^{-\beta F_{|x|}(x)},
\end{equation*}
where $A$ is a transition matrix and $\xi^0$ is any empty-stem configuration of $X_{A}$. Then, for $\beta > \beta_c$, we have an extremal $e^{\beta \varphi}$-conformal probability measure living on each $Y_A$-family, for any arbitrary irreducible transition matrix.
\end{proof}

\begin{corollary}[uncountably many extremal conformal measures]
If $Y_A$ admits uncountably many different $Y_A$-families, then we have the existence of uncountably many extremal $e^{\beta \varphi}$-conformal probability measures living on $Y_A$ when $\varphi(x) \geq \log (x_0+1)$ and $\beta$ is large enough. We have this situation in Example \ref{exa:uncountable_Y_A}.
\end{corollary}

\subsection{Infinite emitters, $Y_A$-families and extremal conformal measures}\\

It is well known that the algebras $\mathcal{O}_A$ and graph algebras are not isomorphic in many cases, see, for instance, \cite{KatsuraSimsTomforde2009, Tomforde2002}. The most important cases for us are matrices $A$ which are not row finite. For this class of matrices, none of these notions of algebras include each other. In the previous examples of this section, as the renewal shift, pair renewal shift, and prime renewal shift, there exists a bijection between infinite emitters and the extremal $e^{\beta}$-conformal probability measures. We know that $e^{\beta}$-conformal probability measures are the KMS$_{\beta}$ states for the gauge potential. Recently, K. Thomsen \cite{Thomsen2017} proved the existence of KMS$_{\beta}$ weights for $\beta > h_{G}$ in the case of graph algebras, these weights live on a set of finite paths, in some sense something analogous in what happened in our examples with the conformal measures living on $Y_A$. Moreover, as in our previous examples, he proved that there is a bijection between infinite emitters and extremal KMS$_\beta$ weights.

These results may suggest that the existence of a bijection between infinite emitters and the extremal $e^{\beta}$-conformal probability measures is also a general fact for the algebras $\mathcal{O}_A$. However, that is not the case. The right object to codify the extremal conformal measures for these algebras are the $Y_A$-families, defined in the subsection \ref{subsec:Y_A_families}. A good example to see this different behaviour is to consider a Cuntz-Krieger algebra with infinite matrix which is not a graph algebra as follows.

Consider the example 4.2 of I. Raeburn and W. Szyma\'nski \cite{RaeSzy2004}, where the transition matrix is given by
\begin{equation*}
    A(i,j) = \begin{cases}
                1, \quad \text{if } i=j;\\
                1, \quad \text{if } i = j + 2;\\
                1, \quad \text{if } i \in \{1,2\} \text{ and } j \geq 3;\\
                0, \quad \text{Otherwise.}
             \end{cases}
\end{equation*}
In this case, $\Sigma_A$ is topologically mixing and $X_A$ is compact, since $S_1^*S_1 + S_2 S_2^* = 1$, and then $\mathcal{O}_A$ is unital. The symbols $1$ and $2$ are infinite emitters, so $Y_A \neq \emptyset$, however there exists only one $Y_A$-family, since the unique limit point of the columns of $A$ is the column
\begin{equation*}
    \begin{pmatrix}
        1 \\ 1 \\ 0 \\ 0 \\ 0 \\ \vdots 
    \end{pmatrix}.
\end{equation*}
Also, $X_A$ is $s$-compact since the column matrix above has a finite quantity of $1$'s. Moreover, this example has finite Gurevich entropy. By Corollary \ref{cor:phase_transition_entropy_gauge}, for the gauge potential, there exists a unique extremal $e^{\beta}$-conformal probability measure living on $Y_A$ for $\beta > h_G$, and there are not $e^{\beta}$-conformal probability measures living on $Y_A$ for $\beta \leq \beta_c$.

\section{Eigenmeasures} \label{sec:eigenmeasures}

After the study of the conformal measures, the next natural problem concerns eigenmeasures associated with more general eigenvalues $\lambda >0$, $L_{\beta F}^*\mu = \lambda\mu$. Recently, in \cite{Shwartz2019}, O. Shwartz studied the problem of finding eigenmeasures on CMS $\Sigma_A$ for transient potentials (see the definition \ref{modes-recu} and the Generalized RPF Theorem \ref{GRPF}), he also uses a compactification, namely the Martin boundary. However, in addition to the transience of the potential, he assumes that $\Sigma_A$ is locally compact.

By the construction of $X_A$, we only assume transitivity, and as we will see, we can deal with potentials with different recurrence modes for different values of $\beta$. In addition, since we know that for compact and $s$-compact GCMS the Gurevich pressure is a natural definition, for potentials which restricted to $\Sigma_A$ are recurrent it is important to study the problem of the eigenmeasure for the eigenvalue $\lambda_{\beta}=e^{P(\beta F)}$ since it is the only possible eigenvalue. More precisely, we present in the next section one example where the compactness of $X_A$ guarantees the existence of the eigenmeasure associated with the value $\lambda_{\beta}=e^{P(\beta F)}$ for every $\beta>0$, such that for $\beta$ small enough the potential is positive recurrent, and for large $\beta$ it is transient.

We found conditions for the existence of (new) conformal measures in the previous sections. One should ask if it is always possible to grant the existence of an eigenmeasure associated with the value $\lambda_{\beta}=e^{P(\beta F)}$.  

Given an irreducible matrix $A$, the following condition assures the existence of an eigenmeasure (living on $\Sigma_A$) associated with the eigenvalue $e^{P_{G}(F)}$ for a potential $F$ with summable variations:
\begin{equation}
    \displaystyle\sum_{n \in \mathbb{N}} e^{\sup F\vert_{C_n}} < \infty.
\end{equation}
See R. D. Mauldin and M. Urba\'nski in \cite{MaulUr2001} for the existence of the eigenmeasure, and R. Freire and V. Vargas for the existence of the eigenfunction \cite{FreireVargas2018}.

\begin{example} As an example of potential with such regularity one could choose $F(x) = -x_0$.
\end{example}

Since $\Sigma_A \subseteq X_A$, and in many cases $X_A$ is compact, it is very natural to expect to find more eigenmeasures for the Ruelle transformation on $X_A$ than in the standard symbolic space $\Sigma_A$. Our result for existence is a direct consequence of Theorem 3.6 on \cite{DenYu2015}, the original argument is due to S. J. Patterson \cite{Patterson1976}. We repeat the proof below.

\begin{theorem}[Denker-Yuri]\label{thm:existence_eigenmeasures_Denker_Yuri}
Let $X_A$ be compact, suppose there exists a $x\in X_A$ such that $P(F,x)$ is finite and $F:U\to \mathbb{R}$ continuous, then there exists an eigenmeasure $m$ for $L_F$ with eigenvalue $e^{P(F,x)}$.
\end{theorem}
\begin{proof}
We shall use Lemma 3.1 of \cite{DenUr1991} to construct our measure. Given a sequence of real numbers $(a_n)_{n\in \mathbb{N}}$, there exists a sequence of positive numbers $(b_n)_{n\in \mathbb{N}}$ such that 
\begin{equation}\label{eq:Denker_Urbanski_lemma}
    \sum_{n\in \mathbb{N}}b_n\exp{(a_n-ns)}=\left\{\begin{array}{lr}
    <\infty,     & s>c; \\
     \infty,    & s\leq c;
    \end{array}\right.
\end{equation}
and $\lim_n b_n/b_{n+1}=1$, where $c:=\limsup_n a_n/n$. As the pressure $P(F,x)$ is finite, $Z_n(F,x)$ is finite for $n$ large enough. Then, without loss of generality, we assume that $Z_n(F,x)$ is finite for every $n \in \mathbb{N}$. By taking $a_n=\log
Z_n(F,x)$, we have $c=P(F,x)$, and we may define
\begin{equation*}
    M(p,x)=\sum_{n\in\mathbb{N}}b_ne^{-np}Z_n(F,x), \quad p > c,
\end{equation*}
and the measures
\begin{equation}
    m(p,x)=M(p,x)^{-1}\sum_{n\in \mathbb{N}}b_n e^{-np}\sum_{\sigma^n(y)=x}e^{F_n(y)}\delta_y,
\end{equation}
where $\delta_y$ is the Dirac measure on $y$. From \eqref{eq:Denker_Urbanski_lemma} we claim that ${\lim_{p\downarrow P(F,x)}M(p,x)=\infty}$. Indeed, for every sequence $(p_k)_{k \in \mathbb{N}}$ such that $p_k\downarrow c$, let $S(N,k)=\sum_{n=1}^{N}b_n\exp(-np_k)Z_n(F,x)$. Such sequence of two variables is increasing on both of them, by Monotone Convergence Theorem, we have $\lim_N\lim_kS(N,k)=\lim_k\lim_NS(N,k)$, and the claim is proved because $\lim_N\lim_kS(N,k) = \infty$. Now, fix a sequence $p_k\downarrow P(F,x)$ such that $m(p_k,x)\to m$ weakly$^*$. We have
\begin{align*}
    \int L_Fg(y) m(p_k,x)(dy)&= M(p_k,x)^{-1}\sum_{n\in \mathbb{N}} b_n e^{-np_k}\sum_{\sigma^n(y)=x}e^{F_n(y)}L_Fg(y)\\
    &=M(p_k,x)^{-1}e^{p_k}\sum_{n\in \mathbb{N}} b_n e^{-(n+1)p_k}\sum_{\sigma^{n+1}(w)=x}e^{F_{n+1}(w)}g(w),
\end{align*}
for every $g\in C_c(U)$. Define
\begin{equation}\label{eq:notation_Ruelle_Operator}
    L_F^{n+1} g (x) := \sum_{\sigma^{n+1}(w)=x}e^{F_{n+1}(w)}g(w).
\end{equation}
By using that $b_n = b_n+b_{n+1}-b_{n+1}$, one gets
\begin{align*}
    \int L_Fg(y) m(p_k,x)(dy)&= e^{p_k}M(p_k,x)^{-1}\sum_{n\in \mathbb{N}} (b_n/b_{n+1}-1)b_{n+1}e^{-(n+1)p_k}L^{n+1}_Fg(x) \\
    &+e^{p_k}\int g(y) m(p_k,x)(dy)-b_1 M(p_k,x)^{-1}L_Fg(x).
\end{align*}
Now, by taking $p_k\downarrow P(F,x)$, the RHS of the above expression becomes $e^{P(F,x)}\int g(y)dm(y)$. In fact, note that
\begin{align*}
    b_1M(p_k,x)^{-1}L_Fg(x) &\to 0,\\
    e^{p_k}\int g(y) m(p_k,x)(dy) &\to e^{P(F,x)}\int g(y)dm(y).
\end{align*}
We claim that
\begin{equation*}
    e^{p_k}M(p_k,x)^{-1}\sum_{n\in \mathbb{N}} (b_n/b_{n+1}-1)b_{n+1}e^{-(n+1)p_k}L^{n+1}_Fg(x) \to 0.
\end{equation*}
Indeed, the term above has the upper bound
\begin{equation}\label{eq:upper_bound_Denker_Yuri_proof}
e^{p_k}M(p_k,x)^{-1}\|g\|_{\infty}\sum_{n\in \mathbb{N}} |b_n/b_{n+1}-1|b_{n+1}e^{-(n+1)p_k}Z_{n+1}(F,x).
\end{equation}
Now, for every $\epsilon>0$, there exists $N\in\mathbb{N}$ s.t. $|b_n/b_{n+1}-1|<\epsilon$ for every $n>N$, then \eqref{eq:upper_bound_Denker_Yuri_proof} is bounded above by
\[e^{p_k}M(p_k,x)^{-1}\|g\|_{\infty}\left(\sum_{n=1}^{N}|b_n/b_{n+1}-1|b_{n+1}e^{-(n+1)p_k}Z_{n+1}(F,x)+\epsilon \sum_{n=N+1}^{\infty}b_{n+1}e^{-(n+1)p_k}Z_{n+1}(F,x)\right).\]
It is straightforward that the first term in the last expression vanishes when $k$ goes to infinity, since it is a finite sum. The remaining term in same expression is less or equal to $\epsilon$ because
\begin{equation*}
    \sum_{n=N+1}^{\infty}b_{n+1}e^{-(n+1)p_k}Z_{n+1}(F,x) \leq M(p_k,x).
\end{equation*}
Since $\epsilon$ is arbitrary the expression goes to zero as $p_k\downarrow P(F,x)$, proving the claim.
\end{proof}

\begin{remark} We emphasize to the reader that \eqref{eq:notation_Ruelle_Operator} was used just as a notation. We did not define it as a transformation on $C_c(U)$. 
\end{remark}

\begin{remark} The continuity of the potential is crucial for the proof of Theorem \ref{thm:existence_eigenmeasures_Denker_Yuri} as we will see now. If $F$ is continuous $L_Fg \in C_c(X_A)$, and $m(p_k,x) \to m$ weakly$^*$, then
\begin{equation*}
    \int L_Fg(y) m(p_k,x)(dy) \to \int L_Fg(y) m(dy),
\end{equation*}
for arbitrary $g \in C_c(U)$. The definition of eigenmeasure can be written for measurable potentials as linear transformation between positive measurable functions on $U$ to positive measurable functions on $X_A$ (see \cite{BelBisEndo2020, BisExelFrauRas2018}). For non-continuous potentials, all equivalences in Theorem \ref{thm:equivalences_conformal_measures_generalized_Markov_shift} but the notion of quasi-invariant measure still hold if we change the space where $L_F$ is defined considering the positive and mensurable functions defined on $U$ instead of $C_c(U)$, see \cite{BisExelFrauRas2018}. 

However, the potential has to be continuous for the proof of existence. For instance, consider the Example 2 in \cite{Sarig2001}, we may extend the potential $F$ from the standard renewal shift to $U \subset X_A$ as 
\begin{equation*}
    F(x) = \sum_{m\in \mathbb{N}} \mathbbm{1}_{C_{1m}}(x) \log f_m,
\end{equation*}
where $\log f_m = o(m)$. Such potential is discontinuous at $x = 1$. Also, in this case it holds that $P_G(\beta F\vert_{\Sigma_A}) = P(\beta F,x)$ for every $x \in X_A$. We know that for this example all the eigenmeasures (when they do exist) are finite and live on $\Sigma_A$. For $\beta \geq 1$ we have that $\beta F$ is transient, so $X_A$ is compact but for $\beta \geq 1$ do not exist eigenmeasures for $\beta F$.
\end{remark}

\begin{corollary}
Let $X_A$ be compact and $s$-compact. Suppose that $P_{G}(F)$ is finite, $F$ is continuous with uniform bounded distortion. Then, there exists an eigenmeasure $\mu_{\beta}$ for $L_{\beta F}$ associated with the eigenvalue $e^{P_{G}(\beta F)}$ for each $\beta \geq 1$.
\end{corollary}

\begin{corollary}\label{generalized_renewal}
Let $X_A$ the generalized renewal shift and $F:U \to \mathbb{R}$ bounded above with uniform bounded distortion. Then, for each $\beta >0$, we have that $P_{G}(\beta F)$ is finite, and there exists an eigenmeasure $\mu_{\beta}$ for $L_{\beta F}$ associated with the eigenvalue $e^{P_{G}(\beta F)}$.
\end{corollary}

\begin{remark} It is important to highlight that the thermodynamic formalism drastically changes in the generalized renewal shift with respect to the standard theory of CMS. To see this, compare the corollary above with the Theorem \ref{thm:phase_transition_renewal_shift_Sarig_Gurevich_pressure}.
\end{remark}

\begin{theorem} Let $\Sigma_A$ is topologically mixing, $X_A$ compact and $s$-compact. Let $F:U \to \mathbb{R}$ be a potential satisfying the Walters' condition with $\Var_1 F < \infty$ s.t. $P_G(F)< \infty$. Suppose that  $F\vert_{\Sigma_A}$ is positive recurrent and that the Denker-Yuri's probability eigenmeasure $\mu$ from Theorem \ref{thm:existence_eigenmeasures_Denker_Yuri} lives in $\Sigma_A$. Then, $\mu$ is a Sarig's eigenmeasure from Generalized RPF (Theorem \ref{GRPF}) for the same eigenvalue $\lambda = e^{P_G(F)}$.
\end{theorem}

\begin{proof} The restriction of $\mu$ to $\mathcal{B}_{\Sigma_A}$ is an eigenmeasure associated to the same eigenvalue as in the statement due to Proposition \ref{thm:restriction_eigenmeasures}. Since $F\vert_{\Sigma_A}$ satisfies Walters' condition we have, by positive recurrence and the generalized RPF theorem, the existence of the Sarig's eigenmeasure. The space of the eigenmeasures has dimension $1$, see \cite{Daon2013}, then we necessarily have that $\mu$ is a Sarig's eigenmeasure. 
\end{proof}

\subsection{An example of length-type phase transition}\\

At this subsection, we will always assume that $A$ is the matrix of the renewal shift, with $X_A$ and $
\Sigma_A$ its respective GCMS and CMS. We will compare in a concrete example the thermodynamic formalism 
for a same potential $F:U \to \mathbb{R}$, and  $F\vert_{\Sigma_A}:\Sigma_A \to \mathbb{R}$.

From now on, fix the potential $F:U \to \mathbb{R}$ given by $F(x) := \log(x_0) - \log(x_0+1)$,
where $x_0$ is the first coordinate of the stem of $\xi$.

\begin{lemma}[characterization of the eigenmeasures] \label{lemma:condition_for_eigenmeasures_renewal_potential_log}If a probability measure $\mu_\beta$ is an eigenmeasure for the Ruelle transformation $L_{\beta F}$, with associated eigenvalue $\lambda > 0$, then
\begin{equation}\label{eq:condition_for_eigenmeasures_renewal_potential_log}
    \mu_\beta(C_n) = \frac{1}{\lambda^n}\frac{1}{(n+1)^\beta}
\end{equation}
for every $n \in \mathbb{N}$. In particular,
\begin{equation}\label{eq:eigenmeasure_probability_sum_renewal_potential_log}
    1 = \mu_\beta(\{\xi^0\}) + \sum_{n \in \mathbb{N}} \frac{1}{\lambda^n}\frac{1}{(n+1)^\beta}.
\end{equation} 
\end{lemma}

\begin{proof} By Theorem \ref{thm:equivalences_conformal_measures_generalized_Markov_shift}, $\mu_\beta$ is an eigenmeasure as in the statement of this lemma if and only if it is a $\lambda e^{-\beta F}$-conformal measure in the sense of Denker-Urba\'nski, and then
\begin{equation}\label{eq:condition_conformality_renewal_potential_log_lemma_proof}
    \mu_\beta(\sigma(C_n)) = \int_{C_n} \lambda e^{-\beta F} d\mu_\beta,
\end{equation}
for every $n \in \mathbb{N}$. For $n =1$, we have $\sigma(C_1) = X_A$, and since $\mu_\beta$ is a probability, equation \eqref{eq:condition_conformality_renewal_potential_log_lemma_proof} gives
\begin{equation*}
    1 = \mu_\beta(\sigma(C_1)) = \int_{C_1} \lambda e^{-\beta F} d\mu_\beta = \lambda 2^\beta \mu_\beta(C_1),
\end{equation*}
that is,
\begin{equation}\label{eq:measure_on_C_1_renewal_potential_log_lemma_proof}
    \mu_\beta(C_1) = \frac{1}{\lambda}\frac{1}{2^\beta}.
\end{equation}
Now, for $n \neq 1$, then $\sigma(C_n) = C_{n-1}$ and then \eqref{eq:condition_conformality_renewal_potential_log_lemma_proof} implies
\begin{equation*}
    \mu_\beta(\sigma(C_n)) = \mu_\beta(C_{n-1}) = \int_{C_n} \lambda e^{-\beta F} d\mu_\beta = \lambda \left(\frac{n}{n+1}\right)^{-\beta} \mu_\beta(C_n),
\end{equation*}
i.e.,
\begin{equation}\label{eq:measure_on_C_n_renewal_potential_log_lemma_proof}
    \mu_\beta(C_n) = \frac{1}{\lambda} \left(\frac{n}{n+1}\right)^{\beta} \mu_\beta(C_{n-1}).
\end{equation}
Now we prove the validity of \eqref{eq:condition_for_eigenmeasures_renewal_potential_log}. The result is straightforward for $n = 1$ because of \eqref{eq:measure_on_C_1_renewal_potential_log_lemma_proof}. Now, suppose that \eqref{eq:condition_for_eigenmeasures_renewal_potential_log} holds for $C_n$. By equation \eqref{eq:measure_on_C_n_renewal_potential_log_lemma_proof} we have
\begin{align*}
    \mu_\beta(C_{n+1}) = \frac{1}{\lambda} \left(\frac{n+1}{n+2}\right)^{\beta} \mu_\beta(C_n) = \frac{1}{\lambda} \left(\frac{n+1}{n+2}\right)^{\beta} \frac{1}{\lambda^n}\frac{1}{(n+1)^\beta} = \frac{1}{\lambda^{n+1}}\frac{1}{(n+2)^\beta},
\end{align*}
and the equation \eqref{eq:condition_for_eigenmeasures_renewal_potential_log} holds by induction.
\end{proof}

\begin{remark} Observe that the existence of an probability eigenmeasure $\mu_\beta$ as in the statement of the lemma above imposes that the series in RHS of \eqref{eq:eigenmeasure_probability_sum_renewal_potential_log} converges. 
\end{remark}

\begin{lemma}\label{lemma:eigenmeasure_C_alpha_formula} Let $\beta >0$ and $\mu_\beta$ be an eigenmeasure for the Ruelle transformation $L_{\beta F}$, with associated eigenvalue $\lambda > 0$. Then, for $\alpha = \alpha_0 \cdots \alpha_{n-1}$, $n > 1$, positive admissible word, we have
\begin{equation}\label{eq:eigenmeasure_C_alpha_formula}
    \mu_\beta(C_\alpha) = \frac{e^{\beta\sum_{k=0}^{n-2}F(\alpha_k)}}{\lambda^{\alpha_{n-1}+(n-1)}} \frac{1}{(\alpha_{n-1}+1)^\beta},
\end{equation}
where $F(m) := F\vert_{C_{m}} \equiv \log(m) - \log(m+1)$, $m \in \mathbb{N}$.
\end{lemma}

\begin{proof} We recall that, if $\mu_\beta$ is an eigenmeasure as in the statement above, then
\begin{equation}\label{eq:conformality_DU_eigenmeasure_recall}
    \mu_\beta(\sigma(C_\alpha)) = \int_{C_\alpha} \lambda e^{-\beta F} d\mu_\beta,
\end{equation}
for every $\alpha$ positive admissible word s.t. $|\alpha| > 1$.
We prove the lemma by induction. For $n = 2$, the identity above becomes
\begin{equation*}
    \mu_\beta(C_{\alpha_1}) = \mu_\beta(\sigma(C_{\alpha_0 \alpha_1})) = \int_{C_{\alpha_0 \alpha_1}} \lambda e^{-\beta F} d\mu_\beta = \lambda e^{-\beta F(\alpha_0)}\mu_\beta(C_{\alpha_0\alpha_1}),
\end{equation*}
that is,
\begin{equation*}
    \mu_\beta(C_{\alpha_0\alpha_1}) = \frac{e^{\beta F(\alpha_0)}}{\lambda} \mu_\beta(C_{\alpha_1}) = \frac{e^{\beta F(\alpha_0)}}{\lambda^{\alpha_1 + 1}} \frac{1}{(\alpha_1+1)^\beta},
\end{equation*}
where in the last equality we used Lemma \ref{lemma:condition_for_eigenmeasures_renewal_potential_log}. Now, suppose that \eqref{eq:eigenmeasure_C_alpha_formula} holds for some $n > 2$ and let $\alpha = \alpha_0 \cdots \alpha_n$ be a positive admissible word. By \eqref{eq:conformality_DU_eigenmeasure_recall} and the inductive step for the word $\alpha_1 \cdots \alpha_n$, we have
\begin{equation*}
    m_\beta(C_\alpha) = \frac{e^{\beta F(\alpha_0)}}{\lambda} m_\beta(C_{\sigma(\alpha)}) = \frac{e^{\beta\sum_{k=0}^{n-1}F(\alpha_k)}}{\lambda^{\alpha_{n}+n}} \frac{1}{(\alpha_n+1)^\beta}. 
\end{equation*}
\end{proof}

\begin{theorem}[Lenght-type phase transition]\label{thm:complete_characterization_eigenmeasures_renewal}\\ 

For every $\beta > 0$, there exists a unique probability eigenmeasure associated to the eigenvalue $\lambda_{\beta} = e^{P_G(\beta F)}$. Moreover, there is critical value $\beta_c$, which is the (real) solution for $\zeta(\beta_c) = 2$ such that
\begin{itemize}
    \item[$(i)$] if $\beta >\beta_c$, then the probability eigenmeasure lives on $Y_A$;
    \item[$(ii)$] if $\beta \leq \beta_c$, then the probability eigenmeasure lives on $\Sigma_A$.
\end{itemize}
\end{theorem}

\begin{proof} The proof is a summarization of some results we developed and proved in this paper.
\begin{itemize}
    \item[(1)] The potential $F$ has uniformly bounded variations, and the renewal shift is both compact and $s$-compact. Hence, by Theorem \ref{thm:equality_pressures_general}, we conclude that $P(\beta F,x) = P_G(\beta F)$ for every $x \in X_A$. Moreover, since $\sup F< \infty$, we have by direct calculations for the renewal shift space $\Sigma_A$ that $P_G(\beta F) \leq \log 2 + \beta \sup F < \infty$, for every $\beta >0$.
    \item[(2)] Since $X_A$ is compact (see subsection \ref{subsec:Generalized_Renewal_shift}) we have by (1) and Theorem \ref{thm:existence_eigenmeasures_Denker_Yuri} that there exists a probability eigenmeasure $\mu_\beta$ of the Ruelle operator for every $\beta >0$, and the associated eigenvalue is $\lambda_\beta = e^{P_G(\beta F)}$. Therefore, the first claim of the statement of this theorem is proved, and by Lemma \ref{lemma:condition_for_eigenmeasures_renewal_potential_log} we have
    \begin{equation}\label{eq:Denker_eigenmeasure_probability_sum}
        1 = \mu_\beta(\{\xi^0\}) + \sum_{n\in \mathbb{N}} \frac{1}{\lambda_\beta^n}\frac{1}{(1+n)^\beta}.
    \end{equation}
    \item[(3)] Again by Example \ref{exa:renewal_potential_log}, there exists a critical value $\beta_c$, which is precisely the positive solution of $\zeta(\beta_c) = 2$ such that the potential is positive recurrent for $0< \beta < \beta_c$ and transient for $\beta > \beta_c$. We study each case as follows.
    \begin{itemize}
        \item[(3.a)] \textbf{Case $0 < \beta < \beta_c$.} By the generalized RPF Theorem, since the potential is positive recurrent, there exists an eigenmeasure $m_\beta$ living in $\Sigma_A$ and it is necessarily associated to the eigenvalue $e^{P(\beta F)}$. Moreover, the space of these eigenmeasures for fixed $\beta$ has dimension $1$. Accordingly to Theorem 30 of \cite{BelBisEndo2020}, since the potential satisfies $\Var_1 F <\infty$, the eigenmeasure is finite, so consider that $m_\beta$ is normalized, that is, it is a probability. Then, $m_\beta$ is the unique probability eigenmeasure living on $\Sigma_A$. Proposition \ref{prop:extension_eigenmeasures} gives that $m_\beta$ can be seen as the unique probability eigenmeasure on $X_A$ which lives on $\Sigma_A$. Again by Lemma \ref{lemma:condition_for_eigenmeasures_renewal_potential_log}, we obtain
        \begin{equation}\label{eq:Sarig_eigenmeasure_probability_sum}
            1 = m_\beta(\{\xi^0\}) + \sum_{n\in \mathbb{N}} \frac{1}{\lambda_\beta^n}\frac{1}{(1+n)^\beta} = \sum_{n\in \mathbb{N}} \frac{1}{\lambda_\beta^n}\frac{1}{(1+n)^\beta},
        \end{equation}
        and we conclude that
        \begin{equation*}
            \sum_{n\in \mathbb{N}} \frac{1}{\lambda_{\beta}^n}\frac{1}{(1+n)^\beta} = 1.
        \end{equation*}
        In other words, the series above does not depend on the measure, and by \eqref{eq:Denker_eigenmeasure_probability_sum} we conclude that $m_\beta(\xi^0) = 0$, and then $m_\beta(Y_A) = 0$, so $m_\beta$ is an probability eigenmeasure that lives on $\Sigma_A$ and then $m_\beta = \mu_\beta$. Therefore, given $0 < \beta < \beta_c$, $\mu_\beta$ is the unique probability eigenmeasure associated to the eigenvalue $\lambda_{\beta} = e^{P_G(\beta F)}$, and it lives on $\Sigma_A$.
        \item[(3.b)] \textbf{Case $\beta = \beta_c$.} By Example \ref{exa:renewal_potential_log} we have that $P_G(\beta_c F) = 0$, that is, $\lambda_{\beta_c} = 1$, and then
        \begin{equation}\label{eq:Denker_eigenmeasure_probability_sum_critical_beta}
            1 = \mu_{\beta_c}(\{\xi^0\}) + \sum_{n\in \mathbb{N}} \frac{1}{(1+n)^{\beta_c}} = \mu_{\beta_c}(\{\xi^0\}) -1  + \sum_{n\in \mathbb{N}} \frac{1}{n^{\beta_c}} = \mu_{\beta_c}(\{\xi^0\}) -1  + \zeta(\beta_c).
        \end{equation}
        Since $\zeta(\beta_c) = 2$ we conclude that $\mu_{\beta_c}(\{\xi^0\}) = 0$. So $\mu_{\beta_c}$ lives on $\Sigma_A$. Moreover this measure is unique, the Lemma \ref{lemma:eigenmeasure_C_alpha_formula} gives us the value $\mu_{\beta_c}(C_\alpha)$ for every generalized cylinder on positive words with $|\alpha| \geq 2$, namely
        \begin{align*}
            \mu_{\beta_c}([\alpha]) = \mu_{\beta_c}(C_\alpha) = e^{\beta_c \sum_{i=0}^{n-2}F(\alpha_i)}\frac{1}{(\alpha_{n-1}+1)^{\beta_c}},
        \end{align*}
        and then we can extend the measure to the whole space uniquely.
        \item[(3.c)] \textbf{Case $\beta > \beta_c$.} We have that $P_G(\beta F) = 0$ for every $\beta > \beta_c$, i.e., $\lambda_{\beta} = 1$ and then, as in (\ref{eq:Denker_eigenmeasure_probability_sum_critical_beta}), we have $\mu_\beta(\{\xi^0\}) = 2 - \zeta(\beta)$. Since $1<\zeta(\beta) < 2$ for $\beta > \beta_c$, we have necessarily $\mu_\beta(Y_A) > 0$. We claim that $\mu_\beta(\Sigma_A) = 0$. In fact, if $\mu_\beta(\Sigma_A) > 0$, let $\nu_\beta$ be the restriction of $\mu_\beta$ to $\mathcal{B}_{\Sigma_A}$. We observe that $\nu_\beta$ is a non-zero eigenmeasure on $\Sigma_A$ with associated eigenvalue $1$, due to Proposition \ref{thm:restriction_eigenmeasures}, and it is finite because $\mu_\beta$ is a probability. So we may take $\nu_\beta$ normalized in order to be a probability, and by Lemma \ref{lemma:condition_for_eigenmeasures_renewal_potential_log} we have
        \begin{equation*}
            1 = \nu_\beta(\{\xi^0\}) + \sum_{n\in \mathbb{N}} \frac{1}{(1+n)^{\beta}} =  -1  + \zeta(\beta),
        \end{equation*}
        and we obtain $\zeta(\beta) = 2$, a contradiction. Therefore $\mu_\beta$ lives on $Y_A$ and it is unique because $\xi^0$ is the unique element of $X_A$ with empty stem. 
    \end{itemize}
\end{itemize}
\end{proof}

\begin{remark} Since for $\beta > \beta_c$ the probability eigenmeasure $\mu_\beta$ in Theorem \ref{thm:complete_characterization_eigenmeasures_renewal} is a $e^{-\beta F}$-conformal measure, and it is explicitly determined by \eqref{eq:c_omega_new}, that is,  
\begin{equation*}
    \mu_\beta(\omega) \equiv c_\omega^\beta = e^{\beta F_{|\omega|}(\omega)} c_e = e^{\beta F_{|\omega|}(\omega)} (2-\zeta(\beta)), \quad \omega \in \mathfrak{W}.
\end{equation*}
\end{remark}

The next two results are straightforward corollaries of Theorem \ref{thm:complete_characterization_eigenmeasures_renewal} above.

\begin{corollary}\label{cor:KMS_renewal_potential_log}  Consider the C$^*$-dynamical system $(C^*(\mathcal{G}(X_A, \sigma)),\tau)$, where $\tau = \{\tau_t\}_{t \in \mathbb{R}}$ is the one-parameter group of automorphisms given by $\tau_t(f)(\gamma) = e^{-it \beta c_F(\gamma)}f(\gamma)$
for $f\in C_c(\mathcal{G}(X_A,\sigma))$, and (uniquely) extended to $C^*(\mathcal{G}(X_A, \sigma))$. Then, we have the following:
\begin{itemize}
    \item[$(i)$] for $\beta \geq \beta_c$ there exists a unique KMS$_\beta$ state on $C^*(\mathcal{G}(X_A,\sigma))$;
    \item[$(ii)$] for $\beta < \beta_c$ there are not KMS$_\beta$ states on $C^*(\mathcal{G}(X_A,\sigma))$.
\end{itemize}
\end{corollary}

\begin{figure}[H]
 \includegraphics[scale=.3]{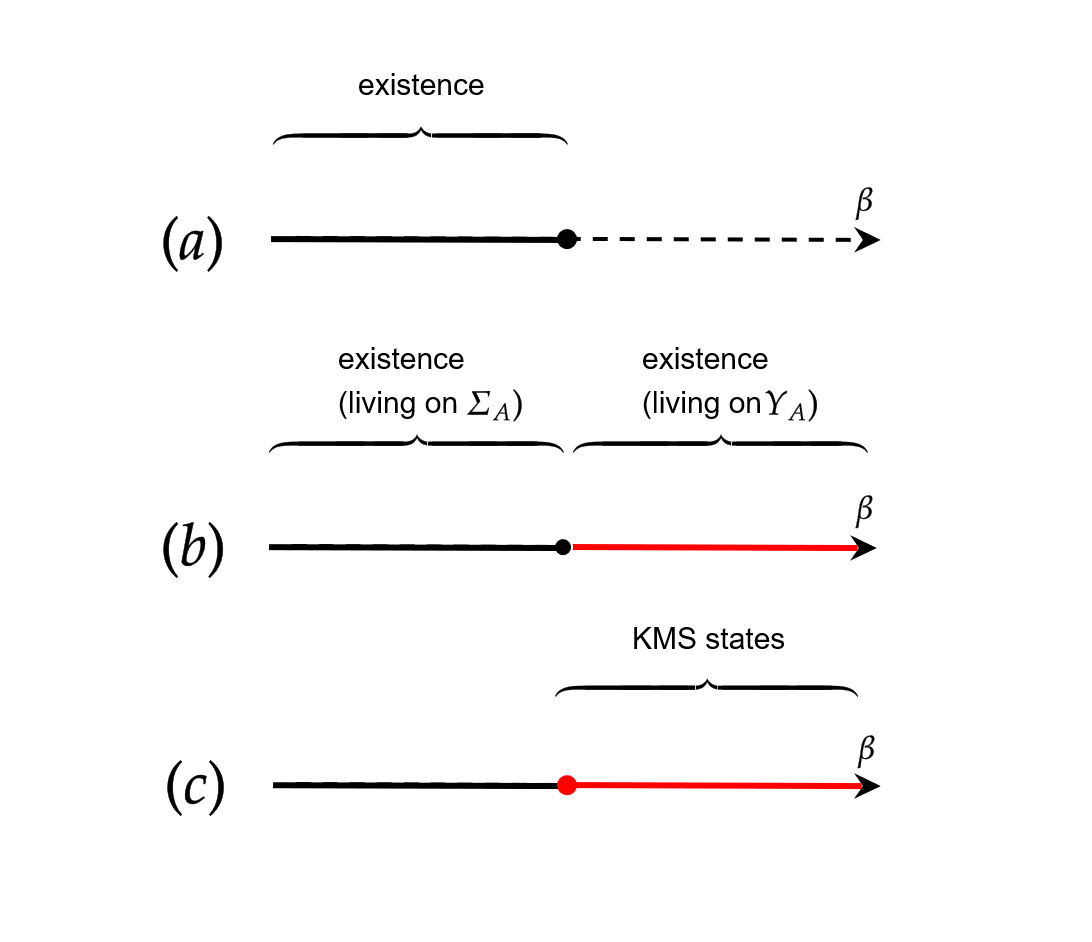}
 
 \caption{The phase transitions on different thermodynamic formalisms for probability eigenmeasures in the renewal shift space with potential as in Theorem \ref{thm:complete_characterization_eigenmeasures_renewal}. In this figure, $\beta_c$ is the unique real positive solution for the equation $\zeta(\beta_c) = 2$. The picture $(a)$ represents the standard formalism on $\Sigma_A$, where we have a unique eigenmeasure for each $\beta \leq \beta_c$ (black line). In this case it is not possible to detect any probability eigenmeasure for $\beta> \beta_c$. The picture $(b)$ represents the generalized formalism on $X_A$ and, unlike in $(a)$, we can see more than the standard probability eigenmeasures; we can detect a unique eigenmeasure for each $\beta > \log 2$ (red line), and this measure vanishes on $\Sigma_A$. Picture $(c)$ shows what are the eigenmeasures that correspond to KMS$_{\beta}$-states.  \label{fig:complete_characterization_eigenmeasures_renewal}}
\end{figure}

\begin{proof} Since $P_G(\beta F) = 0$ for $\beta \geq \beta_c$, by Theorem \ref{thm:complete_characterization_eigenmeasures_renewal} $(i)$ there exists a unique probability eigenmeasure $\mu_\beta$, and in this case the eigenvalue is $1$. Then, by Theorem \ref{thm:equivalences_conformal_measures_generalized_Markov_shift}, $\mu_\beta$ is a $e^{\beta c_F}$-quasi-invariant probability measure, and therefore by Remark \ref{remark:KMS_quasi_invariant} there exists a unique KMS$_\beta$-state, given by
\begin{equation*}
    \varphi_{\mu_\beta}(f)=\int_{X_A} f(x,0,x)d\mu_\beta(x),\quad f\in C_c(\mathcal{G}(X_A,\sigma)).
\end{equation*}
For $\beta < \beta_c$, by Theorem \ref{thm:complete_characterization_eigenmeasures_renewal} $(ii)$, there exists a unique probability eigenmeasure living on $\Sigma_A$, and in this case the eigenvalue is strictly greater than $1$, so there are no $e^{\beta c_F}$-quasi-invariant probability measures, and therefore there are no KMS$_\beta$ states.
\end{proof}

\begin{remark}
    The existence of eigenmeasures for transient potentials already was obtained in the standard thermodynamic formalism for countable Markov shifts by V. Cyr \cite{Cyr2010}, see also \cite{Shwartz2019}, assuming that $\Sigma_A$ is locally compact. In our example, $\Sigma_A$ is not locally compact. For $\beta$ s.t. $\beta F$ is transient, we can prove that the eigenmeasures always do exist (but they live on $Y_A$) due to the compactness of $X_A$ using Denker-Yuri's \cite{DenYu2015} existence theorem. Moreover, the phase transition in this setting is not in the sense of existence or absence of eigenmeasures but means that the eigenmeasure changes the space for which it gives mass, from $\Sigma_A$ to $Y_A$. Figure \ref{fig:complete_characterization_eigenmeasures_renewal} compares the phase transitions between standard and generalized formalisms.
\end{remark}

\begin{theorem}\label{thm:weak_convergence_eigenmeasure_renewal} For each $\beta>0$, let $\mu_{\beta}$ be the unique probability eigenmeasure as in theorem \ref{thm:complete_characterization_eigenmeasures_renewal}. Then, the net $(\mu_\beta)_{\beta > \beta_c}$ converges on the weak$^*$ topology to $\mu_{\beta_c}$ as $\beta$ goes to $\beta_c$.
\end{theorem}

\begin{proof} Observe that $\mu_\beta(\{e\}) = c_e^\beta = 2 - \zeta(\beta) \to 0$ as $\beta \to \beta_c$ by above. Then, $\lim_{\beta \to \beta^c}\mu_\beta(F) \to 0$, by above for every $F \subseteq Y_A$ finite set. For $\beta \geq \beta_c$, the associated eigenvalue is $1$, and by Lemmas \ref{lemma:condition_for_eigenmeasures_renewal_potential_log} and \ref{lemma:eigenmeasure_C_alpha_formula}, we have for every $\alpha$ positive admissible word that $\mu_\beta(C_\alpha)$ is continuous on $\beta$ and then $\mu_\beta(C_\alpha) \to \mu_{\beta_c}(C_\alpha) = \mu_{\beta_c}([\alpha])$, as $\beta$ decreases to $\beta_c$. Then, for a typical element of the basis in the form
\begin{equation*}
    F \sqcup \bigsqcup_{n \in \mathbb{N}}C_{w(n)},
\end{equation*}
where $F \subseteq Y_A$ is finite, and $w(n)$ is a positive admissible word. As in the proof of the Theorem \ref{thm:colapsing_renewal}, we have
\begin{equation*}
     \lim_{\beta \to \beta_c} \mu_\beta\left(F \sqcup \bigsqcup_{n \in \mathbb{N}}C_{w(n)}\right) = \mu_{\beta_c}\left(F \sqcup \bigsqcup_{n \in \mathbb{N}}C_{w(n)}\right). 
\end{equation*}
We conclude the proof by theorem 8.2.17 of \cite{Bogachev2007}. 
\end{proof}

\section{Concluding remarks}

This paper introduced the thermodynamic formalism for generalized countable Markov shifts $X_A = \Sigma_A \cup Y_A$, which are locally compact versions of the standard countable Markov shifts $\Sigma_A$. For the class of compact GCMS, which has the s-compact property, generalized versions of the full shift and the renewal shift are examples in this class; we prove that the Denker-Yuri's definition of pressure (defined originally for IFS \cite{DenYu2015}) coincides with the standard Gurevich pressure defined on $\Sigma_A$. The equality holds when the potential has uniformly bounded distortion, a class of regularity that contains the potentials such that the thermodynamic formalism is well developed for CMS, see \cite{Daon2013, Fiebig_Fiebig_Yuri, Sa4, Sarig2009}. We extended the notion of conformal measure, proving the equivalence among several definitions for general spaces, including quasi-invariant measures on generalized Renault-Deaconu groupoids. Moreover, we proved the existence of new conformal measures which are not detected by the standard theory. A new type of phase transition is introduced, namely, the length-type phase transition. This transition is the change of the set where the conformal measure lives, from $\Sigma_A$ to $Y_A$, as the temperature decreases.
Furthermore, we presented a complete description of the eigenmeasures associated with the Ruelle transformation for a couple of examples. In some cases, we also connected the eigenmeasures living on $Y_A$ with the standard ones living in $\Sigma_A$ by taking weak$^*$ limits on $\beta$. 

It is important to note that the points on $Y_A = X_A \backslash  \Sigma_A$ do not depend on any extra structure, unlike other compactifications. Recently, O. Shwartz constructed Martin boundaries for locally compact countable Markov shifts $\Sigma_A$ in \cite{Shwartz2019}, the compactification depends on the potential. After this, G. Iommi and M. Todd suggested in \cite{Iommi2020} a strategy where the compactification depends on the properties of a metric introduced in the state space $\mathbb{N}$. Different metrics in $\mathbb{N}$ generate different compactifications in their approach. Besides, they have to assume the finiteness of Gurevich entropy and the arguments only work for bounded potentials. In both papers, it is not possible to consider the full shift. We have shown that some potentials on the full shift can be encoded by potentials on the renewal shift and present the same interval of existence/absence of conformal measures living on $Y_A$, including other examples where $X_A$ has infinite  Gurevich entropy.

We presented a case where we have uncountably many extremal conformal measures living on $Y_A$ when the temperature is low enough.

Our results open many possibilities to further research directions. We highlight the following:
\begin{itemize}
    \item[$1.$] Try to use $X_A$ as the configuration space on realistic physical models. The results obtained in this paper give us a good opportunity to interact with our colleagues physicists. A possible path is to verify if $X_A$ may replace the usual configuration space in some model in statistical mechanics. More concretely, as we saw in the results, it is natural to expect to find more conformal measures using $X_A$ instead of $\Sigma_A$, mainly at low temperatures. So it is a setting more suitable to see phase transitions and find new conformal and DLR measures as in \cite{Shwartz2019}, for DLR measures on GCMS, see \cite{BisExelFrauRas2018}.
     \item[$2.$] Recently, it was discovered a volume-type phase transition on CMS, see \cite{BelBisEndo2020}. We only consider probability measures and KMS states. Under suitable hypotheses, results for infinite conformal measures and KMS weights are expected. J. Christensen and K. Thomsen already studied KMS weights on graph C$^{*}$-algebras, see \cite{Christensen2018, Christensen2021, Thomsen2017}.
    \item[$3.$] The notion of ultragraph, firstly defined in \cite{Tomforde2003}, is a generalization from both $\mathcal{O}_A$ and graph C$^{*}$-algebras. So it is natural to investigate the consequences of our results in this more general setting, see also \cite{GoncalvesTasca2020, KatsuraSimsTomforde2009, Tomforde2003_simp}.
    \item[$4.$] For compact and $s$-compact $X_A$ with finite Gurevich entropy $h_G$ we proved the existence of $e^{\beta}$-conformal probability measures living on $Y_A$ for $\beta > h_G$, see Corollary \ref{cor:phase_transition_entropy_gauge}. The results from K. Thomsen in \cite{Thomsen2017} for graph algebras and our examples suggest that for $\beta = h_G$ the $e^{h_G}$-conformal probability measures should live on $\Sigma_A$, but we don't have a proof.
    \item[$5.$] Under suitable hypotheses as in Proposition \ref{prop:phase_transition_entropy}, it should do exist a critical point $\beta_c$ where we have a phase transition in the sense of absence of new conformal measures living on $Y_A$ when $\beta < \beta_c$, and existence of such measures for $\beta > \beta_c$. The unknown region in Figure \ref{fig:conformal_general_entropy} should be artificial.
    \item[$6.$] Does exist a potential $F$ for some $X_A$ that admits both $e^{\beta F}$-conformal measures living on $\Sigma_A$ and also in $Y_A$? For same $\beta > 0$? Is it possible eigenmeasures living on $\Sigma_A$ converge to an eigenmeasure living on $Y_A$?
\end{itemize}

\section*{Acknowledgements}
RB is supported by CNPq grants 312294/2018-2 and 408851/2018-0, by FAPESP grant 16/25053-8, and by the University Center of Excellence \textquotedblleft Dynamics, Mathematical Analysis and Artificial Intelligence", at the Nicolaus Copernicus University; RE is supported by FAPESP Grant 17/26645-9 and CNPq, RF is supported by CNPq, and TR is supported by CAPES, and CNPq, and NCN (National Science Center, Poland), Grant 2019/35/D/ST1/01375. We thank Eric O. Endo for the first pictures, Elmer B\'eltran for his assistance on countable Markov shifts, Manfred Denker for very useful discussions about the reference \cite{DenYu2015}, to Italo Cipriano for helpful suggestions concerning the results about the Gurevich entropy, to Godofredo Iommi for pointing out the question about infinite Gurevich entropy, and Philippe Thieullen for pointing out a problem concerning the generality of the proof of Theorem \ref{thm:extremal_conformal_measures_general} in a previous version of this paper. Special thanks to Henrique Corsini for all the discussions and his assistance in the final formulation of the statements and proofs of the results about entropy and pressures.

\section{Appendix}
\label{ape:proof_giant_theorem_generalized_cylinders}

\subsection{Auxiliary results}
\label{sec:auxiliary_results_cylinder_intersections}

\begin{lemma}\label{lemma:CF_intersections_general} For any two positive admissible words $\alpha$ and $\gamma$ it is true that
\begin{itemize}
\item[$(i)$] $$\displaystyle
        C_\alpha \cap F_\gamma =  F_{\gamma}^{\alpha};$$
     \item[$(ii)$] 
   \begin{equation*}
        F_\alpha \cap F_\gamma = \begin{cases}
                                    F_\alpha, \quad \alpha \in \llbracket \gamma \rrbracket,\\
                                    F_\gamma, \quad \gamma \in \llbracket \alpha \rrbracket,\\
                                    F_{\alpha'}^*, \quad \text{otherwise},
                                 \end{cases}
    \end{equation*}
    where $\alpha'$ is the longest word in $\llbracket \alpha \rrbracket \cap \llbracket \gamma \rrbracket$;
    \item[$(iii)$] for fixed $n \in \mathbb{N}$ with $n \in \{0,1,\dots,|\gamma|-1\}$, let be the natural number $j \neq \gamma_n$. We have
    \begin{equation*}
        C_\alpha \cap C_{\delta^{n}(\gamma)j} = \begin{cases}
                                            C_{\delta^{n}(\gamma)j}, \quad            \alpha \in \llbracket             \gamma \rrbracket                 \text{ and } n \geq               |\alpha|,\\
                                            C_\alpha, \quad                   \alpha \notin                     \llbracket \gamma                 \rrbracket,                        \gamma \notin                     \llbracket \alpha                 \rrbracket, n =                  |\alpha'| \text{ and }           j=\alpha_{|\alpha'|},           \\
                                            \emptyset, \quad                  \text{otherwise};
                                         \end{cases}
    \end{equation*}
    where $\alpha'$ is the longest stem in $\llbracket \alpha \rrbracket \cap \llbracket \gamma \rrbracket$.
    \item[$(iv)$] for $j \in \mathbb{N}$,
    \begin{equation*} 
        C_\alpha \cap G(\gamma,j) = \begin{cases}
                                        G(\gamma,j), \quad \alpha \in \llbracket \gamma \rrbracket,\\
                                        \emptyset, \quad \text{otherwise};
                                    \end{cases}
    \end{equation*}
    \item[$(v)$] for $j \in \mathbb{N}$,
    \begin{equation*} 
        C_\alpha \cap K(\gamma,j) = \begin{cases}
                                        K(\gamma,j), \quad \alpha \in \llbracket \gamma \rrbracket,\\
                                        \emptyset, \quad \text{otherwise};
                                    \end{cases}
    \end{equation*}
    \item[$(vi)$] for $H,I \subseteq \mathbb{N}$,
    \begin{equation*} 
        G(\alpha,H) \cap G(\gamma,I) = \begin{cases}
                                        G(\alpha,H\cup I), \quad \alpha = \gamma,\\
                                        \emptyset, \quad \text{otherwise};
                                    \end{cases}
    \end{equation*}
    \item[$(vii)$] for $H,I \subseteq \mathbb{N}$,
    \begin{equation*} 
        K(\alpha,H) \cap K(\gamma,I) = \begin{cases}
                                        K(\alpha,H\cup I), \quad \alpha = \gamma,\\
                                        \emptyset, \quad \text{otherwise};
                                    \end{cases}
    \end{equation*}
    \item[$(viii)$] 
    \begin{equation*}
        F_\alpha \cap G(\gamma,j) = \begin{cases}
                                        G(\gamma,j), \quad \gamma \in \llbracket \alpha \rrbracket \setminus \{\alpha\},\\
                                        \emptyset, \quad \text{otherwise};
                                    \end{cases}
    \end{equation*}
    \item[$(ix)$] 
    \begin{equation*}
        F_\alpha \cap K(\gamma,j) = \begin{cases}
                                        K(\gamma,j), \quad \gamma \in \llbracket \alpha \rrbracket \setminus \{\alpha\},\\
                                        \emptyset, \quad \text{otherwise};
                                    \end{cases}
    \end{equation*}
\end{itemize}
\end{lemma}

\begin{proof} For $(i)$, the proof is straightforward. Now, for the proof of $(ii)$, we notice from the definition of $F_\alpha$ that 
\begin{equation*}
    F_\alpha\cap F_\gamma = \left\{\xi \in Y_A: \kappa(\xi) \in (\llbracket\alpha\rrbracket \cap \llbracket\gamma\rrbracket)\setminus\{\alpha,\gamma\}\right\}.
\end{equation*}
If $\alpha \in \llbracket\gamma\rrbracket$ then $\llbracket\alpha\rrbracket \subseteq \llbracket \gamma \rrbracket$ and therefore $(\llbracket\alpha\rrbracket \cap \llbracket\gamma\rrbracket)\setminus\{\alpha,\gamma\}= \llbracket\alpha\rrbracket\setminus\{\alpha\}$. In this case, we obtain $F_\alpha\cap F_\gamma = F_\alpha$. A similar proof holds for the case that $\gamma \in \llbracket \alpha \rrbracket$. The remaining possibility occurs when $\alpha \notin \llbracket\gamma\rrbracket$ is simultaneous to $\gamma \notin \llbracket \alpha \rrbracket$, and here we consider $\alpha'$ as in the statement. It follows immediately that $(\llbracket\alpha\rrbracket \cap \llbracket\gamma\rrbracket)\setminus\{\alpha,\gamma\}= \llbracket \alpha'\rrbracket$, since we only must have $|\alpha'|<\min\{|\alpha|,|\gamma|\}$. We conclude that $\xi  \in F_\alpha\cap F_\gamma$ if and only if $\kappa(\xi) \in \llbracket\alpha'\rrbracket$, and therefore $F_\alpha\cap F_\gamma = F_{\alpha'}^*$.

To prove $(iii)$ we use \eqref{eq:C_cap_C} and obtain
\begin{equation*}
    C_\alpha \cap C_{\delta^{n}(\gamma)j} = \begin{cases}
                                C_\alpha, \quad \text{if } \delta^{n}(\gamma)j \in \llbracket \alpha \rrbracket,\\
                                C_{\delta^{n}(\gamma)j}, \quad \text{if } \alpha \in \llbracket \delta^{n}(\gamma)j \rrbracket,\\
                                \emptyset, \quad \text{otherwise}.
                            \end{cases} 
\end{equation*}
First, we suppose that $\alpha \in \llbracket \gamma \rrbracket$. In this case, it never occurs that $\delta^{n}(\gamma) j \in \llbracket \alpha \rrbracket$, for any $n$ and $j \neq \gamma_n$. Indeed, if $\delta^{n}(\gamma) j \in \llbracket \alpha \rrbracket$ then $\delta^{n}(\gamma) j = \alpha_0 \cdots \alpha_n$ and $n \leq |\alpha|-1$, and consequently $j = \alpha_n = \gamma_n$, a contradiction. Then, if $n < |\alpha|$, we have $\alpha \notin \llbracket \delta^{n}(\gamma) j \rrbracket$, and therefore $C_\alpha \cap C_{\delta^{n}(\gamma)j} = \emptyset$. On the other hand, if $n \geq |\alpha|$, then $\alpha \in \llbracket \delta^{n}(\gamma) j \rrbracket$, and hence $C_\alpha \supseteq C_{\delta^{n}(\gamma)j}$, that is, $C_\alpha \cap C_{\delta^{n}(\gamma)j} = C_{\delta^{n}(\gamma)j}$. Now, let us consider $\alpha \notin \llbracket \gamma \rrbracket$, which implies that $|\alpha|>0$ because, otherwise, we would get $\alpha = e$, which is subword of every positive admissible word. If $\gamma \in \llbracket \alpha \rrbracket$, observe that $\delta^{n}(\gamma)j \notin \llbracket \alpha \rrbracket$ because $j \neq \gamma_n = \alpha_n$, and hence $C_{\delta^{n}(\gamma)j} \subseteq C_\alpha^c$, that is, $C_{\delta^{n}(\gamma)j} \cap C_\alpha = \emptyset$. For the case that $\gamma \notin \llbracket \alpha \rrbracket$, let $\alpha'$ be the longest word of $\llbracket \alpha \rrbracket \cap \llbracket \gamma \rrbracket$. If $n = |\alpha'|$ and $j=\alpha_{|\alpha'|}$, then it follows directly that $C_\alpha \subseteq C_{\delta^{n}(\gamma)j}$, and hence $C_\alpha \cap C_{\delta^{n}(\gamma)j}= C_\alpha$. If $n = |\alpha'|$ and $j\neq\alpha_{|\alpha'|}$ then it is clear that $C_\alpha \cap C_{\delta^{n}(\gamma)j}= \emptyset$. Now, if $n < |\alpha'|$, then also occurs $C_\alpha \cap C_{\delta^{n}(\gamma)j}= \emptyset$ because $j \neq \gamma_n = \alpha'_n = \alpha_n$. The same happens for $n > |\alpha'|$ since $\alpha \notin \llbracket \gamma' \rrbracket$ for any $\gamma' \in \llbracket\gamma\rrbracket$, and in particular for $|\gamma'|>|\alpha|$.

The proofs of $(iv)-(ix)$ are straightforward. 
\end{proof}

\begin{lemma}\label{lemma:C_delta_cap_C_gamma_p} Let $\alpha$. $\gamma$ be finite admissible words such that $\gamma \in \llbracket \alpha \rrbracket \setminus \{\alpha\}$ and consider $k,p \in \mathbb{N}$, $m \in \{0,1,..., |\alpha|-1\}$, such that $k \neq \alpha_m$. Then,
\begin{equation}\label{eq:C_delta_cap_C_gamma_p_gamma_subword}
    C_{\delta^m(\alpha)k} \cap C_{\gamma p} = \begin{cases}
               C_{\delta^m(\alpha)k}, \quad \text{if } p = \alpha_{|\gamma|}, \text{ and } m \geq |\gamma|+1;\\
               C_{\gamma p}, \quad  \text{if } m= |\gamma|, k=p;\\
               \emptyset, \quad \text{otherwise}.
    \end{cases}
\end{equation}
\end{lemma}

\begin{proof} Lemma \ref{lemma:CF_intersections_general} $(iii)$ gives 
\begin{equation}\label{eq:C_delta_cap_C_gamma_p}
    C_{\delta^m(\alpha)k} \cap C_{\gamma p} = \begin{cases}
               C_{\delta^m(\alpha)k}, \quad \gamma p \in \llbracket \alpha \rrbracket \text{ and } m \geq |\gamma|+1,\\
               C_{\gamma p}, \quad \gamma p \notin \llbracket \alpha \rrbracket, \alpha \notin \llbracket \gamma p \rrbracket, m= |\gamma|, k=(\gamma p)_{|\gamma|},\\
               \emptyset, \quad \text{otherwise};
    \end{cases}
\end{equation}
Since $\gamma \in \llbracket \alpha \rrbracket \setminus \{\alpha\}$, we have that $\gamma p \in \llbracket \alpha \rrbracket$ if and only if $p = \alpha_{|\gamma|}$. Note that $\gamma p \notin \llbracket \alpha \rrbracket$ implies $\alpha \notin \llbracket \gamma p \rrbracket$. Also, $(\gamma p)_{|\gamma|} = p$, and when $k=p$ we necessarily have $p \neq \alpha_{|\gamma|}$. The equality \eqref{eq:C_delta_cap_C_gamma_p} becomes the identity \eqref{eq:C_delta_cap_C_gamma_p_gamma_subword}.
\end{proof}

\begin{lemma} \label{lemma:C_delta_cap_C_delta} Let $\alpha$ and $\gamma$ be finite admissible words such that $\gamma \notin \llbracket \alpha \rrbracket$ and $\alpha \notin \llbracket \gamma \rrbracket$, and $\alpha'$ the longest word in $\llbracket \alpha \rrbracket \cap \llbracket \gamma \rrbracket$. Also, consider $k,p \in \mathbb{N}$, $m \in \{0,1,\dots, |\alpha|-1\}$ and $n \in \{0,1,\dots, |\gamma|-1\}$, such that $k \neq \alpha_m$ and $p \neq \gamma_n$. Then,
\begin{equation}
    C_{\delta^m(\alpha)k} \cap C_{\delta^n(\gamma) p} = \begin{cases}
               C_{\delta^m(\alpha)k} = C_{\delta^n(\gamma)p}, \quad \text{if } n=m \leq |\alpha'| \text{ and } p=k,\\
               C_{\delta^m(\alpha)k}, \quad \text{if } n = |\alpha'| < m \text{ and } p=\alpha_n;\\
               C_{\delta^n(\gamma)p}, \quad \text{if } m = |\alpha'| < n \text{ and } k=\gamma_m;\\
               \emptyset, \quad \text{otherwise}.
    \end{cases}
\end{equation}
\end{lemma}

\begin{proof}
The identity \eqref{eq:C_cap_C} gives
\begin{equation*}
    C_{\delta^m(\alpha)k} \cap C_{\delta^n(\gamma) p} = \begin{cases}
        C_{\delta^m(\alpha)k}, \quad \text{if } \delta^n(\gamma) p \in \llbracket \delta^m(\alpha)k \rrbracket,\\
         C_{\delta^n(\gamma) p}, \quad \text{if } \delta^m(\alpha)k \in \llbracket \delta^n(\gamma) p \rrbracket,\\
        \emptyset, \quad \text{otherwise}.
    \end{cases}  
\end{equation*}
The case $\delta^n(\gamma) p \in \llbracket \delta^m(\alpha)k \rrbracket$, which implies that $n \leq m$, occurs if and only if $\gamma_0 \cdots \gamma_{n-1}p = \alpha_0 \cdots \alpha_{n-1}l$, where
\begin{equation*}
    l = \begin{cases}
        k, \quad n=m;\\
        \alpha_n, \quad n<m.
    \end{cases}
\end{equation*}
Observe that, by the definition of $\alpha'$, we have $\gamma_i = \alpha_i$ for all $i = 0,..., n-1$ if and only if $n \leq |\alpha'|$. If $n = m$, then it is straightforward that $\delta^n(\gamma) p \in \llbracket \delta^m(\alpha)k \rrbracket$ if and only if $p=k$. Now, suppose that $n<m\leq |\alpha'|$, then $\delta^n(\gamma) p \in \llbracket \delta^m(\alpha)k \rrbracket$ if and only if $p = \alpha_n$, which never happens since $p \neq \gamma_n = \alpha_n$ for this case. If $n < |\alpha'|<m$, we have a similar problem as in the previous situation and hence this never happens. Finally, if $n=|\alpha'|<m$ we have that $\delta^n(\gamma) p \in \llbracket \delta^m(\alpha)k \rrbracket$ if and only if $p = \alpha_n \neq \gamma_n$. We conclude that $\delta^n(\gamma) p \in \llbracket \delta^m(\alpha)k \rrbracket$ if and only if $n = m$ and $p=k$, or $n=|\alpha'|<m$ and $p = \alpha_n$. It is analogous to prove that $\delta^m(\alpha) k \in \llbracket \delta^n(\gamma) p \rrbracket$ if and only if $n = m$ and $p=k$, or $m=|\alpha'|<n$ and $k = \gamma_m$.
\end{proof}

\begin{corollary}\label{cor:4_tuple_union_delta} Let $\alpha$ and $\gamma$ be positive admissible words. If $\alpha \in \llbracket \gamma \rrbracket$, then
\begin{align*}
    \bigsqcup_{m=0}^{|\alpha|-1}\bigsqcup_{k\neq \alpha_m} \bigsqcup_{n=0}^{|\gamma|-1}\bigsqcup_{p\neq \gamma_n}  C_{\delta^m(\alpha)k} \cap C_{\delta^n(\gamma )p} &= \bigsqcup_{m = 0}^{|\alpha| - 1} \bigsqcup_{k\neq \alpha_m} C_{\delta^m(\alpha)k}.
\end{align*}
If $\alpha \notin \llbracket \gamma \rrbracket$ and $\gamma \notin \llbracket \alpha \rrbracket$, then
\begin{align*}
    \bigsqcup_{m=0}^{|\alpha|-1}\bigsqcup_{k\neq \alpha_m} \bigsqcup_{n=0}^{|\gamma|-1}\bigsqcup_{p\neq \gamma_n}  C_{\delta^m(\alpha)k} \cap C_{\delta^n(\gamma )p} &= \left(\bigsqcup_{m = 0}^{|\alpha|-1} \bigsqcup_{k\neq \alpha_m} C_{\delta^m(\alpha)k}\right) \sqcup \left(\bigsqcup_{|\alpha'| < n \leq |\gamma|-1} \bigsqcup_{p\neq \gamma_n} C_{\delta^n(\gamma )p}\right),
\end{align*}
where $\alpha'$ is the longest word in $\llbracket \alpha \rrbracket \cap \llbracket \gamma \rrbracket$.
\end{corollary}

\begin{proof} If $\alpha \in \llbracket \gamma \rrbracket$, then we may write
\begin{align*}
    \bigsqcup_{m=0}^{|\alpha|-1}\bigsqcup_{k\neq \alpha_m} \bigsqcup_{n=0}^{|\gamma|-1}\bigsqcup_{p\neq \gamma_n}  C_{\delta^m(\alpha)k} \cap C_{\delta^n(\gamma )p} &= \left(\bigsqcup_{m = 0}^{|\alpha| - 1} \bigsqcup_{0 \leq n \leq |\alpha|-1} \bigsqcup_{k\neq \alpha_m} \bigsqcup_{p\neq \gamma_n}  C_{\delta^m(\alpha)k} \cap C_{\delta^n(\gamma )p}\right)  \\
    &\sqcup  \left(\bigsqcup_{m = 0}^{|\alpha| - 1} \bigsqcup_{|\alpha| - 1 < n \leq |\gamma|-1} \bigsqcup_{k\neq \alpha_m} \bigsqcup_{p\neq \gamma_n}  C_{\delta^m(\alpha)k} \cap C_{\delta^n(\gamma )p}\right),
\end{align*}
and it is straightforward that 
\begin{equation*}
    \bigsqcup_{m = 0}^{|\alpha| - 1} \bigsqcup_{0 \leq n \leq |\alpha|-1} \bigsqcup_{k\neq \alpha_m} \bigsqcup_{p\neq \gamma_n}  C_{\delta^m(\alpha)k} \cap C_{\delta^n(\gamma )p} = \bigsqcup_{m = 0}^{|\alpha| - 1} \bigsqcup_{k\neq \alpha_m} C_{\delta^m(\alpha)k},
\end{equation*}
because $\delta^n(\gamma) = \delta^n(\alpha)$ and $\alpha_n = \gamma_n$ for $0 \leq n \leq |\alpha|-1$. Also,
\begin{equation*}
    \bigsqcup_{m = 0}^{|\alpha| - 1} \bigsqcup_{|\alpha| - 1 < n \leq |\gamma|-1} \bigsqcup_{k\neq \alpha_m} \bigsqcup_{p\neq \gamma_n}  C_{\delta^m(\alpha)k} \cap C_{\delta^n(\gamma )p} = \emptyset,
\end{equation*}
because if $n > |\alpha|-1$ then $\delta^n(\gamma) = \alpha \gamma'$ for some admissible positive word $\gamma'$ such that $\alpha \gamma'$ is also admissible, while we have that $\delta^m(\alpha)k$ with $k \neq \alpha_m$ in the union above.

Suppose now that $\alpha \notin \llbracket \gamma \rrbracket$ and $\gamma \notin \llbracket \alpha \rrbracket$. Then, we may write
\begin{align*}
    \bigsqcup_{m=0}^{|\alpha|-1}\bigsqcup_{k\neq \alpha_m} \bigsqcup_{n=0}^{|\gamma|-1}\bigsqcup_{p\neq \gamma_n}  C_{\delta^m(\alpha)k} \cap C_{\delta^n(\gamma )p} &= \left(\bigsqcup_{0 \leq m \leq |\alpha'|} \bigsqcup_{0 \leq n \leq |\alpha'|} \bigsqcup_{k\neq \alpha_m} \bigsqcup_{p\neq \gamma_n}  C_{\delta^m(\alpha)k} \cap C_{\delta^n(\gamma )p}\right)  \\
    &\sqcup   \left(\bigsqcup_{0 \leq m \leq |\alpha'|} \bigsqcup_{|\alpha'| < n \leq |\gamma|-1} \bigsqcup_{k\neq \alpha_m} \bigsqcup_{p\neq \gamma_n}  C_{\delta^m(\alpha)k} \cap C_{\delta^n(\gamma )p}\right)  \\
    &\sqcup  \left(\bigsqcup_{|\alpha'| < m \leq |\alpha|-1} \bigsqcup_{0 \leq n \leq |\alpha'|} \bigsqcup_{k\neq \alpha_m} \bigsqcup_{p\neq \gamma_n}  C_{\delta^m(\alpha)k} \cap C_{\delta^n(\gamma )p}\right) \\
    &\sqcup \left(\bigsqcup_{|\alpha'| < m  \leq |\alpha|-1} \bigsqcup_{|\alpha'| < n \leq |\gamma|-1} \bigsqcup_{k\neq \alpha_m} \bigsqcup_{p\neq \gamma_n}  C_{\delta^m(\alpha)k} \cap C_{\delta^n(\gamma )p}\right). 
\end{align*}
The proof is concluded by applying Lemma \ref{lemma:C_delta_cap_C_delta} on each of the four parcels above. 
\end{proof}

\subsection{Proof of the Theorem \ref{thm:huge_generators_intersections}}

The proofs of the identities of the theorem \ref{thm:huge_generators_intersections} have a common strategy, which we describe now:
\begin{itemize}
    \item [$(1)$] for the elements in the form $C_{\alpha j^{-1}}$, $C_\alpha^c$, and $C_{\alpha j^{-1}}^c$, we use Propositions \ref{prop:general_C_alpha_inverse_j}, \ref{prop:general_C_alpha_complement}, and \ref{prop:general_C_alpha_j_inverse_complement}, respectively, to explicit those sets as a union of positive cylinders jointly with  subsets of $Y_A$;
    \item[$(2)$] after $(1)$, we distribute the intersections among the sets;
    \item[$(3)$] we analyze each case for each identity in theorem \ref{thm:huge_generators_intersections}, and when necessary, we use the auxiliary results in subsection \ref{sec:auxiliary_results_cylinder_intersections} to obtain the statement results.
\end{itemize}

Based on the aforementioned strategy, we present the proof of some identities of theorem \ref{thm:huge_generators_intersections}. 

For a complete description of each equality, see appendix A of \cite{Raszeja2020}.

\textbf{Proof of \eqref{eq:C_cap_C_comp}:} let $\alpha \in \llbracket \gamma \rrbracket$, then by Proposition \ref{prop:general_C_alpha_complement} we have that
\begin{equation*}
    C_\alpha \cap C_\gamma^c = (C_\alpha \cap F_\gamma) \sqcup \left(\bigsqcup_{n=0}^{|\gamma|-1}\bigsqcup_{p\neq \gamma_n}C_\alpha \cap C_{\delta^{n}(\gamma)p}\right).
\end{equation*}
By Lemma \ref{lemma:CF_intersections_general} $(i)$ and $(iii)$, it is clear that \eqref{eq:C_cap_C_comp} holds for this case. If $\alpha \notin \llbracket \gamma \rrbracket \text{ and } \gamma \notin \llbracket \alpha \rrbracket$, then $C_\alpha \cap C_\gamma = \emptyset$ and hence $C_\alpha \cap C_\gamma^c = C_\alpha$. Now, if $\gamma \in \llbracket \alpha \rrbracket$, then for every $\xi \in C_\alpha$ we have that $\xi_\gamma = 1$ and therefore $\xi \notin C_\gamma^c$, implying that $C_\alpha \cap C_\gamma^c= \emptyset$.

\textbf{Proof of \eqref{eq:C_cap_C_inverse_comp}:} Proposition \ref{prop:general_C_alpha_j_inverse_complement} gives
\begin{align*}
    C_\alpha \cap  C_{\gamma j^{-1}}^c &= (C_\alpha \cap K(\gamma,j)) \sqcup (C_\alpha \cap F_\gamma) \sqcup \left( \bigsqcup_{n=0}^{|\gamma|-1}\bigsqcup_{p\neq \gamma_n}  C_\alpha \cap C_{\delta^n(\gamma)p}\right)\sqcup \bigsqcup_{\substack{p: A(j,p)=0}} C_\alpha \cap C_{\gamma p}.
\end{align*}
By Lemma \ref{lemma:CF_intersections_general} $(iii)$ and $(v)$, if $\alpha \in \llbracket \gamma \rrbracket$ then
\begin{align*}
    C_\alpha \cap  C_{\gamma j^{-1}}^c &= K(\gamma,j) \sqcup F_\gamma^\alpha \sqcup \left( \bigsqcup_{n=|\alpha|}^{|\gamma|-1}\bigsqcup_{p\neq \gamma_m} C_{\delta^n(\gamma)p}\right)\sqcup \bigsqcup_{\substack{p: A(j,p)=0}} C_{\gamma p}.
\end{align*}
On other hand, if $\gamma \in \llbracket \alpha \rrbracket \setminus \{\alpha\}$, the same lemma gives us that 
\begin{align*}
    C_\alpha \cap  C_{\gamma j^{-1}}^c &=  \bigsqcup_{\substack{p: A(j,p)=0}}C_\alpha \cap C_{\gamma p} = \begin{cases}
                        C_\alpha, \quad A(j,\alpha_{|\gamma|})=0,\\
                        \emptyset, \quad \text{otherwise}.
                   \end{cases}
\end{align*}
Now, if $\alpha \notin \llbracket \gamma \rrbracket$ and $\gamma \notin \llbracket \alpha \rrbracket$, it is straightforward that $C_\alpha \cap  C_{\gamma j^{-1}}^c= C_\alpha$.

\textbf{Proof of \eqref{eq:C_comp_cap_C_inverse}:} when $\alpha \in \llbracket \gamma \rrbracket$, we have that $C_\alpha \supseteq C_\gamma \supseteq C_{\gamma j^{-1}}$, and therefore $C_\alpha^c \cap C_{\gamma j^{-1}} = \emptyset$. Now, if $\alpha \notin \llbracket \gamma \rrbracket$ and $\gamma \notin \llbracket \alpha \rrbracket$ we necessarily have that $C_\alpha \cap C_\gamma = \emptyset$, then $C_{\gamma j^{-1}} \subseteq C_\gamma \subseteq C_\alpha^c$, and therefore $C_{\gamma j^{-1}} \cap C_\alpha^c = C_{\gamma j^{-1}}$. For the remaining case, namely $\gamma \in \llbracket \alpha \rrbracket \setminus \{\alpha\}$, we use the identity
\begin{align*}
    C_\alpha^c \cap C_{\gamma j^{-1}} &= (F_\alpha \cap G(\gamma,j))\sqcup\left(\bigsqcup_{p:A(j,p)=1}F_\alpha \cap C_{\gamma p} \right) \sqcup \left( \bigsqcup_{m=0}^{|\alpha|-1} \bigsqcup_{q \neq \alpha_m} C_{\delta^m(\alpha)q} \cap G(\gamma,j) \right)\\ &\sqcup \left( \bigsqcup_{m=0}^{|\alpha|-1} \bigsqcup_{q \neq \alpha_m} \bigsqcup_{p:A(j,p)=1} C_{\delta^m(\alpha)q} \cap C_{\gamma p} \right),
\end{align*}
which is a direct consequence from Propositions \ref{prop:general_C_alpha_complement} and \ref{prop:general_C_alpha_inverse_j}. Lemma \ref{lemma:CF_intersections_general} $(i)$ gives $F_\alpha \cap C_{\gamma p} = F_\alpha^{\gamma p}$, which is empty for $p \neq \alpha_{|\gamma|}$. It is straightforward from Lemma \ref{lemma:CF_intersections_general} $(iv)$ and $(viii)$ that $F_\alpha \cap G(\gamma,j) = G(\gamma,j)$, and that $C_{\delta^m(\alpha)q}  \cap G(\gamma,j) = \emptyset$ for every $m$ and $q$ considered. The equality
\begin{equation}\label{eq:C_german_equals_triple_union}
     \mathfrak{C}[\alpha,\gamma,j] = \bigsqcup_{m=0}^{|\alpha|-1} \bigsqcup_{q \neq \alpha_m} \bigsqcup_{p:A(j,p)=1} C_{\delta^m(\alpha)q} \cap C_{\gamma p}.
\end{equation}
is a straightforward consequence of Lemma \ref{lemma:C_delta_cap_C_gamma_p}.

\textbf{Proof of \eqref{eq:C_comp_cap_C_inverse_comp}:} if $\alpha \in \llbracket \gamma \rrbracket$, then $C_\alpha \supseteq C_\gamma \supseteq C_{\gamma j^{-1}}$, and hence $C_\alpha^c \subseteq C_\gamma^c \subseteq C_{\gamma j^{-1}}^c$, and then $C_\alpha^c \cap C_{\gamma j^{-1}}^c = C_\alpha^c$. Now, suppose that $\gamma \in \llbracket \alpha \rrbracket \setminus \{\alpha\}$. For any $\xi \in C_{\gamma j^{-1}}^c$, we have only two possibilities, namely $\xi_\gamma = 0$ or $\xi_\gamma = 1$ with $\xi_{\gamma j^{-1}} = 0$. For the first possibility, we have by Proposition \ref{prop:general_C_alpha_j_inverse_complement} that $\xi \in F_\gamma \sqcup \bigsqcup_{n=0}^{|\gamma|-1}\bigsqcup_{p\neq \gamma_n}  C_{\delta^n(\gamma )p} \subseteq C_\alpha^c$ because $\gamma \in \llbracket \alpha \rrbracket \setminus \{\alpha\}$. For the second one, we have necessarily that $\xi \in K(\gamma,j)\sqcup \bigsqcup_{p:A(j,p) = 0}C_{\gamma p}$. However, for $\gamma \in \llbracket \alpha \rrbracket \setminus \{\alpha\}$, it is straightforward that $K(\gamma,j) \subseteq F_\alpha$ and hence $C_\alpha^c \cap K(\gamma,j) = K(\gamma,j)$. As it is shown in the proof of \eqref{eq:C_comp_cap_C_inverse}, $\gamma \in \llbracket \alpha \rrbracket \setminus \{\alpha\}$ implies that $C_{\gamma p} \cap F_\alpha = F_\alpha^{\gamma p}$, which is empty if $p \neq \alpha_{|\gamma|}$. As a consequence of Lemma \ref{lemma:C_delta_cap_C_gamma_p} we have that
\begin{equation*}
    \mathfrak{D}[\alpha,\gamma,j] = C_\alpha^c \cap \bigsqcup_{p:A(j,p)=0}  C_{\gamma p} = \bigsqcup_{m=0}^{|\alpha|-1} \bigsqcup_{q \neq \alpha_m} \bigsqcup_{p:A(j,p)=0} C_{\delta^m(\alpha)q} \cap C_{\gamma p}.
\end{equation*}
For the remaining case, $\alpha \notin \llbracket \gamma \rrbracket$ and $\gamma \notin \llbracket \alpha \rrbracket$, we use the identity
\begin{align*}
    C_\alpha^c \cap C_{\gamma j^{-1}}^c &= (F_\alpha \cap K(\gamma,j)) \sqcup (F_\alpha \cap F_\gamma) \sqcup \left(\bigsqcup_{n=0}^{|\gamma|-1}\bigsqcup_{p\neq \gamma_n}F_\alpha \cap  C_{\delta^n(\gamma )p}\right) \sqcup \left( \bigsqcup_{\substack{p: A(j,p)=0}}  F_\alpha \cap C_{\gamma p} \right)\\
    &\sqcup \left( \bigsqcup_{m=0}^{|\alpha|-1}\bigsqcup_{k\neq \alpha_m} \bigsqcup_{n=0}^{|\gamma|-1}\bigsqcup_{p\neq \gamma_n}  C_{\delta^m(\alpha)k} \cap C_{\delta^n(\gamma )p}\right) \sqcup \left( \bigsqcup_{m=0}^{|\alpha|-1}\bigsqcup_{k\neq \alpha_m} C_{\delta^m(\alpha)k} \cap K(\gamma,j)\right)  \\
    &\sqcup \left( \bigsqcup_{m=0}^{|\alpha|-1}\bigsqcup_{k\neq \alpha_m} C_{\delta^m(\alpha)k} \cap F_\gamma\right) \sqcup \left(\bigsqcup_{m=0}^{|\alpha|-1}\bigsqcup_{k\neq \alpha_m} \bigsqcup_{\substack{p: A(j,p)=0}} C_{\delta^m(\alpha)k} \cap C_{\gamma p} \right).
\end{align*}
We have $F_\alpha \cap K(\gamma,j) = \emptyset$, since $\alpha \notin \llbracket \gamma \rrbracket$ and $\gamma \notin \llbracket \alpha \rrbracket$ implies $\gamma \notin \llbracket \alpha \rrbracket \setminus \{\alpha\}$. Also, by Lemma \ref{lemma:CF_intersections_general} (ii), $F_\alpha \cap F_\gamma = F_{\alpha'}^*$, where $\alpha'$ be the longest word in $\llbracket \alpha \rrbracket \cap \llbracket \gamma \rrbracket$. By the first item of the same Lemma, we get
\begin{equation*}
    \bigsqcup_{n=0}^{|\gamma|-1}\bigsqcup_{p\neq \gamma_n}F_\alpha \cap  C_{\delta^n(\gamma )p} = \bigsqcup_{n=0}^{|\gamma|-1}\bigsqcup_{p\neq \gamma_n} F_\alpha^{\delta^n(\gamma )p} = F_{\alpha}^{\alpha'\alpha_{|\alpha'|}},
\end{equation*}
where the last equality holds because $\delta^n(\gamma) p \in \llbracket \alpha \rrbracket$ if and only if $n = |\alpha'|$ and $p = \alpha_{|\alpha'|} \neq \gamma_{|\alpha'|}$. Similarly, we obtain
\begin{equation*}
    \bigsqcup_{\substack{p: A(j,p)=0}}  F_\alpha \cap C_{\gamma p} = \bigsqcup_{\substack{p: A(j,p)=0}}  F_\alpha^{\gamma p} = \emptyset,
\end{equation*}
where the last equality holds because each $F_\alpha^{\gamma p}$ is empty. Indeed, suppose that there exists a configuration $\xi \in F_\alpha^{\gamma p}$, that is, $\kappa(\xi) \in \llbracket \alpha \rrbracket \setminus \{\alpha\}$ and $\gamma p \in \llbracket \kappa(\xi) \rrbracket$. Then, $\gamma \in \llbracket \kappa(\xi) \rrbracket \subseteq \llbracket \alpha \rrbracket$, a contradition since $\gamma \notin \llbracket \alpha \rrbracket$. Corollary \ref{cor:4_tuple_union_delta}, gives
\begin{align*}
    \bigsqcup_{m=0}^{|\alpha|-1}\bigsqcup_{k\neq \alpha_m} \bigsqcup_{n=0}^{|\gamma|-1}\bigsqcup_{p\neq \gamma_n}  C_{\delta^m(\alpha)k} \cap C_{\delta^n(\gamma )p} &= \left(\bigsqcup_{m = 0}^{|\alpha|-1} \bigsqcup_{k\neq \alpha_m} C_{\delta^m(\alpha)k}\right) \sqcup \left(\bigsqcup_{|\alpha'| < n \leq |\gamma|-1} \bigsqcup_{p\neq \gamma_n} C_{\delta^n(\gamma )p}\right).
\end{align*}
Lemma \ref{lemma:CF_intersections_general} $(v)$ gives
\begin{equation*}
    \bigsqcup_{m=0}^{|\alpha|-1}\bigsqcup_{k\neq \alpha_m} C_{\delta^m(\alpha)k} \cap K(\gamma,j) = K(\gamma,j),
\end{equation*}
because $K(\gamma,j) \subseteq C_{\delta^{|\alpha'|}(\alpha)\gamma_{|\alpha'|}}$. Also, we have
\begin{equation*}
    \bigsqcup_{m=0}^{|\alpha|-1}\bigsqcup_{k\neq \alpha_m} C_{\delta^m(\alpha)k} \cap F_\gamma = F_\gamma^{\alpha'\gamma_{|\alpha'|}},
\end{equation*}
because
\begin{equation*}
    C_{\delta^m(\alpha)k} \cap F_\gamma = \begin{cases}
        F_\gamma^{\alpha'\gamma_{|\alpha'|}}, \quad m = |\alpha'| \text{ and } k = \gamma_{|\alpha'|},\\
        \emptyset, \quad \text{otherwise}.
    \end{cases}
\end{equation*}
In addition,
\begin{equation*}
    \bigsqcup_{m=0}^{|\alpha|-1}\bigsqcup_{k\neq \alpha_m} \bigsqcup_{\substack{p: A(j,p)=0}} C_{\delta^m(\alpha)k} \cap C_{\gamma p} = \bigsqcup_{\substack{p: A(j,p)=0}} C_{\gamma p},
\end{equation*}
since $C_{\gamma p} \subseteq C_{\delta^{|\alpha'|}(\alpha)\gamma_{|\alpha'|}}$ for every $p$.

\end{document}